\tikzset{
  >=stealth,
  initial text=,
  dotnode/.style={
    state,
    white,
    text=black,
    inner sep=0,
    align=center
  },
      diagonal fill/.style 2 args={
  fill=#2, path picture={
    \fill[#1, sharp corners] (path picture bounding box.south west) -| (path picture bounding box.north east) -- cycle;
    }
  },
  reversed diagonal fill/.style 2 args={
    fill=#2, path picture={
      \fill[#1, sharp corners] (path picture bounding box.north west) |- 
      (path picture bounding box.south east) -- cycle;
    }
  },
  emptynode/.style={
    minimum size=0,
    inner sep=0,
    outer sep=0
  },
}
\definecolor{custom-main}{rgb}{0.0, 0.0, 0.0}
\definecolor{custom-sub}{rgb}{0.0, 0.0, 0.0}
\definecolor{custom-purple}{rgb}{0.8, 0.1, 0.8} \definecolor{custom-amethyst}{rgb}{0.6, 0.4, 0.8}
\definecolor{custom-red}{rgb}{0.9, 0.1, 0.1}
\definecolor{custom-green}{rgb}{0.2, 0.7, 0.4}
\definecolor{custom-turquoise}{rgb}{0.3, 0.8, 0.8}
\newcommand{\init}{\mathsf{init}}
\newcommand{\decomp}{\mathcal{D}}
\newcommand{\mealyStateID}{\mealyStateSpace^{\playerSubset, \decomp}}
\newcommand{\mealyUpdateID}{\mealyUpdate^{\playerSubset, \decomp}}
\newcommand{\mealyNextIDpI}{\mealyNext_\playerIndex^{\playerSubset, \decomp}}
\newcommand{\mealyStateD}{\mealyStateSpace^{\decomp}}
\newcommand{\mealyUpdateD}{\mealyUpdate^{\decomp}}
\newcommand{\mealyNextDpI}{\mealyNext_\playerIndex^{\decomp}}
\newcommand{\IR}{\mathbb{R}}
\newcommand{\IN}{\mathbb{N}}
\newcommand{\IRbar}{\overline{\mathbb{R}}}
\newcommand{\INbar}{\overline{\mathbb{N}}}
\newcommand{\integerInterval}[1]{\llbracket{}#1\rrbracket{}}
\newcommand{\indexPosition}{\ell} \newcommand{\indexLast}{r} \newcommand{\numSegments}{k} \newcommand{\indexSegment}{j} \newcommand{\indexParity}{p} 
\newcommand{\nPlayer}{n}
\newcommand{\playerSet}{\integerInterval{\nPlayer}}
\newcommand{\playerSubset}{I}
\newcommand{\playerIndex}{i}
\newcommand{\indexPlayer}{\playerIndex}
\newcommand{\player}[1]{\mathcal{P}_{#1}}
\newcommand{\playerI}{\player{\playerIndex}}
\newcommand{\playerIAlt}{\player{\playerIndex'}}
\newcommand{\playerIAltAlt}{\player{\playerIndex''}}
\newcommand{\playerOne}{\player{1}}
\newcommand{\playerTwo}{\player{2}}
\newcommand{\playerIAdv}{\player{-\playerIndex}}
\newcommand{\game}{\mathcal{G}}
\newcommand{\arena}{\mathcal{A}}
\newcommand{\vertexSet}{V}
\newcommand{\vertexSetOne}{\vertexSet_1}
\newcommand{\vertexSetTwo}{\vertexSet_2}
\newcommand{\vertexSetI}{\vertexSet_\playerIndex}
\newcommand{\vertex}{v}
\newcommand{\vertexAlt}{w}
\newcommand{\vertexAltAlt}{u}
\newcommand{\edgeSet}{E}
\newcommand{\edge}{e}
\newcommand{\succSet}[1]{\mathsf{Succ}_\edgeSet(#1)}
\newcommand{\target}{T}
\newcommand{\targetI}{\target_\playerIndex}
\newcommand{\reach}[1]{\mathsf{Reach}(#1)}
\newcommand{\safe}[1]{\mathsf{Safe}(#1)}
\newcommand{\buchi}[1]{\mathsf{B\ddot{u}chi}(#1)}
\newcommand{\cobuchi}[1]{\mathsf{coB\ddot{u}chi}(#1)}
\newcommand{\vispos}[1]{\mathsf{VisPos}_{\game}(#1)}
\newcommand{\satpl}[1]{\mathsf{SatPl}_{\game}(#1)}
\newcommand{\objective}{\Omega}
\newcommand{\cost}[1]{\mathsf{cost}_{#1}}
\newcommand{\costI}{\cost{i}}
\newcommand{\costOne}{\cost{1}}
\newcommand{\costVector}{(\costI)_{\playerIndex\in\playerSet}}
\newcommand{\weight}{w}
\newcommand{\costReach}[2]{\mathsf{SPath}^{#1}_{#2}}
\newcommand{\spathIW}{\mathsf{SPath}^{\targetI}_{\weight_{\playerIndex}}}
\newcommand{\val}{\mathsf{val}}
\newcommand{\valI}{\mathsf{val}_\playerIndex}
\newcommand{\winningOne}[1]{W_{1}(#1)}
\newcommand{\winningTwo}[1]{W_{2}(#1)}
\newcommand{\winningI}[1]{W_{\playerIndex}(#1)}
\newcommand{\winningIAdv}[1]{W_{-\playerIndex}(#1)}
\newcommand{\arenaTuple}{((\vertexSetI)_{\playerIndex\in\playerSet}, \edgeSet)}
\newcommand{\gameTuple}{(\arena, \costVector)}
\newcommand{\playSet}[1]{\mathsf{Plays}(#1)}
\newcommand{\play}{\pi}
\newcommand{\playB}{\rho}
\newcommand{\segment}{\mathsf{sg}}
\newcommand{\historySet}[1]{\mathsf{Hist}(#1)}
\newcommand{\historySetI}[1]{\mathsf{Hist}_\playerIndex(#1)}
\newcommand{\historySetOne}[1]{\mathsf{Hist}_1(#1)}
\newcommand{\history}{h}
\newcommand{\hist}{\history}
\newcommand{\last}[1]{\mathsf{last}(#1)}
\newcommand{\first}[1]{\mathsf{first}(#1)}
\newcommand{\suffix}[2]{#1_{\geq #2}}
\newcommand{\prefix}[2]{#1_{\leq #2}}
\newcommand{\concat}[2]{\ensuremath{#1 \cdot #2}}
\newcommand{\prefHist}{w}
\newcommand{\stratProfile}{\sigma}
\newcommand{\stratProfileAlt}{\tau}
\newcommand{\strat}[1]{\sigma_{#1}}
\newcommand{\stratI}{\stratProfile_\playerIndex}
\newcommand{\stratOne}{\stratProfile_1}
\newcommand{\stratTwo}{\stratProfile_2}
\newcommand{\stratIAdv}{\stratProfile_{-\playerIndex}}
\newcommand{\stratAltAdvI}{\stratProfileAlt_{-\playerIndex}}
\newcommand{\stratAltAdvIb}{\stratProfileAlt_{-\playerIndex'}}
\newcommand{\stratAltI}{{\stratProfileAlt_{\playerIndex}}}
\newcommand{\stratAltOne}{\stratProfileAlt_1}
\newcommand{\outcome}[2]{\mathsf{Out}(#1, #2)}
\newcommand{\mealyMachine}{\mathcal{M}}
\newcommand{\mealyStateSpace}{M}
\newcommand{\mealyState}{m}
\newcommand{\mealyStateInit}{\mealyState_\init}
\newcommand{\mealyUpdate}{\mathsf{up}}
\newcommand{\mealyNext}{\mathsf{nxt}}
\newcommand{\mealyNextI}{\mealyNext_\playerIndex}
\newcommand{\mealyNextIAlt}{\mathsf{nxt}_{\playerIndex'}}
\newcommand{\mealyNextIAltAlt}{\mathsf{nxt}_{\playerIndex''}}
\newcommand{\mealyTuplePure}{(\mealyStateSpace, \mealyStateInit, \mealyUpdate, \mealyNextI)}
\tikzset{
  >=stealth,
  left sided/.style={
    draw=none,
    append after command={
      [shorten <= -0.5\pgflinewidth]
      (\tikzlastnode.north west) edge[dashed](\tikzlastnode.south west)
    }
  },
  two sided/.style={
    draw=none,
    append after command={
      [shorten <= -0.5\pgflinewidth]
      (\tikzlastnode.north west) edge[dashed](\tikzlastnode.south west)
      (\tikzlastnode.north east) edge[dashed](\tikzlastnode.south east)
    }
  },
  right sided/.style={
    draw=none,
    append after command={
      [shorten <= -0.5\pgflinewidth]
      (\tikzlastnode.north east) edge[dashed](\tikzlastnode.south east)
    }
  }
}
\tikzstyle{stochasticc} = [fill, circle, minimum size=0.1cm, inner sep=0.05cm, outer sep=0cm]
\tikzstyle{stochastics} = [fill, rectangle, minimum size=0.1cm, inner sep=0.05cm, outer sep=0cm]
\def\squarecorner#1{
\pgf@x=\the\wd\pgfnodeparttextbox \pgfmathsetlength\pgf@xc{\pgfkeysvalueof{/pgf/inner xsep}}\advance\pgf@x by 2\pgf@xc \pgfmathsetlength\pgf@xb{\pgfkeysvalueof{/pgf/minimum width}}\ifdim\pgf@x<\pgf@xb \pgf@x=\pgf@xb \fi \pgf@y=\ht\pgfnodeparttextbox \advance\pgf@y by\dp\pgfnodeparttextbox \pgfmathsetlength\pgf@yc{\pgfkeysvalueof{/pgf/inner ysep}}\advance\pgf@y by 2\pgf@yc \pgfmathsetlength\pgf@yb{\pgfkeysvalueof{/pgf/minimum height}}\ifdim\pgf@y<\pgf@yb \pgf@y=\pgf@yb \fi \ifdim\pgf@x<\pgf@y \pgf@x=\pgf@y \else
        \pgf@y=\pgf@x \fi
\pgf@x=#1.5\pgf@x \advance\pgf@x by.5\wd\pgfnodeparttextbox \pgfmathsetlength\pgf@xa{\pgfkeysvalueof{/pgf/outer xsep}}\advance\pgf@x by#1\pgf@xa \pgf@y=#1.5\pgf@y \advance\pgf@y by-.5\dp\pgfnodeparttextbox \advance\pgf@y by.5\ht\pgfnodeparttextbox \pgfmathsetlength\pgf@ya{\pgfkeysvalueof{/pgf/outer ysep}}\advance\pgf@y by#1\pgf@ya }
    \savedanchor\northeast{\squarecorner{}}
    \savedanchor\southwest{\squarecorner{-}}
\title{Arena-Independent Memory Bounds for Nash Equilibria in Reachability Games}
\author{{James C.~A.} Main}{UMONS -- Université de Mons, Belgium}{}{https://orcid.org/0009-0000-8471-4833}{James {C.~A.} Main was supported by an F.R.S.-FNRS Research Fellowship during the preparation of this work..}
\authorrunning{{J.~C.~A.} Main} 
\keywords{multiplayer games on graphs, Nash equilibrium, finite-memory strategies} 
\begin{document}
\maketitle
  
\begin{abstract}
  We study the memory requirements of Nash equilibria in turn-based multiplayer games on possibly \textit{infinite graphs} with reachability, safety, shortest-path, Büchi and co-Büchi objectives.
  
  We present constructions for \textit{finite-memory} Nash equilibria in these games that apply to arbitrary game graphs, bypassing the finite-arena requirement that is central in existing approaches.
  We show that, for these five types of games, from any Nash equilibrium, we can derive another Nash equilibrium where all strategies are finite-memory such that all objectives satisfied by the outcome of the original equilibrium also are by the outcome of the derived equilibrium, without increasing costs for shortest-path games.
  
  Furthermore, we provide memory bounds that are \textit{independent of the size of the game graph} for reachability, safety and shortest-path games.
  These bounds depend only on the number of players.
  
  To the best of our knowledge, we provide the first results pertaining to finite-memory constrained Nash equilibria in infinite arenas and the first arena-independent memory bounds for Nash equilibria.
\end{abstract}

\section{Introduction}\label{section:intro}
\subparagraph*{Games on graphs.}
\textit{Games on graphs} are a prevalent framework to model reactive systems, i.e., systems that continuously interact with their environment. 
Typically, this interaction is modelled as an infinite-duration \textit{two-player (turn-based) zero-sum game} played on an arena (i.e., a game graph) where a system player and an environment player are adversaries competing for opposing goals (e.g.,~\cite{DBLP:conf/dagstuhl/2001automata,DBLP:reference/mc/BloemCJ18,gog23}), which can be modelled, e.g., by numerical costs for the system player.
Determining whether the system can enforce some specification boils down to computing how low of a cost the system player can guarantee.
We then construct an \textit{optimal strategy} for the system which can be seen as a formal blueprint for a controller of the system to be implemented~\cite{rECCS,DBLP:reference/mc/BloemCJ18}.
For implementation purposes, strategies should have a finite representation.
We consider \textit{finite-memory strategies} (e.g.,~\cite{DBLP:journals/lmcs/BouyerLORV22,DBLP:journals/iandc/MainR24}) which are strategies defined by Mealy machines, i.e., finite automata with outputs on their edges.

\subparagraph*{Nash equilibria.}
Models in which players are not competing are also relevant for controller design: these are called \textit{non-zero-sum games}, in which there can be more than two players.
For instance, if the goal is to control a system consisting of several components, each with its own objective, it may prove too restrictive to assume that all components are adversarial to one another.
In this work, we focus on \textit{Nash equilibria}~\cite{Nash50}, a classical formalisation of rational behaviour in multiplayer games.
Intuitively, a Nash equilibrium is a contract between the players, described by one strategy per player, such that no player can benefit by unilaterally breaking it.
In the context of controller synthesis, as mentioned above, we are interested in finite-memory Nash equilibria.
The goal of this work is to understand how much memory is needed for Nash equilibria in given classes of games.

\subparagraph*{Reachability games.}
We study games on possibly \textit{infinite} arenas in which all players share objectives of the same type.
We focus on variants of \textit{reachability objectives}; reachability objectives constitute a fundamental class of objectives in reactive synthesis~\cite{DBLP:conf/fsttcs/BrihayeGMR23}.
A \textit{reachability objective} models the goal of visiting a given set of target vertices.
The complement of a reachability objective is called a \textit{safety objective}: it requires that a set of undesirable vertices (e.g., unsafe configurations of the system) be avoided.
A \textit{Büchi objective} requires that a target set be visited infinitely often instead of only once as for reachability.
The complement of a Büchi objective is called a \textit{co-Büchi objective}; it models the requirement that undesirable configurations eventually no longer occur.

We also study games with a quantitative variant of the reachability objective.
We consider arenas with edges labelled by non-negative integer weights (modelling, e.g., the passage of time), and define the cost of a player as the sum of weights prior to the first visit of a target, or infinity if the target is never reached.
This cost function is called the \textit{shortest-path} cost function.
The goal of a player is to minimise it (i.e., to reach a target as soon as possible).

NEs are guaranteed to exist in games with these objectives or cost functions.
For instance, see~\cite{DBLP:conf/lfcs/BrihayePS13,depril2013} for reachability games and for shortest-path games in finite arenas and~\cite{DBLP:conf/fsttcs/Ummels06} for safety, Büchi and co-Büchi games.
We also show in Appendix~\ref{appendix:existence} that there always exist an NE in shortest-path games on infinite arenas.

In finite arenas, it has also been shown that finite-memory NEs are guaranteed to exist in the games we consider (e.g.,~\cite{DBLP:conf/lfcs/BrihayePS13,DBLP:conf/fossacs/Ummels08,DBLP:journals/jcss/BrihayeBGT21}).
However, existing constructions yield finite-memory NEs whose memory has a size that depends on the size of the arena.
Therefore, these constructions do not directly generalise to infinite arenas.
The main idea of these approaches is as follows.
First, one shows that there exist plays resulting from NEs with a finite representation, typically, a lasso.
This play is then encoded in a Mealy machine.
If some player deviates from the play, the other players switch to a (finite-memory) punishing strategy to sabotage the deviating player; this enforces the stability of the equilibrium.
This punishment mechanism is inspired by the proof of the folk theorem for NEs in repeated games~\cite{Fri71,OR94}.

\subparagraph*{Contributions.}
Our contributions are twofold.
First, we present constructions for \textit{finite-memory} NEs for reachability, shortest-path, safety, Büchi and co-Büchi games that apply to arbitrary arenas, bypassing the finite-arena requirement that is central in existing approaches.
More precisely, for these types of games, we show that from any NE, we can derive another NE where all strategies are finite-memory and such that the objectives satisfied by the outcome of original NE also are by the outcome of the derived NE, without increasing their cost for shortest-path games.
In other words, our constructions are general and can be used to \textit{match or improve any NE cost profile}.

Second, for reachability, shortest-path and safety games, we provide memory bounds that are \textit{independent} of the size of the arena and depend only on the number of players.
For Büchi and co-Büchi games, we show that similar arena-independent memory bounds cannot be obtained: we provide a family of two-player games where NEs with an outcome winning for the second player require a memory whose size is linear in the size of the (finite) arena (Proposition~\ref{prop:buchi:dep}).
We summarise our memory upper bounds for each considered type of game in Table~\ref{table:nash memory}.

\begin{table}
  \centering
  \bgroup
  \def\arraystretch{1.2}
  \scalebox{1}{
    \begin{tabular}{|c|c|c|}
      \hline      
      Arena size
      & Finite & Infinite \\
      \hline
      Reachability & \multicolumn{2}{c|}{$\nPlayer^2$ (Thm.~\ref{thm:reach:ne})} \\
      \hline
      Shortest-path &  \multicolumn{2}{c|}{$\nPlayer^2+2\nPlayer$ (Thm.~\ref{thm:short:ne})} \\
      \hline
      Safety & \multicolumn{2}{c|}{$\nPlayer^2 + 2\nPlayer$ (Thm.~\ref{thm:safe:ne})} \\
      \hline
      Büchi & $|\vertexSet| + \nPlayer^2+\nPlayer$ (Thm.~\ref{thm:buchi:fmne:lasso}) & Finite-memory (Thm.~\ref{thm:buchi:fmne:lasso},~\ref{thm:buchi:fmne:nolasso})  \\
      \hline
      Co-Büchi & $|\vertexSet| + 2\nPlayer$ (Thm.~\ref{theorem:cobuchi:fm-ne}) & Finite-memory (Thm.~\ref{theorem:cobuchi:fm-ne})
      \\
      \hline
    \end{tabular}
  }
  \egroup
  \caption{Sufficient amount of memory to construct an NE that improves or matches the cost profile of an NE in $\nPlayer$-player games.
    The set $\vertexSet$ denotes the set of vertices of the arena.
  }\label{table:nash memory}
\end{table}

We build finite-memory NEs by using a relaxation of the classical punishment mechanism.
We build NEs from plays that can be decomposed into segments (e.g., finite histories, simple plays or ultimately periodic plays) -- see Lemmas~\ref{lem:spath:simple decomposition},~\ref{lem:safety:simple decomposition},~\ref{lem:buchi:simplelasso},~\ref{lem:buchi:simpleinfinite} and~\ref{lem:cobuchi:outcomes}.
Intuitively, we construct finite-memory strategies in which the players keep track of a segment of the decomposition (the one that they are currently attempting to construct) and the latest player to have moved.
While this memory structure does not allow players to detect deviations within a segment of the decomposition, we impose conditions on decompositions such that no such deviation is profitable.
The players switch to punishing strategies whenever there is a player who moves to a vertex outside of the current segment; the players retain enough information to detect such deviations.

Combined with the aforementioned NE existence results, our results imply that there always exist finite-memory NEs in the games we study.
Memory is necessary in general to construct an NE if we require that its outcome satisfies the objectives of some players (see Example~\ref{example:ne:roles of memory}); our results provide upper bounds on the memory needed in these cases.
Whether there always exists a memoryless NE in the games we consider is a closely related open question.

\subparagraph*{Related work.}
We refer to the survey~\cite{DBLP:conf/dlt/Bruyere17} for an extensive bibliography on games played on finite graphs, to~\cite{DBLP:conf/fsttcs/BrihayeGMR23} for a survey centred around reachability games and to~\cite{gog23} as a general reference on games on graphs.
We discuss three research directions related to this work.

The first direction is related to \textit{computational problems} for NEs.
In the settings we consider, NEs are guaranteed to exist.
However, NEs where no player satisfy their objective can coexist with NEs where all players satisfy their objective~\cite{DBLP:conf/fsttcs/Ummels06,DBLP:conf/fossacs/Ummels08}.
A classical problem is the \textit{constrained NE existence problem} which asks whether there exists an NE such that certain players satisfy their objective in the qualitative case or such that the cost incurred by players is bounded from above in the quantitative case (e.g.,~\cite{DBLP:conf/csl/BruyereMR14,DBLP:journals/corr/BouyerBMU15}).
For games on finite arenas, the constrained NE existence problem is \textsf{NP}-complete for reachability and shortest-path games~\cite{DBLP:journals/jcss/BrihayeBGT21}, \textsf{NP}-complete for safety games~\cite{DBLP:conf/icalp/ConduracheFGR16}, \textsf{NP}-complete in co-Büchi games~\cite{DBLP:conf/fossacs/Ummels08}, and is in \textsf{P} for Büchi games~\cite{DBLP:conf/fossacs/Ummels08}.
Constrained equilibrium existence problems have also been studied in models other than finite arenas, e.g., in concurrent games~\cite{DBLP:conf/fsttcs/BouyerBMU11,DBLP:phd/hal/Brenguier12} and timed games~\cite{DBLP:conf/concur/BouyerBM10,DBLP:conf/formats/BrihayeG20}.

Second, the construction of our finite-memory NEs rely on \textit{characterisations} of plays resulting from NEs.
Their purpose is to ensure that the punishment mechanism described above can be used to guarantee the stability of an equilibrium.
In general, these characterisations can be useful from an algorithmic perspective; solving the constrained NE existence problem boils down to finding a play that satisfies the characterisation.
Characterisations appear in the literature for NEs~\cite{DBLP:conf/fossacs/Ummels08,DBLP:conf/concur/UmmelsW11,DBLP:journals/corr/BouyerBMU15}, but also for other types of equilibria, e.g., subgame perfect equilibria~\cite{DBLP:journals/lmcs/BrihayeBGRB20} and secure equilibria~\cite{DBLP:conf/csl/BruyereMR14}.

Finally, there exists a body of work dedicated to better understanding the complexity of optimal strategies in zero-sum games.
We mention~\cite{GZ05} for memoryless strategies, and~\cite{DBLP:journals/lmcs/BouyerLORV22} and~\cite{DBLP:journals/theoretics/BouyerRV23} for finite-memory strategies in finite and infinite arenas respectively.
In finite arenas, for the finite-memory case, a key notion is \textit{arena-independent} finite-memory strategies, i.e., strategies based on a memory structure that is sufficient to win in all arenas whenever possible.
In this work, the finite-memory strategies we propose actually depend on the arena; only their size does not.
We also mention~\cite{DBLP:journals/iandc/RouxP18}: in games on finite arenas with objectives from a given class, finite-memory NEs exist if certain conditions on the corresponding zero-sum games hold.

\subparagraph*{Outline.}
This paper is an extended version of the conference paper~\cite{DBLP:conf/stacs/Main24}.
It presents the contributions of the conference paper with detailed proofs, and additionally presents results for games with safety and co-Büchi objectives.
This work is structured as follows.
In Section~\ref{section:prelim}, we summarise prerequisite definitions.
We establish the existence of memoryless punishing strategies by studying zero-sum games in Section~\ref{section:zero-sum}.
Characterisations of outcomes of NEs are provided in Section~\ref{section:outcomes}.
We prove our main results on finite-memory NEs in reachability and shortest-path games in Section~\ref{section:reach}.
Section~\ref{section:safety} presents our results for safety games.
Finally, Section~\ref{section:buchi} is dedicated to the corresponding result for Büchi and co-Büchi games.

\section{Preliminaries}\label{section:prelim}
\subparagraph*{Notation.} We write $\IN$, $\IR$ for the sets of natural and real numbers respectively, and let $\IRbar =\IR\cup \{+\infty, -\infty\}$ and $\INbar =\IN\cup \{+\infty\}$. For any $\nPlayer\in\IN$, $\nPlayer\geq 1$, we let $\integerInterval{\nPlayer} = \{1, \ldots, \nPlayer\}$ denote the set of positive integers up to $\nPlayer$.

\subparagraph*{Arenas and plays.} 
Let $(\vertexSet, \edgeSet)$ be a directed graph where $\vertexSet$ is a (possibly infinite) set of vertices and $\edgeSet\subseteq \vertexSet\times\vertexSet$ is an edge relation.
For any $\vertex\in\vertexSet$, we write $\succSet{\vertex}=\{\vertex'\in\vertexSet\mid (\vertex, \vertex')\in\edgeSet\}$ for the set of successor vertices of $\vertex$.
Let $\nPlayer\in\IN$.
An \textit{$\nPlayer$-player arena} (on $(\vertexSet, \edgeSet)$) is a tuple $\arena= \arenaTuple$, where $(\vertexSetI)_{\playerIndex\in\playerSet}$ is a partition of $\vertexSet$ and, for all $\playerIndex\in\playerSet$, we say that the vertices in $\vertexSetI$ are controlled by $\playerI$.
We assume that there are no deadlocks in the arenas we consider, i.e., for all $\vertex\in\vertexSet$, $\succSet{\vertex}$ is not empty.
We write $\playerI$ for player $\playerIndex$.

A play starts in an initial vertex and proceeds as follows.
At each round of the game, the player controlling the current vertex selects a successor of this vertex and the current vertex is updated accordingly. 
The play continues in this manner infinitely.
Formally, a \textit{play} of $\arena$ is an infinite sequence $\vertex_0\vertex_1\ldots\in\vertexSet^\omega$ such that $(\vertex_\indexPosition, \vertex_{\indexPosition+1})\in E$ for all $\indexPosition\in\IN$.
For a play $\play=\vertex_0\vertex_1\ldots$ and $\indexPosition\in\IN$, we let $\suffix{\play}{\indexPosition} = \vertex_\indexPosition\vertex_{\indexPosition+1}\ldots$ denote the suffix of $\play$ from position $\indexPosition$ and $\prefix{\play}{\indexPosition}=\vertex_0\ldots\vertex_\indexPosition$ denote the prefix of $\play$ up to position $\indexPosition$.
A \textit{history} is any finite non-empty prefix of a play.
We write $\playSet{\arena}$ and $\historySet{\arena}$ for the set of plays and histories of $\arena$ respectively.
For $\playerIndex\in\playerSet$, we let $\historySetI{\arena} = \historySet{\arena}\cap \vertexSet^*\vertexSetI$.
For any history $\history = \vertex_0\ldots\vertex_\indexLast$, we let $\first{\history}$ and $\last{\history}$ respectively denote $\vertex_0$ and $\vertex_\indexLast$.
For any play $\play\in\playSet{\arena}$, $\first{\play}$ is defined similarly as the first state of $\play$.

Given two histories $\history =\vertex_0\ldots\vertex_\indexPosition$ and $\history' = \vertex_\indexPosition\vertex_{\indexPosition+1}\ldots\vertex_{\indexLast}$, we let $\concat{\history}{\history'} = \vertex_0\ldots\vertex_\indexPosition\vertex_{\indexPosition+1}\ldots\vertex_{\indexLast}$; we say that $\concat{\history}{\history'}$ is the \textit{combination} of $\history$ and $\history'$.
The combination $\concat{\history}{\play}$ of a history $\history$ and a play $\play$ such that $\last{\history}=\first{\play}$ is defined similarly.

A play or history is \textit{simple} if no vertex occurs twice (or more) within.
A \textit{cycle} is a history of the form $\hist = \hist'\first{\hist'}$ where $\hist'$ is a history, and $\hist=\hist'\first{\hist'}$ is a \textit{simple cycle} if $\hist'$ is a simple history.
A play $\play$ is a \textit{lasso} if there exist $p\in\vertexSet^*$ and $c\in\historySet{\arena}$ such that $\play = pc^\omega$.
For any $p\in\vertexSet^*$ and $c\in\historySet{\arena}$ such that $pc^\omega\in\playSet{\arena}$, we say that the lasso $pc^\omega$ is simple if $pc$ is a simple history.

A \textit{segment} of a play $\play$ is either a suffix $\suffix{\play}{\indexPosition}$ of $\play$ for some $\indexPosition\in\IN$ or any history of the form $\vertex_\indexPosition\ldots\vertex_{\indexPosition'}$ where $\indexPosition\leq\indexPosition'$ (i.e., an infix of $\play$).
We denote segments by $\segment$ to avoid distinguishing finite and infinite segments of plays in the following.
A segment is \textit{simple} if it is a simple history, a simple play or a simple lasso.
A segment is \textit{trivial} if it consists of a single vertex.

A \textit{(segment) decomposition} of $\play\in\playSet{\arena}$ is a (finite or infinite) sequence $\decomp= (\segment_\indexSegment)_{\indexSegment=1}^{\numSegments}$ of segments of $\play$ such that, for all $\indexSegment < \numSegments$, $\last{\segment_\indexSegment}=\first{\segment_{j+1}}$, and $\play = \concat{\concat{\segment_1}{\ldots}}{\segment_\numSegments}$ if $\numSegments\in\IN$ and $\play = \concat{\concat{\segment_1}{\segment_2}}{\ldots}$ if $\numSegments=\infty$.
The segment decomposition $\decomp$ is \textit{simple} if all segments within are simple.
A decomposition is \textit{trivial} if it contains only one element.

\subparagraph*{Strategies.}
Strategies describe the decisions of players during a play.
These choices may depend on the entire past of the play, i.e., not only on the current vertex of the play.
Formally, a \textit{strategy} of $\playerI$ in an arena $\arena$ is a function $\stratI\colon\historySetI{\arena}\to \vertexSet$ such that for all histories $\history\in\historySetI{\arena}$, $(\last{\history}, \stratI(\history))\in E$.

A \textit{strategy profile} is a tuple $\stratProfile = (\stratI)_{\playerIndex\in\playerSet}$, where $\stratI$ is a strategy of $\playerI$ for all $\playerIndex\in\playerSet$.
To highlight the role of $\playerI$, we sometimes write strategy profiles as $\stratProfile = (\stratI, \stratIAdv)$, where $\stratIAdv$ denotes the strategy profile of the players other than $\playerI$.

A play $\play = \vertex_0\vertex_1\vertex_2\ldots$ is \textit{consistent} with a strategy $\stratI$ of $\playerI$ if for all $\indexPosition\in\IN$, $\vertex_\indexPosition\in\vertexSetI$ implies $\vertex_{\indexPosition+1} = \stratI(\prefix{\play}{\indexPosition})$.
A play is \textit{consistent} with a strategy profile if it is consistent with all strategies of the profile.
Given an initial vertex $\vertex_0$ and a strategy profile $\stratProfile$, there is a unique play $\outcome{\stratProfile}{\vertex_0}$ from $\vertex_0$ that is consistent with $\stratProfile$, called the \textit{outcome} of $\stratProfile$ from $\vertex_0$.

There are two classes of strategies of interest in this work.
A strategy $\stratI$ is \textit{memoryless} if the moves it prescribes depend only on the current vertex, i.e., if for all $\history$, $\history'\in\historySetI{\arena}$, if $\last{\history} = \last{\history'}$, then $\stratI(\history)=\stratI(\history')$.
We view memoryless strategies as functions $\vertexSetI\to\vertexSet$.

A strategy is \textit{finite-memory} if it can be encoded by a Mealy machine, i.e., a finite automaton with outputs.
A \textit{Mealy machine} (of $\playerI$) is a tuple $\mealyMachine =\mealyTuplePure$ where $\mealyStateSpace$ is a finite set of memory states, $\mealyStateInit$ is an initial memory state, $\mealyUpdate\colon \mealyStateSpace\times\vertexSet\to\mealyStateSpace$ is a memory update function and $\mealyNextI\colon\mealyStateSpace\times\vertexSetI\to\vertexSet$ is a next-move function.

To describe the strategy induced by a Mealy machine, we first define the iterated update function $\widehat{\mealyUpdate}\colon \vertexSet^*\to \mealyStateSpace$ by induction.
We write $\varepsilon$ for the empty word.
We let $\widehat{\mealyUpdate}(\varepsilon) = \mealyStateInit$ and for all $w\in \vertexSet^*$ and $\vertex\in\vertexSet$, we let $\widehat{\mealyUpdate}(\prefHist\vertex) = \mealyUpdate(\widehat{\mealyUpdate}(\prefHist), \vertex)$.
The strategy $\stratI^\mealyMachine$ induced by $\mealyMachine$ is defined, for all histories $\history = \history'\vertex\in\historySetI{\arena}$, by $\stratI^\mealyMachine(\history) = \mealyNextI(\widehat{\mealyUpdate}(\history'), \vertex)$.

We say that a finite-memory strategy $\stratI$ has \textit{memory size at most} $b\in\IN$ if there is some Mealy machine $\mealyTuplePure$ inducing $\stratI$ with $|\mealyStateSpace| \leq b$.

We say that a strategy profile is memoryless (resp.~finite-memory) if all strategies in the profile are memoryless (resp.~finite-memory).

\begin{remark}
  We define Mealy machines with vertex-based updates (as in, e.g.,~\cite{DBLP:journals/acta/ChatterjeeRR14,DBLP:reference/mc/BloemCJ18,DBLP:journals/jcss/BrihayeBGT21}).
  Some authors define the update functions of Mealy machines with edges instead of vertices (see, e.g.,~\cite{DBLP:journals/lmcs/BouyerLORV22,gog23}).
  Vertex-update Mealy machines can be seen as a special case of edge-update Mealy machines.

  In \textit{finite arenas}, these two variants of Mealy machines share the same \textit{expressiveness}.
  Given an edge-update Mealy machine, we can transform it into an equivalent vertex-update Mealy machine by augmenting the memory states with the last vertex to have been visited.
  With this last vertex and the current one, the last edge can be inferred.
  This allows memory updates to be imitated one step later than in the original edge-update Mealy machine.
  In general, edge-update Mealy machines may be more concise than vertex-update ones as they have access to more information in memory updates.
  It can be shown that transforming an edge-update Mealy machine into an equivalent vertex-update Mealy machine may require a vertex-update Mealy machine whose size is proportional to the product of the sizes of the edge-update Mealy machine size and the arena.

  In \textit{infinite arenas}, these two Mealy machine models are no longer equivalent.
  For instance, an edge-based Mealy machine can be used to check if a self-loop is taken in a vertex, but a vertex-update Mealy machine cannot, as this could require memorising the previous state.
  The construction described above for finite arenas fails as it yields an infinite memory state space.

  In the sequel, we show that finite-memory strategies induced by \textit{vertex-update Mealy machines} suffice for given specifications and provide upper bounds on the sufficient memory size (smaller bounds can be obtained with edge-update Mealy machines).
  As we consider possibly infinite arenas, this is a stronger result that showing that edge-update Mealy machines suffice for these specifications.
  Furthermore, our arena-independent memory bounds for vertex-update Mealy machines cannot not be obtained by first considering edge-update Mealy machines due to the possible blow-up in size whenever a translation is possible.
  \hfill$\lhd$
\end{remark}

\subparagraph*{Games.}
We formalise the goal of a player in two ways.
In the qualitative case, we describe the goal of a player by a set of plays, called an \textit{objective}.
We say that a play $\play$ satisfies an objective $\objective$ if $\play\in\objective$.
For quantitative specifications, we assign a quantity to each play using a \textit{cost function} $\costI\colon \playSet{\arena}\to \overline{\IR}$ that $\playerI$ intends to minimise.
Any goal expressed by an objective $\objective$ can be encoded using a cost function $\costI$ which assigns $0$ to plays in $\objective$ and $1$ to others (i.e., the indicator of $\playSet{\arena}\setminus\objective$); aiming to minimise this cost is equivalent to aiming to satisfy the objective.
To avoid redundancy, we present definitions that are applicable for both objectives and cost functions by using cost functions, and separately present notions that are specific to objectives.

A \textit{game} is an arena augmented with the goals of each player.
Formally, a game is a tuple $\game=\gameTuple$ where $\arena$ is an $\nPlayer$-player arena and, for all $\playerIndex\in\playerSet$, $\costI$ is the cost function of $\playerI$.
The \textit{cost profile} of a play $\play$ in $\game$ is $(\costI(\play))_{\playerIndex\in\playerSet}$.
Given two plays $\play$ and $\play'$, we say that the cost profile of $\play$ is preferable to that of $\play'$ if $\costI(\play)\leq \costI(\play')$ for all $\playerIndex\in\playerSet$.
When the goal of each player is given by an objective, we denote games as tuples $\game = (\arena, (\objective_\playerIndex)_{\playerIndex\in\playerSet})$ where $\objective_\playerIndex$ is the objective of $\playerI$ for all $\playerIndex\in\playerSet$.

\subparagraph*{Objectives and costs.}
We study five classes of games with goals related to \textit{reachability}.
Let $\target\subseteq\vertexSet$ denote a set of target vertices.
We refer to the set of target vertices $\target$ as a \textit{target}.

The reachability objective formalises the goal of reaching the target.
Formally, the \textit{reachability objective} (for $\target$) $\reach{\target}$ is defined by $\{\vertex_0\vertex_1\vertex_2\ldots\in\playSet{\arena}\mid\exists\,\indexPosition\in\IN,\, \vertex_\indexPosition\in\target\}$.
The complement of the reachability objective $\safe{\target}=\playSet{\arena}\setminus\reach{\target}$ is called the \textit{safety} objective (for $\target$) which requires avoiding $\target$.

Next, we introduce a cost function for the goal of reaching a target as soon as possible.
In this context, we assign (non-negative integer) weights to edges via a \textit{weight function} $\weight\colon E\to \IN$, which model, e.g., the time taken when traversing an edge.
The weight function is extended to histories as follows: for $\history = \vertex_0\ldots\vertex_\indexLast\in\historySet{\arena}$, we let $\weight(\history)= \sum_{\indexPosition=0}^{\indexLast-1}\weight((\vertex_\indexPosition, \vertex_{\indexPosition+1}))$.
We define the \textit{shortest-path} cost function (for $\target$ and $\weight$), for all plays $\play = \vertex_0\vertex_1\ldots\in\playSet{\arena}$, by $\costReach{\target}{\weight}(\play) = \weight(\prefix{\play}{\indexLast})$ if $\indexLast = \min\{\indexPosition\in\IN\mid \vertex_{\indexPosition}\in\target\}$ exists and $\costReach{\target}{\weight}(\play)=+\infty$ otherwise.
By assigning an infinite cost to plays that do not visit $\target$, we model the idea that not reaching $\target$ is the least desirable situation.

Finally, the Büchi objective requires reaching the target infinitely often.
Formally, the \textit{Büchi objective} (for $\target$) $\buchi{\target}$ is defined by $\{\vertex_0\vertex_1\vertex_2\ldots\in\playSet{\arena}\mid\forall\,\indexPosition\in\IN,\,\exists\,\indexPosition'\geq\indexPosition,\, \vertex_{\indexPosition'}\in\target\}$.
The complement of a Büchi objective is a co-Büchi objective: the \textit{co-Büchi} objective (for $\target$) is defined as $\cobuchi{\target}=\playSet{\arena}\setminus\buchi{\target}$ and requires visiting $\target$ finitely often.

We refer to games where all players have a reachability (resp.~safety, Büchi, co-Büchi) objective as \textit{reachability} (resp.~\textit{safety}, \textit{Büchi}, \textit{co-Büchi}) \textit{games} and games where all players have a shortest-path cost function as \textit{shortest-path games}.

We introduce some notation that is useful in the sequel.
Let $\game$ be a game on the arena $\arena$ and let $\play\in\playSet{\arena}$ be a play.
First, assume that $\game = (\arena, (\objective_\playerIndex)_{\playerIndex\in\playerSet})$ is a game in which the goals of all players are modelled by objectives.
We let $\satpl{\play} = \{\playerIndex\in\playerSet\mid\play\in\objective_\playerIndex\}$ denote the set of players whose objectives are satisfied by $\play$.
Second, assume that $\game = (\arena, (\spathIW)_{\playerIndex\in\playerSet})$ is a shortest-path game where $\target_1, \ldots, \target_\nPlayer\subseteq\vertexSet$ are targets for each player and $\weight_1$, \ldots, $\weight_\nPlayer\colon\edgeSet\to\IN$ are weight functions for all players.
In this case, we let $\satpl{\play} = \{\playerIndex\in\playerSet\mid \play\in\reach{\target_\playerIndex}\}$ denote the set of players whose targets are visited in $\play$.
Finally, if $\game$ is a reachability, safety or shortest-path game in which $\target_\indexPlayer\subseteq\vertexSet$ is the target of $\playerI$ for all $\playerIndex\in\playerSet$, we let $\vispos{\play} = \{\min\{\indexPosition\in\IN\mid \vertex_\indexPosition\in\target_\playerIndex\}\mid \playerIndex\in\satpl{\play}\}$ denote the set of earliest positions at which targets are visited along $\play$.

\subparagraph*{Nash equilibria.}
Let $\game = (\arena, \costVector)$ be a game and $\vertex_0$ be an initial vertex.
Given a strategy profile $\stratProfile = (\stratI)_{\playerIndex\in\playerSet}$, we say that a strategy $\stratAltI$ of $\playerI$ is a \textit{profitable deviation} (with respect to $\stratProfile$ from $\vertex_0$) if $\costI(\outcome{(\stratAltI, \stratIAdv)}{\vertex_0}) < \costI(\outcome{\stratProfile}{\vertex_0})$.
A \textit{Nash equilibrium} (NE) from $v_0$ is a strategy profile such that no player has a profitable deviation.
Equivalently, $\stratProfile$ is an NE from $\vertex_0$ if, for all $\playerIndex\in\playerSet$ and all plays $\play$ consistent with $\stratIAdv$ starting in $\vertex_0$, $\costI(\play) \geq \costI(\outcome{\stratProfile}{\vertex_0})$.
For the sake of conciseness, we refer to the cost profile of the outcome of an NE as the \textit{cost profile of the NE}.

NEs are guaranteed to exist in the five classes of games we consider; see, e.g.~\cite{DBLP:conf/lfcs/BrihayePS13,depril2013} for reachability games and for shortest-path games in finite arenas, Appendix~\ref{appendix:existence} for shortest-path games in general and~\cite{DBLP:conf/fsttcs/Ummels06} for safety, Büchi and co-Büchi games. Even though NEs exist from all initial vertices, there are no guarantees on the quality of their cost profiles (this is a core motivation of the constrained NE existence problem mentioned in Section~\ref{section:intro}).
In a given game, there can exist several NEs from a given initial vertex with incomparable cost profiles.
Similarly, in a given game with objectives, from a fixed initial vertex, there can exist NEs with which all players satisfy their objective and NEs with which no players satisfy their objective~\cite{DBLP:conf/fsttcs/Ummels06,DBLP:conf/fossacs/Ummels08}.
In general, memory may be necessary to construct an NE with a cost profile that is preferable to a given vector (i.e., a solution to the constrained existence problem).

We provide two illustrative examples below.
First, we provide a shortest-path game in which there exists two NEs with incomparable cost profiles, i.e., there need not exist an NE with a minimal cost profile (with respect to the componentwise ordering).
Second, we provide a reachability game in which memory is needed to obtain an NE such that a given subset of players win.

\begin{figure}
  \centering
  \begin{subfigure}[b]{0.48\textwidth}
    \centering
    \begin{tikzpicture}[node distance=8mm]
      \node[state, align=center] (v0) {$\vertex_0$};
      \node[state, align=center, right = of v0] (t12) {$t_{12}$};
      \node[state, square, align=center, right = of t12] (v1) {$\vertex_1$};
      \node[state, align=center, below = of v1] (v2) {$\vertex_2$};
      \node[state, align=center, right = of v1] (t1) {$t_1$};
      \path[->] (v0) edge node[below] {$3$} (t12);
      \path[->] (t12) edge (v1);
      \path[->] (v0) edge[bend left] (v1);
      \path[->] (t1) edge[bend left] (v1);
      \path[->] (v1) edge[bend left] (t1);
      \path[->] (v1) edge (v2);
      \path[->] (v2) edge [loop left] (v2);
    \end{tikzpicture}
    \caption{A two-player weighted arena in which there are two memoryless NE with incomparable payoff profiles.
      Edges are labelled by their weight and unlabelled edges have a weight of $1$.}
    \label{fig:ex:bonus segment}
  \end{subfigure}
  \hfill
  \begin{subfigure}[b]{0.48\textwidth}
    \centering
    \begin{tikzpicture}[node distance=0.4cm]
      \node[state, align=center] (v0) {$\vertex_0$};
      \node[state, square, align=center, left = of v0] (t2) {$t_2$};
      \node[state, diamond, align=center, right = of v0] (v1) {$\vertex_1$};
      \node[state, regular polygon, regular polygon sides=6, align=center, right = of v1] (v2) {$\vertex_2$};
      \node[state, align=center, right = of v2] (t1) {$t_1$};
      \node[inner sep=0cm, node distance=0.2cm, right = of v1] (ref) {};
      \node[state, align=center, node distance=12mm, below = of ref] (v5) {$\vertex_3$};
      \node[state, diamond, align=center, left = of v5] (t3) {$t_3$};
      \node[state, regular polygon, regular polygon sides=6, align=center, right = of v5] (t4) {$t_4$};
      \path[->] (v0) edge[bend left] (v1);
      \path[->] (v1) edge[bend left] (v0);
      \path[->] (v1) edge[bend left] (v2);
      \path[->] (v2) edge[bend left] (v1);
      \path[->] (v2) edge[bend left] (t1);
      \path[->] (t1) edge[bend left] (v2);
      \path[->] (v0) edge (t2);
      \path[->] (v1) edge (v5);
      \path[->] (v2) edge (v5);
      \path[->] (v5) edge (t3);
      \path[->] (v5) edge (t4);
      \path[->] (t2) edge[loop below] (t2);
      \path[->] (t3) edge[loop left] (t3);
      \path[->] (t4) edge[loop right] (t4);
    \end{tikzpicture}
    \caption{An arena in which memory is required to obtain an NE from $\vertex_0$ with which $\playerOne$ and $\playerTwo$ win in the reachability game where the target of $\playerI$ is $\{t_\indexPlayer\}$ for all $\indexPlayer\in\integerInterval{4}$.}\label{figure:roles of memory}
  \end{subfigure}
  \hfill

  \caption{Two arenas. Circles, squares, diamonds and hexagons respectively denote $\playerOne$, $\playerTwo$, $\player{3}$ and $\player{4}$ vertices.
  }
\end{figure}

\begin{example}  
  Consider the shortest-path game played on the arena depicted in Figure~\ref{fig:ex:bonus segment} where $\target_1= \{t_{12}, t_1\}$ and $\target_2 = \{t_{12}\}$ are the respective targets of $\playerOne$ and $\playerTwo$.
  The memoryless strategy profile $(\stratOne, \stratTwo)$ with $\stratOne(\vertex_0)=t_{12}$ and $\stratTwo(\vertex_1)=\vertex_2$ is an NE from $\vertex_0$ with cost profile $(3, 3)$.
  To check that it is an NE, we observe that if $\playerOne$ deviates and moves from $\vertex_0$ to $\vertex_1$, then their target is not visited and the target of $\playerTwo$ is visited before they act, and thus their cost cannot change even if they deviate.

  Another NE from $\vertex_0$ would be the memoryless strategy profile $(\stratOne', \stratTwo')$ such that $\stratOne'(\vertex_0)=\vertex_1$ and $\stratTwo'(\vertex_1)= t_1$.
  The cost profile of this NE is $(2, +\infty)$, which is incomparable with the cost profile $(3, 3)$ of the above NE.
  \hfill $\lhd$
\end{example}

\begin{example}\label{example:ne:roles of memory}
  We consider the four-player reachability game played on the arena of Figure~\ref{figure:roles of memory} where the target of $\playerI$ is $\{t_\indexPlayer\}$ for all $\indexPlayer\in\integerInterval{4}$.
  We show that there exists an NE from $\vertex_0$ such that $t_1$ and $t_2$ are visited in its outcome, and that any such NE requires memory.

  Intuitively, it suffices to construct a strategy profile resulting in the outcome $\vertex_0\vertex_1\vertex_2t_1\vertex_2\vertex_1\vertex_0t_2^\omega$ such that the decisions of $\playerOne$ in $\vertex_3$ prevent $\player{3}$ and $\player{4}$ from having a profitable deviation.
  This can be ensured by having $\playerOne$ avoiding the target of the owner of the vertex from which $\vertex_3$ was reached (as they must have deviated).
  Formally, we consider a strategy $\stratOne$ of $\playerOne$ such that $\stratOne(\vertex_0) = \vertex_1$, $\stratOne(\vertex_0\vertex_1\vertex_2t_1\vertex_2\vertex_1\vertex_0) = t_2$ and for all histories $\hist$ such that $\last{\hist}=\vertex_3$, we have $\stratOne(\hist) = t_4$ if the penultimate vertex of $\hist$ is $\vertex_1$ and $\stratOne(\hist) = t_3$ otherwise (in particular, when the penultimate vertex of $\hist$ is $\vertex_2$).
  We let $\stratTwo$ denote the unique strategy of $\playerTwo$ in the arena.
  For $\player{3}$, we consider a strategy $\strat{3}$ such that $\strat{3}(\vertex_0\vertex_1) = \vertex_2$ and $\strat{3}(\vertex_0\vertex_1\vertex_2t_1\vertex_2\vertex_1) = \vertex_0$.
  Finally, for $\player{4}$, we let $\strat{4}$ be such that $\strat{4}(\vertex_0\vertex_1\vertex_2) = t_1$ and $\strat{4}(\vertex_0\vertex_1\vertex_2t_1\vertex_2) = \vertex_1$.
  The strategy profile $\stratProfile = (\stratOne, \stratTwo, \strat{3}, \strat{4})$ is an NE from $\vertex_0$ with an outcome satisfying the objectives of $\playerOne$ and $\playerTwo$.
  The two players whose objectives are satisfied have no profitable deviations.
  The other two players can only visit their target by going through $\vertex_3$ and the strategy of $\playerOne$ prevents any deviation moving to $\vertex_3$ from being profitable.

  Without memory, it is not possible to obtain an NE from $\vertex_0$  such that  that $t_1$ and $t_2$ occur in its outcome.
  To visit both $t_1$ and $t_2$, memory is necessary as there is no simple history starting in $\vertex_0$ that visits both vertices (outcomes of memoryless strategy profiles are simple lassos).
  Therefore, memory is required to visit targets that cannot be connected via a simple history from the initial vertex.
  Memory is also required for another reason: it prevents $\player{3}$ and $\player{4}$ from having profitable deviations.
  Indeed, if $\playerOne$ plays without memory, then one of $\player{3}$ or $\player{4}$ would have an incentive to move to $\vertex_3$ to fulfil their objective.
  Thus, memory is useful to prevent the existence of profitable deviations (in this case via a punishment mechanism for players that deviate from the intended outcome).
  The role of memory for constrained NEs is explained further in~\cite{DBLP:conf/fsttcs/BrihayeGMR23}.
  \hfill$\lhd$
\end{example}

The previous example shows that for Pareto optimal NEs (i.e., NEs whose cost profiles cannot be improved for a player without harming the cost of another), memory is necessary in general.
Whether there always exists a memoryless NE in the games we consider is an open question.

\subparagraph*{Zero-sum games.} In a zero-sum game, two players compete with opposing goals.
Formally, a two-player game $\game = (\arena, (\cost{1}, \cost{2}))$ is a \textit{zero-sum game} if $\cost{2} = -\cost{1}$.
We usually shorten the notation of a zero-sum game to $\game = (\arena, \cost{1})$ due to the definition.

Let $\vertex_0\in V$.
If
$\inf_{\stratOne}\sup_{\stratTwo}\costOne(\outcome{(\stratOne, \stratTwo)}{\vertex_0}) = \sup_{\stratTwo}\inf_{\stratOne}\costOne(\outcome{(\stratOne, \stratTwo)}{\vertex_0}),$
where $\stratI$ is quantified over the strategies of $\playerI$, we refer to the above as the \textit{value} of $v_0$ and denote it by $\val(v_0)$.
A game is \textit{determined} if the value is defined for all vertices.

A strategy $\stratOne$ of $\playerOne$ (resp.~$\stratTwo$ of $\playerTwo$) is said to ensure $\alpha\in\overline{\IR}$ from a vertex $\vertex_0$ if all plays $\pi$ consistent with $\stratOne$ (resp.~$\stratTwo$) from $\vertex_0$ are such that $\costOne(\pi)\leq \alpha$ (resp.~$\costOne(\pi)\geq \alpha$).
A strategy of $\playerI$ is \textit{optimal} from $\vertex_0\in\vertexSet$ if it ensures $\val(\vertex_0)$ from $\vertex_0$.
A strategy is a \textit{uniformly optimal} strategy if it ensures $\val(\vertex)$ from $\vertex$ for all $\vertex\in\vertexSet$.
Optimal strategies do not necessarily exist, even if the value does.

\begin{figure}
  \centering
  \begin{tikzpicture}[node distance=0.45cm]
    \node[state, align=center] (v) {$v_0$};
    \node[state, square, align=center, right = of v] (inf) {$v_\infty$};
    \node[state, align=center, below = of inf] (one) {$v_1$};
    \node[state, align=center, right = of one] (two) {$v_2$};
    \node[state, align=center, right = of two] (three) {$v_3$};
    \node[align=center, right = of three] (dots) {\ldots};
    \node[state, align=center, left = of one] (target) {$t$};
\path[<->] (one) edge (inf);
\path[<->] (two) edge (inf);
\path[<->] (three) edge (inf);
    \path[->] (two) edge (one);
    \path[->] (three) edge (two);
    \path[->] (one) edge (target);
    \path[->] (dots) edge (three);
    \path[<->] (dots) edge[bend right] (inf);
    \path[->] (v) edge (inf);
    \path[->] (v) edge[loop left] (v);
    \path[->] (target) edge[loop left] (target);
  \end{tikzpicture}
  \caption{An infinite weighted arena where there is no $\playerTwo$ optimal strategy from $\vertex_\infty$ in the zero-sum shortest path game with $\target=\{t\}$. All edges have a weight of $1$. Circle and squares respectively denote $\playerOne$ and $\playerTwo$ vertices.}
  \label{fig:ex:non-opt}
\end{figure}

\begin{example}\label{ex:no opti}
  Consider the two-player zero-sum game played on the weighted arena illustrated in Figure~\ref{fig:ex:non-opt} where the cost function of $\playerOne$ is $\costReach{\{t\}}{\weight}$.
  Let $\alpha\in\IN\setminus\{0\}$.
  It holds that $\val(\vertex_\alpha)=\alpha$.
  On the one hand, $\player{1}$ can ensure a cost of $\alpha$ from $\vertex_\alpha$ by moving leftward in the illustration.
  On the other hand, $\player{2}$ can ensure a cost of $\alpha$ from $\vertex_\alpha$ with the memoryless strategy that moves from $\vertex_\infty$ to $\vertex_\alpha$.
  It follows that this same memoryless strategy of $\player{2}$ ensures $\alpha+1$ from $\vertex_\infty$.
  We conclude that $\val(\vertex_\infty)=+\infty$.
  However, $\playerTwo$ cannot prevent $t$ from being reached from $\vertex_\infty$, despite its infinite value.
  Therefore, $\playerTwo$ does not have an optimal strategy.
  \hfill$\lhd$
\end{example}

If the goal of $\player{1}$ is formulated by an objective $\objective$, we say that a strategy $\stratOne$ of $\playerOne$ (resp.~$\stratTwo$ of $\playerTwo$) is \textit{winning} from $\vertex_0$ if all plays consistent with it from $\vertex_0$ satisfy $\objective$ (resp.~$\playSet{\arena}\setminus\objective$).
The set of vertices from which $\playerOne$ (resp.~$\playerTwo$) has a winning strategy is called their \textit{winning region} denoted by $\winningOne{\objective}$ (resp.~$\winningTwo{\playSet{\arena}\setminus\objective}$).
A strategy $\stratOne$ of $\playerOne$ (resp.~$\stratTwo$ of $\playerTwo$) is a \textit{uniformly winning} strategy if it is winning from all vertices in $\winningOne{\objective}$ (resp.~$\winningTwo{\playSet{\arena}\setminus\objective}$).

Given an $\nPlayer$-player game $\game = \gameTuple$ where $\arena = \arenaTuple$, we define the \textit{coalition game (for $\playerI$)} as the game opposing $\playerI$ to the coalition of the other players, formally defined as the two-player zero-sum game $\game_\playerIndex= (\arena_\playerIndex, \costI)$ where $\arena_\playerIndex = ((\vertexSetI, \vertexSet\setminus\vertexSetI), \edgeSet)$.
We write $\playerIAdv$ to refer to the coalition of players other than $\playerI$

As we have done for multiplayer nonzero-sum games, we refer to two-player zero-sum games where the objective of $\playerOne$ is a reachability objective as a zero-sum reachability game, and similarly for the other objectives we consider and the shortest-path function.
We remark that zero-sum reachability (resp.~Büchi) games are the dual of zero-sum safety (resp.~co-Büchi) games, as the former can be transformed into the latter by exchanging the two players.

\section{Zero-sum games: punishing strategies}\label{section:zero-sum}
In this section, we provide an overview of relevant results regarding strategies in zero-sum games.
They are of interest to construct Nash equilibria with the classical punishment mechanism (previously described in Section~\ref{section:intro}).
Intuitively, this mechanism functions as follows: if some player deviates from the intended outcome of the NE, the other players coordinate as a coalition to ensure that no deviation is profitable by punishing a player who deviates.
The strategy of the coalition used to sabotage the deviating player is called a \textit{punishing strategy}, and punishing strategies can be obtained via the zero-sum coalition games.

To construct finite-memory NEs, we require punishing strategies that are finite-memory and that behave well independently of the vertex at which a deviation occurs.
In the following, we explain that we can always find memoryless punishing strategies.
In Section~\ref{section:zero-sum:reach}, we recall classical results on zero-sum reachability, safety, Büchi and co-Büchi games.
In Section~\ref{section:zero-sum:short}, we show that memoryless punishing strategies exist in zero-sum shortest-path games, even though the adversary in a zero-sum shortest-path game need not have an optimal strategy.

We fix a two-player arena $\arena = ((\vertexSetOne, \vertexSetTwo), E)$ and a target $\target\subseteq\vertexSet$ for the remainder of this section.

\subsection{Reachability, safety, Büchi and co-Büchi games}\label{section:zero-sum:reach}
Zero-sum reachability, safety, Büchi and co-Büchi games enjoy \textit{memoryless determinacy}: they are determined and for both players, there exist \textit{memoryless uniformly winning strategies}.
A proof of this property can be found in~\cite{DBLP:conf/dagstuhl/Mazala01} for reachability and safety games and in~\cite{DBLP:conf/focs/EmersonJ88} for Büchi and co-Büchi games (Büchi and co-Büchi objectives are special cases of parity objectives).
We summarise this result in the following theorem.
\begin{theorem}\label{thm:qualitative:ML strat}
  Both players have memoryless uniformly winning strategies in zero-sum reachability, safety, Büchi and co-Büchi games.
\end{theorem}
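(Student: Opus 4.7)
The plan is to handle the four objectives in two groups, since reachability and safety are dual (swapping the two players transforms one into the other) and similarly for Büchi and co-Büchi; so it suffices to produce a memoryless uniformly winning strategy on the winning region of $\playerOne$ for reachability and for Büchi, and the symmetric arguments will yield the corresponding strategies for $\playerTwo$ and for the complementary objectives. Because $\vertexSet$ may be infinite, every inductive construction must be performed by transfinite recursion on the ordinals rather than on $\IN$; this is the one technical point that distinguishes the argument from the finite-arena proofs quoted in \cite{DBLP:conf/dagstuhl/Mazala01,DBLP:conf/focs/EmersonJ88}.

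For reachability, I would define the attractor of $\target$ for $\playerOne$ by transfinite induction, setting $A_0=\target$, $A_{\alpha+1}=A_\alpha\cup\{\vertex\in\vertexSetOne\mid\succSet{\vertex}\cap A_\alpha\neq\emptyset\}\cup\{\vertex\in\vertexSetTwo\mid\succSet{\vertex}\subseteq A_\alpha\}$, and $A_\lambda=\bigcup_{\alpha<\lambda}A_\alpha$ at limit ordinals. Since the sequence is increasing in the powerset of $\vertexSet$, it stabilises at some ordinal $\alpha^\ast$; set $A=A_{\alpha^\ast}$. Assign to each $\vertex\in A$ its \emph{rank} $\mathrm{rk}(\vertex)=\min\{\alpha\mid\vertex\in A_\alpha\}$. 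A memoryless strategy $\stratOne$ for $\playerOne$ is obtained by, at each $\vertex\in A\cap\vertexSetOne$ with $\mathrm{rk}(\vertex)>0$, choosing an arbitrary successor of strictly smaller rank (which exists by definition of $A_{\alpha+1}$); along any consistent play the rank strictly decreases as long as $\target$ is not visited, and since ordinals are well-founded, $\target$ is reached. Dually, on $\vertexSet\setminus A$, every vertex of $\playerTwo$ admits a successor outside $A$ and every vertex of $\playerOne$ has all successors outside $A$, so a memoryless choice of such a successor at each $\playerTwo$ vertex yields a uniform winning strategy for $\playerTwo$ on $\vertexSet\setminus A$, and the winning region of $\playerOne$ is exactly $A$.

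For Büchi, I would compute the winning region as a nested fixpoint combining a greatest fixpoint (``remain in the set from which we can always return to $\target$'') and a least fixpoint (the attractor to $\target$ within that set). Concretely, define $W_0=\vertexSet$ and, by transfinite induction, let $W_{\alpha+1}$ be the $\playerOne$-attractor of $\target\cap W_\alpha$ computed \emph{inside} the subarena induced by $W_\alpha$ (restricting edges to stay in $W_\alpha$, where $\playerTwo$ vertices without successor in $W_\alpha$ are excluded); at limits, take intersections. The sequence decreases and stabilises at some set $W$; by construction $W$ is a $\playerOne$-trap in which every vertex of $W$ is in the attractor of $\target\cap W$. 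Now combine two memoryless choices: for $\vertex\in W\cap\vertexSetOne$ at strictly positive rank in the inner attractor, pick a successor in $W$ of smaller rank (staying in $W$ and progressing toward $\target$), and for $\vertex\in W\cap\vertexSetOne\cap\target$ (rank $0$), pick an arbitrary successor in $W$; this yields a single memoryless strategy that, on any consistent play from $W$, repeatedly hits $\target$ and therefore satisfies $\buchi{\target}$. On $\vertexSet\setminus W$, a symmetric argument produces a memoryless winning strategy for $\playerTwo$ for $\cobuchi{\target}$, using the fact that at some stage $\alpha$ a vertex is dropped from $W_\alpha$ precisely because $\playerTwo$ can either escape $W_\alpha$ or avoid $\target\cap W_\alpha$.

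The main obstacle is ensuring that the nested fixpoint for Büchi terminates and yields a \emph{single} memoryless strategy that simultaneously works from every vertex of $W$: the rank-decreasing argument guarantees that $\target$ is reached in finitely many steps, but one must verify that after reaching $\target$ the play is still in $W$, so that the attractor computation restarts and $\target$ is hit infinitely often. This is exactly what the restriction to the subarena $W_\alpha$ ensures at each stage, but making the argument clean in the transfinite setting is the delicate point; it is precisely the classical parity-game fixpoint result of \cite{DBLP:conf/focs/EmersonJ88} specialised to the Büchi case, and I would cite it rather than re-derive it in full, while pointing out that its extension to infinite arenas is straightforward once the ordinal-indexed attractor above is in place.
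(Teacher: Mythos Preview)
The paper does not prove this theorem at all: it treats memoryless determinacy of reachability, safety, B\"uchi and co-B\"uchi games as a known result and simply cites \cite{DBLP:conf/dagstuhl/Mazala01} for reachability/safety and \cite{DBLP:conf/focs/EmersonJ88} for B\"uchi/co-B\"uchi (as special cases of parity). So there is no proof in the paper to compare against; your proposal goes well beyond what the paper does by actually sketching an argument.

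Your sketch is the standard one and is essentially correct. The transfinite attractor with ordinal ranks for reachability is exactly the right construction on possibly infinite arenas, and the rank-decrease argument gives a single memoryless uniformly winning strategy; the complement is then a trap for $\playerTwo$, which is precisely what the paper records separately as Lemma~\ref{lemma:qualitative:safety}. For B\"uchi, the decreasing sequence $W_\alpha$ of attractors-within-subarenas is the Zielonka iteration specialised to two priorities, and your observation that one must restrict the attractor to the current $W_\alpha$ so that the play stays in $W$ after each target visit is exactly the point that makes the strategy uniform. The one place to be a bit more careful is the memoryless co-B\"uchi strategy for $\playerTwo$ on $\vertexSet\setminus W$: the phrase ``escape $W_\alpha$ or avoid $\target\cap W_\alpha$'' hides that you must glue together, across the transfinite layers where vertices are dropped, a safety-style strategy (avoid $\target\cap W_\alpha$ inside $W_\alpha$) with an attractor-style strategy (reach the already-removed part). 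This gluing is standard and does yield a single memoryless strategy, but it is worth saying explicitly rather than leaving it to ``a symmetric argument''. Your final remark that \cite{DBLP:conf/focs/EmersonJ88} is stated for finite structures but extends once the ordinal attractor is available is accurate; if you want a reference that handles the infinite case directly, Zielonka's positional determinacy proof for parity games is typically phrased so that it goes through verbatim with transfinite induction.
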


Let $\game = (\arena, \safe{\target})$ be a zero-sum safety game.
In $\game$, the winning region $\winningOne{\safe{\target}}$ of $\playerOne$ is a \textit{trap} for $\playerTwo$: for all $\vertex\in\winningOne{\safe{\target}}$, the set $\succSet{\vertex}\cap\winningOne{\safe{\target}}$ is non-empty if $\vertex\in\vertexSetOne$ and $\succSet{\vertex}\subseteq\winningOne{\safe{\target}}$ if $\vertex\in\vertexSetTwo$.
In particular, strategies of $\playerOne$ that only select successor vertices that are in $\winningOne{\safe{\target}}$ whenever possible are uniformly winning strategies of $\playerOne$ in $\game$ as they are guaranteed to avoid $\target\subseteq\winningTwo{\reach{\target}}$ from any vertex in $\winningOne{\safe{\target}}$.
We summarise this classical property in the following lemma for later.

\begin{lemma}\label{lemma:qualitative:safety}
  Let $\game = (\arena, \safe{\target})$ be a zero-sum safety game.
  The winning region $\winningOne{\safe{\target}}$ of $\playerOne$ in $\game$ is a trap for $\playerTwo$.
  Furthermore, for all strategies $\stratOne$ of $\playerOne$ such that, for all $\hist\in\historySetOne{\arena}$, $\last{\hist}\in\winningOne{\safe{\target}}$, implies that $\stratOne(\hist)\in\winningOne{\safe{\target}}$, it holds that $\stratOne$ is a uniformly winning strategy of $\playerOne$.
\end{lemma}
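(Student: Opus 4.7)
The plan is to handle the trap property first and then derive the winning-strategy claim from it. Before either step, I would note the trivial but crucial fact that $\target \cap \winningOne{\safe{\target}} = \emptyset$: any play starting at a target vertex immediately violates $\safe{\target}$, so no strategy of $\playerOne$ can be winning from such a vertex.

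For the trap property, I would fix $\vertex \in \winningOne{\safe{\target}}$ and a winning strategy $\stratOne$ of $\playerOne$ from $\vertex$. If $\vertex \in \vertexSetOne$, I would consider the chosen successor $\vertex' = \stratOne(\vertex)$ and argue that the shifted strategy $\stratOne'(\hist) = \stratOne(\vertex \cdot \hist)$ (defined for histories starting at $\vertex'$) is winning from $\vertex'$, because plays consistent with $\stratOne'$ from $\vertex'$ are exactly the suffixes from position $1$ of plays consistent with $\stratOne$ from $\vertex$, all of which avoid $\target$. Hence $\vertex' \in \winningOne{\safe{\target}}$, witnessing $\succSet{\vertex} \cap \winningOne{\safe{\target}} \neq \emptyset$. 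If $\vertex \in \vertexSetTwo$, I would proceed by contradiction: if some $\vertex' \in \succSet{\vertex}$ satisfied $\vertex' \notin \winningOne{\safe{\target}}$, then by memoryless determinacy of zero-sum safety games (Theorem~\ref{thm:qualitative:ML strat}) $\playerTwo$ would have a winning strategy $\stratTwo'$ from $\vertex'$; prepending the forced move $\vertex \to \vertex'$ and shifting $\stratTwo'$ yields a winning strategy of $\playerTwo$ from $\vertex$, contradicting $\vertex \in \winningOne{\safe{\target}}$.

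For the second claim, I would take any strategy $\stratOne$ of $\playerOne$ satisfying the stated closure condition, fix $\vertex_0 \in \winningOne{\safe{\target}}$, and show by induction on $\indexPosition$ that every play $\play = \vertex_0\vertex_1\ldots$ consistent with $\stratOne$ stays in $\winningOne{\safe{\target}}$. The base case is $\vertex_0 \in \winningOne{\safe{\target}}$ by hypothesis. For the inductive step, if $\vertex_\indexPosition \in \vertexSetOne$, then $\vertex_{\indexPosition+1} = \stratOne(\prefix{\play}{\indexPosition}) \in \winningOne{\safe{\target}}$ by the closure assumption on $\stratOne$; if $\vertex_\indexPosition \in \vertexSetTwo$, then $\vertex_{\indexPosition+1} \in \succSet{\vertex_\indexPosition} \subseteq \winningOne{\safe{\target}}$ by the trap property proved above. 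Combined with the opening observation $\target \cap \winningOne{\safe{\target}} = \emptyset$, this shows no $\vertex_\indexPosition$ lies in $\target$, so $\play \in \safe{\target}$ and $\stratOne$ is uniformly winning.

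I do not anticipate any serious obstacle; the only care needed is in the strategy-shifting bookkeeping for the $\vertex \in \vertexSetOne$ case of the trap property and in the clean appeal to memoryless determinacy (rather than bare determinacy) in the $\vertex \in \vertexSetTwo$ case, which ensures $\playerTwo$ actually has a winning strategy from any vertex outside $\winningOne{\safe{\target}}$.
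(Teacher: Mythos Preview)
Your proposal is correct and follows the same reasoning the paper sketches in the paragraph preceding the lemma: the paper simply asserts the trap property as standard and observes that, since $\target\subseteq\winningTwo{\reach{\target}}$, any strategy that keeps $\playerOne$'s moves inside $\winningOne{\safe{\target}}$ forces the play to stay there and hence avoid $\target$. Your write-up just fills in the details the paper omits (the strategy-shifting argument for the trap property and the explicit induction along the play); one minor point is that plain determinacy already suffices for the $\vertex\in\vertexSetTwo$ case, so the appeal to memoryless determinacy, while harmless, is stronger than needed.
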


\subsection{Shortest-path games}\label{section:zero-sum:short}
We now discuss properties of zero-sum shortest-path games.
Let $\weight\colon\edgeSet\to\IN$ be a weight function and $\game = (\arena, \costReach{\target}{\weight})$ be a zero-sum shortest-path game (fixed for this whole section).
First, we note that $\game$ is determined.
This can be shown using the determinacy of games with open objectives~\cite{gale1953infinite}.
An objective $\objective$ is \textit{open} if it can be written as a union of sets of continuations of histories, i.e., a union of the form $\bigcup_{\hist\in\mathcal{H}}\{\concat{\hist}{\play}\mid\play\in\playSet{\arena} \text{ and } \last{\hist}=  \first{\play}\}$ for some set $\mathcal{H}\subseteq\historySet{\arena}$.
The following argument also shows that $\playerOne$ has optimal strategies from all vertices and $\playerTwo$ has optimal strategies from vertices with a finite value.

\begin{restatable}{lemma}{lemShortDeterminacy}\label{lem:short:determinacy}
  The shortest-path game $\game$ is determined.
  For all $\vertex_0\in\vertexSet$, $\playerOne$ has an optimal strategy from $\vertex_0$ and, if $\val(\vertex_0) < +\infty$, then $\playerTwo$ has an optimal strategy from $\vertex_0$.
\end{restatable}
\begin{proof}
  For $\alpha\in\IN$, let us consider the objective $\{\costReach{\target}{\weight}\leq\alpha\} = \{\play\in\playSet{\arena}\mid \costReach{\target}{\weight}(\play)\leq\alpha\}$.
  Let $\alpha\in\IN$.
  The objective $\{\costReach{\target}{\weight}\leq\alpha\}$ is open: it is the set of continuations of histories ending in $\target$ with weight at most $\alpha$.
  We obtain that $\game_\alpha=(\arena, \{\costReach{\target}{\weight}\leq\alpha\})$ is determined~\cite{gale1953infinite}.
  Therefore, in $\game_\alpha$, from any vertex $\vertex_0\in\vertexSet$, there is $\playerIndex\in\{1, 2\}$ such that $\playerI$ has a winning strategy $\stratI$ from $v_0$.
  If $\playerOne$ wins from $\vertex_0$ in $\game_\alpha$, then $\playerOne$ can ensure $\alpha$ in $\game$, whereas if $\playerTwo$ wins from $\vertex_0$ in $\game_\alpha$, then $\playerTwo$ can ensure $\alpha+1$ in $\game$.

  Let $\vertex_0\in\vertexSet$.
  We consider two cases.
  First, assume that for all $\alpha\in\IN$, $\playerTwo$ wins from $\vertex_0$ in $\game_\alpha$.
  It follows that $\val(\vertex_0)=+\infty$ and that all strategies of $\playerOne$ are optimal from $\vertex_0$.
  We now assume that $\playerOne$ wins from $\vertex_0$ in $\game_\alpha$ for some $\alpha\in\IN$.
  We have $\val(\vertex_0) = \min\{\alpha\in\IN\mid\playerOne\text{ wins from }\vertex_0\text{ in } \game_\alpha\}$, as both players have a strategy ensuring this number from $\vertex_0$ by the above (and the fact that all strategies of $\playerTwo$ ensure $0$ from any vertex in $\game$).
  In particular, both players have optimal strategies in $\game$ from $\vertex_0$.
\end{proof}

On the one hand, it can be shown that $\playerOne$ has a uniformly optimal memoryless strategy.
We defer the proof of this statement to Appendix~\ref{appendix:thm:short:pOneOpt}.
\begin{restatable}{theorem}{thmShortPOneOpt}\label{thm:short:pOneOpt}
  In the shortest-path $\game$, $\playerOne$ has a uniformly optimal memoryless strategy that is uniformly winning in the zero-sum reachability game $(\arena, \reach{\target})$.
\end{restatable}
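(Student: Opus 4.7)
The plan is to construct a memoryless strategy $\stratOne^*$ from the value function of $\game$ and argue that it simultaneously achieves optimality in $\game$ and reaches $\target$ from the reachability winning region. From Lemma~\ref{lem:short:determinacy}, the value $\val\colon \vertexSet \to \INbar$ is well-defined. Because weights lie in $\IN$ and values lie in $\INbar$, Bellman-style identities hold: for $\vertex \in \vertexSetOne \setminus \target$ with $\val(\vertex) < +\infty$, there exists $\vertex' \in \succSet{\vertex}$ realizing $\val(\vertex) = \weight(\vertex, \vertex') + \val(\vertex')$ (the infimum is attained by discreteness of $\INbar$); for $\vertex \in \vertexSetTwo \setminus \target$, every successor $\vertex'$ satisfies $\val(\vertex) \geq \weight(\vertex, \vertex') + \val(\vertex')$; and $\val(\vertex) = 0$ on $\target$. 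Let $F = \{\vertex \in \vertexSet \mid \val(\vertex) < +\infty\}$. Since ensuring a finite cost from $\vertex$ forces reaching $\target$, we have $F \subseteq \winningOne{\reach{\target}}$.

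Next, let $\stratOne^R$ be a memoryless uniformly winning strategy for $\playerOne$ in the zero-sum reachability game $(\arena, \reach{\target})$, provided by Theorem~\ref{thm:qualitative:ML strat}. I would define the memoryless strategy $\stratOne^*$ on $\vertexSetOne \setminus \target$ by the following rule: on $F$, select a Bellman-optimal successor; on $\winningOne{\reach{\target}} \setminus F$, mimic $\stratOne^R$; elsewhere, choose arbitrarily. Uniform optimality at $\vertex_0$ with $\val(\vertex_0) = +\infty$ is trivial. For $\vertex_0 \in F$ with $\val(\vertex_0) = \alpha$, a potential argument on any play $\play = \vertex_0\vertex_1\ldots$ consistent with $\stratOne^*$ shows that $\weight(\prefix{\play}{k}) + \val(\vertex_k)$ is non-increasing in $k$ and bounded by $\alpha$, so $\play$ remains in $F$; once $\target$ is visited at some position $r$, $\val(\vertex_r) = 0$ and the invariant yields $\costReach{\target}{\weight}(\play) = \weight(\prefix{\play}{r}) \leq \alpha$. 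Uniform winning in $(\arena, \reach{\target})$ then follows: the potential analysis handles $F$; for $\vertex_0 \in \winningOne{\reach{\target}} \setminus F$, a play consistent with $\stratOne^*$ either eventually enters $F$ (and reaches $\target$ by the above) or stays in $\winningOne{\reach{\target}} \setminus F$ throughout, where $\stratOne^*$ coincides with $\stratOne^R$ and $\winningOne{\reach{\target}}$ is a trap for $\playerTwo$, forcing $\target$ to be reached.

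The main obstacle is to guarantee that $\target$ is visited in finite time along plays from $F$: the potential argument alone does not preclude an infinite play along which the value is preserved by zero-weight edges and $\target$ is never visited. I would resolve this by breaking ties in the Bellman rule using a well-founded rank on $F$ measuring closeness to $\target$ under optimal play. A natural candidate is the least $k \in \IN$ for which $\playerOne$ has a strategy ensuring $\target$ is visited within $k$ steps with accumulated cost at most $\val(\vertex)$; by open determinacy (as already exploited in the proof of Lemma~\ref{lem:short:determinacy}), this rank is finite on $F$, and on zero-weight transitions the Bellman-optimal successor chosen to minimize this rank strictly decreases it, precluding indefinite stalling and ensuring that $\target$ is eventually reached.
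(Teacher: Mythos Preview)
Your overall plan---Bellman identities, a potential/telescoping argument for the cost bound, and a reachability component to guarantee that $\target$ is actually hit---is sound and matches the paper's skeleton. The genuine gap is the rank you propose for tie-breaking. Your rank, ``the least $k\in\IN$ such that $\playerOne$ can ensure $\target$ within $k$ steps at cost at most $\val(\vertex)$'', need not be finite on $F$ when the arena is infinitely branching. Concretely, take $\vertex\in\vertexSetTwo$ with zero-weight edges to vertices $u_1,u_2,\ldots$, where from each $u_j$ there is a zero-weight simple path of length $j$ to $\target$. Then $\val(\vertex)=0$, yet for every $k$ the adversary can pick $u_{k+1}$ and delay arrival beyond $k$ steps, so your rank at $\vertex$ is $+\infty$. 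Placing a $\playerOne$ vertex $\vertex_0$ with a zero-weight self-loop and a zero-weight edge to $\vertex$ makes both successors Bellman-optimal and both of rank $+\infty$, so ``minimise the rank'' fails to forbid the self-loop. Open determinacy (as in Lemma~\ref{lem:short:determinacy}) only tells you the cost-$\alpha$ objective is determined; it gives no uniform step bound, so it does not rescue the finiteness claim.

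The paper avoids this difficulty by a different packaging of the same ideas: it passes to the sub-arena $\arena'$ obtained by deleting, at every $\vertex\in\vertexSetOne\setminus\target$, all edges that are not Bellman-optimal, shows that values and the reachability winning region are unchanged, and then simply takes a memoryless uniformly winning strategy for $(\arena',\reach{\target})$ (Theorem~\ref{thm:qualitative:ML strat}). That strategy is, by construction, Bellman-optimal at every $\playerOne$ vertex, and the reachability theorem already guarantees termination without any step-count rank; the cost bound then follows from the same telescoping sum you wrote. If you want to salvage your direct construction, replace your natural-number rank by the (ordinal) attractor rank in $(\arena',\reach{\target})$ and break ties by strictly decreasing that rank---but at that point you have essentially reproduced the paper's argument.
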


On the other hand, we have illustrated that $\playerTwo$ does not necessarily have an optimal strategy from states with an infinite value in Example~\ref{ex:no opti}.
We further observe (via the same game) that, in general, there need not be memoryless strategies of $\playerTwo$ that are optimal from all vertices of finite value.
\begin{example}[Example~\ref{ex:no opti} continued]
  We consider the game of Example~\ref{ex:no opti} and build on what we have previously shown.
  We prove that $\player{2}$ does not have a memoryless strategy in this game that ensures $\alpha$ from $\vertex_\alpha$ for all $\alpha\in\IN\setminus\{0\}$.
    Consider the memoryless strategy $\stratTwo$ of $\playerTwo$ such that $\stratTwo(\vertex_\infty)= \vertex_\alpha$ for some $\alpha\in\IN\setminus\{0\}$.
    This strategy cannot ensure more than $\alpha+2$ from the vertex $\vertex_{\alpha+3}$; if $\playerOne$ moves from $\vertex_{\alpha+3}$ to $\vertex_\infty$, then moves leftwards from $\vertex_{\alpha}$, the cost of the resulting outcome is $\alpha+2<\val(\vertex_{\alpha+3})$.
    Therefore, there is no memoryless strategy of $\playerTwo$ in this game that ensures, from all finite-value vertices, their value.\hfill $\lhd$
\end{example}

The previous example highlights that uniformly optimal strategies need not exist in zero-sum shortest-path games for the second player.
However, to implement the punishment mechanism, we do not need optimal strategies: it suffices to use strategies that sufficiently punish deviating players no matter where a deviation occurs.
Formally, we show that there exists a family $(\stratTwo^\alpha)_{\alpha\in\IN}$ of memoryless strategies of $\playerTwo$ in $\game$ such that, for all $\alpha\in\IN$, $\stratTwo^\alpha$ is winning from any vertex in the winning region of $\playerTwo$ in the reachability game $(\arena, \reach{\target})$ and ensures the minimum of $\alpha$ and the value of the vertex from any other vertex.
Intuitively, the parameter $\alpha$ quantifies by how much $\playerOne$ should be sabotaged (uniformly).

Let $\alpha\in\IN$.
We define $\stratTwo^\alpha$ as follows.
On $\winningTwo{\safe{\target}}$, we let $\stratTwo^\alpha$ coincide with a uniformly winning memoryless strategy of $\playerTwo$ in $(\arena, \reach{\target})$.
For all $\vertex\in\vertexSetTwo\setminus\winningTwo{\safe{\target}}$, we $\stratTwo^\alpha(\vertex)$ be a successor $\vertex'$ of $\vertex$ maximising $\weight(\vertex,\vertex')+\val(\vertex')$ if there exists one, and,  otherwise let $\stratTwo^\alpha(\vertex)$ be a successor $\vertex'$ of $\vertex$ such that $\weight(\vertex,\vertex')+\val(\vertex')\geq\alpha$.
This ensures that from any vertex with an infinite value outside of $\winningTwo{\safe{\target}}$, $\playerTwo$ ensures a cost of at least $\alpha$.
In particular, from any vertex $\vertex$ with $\val(\vertex)<\alpha$, $\playerOne$ cannot exploit the decisions of $\playerTwo$ in infinite-value vertices to obtain a cost less than $\val(\vertex)$ from $\vertex$ (as illustrated in Example~\ref{ex:no opti}).
We formally establish the required properties of $\stratTwo^\alpha$ in the following proof.

\begin{restatable}{theorem}{thmShortMLStrat}\label{thm:short:ML strat}
  For all $\alpha\in\IN$, there exists a memoryless strategy $\stratTwo^\alpha$ of $\playerTwo$ such that, for all $\vertex\in\vertexSet$:
  \begin{enumerate}[(i)]
  \item $\stratTwo^\alpha$ is winning from $\vertex$ for $\playerTwo$ in the game $(\arena, \reach{\target})$ if $\vertex\in\winningTwo{\safe{\target}}$ and\label{item:short:ML strat:one}
  \item $\stratTwo^\alpha$ ensures a cost of at least $\min\{\val(v), \alpha\}$.\label{item:short:ML strat:two}
  \end{enumerate}
\end{restatable}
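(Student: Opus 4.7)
The plan is to define $\stratTwo^\alpha$ by a case analysis based on whether a vertex lies in $\winningTwo{\safe{\target}}$, then verify (i) using the trap property of this winning region and (ii) using a potential-function argument exploiting a Bellman characterisation of $\val$.

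First I would establish that, under the determinacy given by Lemma \ref{lem:short:determinacy}, $\val$ satisfies the Bellman relations $\val(\vertex) = \sup_{\vertex'\in\succSet{\vertex}}(\weight(\vertex,\vertex')+\val(\vertex'))$ for $\vertex \in \vertexSetTwo$ (via a routine splitting-on-first-move argument) and $\val(\vertex) \leq \weight(\vertex,\vertex')+\val(\vertex')$ for every successor $\vertex'$ of $\vertex \in \vertexSetOne$ (with equality for the successor chosen by a memoryless uniformly optimal strategy of $\playerOne$, which exists by Theorem \ref{thm:short:pOneOpt}). The crucial observation is that any subset of $\INbar$ either attains its supremum or has supremum $+\infty$, because a bounded set of non-negative integers has a maximum by well-ordering. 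Hence for $\vertex \in \vertexSetTwo \setminus \winningTwo{\safe{\target}}$ with $\val(\vertex) < +\infty$ some successor realises $\val(\vertex)$, whereas if $\val(\vertex) = +\infty$ there exists a successor $\vertex'$ with $\weight(\vertex,\vertex')+\val(\vertex')\geq\alpha$. I then define $\stratTwo^\alpha$ to coincide on $\winningTwo{\safe{\target}}$ with a memoryless uniformly winning strategy of $\playerTwo$ in the zero-sum reachability game $(\arena,\reach{\target})$ (Theorem \ref{thm:qualitative:ML strat}), to select a maximising successor on finite-value $\playerTwo$ vertices outside $\winningTwo{\safe{\target}}$, and to select any successor with sum at least $\alpha$ on infinite-value $\playerTwo$ vertices outside $\winningTwo{\safe{\target}}$.

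Property (i) would follow from the analogue of Lemma \ref{lemma:qualitative:safety} with the players swapped: $\winningTwo{\safe{\target}}$ is a trap for $\playerOne$, so any play consistent with $\stratTwo^\alpha$ from $\vertex \in \winningTwo{\safe{\target}}$ stays in $\winningTwo{\safe{\target}}$, and since $\target \subseteq \vertexSet \setminus \winningTwo{\safe{\target}}$, the target is avoided. For (ii), fix a play $\play = \vertex_0\vertex_1\ldots$ consistent with $\stratTwo^\alpha$ from $\vertex_0 = \vertex$. If $\target$ is never visited, the cost is $+\infty$ and the bound is trivial. Otherwise let $\ell = \min\{i \mid \vertex_i \in \target\}$; by (i) and the trap argument every $\vertex_i$ for $i \leq \ell$ lies outside $\winningTwo{\safe{\target}}$. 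I would introduce the potential $\phi_i = W_i + \val(\vertex_i)$ with $W_i = \sum_{j<i}\weight(\vertex_j,\vertex_{j+1})$. The Bellman relations then give $\phi_{i+1} \geq \phi_i$ at $\playerOne$ vertices, $\phi_{i+1} = \phi_i$ at finite-value $\playerTwo$ vertices, and $\phi_{i+1} \geq W_i + \alpha$ at infinite-value $\playerTwo$ vertices (where $\phi_i = +\infty$). Since $\vertex_\ell \in \target$ implies $\val(\vertex_\ell) = 0$, we have $W_\ell = \phi_\ell$, so if no infinite-value $\playerTwo$ vertex appears then $W_\ell \geq \phi_0 = \val(\vertex)$; otherwise, letting $k$ denote the position of the last such vertex, $\phi$ is non-decreasing from $k+1$ to $\ell$ and $W_\ell = \phi_\ell \geq \phi_{k+1} \geq W_k + \alpha \geq \alpha$. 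In both sub-cases $W_\ell \geq \min(\val(\vertex),\alpha)$.

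I expect the main technical delicacy to be justifying the Bellman equation for $\playerTwo$ vertices in the absence of optimal strategies for $\playerTwo$: one cannot produce a single witnessing strategy and must instead argue the $\sup$-$\inf$ identity from first principles, comparing $\playerTwo$ strategies differing only in their first move. A secondary subtlety is the possibility of several infinite-value $\playerTwo$ vertices occurring before $\ell$, which could be mistaken for a cascading loss of potential; restricting attention to the last such position neatly bypasses this concern.
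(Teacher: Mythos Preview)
Your proposal is correct and follows essentially the same approach as the paper: the strategy $\stratTwo^\alpha$ is defined identically, property~(i) is handled by the same trap argument, and property~(ii) by the same potential idea. The only difference is bookkeeping: the paper maintains the single invariant $\weight(\history)+\min\{\val(\last{\history}),\alpha\}\geq\min\{\val(\vertex_0),\alpha\}$ by induction on history length, capping the potential at $\alpha$ so that infinite values are absorbed uniformly, whereas you track the uncapped quantity $W_i+\val(\vertex_i)$ and treat infinite values via a separate case split on the last infinite-value $\playerTwo$ vertex.
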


\begin{proof}
  Let $\stratTwo^{\safe{\target}}$ be a memoryless uniformly winning strategy of $\playerTwo$ in $(\arena, \reach{\target})$ (cf.~Theorem~\ref{thm:qualitative:ML strat}).
  For $\vertex\in\vertexSetTwo$, we let $\stratTwo^\alpha(\vertex)=\stratTwo^{\safe{\target}}(\vertex)$ if $\vertex\in\winningTwo{\safe{\target}}$, otherwise, if $\max_{\vertex'\in\succSet{\vertex}}(\weight(\vertex, \vertex')+\val(\vertex'))$ is defined, we let $\stratTwo^\alpha(\vertex)$ be a vertex attaining this maximum, and, otherwise, we let $\stratTwo^\alpha(\vertex)=\vertex'$ where $\vertex'\in\succSet{\vertex}$ is such that $\val(\vertex') + \weight(\vertex, \vertex')\geq\alpha$.

  We prove that $\stratTwo^\alpha$ satisfies the claimed properties.
  To show~\ref{item:short:ML strat:one}, it suffices to prove that any play starting in $\winningTwo{\safe{\target}}$ consistent with $\stratTwo^{\safe{\target}}$ never leaves $\winningTwo{\safe{\target}}$.
  We can prove this by contradiction.
  Assume that there there exists $\vertex\in\winningTwo{\safe{\target}}$ such that $\stratTwo^{\safe{\target}}(\vertex)\notin\winningTwo{\safe{\target}}$.
  By determinacy of safety games, $\playerOne$ can force a visit to $\target$ from $\stratTwo^{\safe{\target}}(\vertex)$.
  This implies that $\stratTwo^{\safe{\target}}$ is not winning from $\vertex$, contradicting the fact that $\stratTwo^{\safe{\target}}$ is uniformly winning.
  
  To establish~\ref{item:short:ML strat:two}, we show the following property: for any history $\history=\vertex_0\ldots\vertex_\indexLast$ that is consistent with $\stratTwo^\alpha$ such that for all $\indexPosition<\indexLast$, $\vertex_\indexPosition\notin\target$, it holds that $\weight(\history) + \min\{\val(\vertex_\indexLast),\alpha\} \geq \min\{\val(\vertex_0),\alpha\}.$

  We proceed by induction on the length of histories.
  For all $\vertex_0\in\vertexSet$, the property is immediate for the history $\history=\vertex_0$.
  We now consider a suitable history $\history =\vertex_0\ldots\vertex_\indexLast$ with $\indexLast\geq 1$ and assume the property holds for $\history'=\vertex_0\ldots\vertex_{\indexLast-1}$ by induction (note that $\history'$ is of the suitable form as well).
  Let $\edge=(\vertex_{\indexLast-1}, \vertex_{\indexLast})$.
  
  We discuss two cases depending on whether $\val(\vertex_{\indexLast-1})$ is finite and split each case depending on whom controls $\vertex_{\indexLast-1}$.
  We first assume that $\val(\vertex_{\indexLast-1})$ is finite.
  By Lemma~\ref{lem:short:determinacy}, both players have optimal strategies from any vertex with a finite value.

  We observe that
  \begin{align*}
    \weight(\history) + \min\{\val(\vertex_\indexLast),\alpha\} & = \weight(\history') + \weight(\edge) + \min\{\val(\vertex_\indexLast),\alpha\} \\
                                                                & \geq \weight(\history') + \min\{\val(\vertex_\indexLast)+\weight(\edge),\alpha\}.
  \end{align*}
  To conclude by induction, it suffices to show that $\val(\vertex_\indexLast)+\weight(\edge)\geq\val(\vertex_{\indexLast-1})$.
  If $\vertex_{\indexLast-1}\in\vertexSetOne$, $\playerOne$ can ensure a cost of $\val(\vertex_\indexLast)+\weight(\edge)$ from $\vertex_{\indexLast-1}$ by moving from $\vertex_{\indexLast-1}$ to $\vertex_\indexLast$ then playing optimally from there, yielding the desired inequality.
  Assume now that $\vertex_{\indexLast-1}\in\vertexSetTwo$. 
  For all $\vertex'\in\succSet{\vertex_{\indexLast-1}}$, it holds that if $\playerTwo$ can ensure $\beta\in\IN$ from $\vertex'$, then $\playerTwo$ can ensure $\beta + \weight(\vertex_{\indexLast-1}, \vertex')$ from $\vertex_{\indexLast-1}$, i.e., $\val(\vertex_{\indexLast-1}) \geq \beta + \weight(\vertex_{\indexLast-1}, \vertex')$.
  Because $\val(\vertex_{\indexLast-1})$ is finite, it follows that the value of all successors of $\vertex_{\indexLast-1}$ also is.
  The definition of $\stratTwo^\alpha$ and $\vertex_\indexLast=\stratTwo^\alpha(\vertex_{\indexLast-1})$ imply that $\val(\vertex_{\indexLast-1}) = \val(\vertex_\indexLast) + \weight(\vertex_{\indexLast-1}, \vertex_\indexLast)$, ending the proof of this case.

  Let us now assume that $\val(\vertex_{\indexLast-1})=+\infty$.
  We first consider the case $\vertex_{\indexLast-1}\in\vertexSetOne$.
  Then all successors of $\vertex_{\indexLast-1}$ have an infinite value, otherwise $\playerOne$ could ensure some natural number from $\vertex_{\indexLast-1}$ by moving to a successor with finite value and playing optimally from there.
  We must therefore show that $\weight(\history) + \alpha\geq\min\{\val(\vertex_0), \alpha\}$, which follows directly from $\alpha\geq\min\{\val(\vertex_0), \alpha\}$.
  
  Next, we assume that $\vertex_{\indexLast-1}\in\vertexSetTwo$.
  It follows from $\val(\vertex_{\indexLast-1})=+\infty$ and the definition of $\stratTwo^\alpha$ that $\val(\vertex_\indexLast) + \weight(\edge)\geq \alpha$.
  The desired inequality follows from $\weight(\history) \geq \weight(\edge)$, ending the induction proof.

  Let $\vertex\in\vertexSet\setminus\winningTwo{\safe{\target}}$.
  We now use the previous property to conclude that $\stratTwo^\alpha$ ensures $\min\{\val(\vertex), \alpha\}$ from $\vertex$.
  Let $\play$ be a play consistent with $\stratTwo^\alpha$ starting in $\vertex$.
  If $\play$ does not visit $\target$, then $\costReach{\target}{\weight}(\play)=+\infty$.
  Otherwise, let $\history$ be the prefix of $\play$ up to the first vertex in $\target$ included.
  Then, we have $\costReach{\target}{\weight}(\play)=\weight(\history)\geq \min\{\val(\vertex), \alpha\}$ by the previous property.
  This shows that $\stratTwo^\alpha$ ensures $\min\{\val(\vertex), \alpha\}$ from $\vertex$, ending the proof.
\end{proof}

\begin{remark}[Optimal strategies for $\playerTwo$]\label{rmk:opti:ptwo:branching}
  The proof above suggests a class of arenas in which $\playerTwo$ has a memoryless uniformly optimal strategy for the $\costReach{\target}{\weight}$ cost function.
  We show the following: if $\succSet{\vertex}$ is a finite set for all $\vertex\in\vertexSetTwo$, then $\playerTwo$ has a memoryless uniformly optimal strategy.
  In particular, this property holds in finitely-branching arenas.
  
  Assume that $\succSet{\vertex}$ is a finite set for all $\vertex\in\vertexSetTwo$.
  In this case, the definition of $\stratTwo^\alpha$ is independent of $\alpha$.
  Let $\stratTwo = \stratTwo^0$ and let us show that $\stratTwo$ is optimal from all vertices.
  It follows from the proof above that $\stratTwo$ is optimal from all vertices with finite value and all vertices in $\winningTwo{\safe{\target}}$.
  To close our argument that $\stratTwo$ is uniformly optimal, we prove that all $\vertex\in\vertexSet$ must either satisfy $\val(\vertex)\in\IN$ or $\vertex\in\winningTwo{\safe{\target}}$.
  
  We assume towards a contradiction that there exists $\vertex\in\vertexSet$ such that $\val(\vertex)=+\infty$ and $\vertex\in\winningOne{\reach{\target}}$.
  Fix a strategy $\stratOne$ that is winning from $\vertex$ for $\playerOne$ in the reachability game $(\arena, \reach{\target})$.
  All plays starting in $\vertex_0$ that are consistent with $\stratOne$ eventually reach $\target$.
  The set of their prefixes up to the first occurrence of a vertex of $\target$ can be seen as a finitely branching tree (branching occurs only when $\playerTwo$ selects a move).
  Since $\vertex$ has an infinite value, there are histories in the tree with arbitrarily large weight.
  Therefore, the tree must be infinite.
  By König's lemma~\cite{konig1927schlussweise}, there must be an infinite branch in this tree, i.e., a play consistent with $\stratOne$ that does not visit $\target$.
  This contradicts the assumption that $\stratOne$ is winning from $\vertex$ for $\playerOne$.
  \hfill $\lhd$
\end{remark}

\section{Characterising Nash equilibria outcomes}\label{section:outcomes}
We provide characterisations of plays that are outcomes of NEs in the five classes of games we consider.
These characterisations relate to the corresponding zero-sum games.
Intuitively, in most cases, a play is an NE outcome if and only if values in the coalition games of the vertices in the play do not suggest the existence of a profitable deviation for any player.
We provide characterisations of NE outcomes for reachability, safety, Büchi and co-Büchi games in Section~\ref{section:outcomes:reach} and a characterisation for shortest-path games in Section~\ref{section:outcomes:short}.
We fix an arena $\arena = \arenaTuple$ and targets $\target_1, \ldots, \target_\nPlayer\subseteq\vertexSet$ for this entire section.

\subsection{Reachability, safety, Büchi and co-Büchi games}\label{section:outcomes:reach}
We provide characterisations for NE outcomes in games with objectives.
The characterisations provided in this section are analogous to the NE outcome characterisations for finite arenas of~\cite{DBLP:conf/icalp/ConduracheFGR16}.
These characterisations can be proven in the same way in an arbitrary arena as they can be in finite arenas; we include a proof below to illustrate the construction of NEs via a punishment mechanism.

First, we consider games with reachability, Büchi and co-Büchi objectives.
Let $\game = (\arena, (\objective_\playerIndex)_{\playerIndex\in\playerSet})$ be a game with $\objective_\indexPlayer\in\{\reach{\target_\playerIndex}, \buchi{\target_\playerIndex}, \cobuchi{\target_\playerIndex}\}$ for all $\indexPlayer\in\playerSet$.
We denote by $\winningI{\objective_\playerIndex}$ the winning region of the first player of the coalition game $\game_\playerIndex = (\arena_\playerIndex, \objective_\playerIndex)$, in which $\playerI$ is opposed to the other players.

Let $\play=\vertex_0\vertex_1\vertex_2\ldots\in\playSet{\arena}$ be a play.
The play $\play$ is an NE outcome if and only if the players whose objectives are not satisfied cannot enforce their objective from any vertex along $\play$, i.e., if for all $\indexPlayer\in\playerSet\setminus\satpl{\play}$, $\vertex_\indexPosition\notin\winningI{\objective_\playerIndex}$ for all $\indexPosition\in\IN$.
On the one hand, if some player whose objective is not satisfied can enforce it from some vertex $\vertex_\indexPosition$ along $\play$, then they have a profitable deviation by switching to a winning strategy of their coalition game from $\vertex_\indexPosition$; the resulting outcome is winning for this player as we consider reachability, Büchi and co-Büchi objectives.
Conversely, one constructs a Nash equilibrium as follows. 
The players follow the play $\play$, and, if $\playerI$ deviates from $\play$, then all other players conform to a (memoryless) uniformly winning strategy for the second player in the coalition $\game_i$ for the objective $\playSet{\arena}\setminus\objective_\playerIndex$.
This ensures no player has a profitable deviation.

We formally state the characterisation below.
The following statement differs slightly from its finite-arena counterpart in~\cite{DBLP:conf/icalp/ConduracheFGR16}: we consider games in which players need not have objectives of the same type, whereas~\cite{DBLP:conf/icalp/ConduracheFGR16} considers games in which either all players have reachability objectives or all players have prefix-independent objectives (which includes Büchi and co-Büchi objectives).
We provide a proof for the sake of completeness.

\begin{restatable}{theorem}{thmCharacReachBuchi}\label{thm:charac:reach:buchi}
  Let $\game= (\arena, (\objective_\indexPlayer)_{\indexPlayer\in\playerSet})$ be a game such that $\objective_\indexPlayer\in\{\reach{\target_\playerIndex}, \buchi{\target_\playerIndex}, \cobuchi{\target_\playerIndex}\}$ for all $\indexPlayer\in\playerSet$.
  Let $\play=\vertex_0\vertex_1\ldots$ be a play.
  Then $\play$ is the outcome of an NE from $\vertex_0$ if and only if, for all $\playerIndex\in\playerSet\setminus\satpl{\play}$, $\vertex_\indexPosition\notin\winningI{\objective_\playerIndex}$ for all $\indexPosition\in\IN$.
\end{restatable}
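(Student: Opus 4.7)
The plan is to prove both implications via the standard punishment-strategy argument for NEs, with the key tool being the memoryless determinacy of zero-sum reachability, Büchi, and co-Büchi games (Theorem~\ref{thm:qualitative:ML strat}).

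For the forward direction, I argue the contrapositive. Suppose there exist $\playerIndex \in \playerSet \setminus \satpl{\play}$ and $\indexPosition \in \IN$ with $\vertex_\indexPosition \in \winningI{\objective_\playerIndex}$, and let $\stratProfile$ be any strategy profile whose outcome from $\vertex_0$ is $\play$. Fix a winning strategy of $\playerI$ in the coalition game $\game_\playerIndex$ from $\vertex_\indexPosition$. I construct a deviation $\stratAltI$ of $\playerI$ that mimics the $\playerI$-moves of $\play$ along prefixes of $\prefix{\play}{\indexPosition}$ and then switches to this winning strategy once $\vertex_\indexPosition$ is reached. Because $\play$ is consistent with $\stratIAdv$, the outcome $\outcome{(\stratAltI, \stratIAdv)}{\vertex_0}$ agrees with $\play$ up to $\vertex_\indexPosition$ and thereafter is a play consistent with $\playerI$'s winning strategy against $\stratIAdv$, hence satisfies $\objective_\playerIndex$. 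This drops $\playerI$'s cost from $1$ to $0$, contradicting the NE property.

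For the backward direction, I build an NE whose outcome is $\play$. For each $\playerIndex \in \playerSet \setminus \satpl{\play}$, Theorem~\ref{thm:qualitative:ML strat} together with determinacy yields a memoryless uniformly winning strategy of the coalition $\playerIAdv$ in the zero-sum game $(\arena_\playerIndex, \playSet{\arena} \setminus \objective_\playerIndex)$, which I decompose into individual strategies $(\tau^\playerIndex_{\playerIAlt})_{\playerIAlt \neq \playerI}$ of the non-$\playerI$ players. The hypothesis places every $\vertex_\indexPosition$ outside $\winningI{\objective_\playerIndex}$, hence in the coalition's winning region, so this strategy defeats $\playerI$ from every vertex on $\play$. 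I then define each $\stratI$ as follows: on prefixes of $\play$, follow $\play$; on any other history, identify the first deviator (the owner of the last vertex of the longest common prefix of the history with $\play$), and, if this deviator is some $\playerIAlt \neq \playerI$ with $\playerIAlt \notin \satpl{\play}$, play $\tau^{\playerIAlt}_\playerI$; otherwise play arbitrarily.

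By construction, $\outcome{\stratProfile}{\vertex_0} = \play$. For the NE property, players in $\satpl{\play}$ already have cost $0$ and cannot improve; for a deviator $\playerIndex \notin \satpl{\play}$, any unilateral deviation branches off $\play$ at some vertex $\vertex_\indexPosition \notin \winningI{\objective_\playerIndex}$, after which the remaining players apply $(\tau^\playerIndex_{\playerIAlt})_{\playerIAlt \neq \playerI}$ and prevent $\objective_\playerIndex$ on the suffix. The step requiring most care is propagating this suffix-level defeat to the entire outcome: for Büchi and co-Büchi, this is automatic by tail-invariance; for reachability, one uses that $\playerIndex \notin \satpl{\play}$ forces the prefix of $\play$ to avoid $\target_\playerIndex$, so the full deviation outcome avoids it too.
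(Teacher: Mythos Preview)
Your proposal is correct and follows essentially the same approach as the paper's proof: both directions use the standard punishment-strategy argument, with the forward direction constructing a profitable deviation by switching to a winning strategy at $\vertex_\indexPosition$, and the backward direction defining a profile that follows $\play$ and punishes the first deviator via memoryless uniformly winning coalition strategies (Theorem~\ref{thm:qualitative:ML strat}). The only cosmetic difference is that the paper fixes punishing strategies $\stratAltAdvI$ for \emph{all} players while you restrict to $\playerIndex\notin\satpl{\play}$, but since only the latter need punishment this changes nothing.
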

\begin{proof}
  First, assume that there exist $\playerIndex\in\playerSet\setminus\satpl{\play}$ and $\indexPosition\in\IN$ such that $\vertex_\indexPosition\in\winningI{\objective_\indexPlayer}$.
  Let $\stratProfile=(\strat{\playerIndex'})_{\playerIndex'\in\playerSet}$ be a strategy profile such that $\outcome{\stratProfile}{\vertex_0} = \play$.
  We show that $\stratProfile$ is not an NE from $\vertex_0$ by highlighting a profitable deviation of $\playerI$.
  We let $\stratAltI$ be a strategy of $\playerI$ that agrees with $\stratI$ until $\vertex_\indexPosition$ is visited and that agrees with a memoryless uniformly winning strategy of $\playerI$ over all histories containing $\vertex_\indexPosition$.
  The outcome $\outcome{(\stratAltI, \stratIAdv)}{\vertex_0}$ can be written as a combination of a prefix of $\play$ up to the first occurrence of $\vertex_\indexPosition$ and an outcome of a uniformly winning strategy of $\playerI$ from $\winningI{\objective_\indexPlayer}$.
  Thus, the play $\outcome{(\stratAltI, \stratIAdv)}{\vertex_0}$ has a suffix satisfying $\objective_\playerIndex$, which implies that $\outcome{(\stratAltI, \stratIAdv)}{\vertex_0}$ itself satisfies $\objective_\playerIndex$, as we consider reachability, Büchi and co-Büchi objectives.
  We obtain that $\stratAltI$ is a profitable deviation of $\playerI$, and thus that $\stratProfile$ is not an NE from $\vertex_0$.
  We have proven that $\play$ is not an NE outcome.

  We now prove the other implication.
  For all $\playerIndex\in\playerSet$, we fix a memoryless uniformly winning strategy $\stratAltAdvI$ of $\playerIAdv$ in the zero-sum coalition game $\game_\playerIndex = (\arena_\indexPlayer, \objective_\indexPlayer)$.
  Assume that for all $\playerIndex\in\playerSet\setminus\satpl{\play}$, $\vertex_\indexPosition\notin\winningI{\objective_\playerIndex}$ for all $\indexPosition\in\IN$.
  We construct an NE $\stratProfile$ from $\vertex_0$ such that $\outcome{\stratProfile}{\vertex_0} = \play$ as follows.
  Let $\playerIndex\in\playerSet$.
  We provide a partial definition of a strategy $\stratI$ of $\playerI$ over histories starting in $\vertex_0$.
  Let $\hist\in\historySetI{\arena}$ such that $\first{\hist} = \vertex_0$.
  If $\hist = \prefix{\play}{\indexPosition}$ for some $\indexPosition\in\IN$, we let $\stratI(\hist) = \vertex_{\indexPosition+1}$.
  We now assume that $\hist$ is not a prefix of $\play$.
  We can write $\hist = \concat{\hist'}{\hist''}$ where $\hist'$ is the longest common prefix of $\hist$ and $\play$.
  Let $\indexPlayer'\in\playerSet$ such that $\last{\hist'}\in\vertexSet_{\playerIndex'}$.
  If $\indexPlayer'=\playerIndex$, we leave $\stratI(\hist)$ arbitrary, and otherwise, we let $\stratI(\hist) = \stratAltAdvI(\last{\hist})$.

  We let $\stratProfile = (\stratI)_{\playerIndex\in\playerSet}$.
  By construction, we obtain that $\play = \outcome{\stratProfile}{\vertex_0}$.
  To conclude, we show that $\stratProfile$ is an NE from $\vertex_0$.
  Players in $\satpl{\play}$ cannot have profitable deviations.
  It therefore remains to show that the other players do not have profitable deviations.
  We let $\playerIndex\in\playerSet\setminus\satpl{\play}$.
  It suffices to show that for all $\play'\in\playSet{\arena}$ starting in $\vertex_0$, if $\play'$ is consistent with $\stratIAdv$, then $\play'\notin\objective_\indexPlayer$.
  Let $\play'$ be a play starting in $\vertex_0$ that is consistent with $\stratIAdv$.
  It follows from the definition of $\stratProfile$ that $\play'$ can be written as the combination of a prefix $\prefix{\play}{\indexPosition}$ of $\play$ and a play $\play''$ consistent with $\stratAltAdvI$ starting in the winning region $\winningIAdv{\playSet{\arena}\setminus\objective_\indexPlayer}$ of $\playerIAdv$ in $\game_\indexPlayer$.
  In particular, we have $\play''\notin\objective_\indexPlayer$.
  If $\objective_\indexPlayer$ is a reachability objective, the target of $\playerI$ does not occur in $\prefix{\play}{\indexPosition}$ (as $\play\notin\objective_\indexPlayer$) and does not occur in $\play''$ either (as $\play''\notin\objective_\indexPlayer$).
  If $\objective_\indexPlayer$ is a Büchi or co-Büchi objective, we have $\play'\in\objective_\indexPlayer$ if and only if $\play''\in\objective_\indexPlayer$.
  In both cases, we obtain that $\play'\notin\objective_\indexPlayer$.
  This shows that $\playerI$ does not have a profitable deviation, and thus that $\stratProfile$ is an NE from $\vertex_0$.
\end{proof}

For safety objectives, we have to slightly adapt the characterisation of Theorem~\ref{thm:charac:reach:buchi}.
If a safety objective is violated by a play, this violation is witnessed by a prefix of the play.
It follows that there can be no profitable deviations for some $\playerI$ that occur at a point of an outcome after a vertex unsafe for $\playerI$ is traversed.
In other words, in a safety game $\game = (\arena, (\safe{\target_\indexPlayer})_{\indexPlayer})$, a play $\play$ is an NE outcome if and only if, for all $\playerIndex\in\playerSet\setminus\satpl{\play}$, all vertices of $\play$ that occur before $\target_\indexPlayer$ is visited are outside of $\winningI{\safe{\target_\indexPlayer}}$.
We omit a proof of this result: it can be shown in the same way as the corresponding finite-arena characterisation of~\cite{DBLP:conf/icalp/ConduracheFGR16} (or with an argument analogous to that of Theorem~\ref{thm:charac:reach:buchi}).

\begin{restatable}{theorem}{thmCharacSafe}\label{thm:charac:safe}
  Let $\game= (\arena, (\safe{\target_\indexPlayer})_{\indexPlayer\in\playerSet})$ be a safety game.
  Let $\play=\vertex_0\vertex_1\ldots$ be a play.
  Then $\play$ is the outcome of an NE from $\vertex_0$ if and only if, for all $\playerIndex\in\playerSet\setminus\satpl{\play}$, $\vertex_\indexPosition\notin\winningI{\objective_\playerIndex}$ for all $\indexPosition\leq\min\{\indexLast\in\IN\mid\vertex_\indexLast\in\target_\indexPlayer\}$.
\end{restatable}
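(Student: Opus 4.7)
The plan is to adapt the argument of Theorem~\ref{thm:charac:reach:buchi} to safety. The crucial difference is that safety is not a tail objective: the violation of $\safe{\target_\indexPlayer}$ by a play $\play$ is witnessed by the first occurrence of $\target_\indexPlayer$ in $\play$, and nothing after that position can undo the violation for $\playerI$. This is exactly why the quantification in the characterisation is restricted to positions $\indexPosition\leq\indexLast$ where $\indexLast=\min\{\indexPosition'\in\IN\mid\vertex_{\indexPosition'}\in\target_\indexPlayer\}$, a value well-defined precisely because $\playerIndex\notin\satpl{\play}$.

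For the forward direction, I would argue by contraposition. Suppose there exist $\playerIndex\in\playerSet\setminus\satpl{\play}$ and $\indexPosition\leq\indexLast$ with $\vertex_\indexPosition\in\winningI{\safe{\target_\indexPlayer}}$. I would first observe that $\vertex_\indexPosition\notin\target_\indexPlayer$ (else any play from $\vertex_\indexPosition$ immediately violates $\safe{\target_\indexPlayer}$), so $\indexPosition<\indexLast$ and the prefix $\vertex_0\ldots\vertex_\indexPosition$ avoids $\target_\indexPlayer$. Given any profile $\stratProfile$ with $\outcome{\stratProfile}{\vertex_0}=\play$, construct $\stratAltI$ exactly as in Theorem~\ref{thm:charac:reach:buchi}: $\stratAltI$ agrees with $\stratI$ on prefixes of $\play$ and, on histories containing $\vertex_\indexPosition$, follows a memoryless uniformly winning strategy of $\playerI$ in $\game_\playerIndex$ provided by Theorem~\ref{thm:qualitative:ML strat}. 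The outcome against $\stratIAdv$ agrees with $\play$ through position $\indexPosition$ and, from $\vertex_\indexPosition$ onward, stays in $\winningI{\safe{\target_\indexPlayer}}$ avoiding $\target_\indexPlayer$ forever, so it belongs to $\safe{\target_\indexPlayer}$ --- a profitable deviation for $\playerI$.

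For the backward direction, I would build an NE by the standard trigger-and-punish scheme. For each $\playerIndex$, fix a memoryless uniformly winning strategy $\stratAltAdvI$ of the coalition $\playerIAdv$ in $\game_\playerIndex=(\arena_\playerIndex,\safe{\target_\indexPlayer})$; winning for $\playerIAdv$ means forcing a visit to $\target_\indexPlayer$. Define each $\stratI$ to follow $\play$ so long as no deviation has occurred and, as soon as some player $\playerIAlt\neq\playerI$ first deviates, to switch to $\stratAltAdvIb$. To verify the NE property, take $\playerIndex\in\playerSet\setminus\satpl{\play}$ and any play $\play'$ from $\vertex_0$ consistent with $\stratIAdv$, and let $\indexPosition$ be the length of the longest common prefix of $\play$ and $\play'$. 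If $\indexPosition\geq\indexLast$, then $\prefix{\play'}{\indexPosition}$ already contains $\target_\indexPlayer$; otherwise the deviation occurs at $\vertex_\indexPosition\in\vertexSetI$, and by hypothesis $\vertex_\indexPosition\notin\winningI{\safe{\target_\indexPlayer}}$, so the suffix $\suffix{\play'}{\indexPosition}$ is consistent with $\stratAltAdvI$ starting from the coalition's winning region in $(\arena_\playerIndex,\reach{\target_\indexPlayer})$ and must visit $\target_\indexPlayer$. In both cases $\play'\notin\safe{\target_\indexPlayer}$, so $\playerI$ has no profitable deviation.

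The main subtlety --- and the only real departure from Theorem~\ref{thm:charac:reach:buchi} --- is the correct handling of the cutoff $\indexLast$: the safety-winning strategy in the forward direction can only be inserted at a point reached before $\target_\indexPlayer$, and deviations past $\indexLast$ in the backward direction need no punishment at all because the violation is already inscribed in the shared prefix with $\play$. Beyond this bookkeeping, the argument proceeds mutatis mutandis from the reachability/Büchi case.
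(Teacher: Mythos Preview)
Your proposal is correct and follows exactly the approach the paper indicates: it adapts the proof of Theorem~\ref{thm:charac:reach:buchi} to the safety setting, with the only change being the handling of the cutoff index $\indexLast=\min\{\indexPosition'\in\IN\mid\vertex_{\indexPosition'}\in\target_\indexPlayer\}$. The paper itself omits the proof, stating it is analogous to that of Theorem~\ref{thm:charac:reach:buchi}, and your argument fills in precisely this analogy.
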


\subsection{Shortest-path games}\label{section:outcomes:short}
We now characterise NE outcomes in shortest-path games.
We provide a characterisation for shortest-path games in which each player has their own weight function.
This setting is more general that the class of shortest-path games considered in Section~\ref{section:reach}; our memory bounds for NEs apply to shortest-path games in which costs of the players are built on the same weight function.

For all $\indexPlayer\in\playerSet$, let $\weight_\indexPlayer\colon\edgeSet\to\IN$ be a weight function for $\playerI$.
We consider the shortest-path game $\game = (\arena, (\costReach{\target_\playerIndex}{\weight_\indexPlayer})_{\playerIndex\in\playerSet})$ on $\arena$.
For any $\vertex\in\vertexSet$, we denote by $\valI(\vertex)$ the value of $\vertex$ in the coalition game $\game_\playerIndex = (\arena_\playerIndex, \costReach{\target_\playerIndex}{\weight_\indexPlayer})$.
We also reuse the notation $\winningI{\reach{\target_\playerIndex}}$ of the previous section.

In reachability, safety, Büchi and co-Büchi games, Theorems~\ref{thm:charac:reach:buchi} and~\ref{thm:charac:safe} indicate that is to sufficient to refer to the value in coalition games (i.e., who wins) to characterise NE outcomes.
For shortest-path games on finite arenas, we can also characterise NE outcomes by only using coalition game values (see~\cite[Thm.~15]{DBLP:journals/jcss/BrihayeBGT21}).
However, this is no longer the case in infinite arenas.
\begin{example}\label{ex:stronger charac}
    Let us consider the arena depicted in Figure~\ref{fig:ex:non-opt} and let $\target_1=\{t\}$ and $\target_2=\{\vertex_0\}$.
    It holds that $\val_1(\vertex_0)=+\infty$ (it follows from $\val_1(\vertex_\infty)=+\infty$ which is shown in Example~\ref{ex:no opti}).
    Therefore, the cost of all suffixes of the play $\vertex_0^\omega$ for $\playerOne$ matches the value of their first vertex $\vertex_0$.
    However, for any strategy profile resulting in $\vertex_0^\omega$ from $\vertex_0$, $\playerOne$ has a profitable deviation in moving to $\vertex_\infty$ and using a reachability strategy to ensure a finite cost.
    \hfill$\lhd$
\end{example}

A value-based characterisation fails because of vertices $\vertex\in\winningI{\reach{\target_\playerIndex}}$ such that $\valI(\vertex)$ is infinite.
Despite the infinite value of such vertices, $\playerI$ has a strategy such that their cost is finite no matter the behaviour of the others.
To obtain a characterisation of NE outcomes, we must impose additional conditions on players whose targets are not visited in the outcome: these players must not be able to force a visit to their target from any vertex of the outcome.

We show that a play in a shortest-path game is the outcome of an NE if and only if it is an outcome of an NE for the reachability game $(\arena, (\reach{\target_\playerIndex})_{\playerIndex\in\playerSet})$ and that, for players whose targets are visited, the values $\valI$ suggest they do not have a profitable deviation before their target is reached (after this point, their cost is fixed, similarly to safety games).
We follow a proof approach similar to that of Theorem~\ref{thm:charac:reach:buchi}.

\begin{restatable}{theorem}{thmCharacShort}\label{thm:charac:short}
  Let $\play = \vertex_0\vertex_1\ldots$ be a play.
  Then $\play$ is an outcome of an NE from $\vertex_0$ in $\game$ if and only
  \begin{enumerate}[(i)]
  \item for all $\playerIndex\in\playerSet\setminus\satpl{\play}$ and $\indexPosition\in\IN$, we have $\vertex_\indexPosition\notin\winningI{\reach{\target_\playerIndex}}$ and\label{item:charac:short:one}
  \item for all $\playerIndex\in\satpl{\play}$, we have $\costReach{\target_\playerIndex}{\weight_\indexPlayer}(\suffix{\play}{\indexPosition})\leq \valI(\vertex_\indexPosition)$ for all $\indexPosition\leq \min\{\indexLast\in\IN\mid \vertex_\indexLast\in \target_\playerIndex\}$.\label{item:charac:short:two}
  \end{enumerate}
\end{restatable}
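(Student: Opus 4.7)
The plan is to prove both directions by contraposition/construction, following the approach used for Theorem~\ref{thm:charac:reach:buchi} but adapted for quantitative costs and the subtleties of shortest-path games (in particular, the absence of optimal strategies for $\playerIAdv$ in the coalition game).

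For necessity, assume $\play$ is the outcome of an NE $\stratProfile$ from $\vertex_0$, and show (i) and (ii) by contraposition. If (i) fails, pick $\playerI\notin\satpl{\play}$ and $\indexPosition\in\IN$ with $\vertex_\indexPosition\in\winningI{\reach{\targetI}}$; then $\playerI$ deviates by agreeing with $\stratI$ on prefixes of $\play$ of length $<\indexPosition$ and, from position $\indexPosition$ onward, following a strategy that is winning from $\vertex_\indexPosition$ in the coalition reachability game $(\arena_\playerIndex, \reach{\targetI})$. Since $\targetI$ is reached, the cost becomes finite, whereas $\costI(\play)=+\infty$, so the deviation is profitable. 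If (ii) fails, pick $\playerI\in\satpl{\play}$ and $\indexPosition\leq\min\{\indexLast\mid \vertex_\indexLast\in\targetI\}$ such that $\costReach{\targetI}{\weightI}(\suffix{\play}{\indexPosition})>\valI(\vertex_\indexPosition)$. By Theorem~\ref{thm:short:pOneOpt} applied to $\game_\playerIndex$, $\playerI$ has a uniformly optimal memoryless strategy $\stratAltI^*$; the deviation that agrees with $\stratI$ up to position $\indexPosition$ and uses $\stratAltI^*$ thereafter yields an outcome $\play'$ whose cost satisfies $\costI(\play') = \weightI(\prefix{\play}{\indexPosition}) + \costReach{\targetI}{\weightI}(\suffix{\play'}{\indexPosition}) \leq \weightI(\prefix{\play}{\indexPosition}) + \valI(\vertex_\indexPosition) < \costI(\play)$, giving a profitable deviation.

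For sufficiency, assume (i) and (ii) hold and construct an NE with outcome $\play$ via the punishment mechanism. For each $\playerIndex\in\playerSet$, pick $\alpha_\playerIndex = \costI(\play)$ if $\playerI\in\satpl{\play}$ (finite) and $\alpha_\playerIndex = 0$ otherwise; apply Theorem~\ref{thm:short:ML strat} to the coalition game $\game_\playerIndex$ to obtain a memoryless punishing strategy $\stratAltAdvI$ of $\playerIAdv$. Define the strategy profile $\stratProfile=(\stratI)_{\playerIndex\in\playerSet}$ so that each $\stratI$ follows $\play$ on prefixes of $\play$, and, on any history deviating from $\play$ whose first deviating move was made by some $\playerIAlt$ with $\playerIndex'\neq\playerIndex$, applies the restriction of $\stratAltAdv{\playerIndex'}$ to the relevant vertices (decisions after a deviation by $\playerI$ themselves are arbitrary). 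The outcome is $\play$ by construction.

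To verify the NE property, consider any $\playerIndex\in\playerSet$ and any deviating strategy $\stratAltI$ of $\playerI$; let $\play' = \outcome{(\stratAltI, \stratIAdv)}{\vertex_0}$ and suppose $\play'\neq\play$, so that $\play'$ and $\play$ share a prefix up to some position $\indexPosition$ with $\vertex_\indexPosition\in\vertexSetI$ (by construction, only $\playerI$ can cause the first deviation). The suffix $\suffix{\play'}{\indexPosition}$ is consistent with $\stratAltAdvI$ from $\vertex_\indexPosition$, so Theorem~\ref{thm:short:ML strat}(ii) gives $\costReach{\targetI}{\weightI}(\suffix{\play'}{\indexPosition}) \geq \min\{\valI(\vertex_\indexPosition),\alpha_\playerIndex\}$. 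If $\playerI\notin\satpl{\play}$, condition (i) places $\vertex_\indexPosition$ in $\winningIAdv{\safe{\targetI}}$, so Theorem~\ref{thm:short:ML strat}(i) ensures $\targetI$ is avoided in $\suffix{\play'}{\indexPosition}$ and $\costI(\play')=+\infty=\costI(\play)$. If $\playerI\in\satpl{\play}$ and $\indexPosition$ is at or after the first visit of $\targetI$ in $\play$, then $\costI(\play')=\costI(\play)$; otherwise, a case split on whether $\valI(\vertex_\indexPosition)\geq\alpha_\playerIndex$ combines the bound from Theorem~\ref{thm:short:ML strat}(ii) with condition (ii) to yield $\costI(\play') \geq \weightI(\prefix{\play}{\indexPosition}) + \costReach{\targetI}{\weightI}(\suffix{\play}{\indexPosition}) = \costI(\play)$, so the deviation is not profitable.

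The main obstacle is the sufficiency argument: because $\playerIAdv$ lacks uniformly optimal strategies (Example~\ref{ex:no opti}), a single punishing strategy cannot attain $\valI$ everywhere, and one must choose the truncation parameter $\alpha_\playerIndex$ carefully. Using $\alpha_\playerIndex = \costI(\play)$ is exactly what lets Theorem~\ref{thm:short:ML strat}(ii) combined with condition (ii) cover both regimes ($\valI(\vertex_\indexPosition)\geq\alpha_\playerIndex$ and $\valI(\vertex_\indexPosition)<\alpha_\playerIndex$) uniformly in $\indexPosition$.
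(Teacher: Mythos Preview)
Your proof is correct and follows essentially the same approach as the paper: necessity by exhibiting profitable deviations when (i) or (ii) fails, sufficiency via the punishment mechanism. The only notable difference is that the paper lets the punishing threshold depend on the deviation point (ensuring $\min\{\valI(\vertex_\indexPosition), \costReach{\target_\playerIndex}{\weight_\playerIndex}(\suffix{\play}{\indexPosition})+1\}$ from each $\vertex_\indexPosition$), whereas you invoke Theorem~\ref{thm:short:ML strat} once with the global parameter $\alpha_\playerIndex = \costI(\play)$ to obtain a single memoryless punisher---a cleaner choice that already anticipates the finite-memory constructions of Section~\ref{section:reach}.
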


\begin{proof}
  We start by proving that if~\ref{item:charac:short:one} or~\ref{item:charac:short:two} does not hold, then $\play$ cannot be the outcome of an NE.
  Let $\stratProfile = (\stratI)_{\playerIndex\in\playerSet}$ such that $\play=\outcome{\stratProfile}{\vertex_0}$.
  We show that some player has a profitable deviation.
  First, assume that~\ref{item:charac:short:one} does not hold.
  Let $\playerIndex\notin\satpl{\play}$ and $\indexPosition\in\IN$ such that $\vertex_\indexPosition\in\winningI{\reach{\target_\playerIndex}}$.
  Any strategy with which $\prefix{\play}{\indexPosition}$ is consistent that switches to a uniform memoryless winning strategy in the zero-sum reachability game $(\arena_\playerIndex, \reach{\target_\playerIndex})$ once $\vertex_\indexPosition$ is reached is a profitable deviation: it guarantees a finite cost for $\playerI$.
  Therefore $\stratProfile$ is not an NE from $\vertex_0$.

  We now assume that \ref{item:charac:short:two} does not hold.
  Let $\playerIndex\in\satpl{\play}$, $\indexLast_\playerIndex=\min\{\indexLast\in\IN\mid \vertex_\indexLast\in \target_\playerIndex\}$ and $\indexPosition\leq \indexLast_\playerIndex$ such that $\costReach{\target_\playerIndex}{\weight_\indexPlayer}(\suffix{\play}{\indexPosition})>\val(\vertex_\indexPosition)$.
  Any strategy with which $\prefix{\play}{\indexPosition}$ is consistent and that commits to a strategy of the first player of $\game_i$ that is optimal from $\vertex_\indexPosition$ once $\vertex_\indexPosition$ is reached ensures a cost of at most
  \[\weight_\indexPlayer(\prefix{\play}{\indexPosition}) + \val(\vertex_\indexPosition) < \weight_\indexPlayer(\prefix{\play}{\indexPosition}) + \costReach{\target_\playerIndex}{\weight_\indexPlayer}(\suffix{\play}{\indexPosition}) = \costReach{\target_\playerIndex}{\weight_\indexPlayer}(\play),\]
  thus it is a profitable deviation of $\playerI$, i.e., $\stratProfile$ is not an NE from $\vertex_0$.

  We now show the converse implication.
  Let $\stratProfile=(\stratI)_{\playerIndex\in\playerSet}$ be a strategy profile such that all players follow $\play$, and if $\playerI$ deviates from $\play$, the coalition $\playerIAdv$ switches to a memoryless strategy $\stratAltAdvI$ that is uniformly winning for the second player of the coalition reachability game $(\arena_\playerIndex, \reach{\target_\playerIndex})$ and that ensures at cost of at most $\min\{\valI(\vertex), \costReach{\target_\playerIndex}{\weight_\indexPlayer}(\play)\}$ from all $\vertex\in\vertexSet$ (such a strategy exists by Theorem~\ref{thm:short:ML strat} in $\game_\playerIndex$).
  
  We now show that for all $\playerIndex\in\playerSet$, $\playerI$ does not have a profitable deviation.
  This is equivalent to showing that for all plays $\play'$ starting in $\vertex_0$, if $\play'$ is consistent with $\stratIAdv$, then $\costReach{\target_\playerIndex}{\weight}(\play')\geq\costReach{\target_\playerIndex}{\weight}(\play)$.
  We let $\play' = \vertex_0'\vertex_1'\ldots\in\playSet{\arena}$ such that $\vertex_0'=\vertex_0$ and $\play'$ is consistent with $\stratIAdv$.
  We assume that $\play'\neq\play$, as the inequality is direct otherwise, and let $\indexLast=\min\{\indexPosition\in\IN\mid\vertex_\indexPosition\neq\indexPosition\}$.
  By construction of $\stratProfile$ and choice of $\indexLast$, it follows that $\suffix{\play'}{\indexLast-1}$ is consistent with the punishing strategy $\stratAltAdvI$.
  
  First, we assume that $\playerIndex\notin\satpl{\play}$.
  In this case, we must show that $\play'\notin\reach{\target_\playerIndex}$.
  On the one hand, no elements of $\target_\playerIndex$ occur in $\suffix{\play'}{\indexLast-1}$ by choice of $\stratAltAdvI$, as~\ref{item:charac:short:one} yields $\vertex'_{\indexLast-1}=\vertex_{\indexLast-1}\notin\winningI{\reach{\target_\playerIndex}}$.
  On the other hand, no target vertices of $\playerI$ occur in $\prefix{\play'}{\indexLast-1}=\prefix{\play}{\indexLast-1}$ as $\play\notin\reach{\target_\playerIndex}$.
  This ends the proof of the first case.

  We now assume that $\playerIndex\in\satpl{\play}$.
  If a target vertex of $\playerI$ occurs in $\prefix{\play}{\indexLast-1}$, then $\costReach{\target_\playerIndex}{\weight_\playerIndex}(\play)=\costReach{\target_\playerIndex}{\weight_\playerIndex}(\play')$ (which is sufficient to end the proof).
  We therefore assume that no vertex in $\prefix{\play}{\indexLast-1}$ is in $\target_\playerIndex$ and further distinguish two cases.
  We first assume that $\valI(\vertex_{\indexLast-1})\geq \costReach{\target_\playerIndex}{\weight_\indexPlayer}(\play)$.
  We have
  \[\costReach{\target_\playerIndex}{\weight_\playerIndex}(\play') =
    \weight(\prefix{\play}{\indexLast-1}) +
    \costReach{\target_\playerIndex}{\weight_\indexPlayer}(\suffix{\play'}{\indexLast-1}) \geq
    \costReach{\target_\playerIndex}{\weight_\indexPlayer}(\play),\]
  as $\stratAltAdvI$ ensures a cost of at least $\costReach{\target_\playerIndex}{\weight_\indexPlayer}(\play)$ from $\vertex_{\indexLast-1}$.
  We now assume that $\valI(\vertex_{\indexLast-1}) < \costReach{\target_\playerIndex}{\weight_\indexPlayer}(\play)$.
  In this case, we obtain that
  \begin{align*}
    \costReach{\target_\playerIndex}{\weight_\playerIndex}(\play')
    & =
      \weight(\prefix{\play}{\indexLast-1}) +
      \costReach{\target_\playerIndex}{\weight_\indexPlayer}(\suffix{\play'}{\indexLast-1}) \\
    & \geq
      \weight(\prefix{\play}{\indexLast-1}) + \valI(\vertex_{\indexLast-1}) \\
    & \geq \weight(\prefix{\play}{\indexLast-1})
      + \costReach{\target_\playerIndex}{\weight_\indexPlayer}(\suffix{\play}{\indexLast-1}) \\
    & =
      \costReach{\target_\playerIndex}{\weight_\indexPlayer}(\play),
  \end{align*}
  where the penultimate inequality follows from~\ref{item:charac:short:two}.
\end{proof}

\begin{remark}
Informally,~\cite[Thm.~15]{DBLP:journals/jcss/BrihayeBGT21} states that a play is the outcome of an NE if and only if condition~\ref{item:charac:short:two} of Theorem~\ref{thm:charac:short} holds for all players (with the minimum over vertex indices in the condition replaced by an infinimum).
This characterisation only fails when there are vertices $\vertex\in\winningI{\reach{\target_\playerIndex}}$ with $\valI(\vertex)=+\infty$.
However, such vertices do not exist if $\succSet{\vertex}$ is a finite set for all $\vertex\in\vertexSet$ (refer to Remark~\ref{rmk:opti:ptwo:branching}).
Therefore, the finite-arena characterisation of~\cite[Thm.~15]{DBLP:journals/jcss/BrihayeBGT21} extends to finitely-branching arenas.
\hfill$\lhd$
\end{remark}

\section{Finite-memory Nash equilibria in reachability and shortest-path games}\label{section:reach}
In this section, we study reachability and shortest-path games (with a single non-negative integer weight function for all players).
The main results of this section state that given an NE in a reachability or shortest-path game, we can derive from it a finite-memory NE in which the same targets are visited, the costs are not increased for any player (in shortest-path games) and we provide an upper bound depending solely on the number of players on the amount of memory used by all of the involved strategies.
In other words, we obtain \textit{arena-independent} upper bounds on the sufficient memory for solutions to the constrained NE existence problem (see Section~\ref{section:intro}).
To obtain these arena-independent memory bounds, we do not fully implement the punishment mechanism as in the proofs of Theorems~\ref{thm:charac:reach:buchi} and~\ref{thm:charac:short}.
Instead, intuitively, following a deviation of $\playerI$, the coalition $\playerIAdv$ does not necessarily switch to a punishing strategy for $\playerI$.
Instead, they may attempt to keep following a suffix of the equilibrium's original outcome if the deviation does not appear to prevent it.

This section is structured as follows.
We illustrate the constructions for reachability and shortest-path games with examples in Section~\ref{section:fm:example}.
In Section~\ref{section:fm:simple outcomes}, we show that we can derive, from any NE outcome, an NE outcome that admits a finite simple decomposition.
We impose technical properties on these decompositions such that we can construct finite-memory NEs based on them.
We introduce (finite-memory) strategies based on simple play decompositions in Section~\ref{section:fm:general}.
Sections~\ref{section:fm:reach} and~\ref{section:fm:short} explain how to instantiate the Mealy machines of Section~\ref{section:fm:general} to construct finite-memory NEs in reachability and shortest-path games respectively.

We fix an arena $\arena=\arenaTuple$, target sets $\target_1$, \ldots, $\target_\nPlayer\subseteq\vertexSet$ and a weight function $\weight\colon\edgeSet\to\IN$ for the remainder of this section.

\subsection{Examples}\label{section:fm:example}
In this section, we illustrate the upcoming construction for finite-memory NEs for both settings of interest.
We construct NEs from (well-chosen) simple segment decompositions of plays.
The main role of memory in the following is to keep track of two pieces of information: a segment of the play in the considered decomposition (which describes the moves to be made), and the last player to have moved (to track who to punish if necessary).
We start with a reachability game.
\begin{example}\label{ex:reach ne}
  We consider the game on the arena depicted in Figure~\ref{fig:ex:reach ne:game} where the objective of $\playerI$ is $\reach{\{t_\playerIndex\}}$ for all $\playerIndex\in\integerInterval{4}$.
  This game has a similar structure to that of the game of Example~\ref{example:ne:roles of memory}.
  To illustrate the idea behind the upcoming construction, we present a finite-memory NE from $\vertex_0$ in which the targets of $\playerOne$ and $\playerTwo$ are visited in its outcome.
  In this example, we construct an NE from the NE outcome $\play=\vertex_0\vertex_1\vertex_2t_1\vertex_2\vertex_1\vertex_0t_2^\omega$.
  Intuitively, this is the simplest NE outcome starting in $\vertex_0$ that visits both $t_1$ and $t_2$.

  \begin{figure}
     \centering
     \begin{subfigure}[b]{0.55\textwidth}
       \centering
       \begin{tikzpicture}[node distance=0.4cm]
         \node[state, align=center] (v0) {$\vertex_0$};
         \node[state, square, align=center, left = of v0] (t2) {$t_2$};
         \node[state, diamond, align=center, right = of v0] (v1) {$\vertex_1$};
         \node[state, regular polygon, regular polygon sides=6, align=center, right = of v1] (v2) {$\vertex_2$};
         \node[state, align=center, right = of v2] (t1) {$t_1$};
         \node[state, regular polygon, regular polygon sides=6, align=center, below = of v1] (v3) {$\vertex_4$};
         \node[state, diamond, align=center, below = of v2] (v4) {$\vertex_3$};
         \node[inner sep=0cm, node distance=0.2cm, right = of v3] (ref) {};
         \node[state, align=center, node distance=6mm, below = of ref] (v5) {$\vertex_5$};
         \node[state, diamond, align=center, left = of v5] (t3) {$t_3$};
         \node[state, regular polygon, regular polygon sides=6, align=center, right = of v5] (t4) {$t_4$};
         \path[->] (v0) edge[bend left] (v1);
         \path[->] (v1) edge[bend left] (v0);
         \path[->] (v1) edge[bend left] (v2);
         \path[->] (v2) edge[bend left] (v1);
         \path[->] (v2) edge[bend left] (t1);
         \path[->] (t1) edge[bend left] (v2);
         \path[->] (v0) edge (t2);
         \path[->] (v1) edge (v3);
         \path[->] (v2) edge (v4);
         \path[->] (v3) edge (v5);
         \path[->] (v4) edge (v5);
         \path[->] (v5) edge (t3);
         \path[->] (v5) edge (t4);
         \path[->] (t2) edge[loop below] (t2);
         \path[->] (t3) edge[loop left] (t3);
         \path[->] (t4) edge[loop right] (t4);
       \end{tikzpicture}
       \caption{An arena. Circles, squares, diamonds and hexagons are respectively~$\playerOne$, $\playerTwo$, $\player{3}$, $\player{4}$ vertices.}
       \label{fig:ex:reach ne:game}
     \end{subfigure}
     \hfill
     \begin{subfigure}[b]{0.42\textwidth}
       \centering
       \begin{tikzpicture}[node distance=1.2cm]
         \node[initial left, state, rectangle, rounded corners] (1p3) {$(\player{3}, 1)$};
         \node[state, rectangle, rounded corners, below = of 1p3] (1p4) {$(\player{4}, 1)$};
         \node[state, rectangle, rounded corners, right = of 1p3] (2p3) {$(\player{3}, 2)$};
         \node[state, rectangle, rounded corners, below = of 2p3] (2p4) {$(\player{4}, 2)$};
\begin{scope}[on background layer]
           \node[rectangle, fit=(1p3)(1p4), fill=black!10, rounded corners] (m1) {};
           \node[rectangle, fit=(2p3)(2p4), fill=black!10, rounded corners] (m2) {};
         \end{scope}
         
         \path[->] (1p3) edge[bend right] node[left] {$\vertex_2$} (1p4);
         \path[->] (1p4) edge[bend right] node[right] {$\vertex_1$} (1p3);
         \path[->] (1p4) edge node[above] {$t_1$} (2p4);
         \path[->] (1p3) edge node[above] {$t_1$} (2p3);
         \path[->] (2p3) edge[bend right] node[left] {$\vertex_2$} (2p4);
         \path[->] (2p4) edge[bend right] node[right] {$\vertex_1$} (2p3);
       \end{tikzpicture}
       \caption{An illustration of the update scheme of a Mealy machine. Transitions that do not change the memory state are omitted.}
       \label{fig:ex:reach ne:mealy}
     \end{subfigure}
     \caption{A reachability game and a representation of a Mealy machine update scheme suitable for an NE from $\vertex_0$.}
     \label{fig:ex:reach ne}
   \end{figure}

      First, observe that $\play$ can be seen as the combination of the simple history $\segment_1=\vertex_0\vertex_1\vertex_2t_1$ and the simple lasso $\segment_2=t_1\vertex_2\vertex_1\vertex_0t_2^\omega$.
   The simple history $\segment_1$ connects the initial vertex to the first visited target, and the simple lasso $\segment_2$ connects the first target to the second and contains the suffix of the play.
   Therefore, if $\player{3}$ and $\player{4}$ were not looking for a profitable deviation, the outcome $\play$ could be obtained by using a finite-memory strategy profile where all strategies are defined by a Mealy machine with state space $\integerInterval{2}$.
   Intuitively, these strategies would follow $\segment_1$ while remaining in their first memory state $1$, then, when $t_1$ is visited, they would update their memory state to $2$ and follow $\segment_2$.

  We build on these simple Mealy machines with two states.
  We include additional information in each memory state.
  We depict a suitable Mealy machine state space and update scheme in Figure~\ref{fig:ex:reach ne:mealy}.
  The rectangles grouping together states $(\player{3}, \indexSegment)$ and $(\player{4}, \indexSegment)$ represent the memory state $\indexSegment$ of the simpler Mealy machine for $\indexSegment\in\integerInterval{2}$.
  The additional information roughly encodes the last player to act among the players whose objective is not satisfied in $\play$.
  More precisely, an update is performed from the memory state $(\playerI, \indexSegment)$ only if the vertex fed to the Mealy machine appears in $\segment_\indexSegment$ for $\indexSegment\in\integerInterval{2}$.
  
  By construction, if $\playerI$ (among $\player{3}$ and $\player{4}$) deviates and exits the set of vertices of $\segment_\indexSegment$ when in a memory state of the form $(\cdot, \indexSegment)$, then the memory updates to $(\playerI, \indexSegment)$ and does not change until the play returns to some vertex of $\segment_\indexSegment$ (which is not possible here due to the structure of the arena, but may be in general).
  For instance, assume that $\player{3}$ moves from $\vertex_1$ to $\vertex_4$ after the history $\history=\vertex_0\vertex_1\vertex_2t_1\vertex_2\vertex_1$.
  Then the memory after $\history$ is in state $(\player{3}, 2)$ and no longer changes from there on.

  It remains to explain how the next-move function of the Mealy machine should be defined to ensure an NE.
  Essentially, for a state of the form $(\playerI, \indexSegment)$ and vertices in $\segment_\indexSegment$, we assign actions as in the simpler two-state Mealy machine described previously.
  On the other hand, for a state of the form $(\playerI, \indexSegment)$ and a vertex not in $\segment_\indexSegment$, we use a memoryless punishing strategy against $\playerI$.
  In this particular case, we need only specify what $\playerOne$ should do in $\vertex_5$.
  Naturally, in memory state $(\playerI, \indexSegment)$, $\playerOne$ should move to the target of the other player.
  It is essential to halt memory updates for vertices $\vertex_3$ and $\vertex_4$ to ensure the correct player is punished.

  We close this example with comments on the structure of the Mealy machine.
  Assume that the memory state is of the form $(\playerI, \indexSegment)$.
  If a deviation occurs and leads to a vertex of $\segment_\indexSegment$ other than the intended one, then the other players will continue trying to progress along $\segment_\indexSegment$ and do not specifically try punishing the deviating player.
  Similarly, if after a deviation leaving the set of vertices of $\segment_\indexSegment$ (from which point the memory is no longer updated until this set is rejoined), a vertex of $\segment_\indexSegment$ is visited again, then the players resume trying to progress along this history and memory updates resume.
  In other words, these finite-memory strategies do not pay attention to all deviations and do not have dedicated memory that commit to punishing deviating players for the remainder of a play after a deviation.
  \hfill$\lhd$
\end{example}

We now give an example for shortest-path games.
In this case, we use slightly larger Mealy machines: it may be necessary to commit to a punishing strategy if the set of vertices of the segment the players want to progress along is left.
This requires additional memory states.
Furthermore, our example also highlights that it may be necessary to punish deviations from players whose targets are visited, as they can improve their cost (in contrast to reachability games).

\begin{example}\label{ex:short ne}
  We consider the shortest-path game on the weighted arena depicted in Figure~\ref{fig:ex:short ne:game} where the target of $\playerI$ is $\target_\playerIndex=\{t, t_{12}\}$ for all $\playerIndex\in\integerInterval{2}$ and $\target_3=\{t\}$ for $\player{3}$.
  Our goal is to construct an NE from $\vertex_0$ with the cost profile $(5, 5, 5)$; this can be done by designing an NE with outcome $\play=\vertex_0\vertex_1\vertex_3t^\omega$.
  We argue that the idea described in Example~\ref{ex:reach ne} would not work in this setting. We then provide an alternative construction that builds on the same ideas.

  \begin{figure}
    \centering
    \begin{subfigure}[b]{0.48\textwidth}
      \centering
      \begin{tikzpicture}[node distance=0.45cm]
        \node[state, align=center] (v0) {$\vertex_0$};
        \node[state, align=center, right = of v0] (v1) {$\vertex_1$};
        \node[state, square, align=center, below = of v1] (v2) {$\vertex_2$};
        \node[state, diamond, align=center, right = of v1] (v3) {$\vertex_3$};
        \node[state, square, align=center, right = of v3] (t123) {$t$};
        \node[state, align=center, below = of t123] (v4) {$\vertex_4$};
        \node[state, align=center, right = of v2] (t12) {$t_{12}$};
        \path[->] (v0) edge node[above] {$3$} (v1);
        \path[->] (v0) edge (v2);
        \path[->] (v1) edge (v3);
        \path[->] (v2) edge (v3);
        \path[->] (v2) edge (t12);
        \path[->] (v3) edge (t123);
        \path[->] (v3) edge (v4);
        \path[->] (v3) edge[bend right=45] (v0);
        \path[->] (t12) edge[loop below] (t12);
        \path[->] (t123) edge[loop right] (t123);
        \path[->] (v4) edge[loop right] (v4);
      \end{tikzpicture}
      \caption{A weighted arena. Circles, squares and diamonds are respectively~$\playerOne$, $\playerTwo$ and $\player{3}$ vertices. Edge labels indicate their weight. Unlabelled edges have a weight of $1$.}
      \label{fig:ex:short ne:game}
    \end{subfigure}
    \hfill
    \begin{subfigure}[b]{0.5\textwidth}
      \centering
      \begin{tikzpicture}[node distance=1.2cm]
        \node[initial left, state, rectangle, rounded corners] (p1) {$(\player{1}, 1)$};

        \node[state, rectangle, rounded corners, right = of p1] (p2) {$(\player{2}, 1)$};

        \node[inner sep=0pt, node distance=0.75cm, right = of p1] (ref) {};
        \node[state, rectangle, rounded corners, below = of ref] (p3) {$(\player{3}, 1)$};
        \node[state, rectangle, rounded corners, below = of p3] (p3b) {$\player{3}$};
        \node[state, rectangle, rounded corners, node distance=0.5cm, right = of p3b] (p2b) {$\player{2}$};
        \node[state, rectangle, rounded corners, node distance=0.5cm, left = of p3b] (p1b) {$\player{1}$};
\begin{scope}[on background layer]
          \node[rectangle, fit=(p1)(p2)(p3), fill=black!10, rounded corners] (m) {};
        \end{scope}

        \path[->] (p1) edge[] node[below] {$t$} (p2);
        \path[->] (p2) edge[bend right=20] node[above, align=center] {$\vertex_0$, $\vertex_1$} (p1);
        \path[->] (p3) edge[bend left=20] node[below left, align=center] {$\vertex_0$\\ $\vertex_1$} (p1);
        \path[->] (p1) edge[] node[above right] {$\vertex_3$} (p3);
        \path[->] (p2) edge[] node[above left] {$\vertex_3$} (p3);
        \path[->] (p3) edge[bend right=20] node[below right] {$t$} (p2);
        \path[->] (p1) edge[bend right] node[left, align=center] {$\vertex_2$ \\ $t_{12}$ \\ $\vertex_4$} (p1b);     
        \path[->] (p2) edge[bend left] node[right, align=center] {$\vertex_2$ \\ $t_{12}$ \\ $\vertex_4$} (p2b);   
        \path[->] (p3) edge node[left, align=center] {$\vertex_2$ \\ $t_{12}$ \\ $\vertex_4$} (p3b);
      \end{tikzpicture}
      \caption{An illustration of the update scheme of a Mealy machine. Transitions that do not change the memory state are omitted.}
      \label{fig:ex:short ne:mealy}
    \end{subfigure}
    \caption{A shortest-path game and a representation of a Mealy machine update scheme suitable for some NE from $\vertex_0$.}
    \label{fig:ex:short ne}
  \end{figure}

     In this case, $\play$ is a simple lasso, much like the second part of the play in the previous example.
    First, let us assume a Mealy machine similar to that of Example~\ref{ex:reach ne}, i.e., such that it tries to progress along $\play$ whenever it is in one of the vertices of $\play$.
    The update scheme of such a Mealy machine would be obtained by removing the transitions to states of the form $\playerI$ from Figure~\ref{fig:ex:short ne:mealy} (replacing them by self-loops).
  
  If $\player{3}$ uses a strategy based on such a Mealy machine, then $\playerOne$ has a profitable deviation from $\vertex_0$.
  Indeed, if $\playerOne$ moves from $\vertex_0$ to $\vertex_2$, then either $\playerOne$ incurs a cost of $2$ if $\playerTwo$ moves to $t_{12}$ from $\vertex_2$ or a cost of $3$ if $\playerTwo$ moves to $\vertex_3$ as $\player{3}$ would then move to $t$ by definition of their Mealy machine.
  To circumvent this issue, we add new memory states $\playerI$ for all players and if $\playerI$ exits the set of vertices of $\play$, we update the memory to the punishment state $\playerI$.
  This results in the update scheme depicted in Figure~\ref{fig:ex:short ne:mealy}.
  Next-move functions to obtain an NE can be defined as follows, in addition to the expected behaviour to obtain $\play$: 
  for $\playerTwo$, $\mealyNext_2((\playerOne, 1), \vertex_2) = \mealyNext_2(\playerOne, \vertex_2) = \vertex_3$ and for $\player{3}$, $\mealyNext_3(\playerOne, \vertex_3)=\vertex_4$.

  Similarly to the previous example, players do not explicitly react to deviations that move to vertices of $\play$; if $\player{3}$ deviates after reaching $\vertex_3$ and moves back to $\vertex_0$, the memory of the other players does not update to state $\player{3}$.
  Intuitively, there is no need to switch to a punishing strategy for $\player{3}$ as going back to the start of the intended outcome is more costly than conforming to it, preventing the existence of a profitable deviation.

  We remark that this example differs slightly from the general construction below.
  According to the general construction, we should decompose $\play$ into two parts: a history $\vertex_0\vertex_1\vertex_3t$ from the initial vertex to the first target and the suffix $t^\omega$ of the play after all targets are visited.
    Furthermore, we can argue that such a split is sometimes necessary (see Example~\ref{ex:bonus segment}).
    \hfill$\lhd$
\end{example}

\subsection{Well-structured Nash equilibrium outcomes}\label{section:fm:simple outcomes}

The examples of finite-memory NEs presented on the previous section were built from well-chosen decompositions of an NE outcome.
In this section, we show that NE outcomes that admit a decomposition from which we can construct an NE can always be found in reachability and shortest-path games.
More precisely, we show that we can always derive, from a given NE outcome, another NE outcome that admits a finite simple decomposition such that each history segment of this decomposition connects the different targets visited along the play and has minimal weight.
In particular, the derived NE outcome has a preferable cost profile to that of the original outcome.

\begin{figure}
  \centering
  \scalebox{1}{
    \begin{tikzpicture}[node distance=4mm, every dotnode/.style={minimum size=6mm},every state/.style={minimum size=6mm}]
      \node[state, fill=blue!20] (v0) {};
      \node[state, right =of v0, fill=blue!20] (v1) {};
      \node[dotnode,right = of v1] (dots1) {\ldots};
      \node[state, right = of dots1, diagonal fill={red!20}{blue!20}, accepting] (v2) {};
      \node[state, right =of v2, fill=red!20] (v3) {};
      \node[dotnode,right = of v3] (dots2) {\ldots};
      \node[state, right = of dots2, diagonal fill={yellow!20}{red!20}, accepting] (v4) {};
      \node[dotnode,right = of v4] (dots3) {\ldots};
      \node[state, right = of dots3, fill=custom-green!20] (v5) {};
      \node[state, right = of v5, diagonal fill={orange!20}{custom-green!20},  accepting] (v6) {};
      \node[state, right = of v6, fill=orange!20] (v7) {};
      \node[state, right = of v7,fill=orange!20] (v8) {};
      \node[dotnode,right = of v8] (dots4) {\ldots};

      \path[->] (v0) edge (v1);
      \path[->] (v1) edge (dots1);
      \path[->] (dots1) edge (v2);
      \path[->] (v2) edge (v3);
      \path[->] (v3) edge (dots2);
      \path[->] (dots2) edge (v4);
      \path[->] (v4) edge (dots3);
      \path[->] (dots3) edge (v5);
      \path[->] (v5) edge (v6);
      \path[->] (v6) edge (v7);
      \path[->] (v7) edge (v8);
      \path[->] (v8) edge (dots4);

      \node[emptynode,below = 2mm of v0.south west] (brace1l) {};
      \node[emptynode,below = 2mm of v2.south east] (brace1r) {};
      \node[emptynode,below = 2mm of v2.south west] (brace2l) {};
      \node[emptynode,below = 2mm of v4.south east] (brace2r) {};
      \node[emptynode,below = 2mm of dots3.south east] (brace3l) {};
      \node[emptynode,below = 2mm of v6.south east] (brace3r) {};
      \node[emptynode,below = 2mm of v6.south west] (brace4l) {};
      \node[emptynode,below = 2mm of dots4.south east] (brace4r) {};

      \draw[decorate,blue,decoration={brace,amplitude=5pt,mirror}]
      (brace1l) -- (brace1r) node [midway,blue,below,yshift=-5pt] {$\segment_1$};
      \draw[decorate,red,decoration={brace,amplitude=5pt,mirror}]
      (brace2l) -- (brace2r) node [midway,red,below,yshift=-5pt] {$\segment_2$};
      \draw[decorate,custom-green,decoration={brace,amplitude=5pt,mirror}]
      (brace3l) -- (brace3r) node [midway,custom-green,below,yshift=-5pt] {$\segment_{\numSegments-1}$};
      \draw[decorate,orange,decoration={brace,amplitude=5pt,mirror}]
      (brace4l) -- (brace4r) node [midway,orange,below,yshift=-5pt] {$\segment_\numSegments$};

      \node[emptynode, below = 3mm of v4] (simpl1) {};
      \node[emptynode, below = 25mm of simpl1] (simpl2) {};
      \path[->,draw=black,thick,decorate,decoration={snake,post length=3pt}] (simpl1) -- (simpl2) node[midway,right,align=center,xshift=5pt] {Simplification for\\ each segment};

      \node[state, below = 3mm of simpl2, diagonal fill={yellow!20}{red!20}, accepting] (w4) {};
      
      \node[dotnode, left = of w4] (dotsw2) {\ldots};
      \node[state, left =of dotsw2, fill=red!20] (w3) {};

      \node[state, left = of w3, diagonal fill={red!20}{blue!20}, accepting] (w2) {};
      \node[dotnode,left = of w2] (dotsw1) {\ldots};
      \node[state, left =of dotsw1, fill=blue!20] (w1) {};
      \node[state, left = of w1, fill=blue!20] (w0) {};

      \node[dotnode,right = of w4] (dotsw3) {\ldots};
      \node[state, right = of dotsw3, fill=custom-green!20] (w5) {};
      \node[state, right = of w5, diagonal fill={orange!20}{custom-green!20},  accepting] (w6) {};
      \node[state, right = of w6, fill=orange!20] (w7) {};
      \node[state, right = of w7,fill=orange!20] (w8) {};
      \node[dotnode,right = of w8] (dotsw4) {\ldots};

      \path[->] (w0) edge (w1);
      \path[->] (w1) edge (dotsw1);
      \path[->] (dotsw1) edge (w2);
      
      \path[->] (w2) edge (w3);
      \path[->] (w3) edge (dotsw2);
      \path[->] (dotsw2) edge (w4);
      \path[->] (w4) edge (dotsw3);
      \path[->] (dotsw3) edge (w5);
      \path[->] (w5) edge (w6);
      \path[->] (w6) edge (w7);
      \path[->] (w7) edge (w8);
      \path[->] (w8) edge (dotsw4);

      \node[emptynode,below = 2mm of w0.south west] (bracew1l) {};
      \node[emptynode,below = 2mm of w2.south east] (bracew1r) {};
      \node[emptynode,below = 2mm of w2.south west] (bracew2l) {};
      \node[emptynode,below = 2mm of w4.south east] (bracew2r) {};
      \node[emptynode,below = 2mm of dotsw3.south east] (bracew3l) {};
      \node[emptynode,below = 2mm of w6.south east] (bracew3r) {};
      \node[emptynode,below = 2mm of w6.south west] (bracew4l) {};
      \node[emptynode,below = 2mm of dotsw4.south east] (bracew4r) {};

      \draw[decorate,blue,decoration={brace,amplitude=5pt,mirror}]
      (bracew1l) -- (bracew1r) node [midway,blue,below,yshift=-5pt,align=center] {$\segment_1'$\\simple\\history};
      
      \draw[decorate,red,decoration={brace,amplitude=5pt,mirror}]
      (bracew2l) -- (bracew2r) node [midway,red,below,yshift=-5pt,align=center] {$\segment_2'$\\simple\\ history};
      \draw[decorate,custom-green,decoration={brace,amplitude=5pt,mirror}]
      (bracew3l) -- (bracew3r) node [midway,custom-green,below,yshift=-5pt,align=center] {$\segment_{\numSegments-1}'$\\simple\\ history};
      \draw[decorate,orange,decoration={brace,amplitude=5pt,mirror}]
      (bracew4l) -- (bracew4r) node [midway,orange,below,yshift=-5pt,align=center] {$\segment_\numSegments'$\\simple lasso or \\simple play};

      \path[->,red] (w1) edge[bend left] node {$\mathbin{\tikz [x=1.4ex,y=1.4ex,line width=.2ex] \draw[-] (0,0) -- (1,1) (0,1) -- (1,0);}$} node[above,align=center] {No shortcuts\\via $\segment_1$ vertices} (w2);
    \end{tikzpicture}
  }
  \caption{Simplification process for an NE outcome in a multi-player shortest-path game.
    Doubly circled vertices denote the first occurrence of a target vertex of some player in the play.
  }\label{figure:simplification:reach}
\end{figure}

We focus on \textit{shortest-path games} in this section.
For reachability games, the existence of suitable NE outcomes can be recovered by noting that a shortest-path game with a constant zero weight function is equivalent to its unweighted reachability game.
This derivation can be seen as a simplification process, illustrated in Figure~\ref{figure:simplification:reach}: we decompose a given NE outcome into segments up connecting the first visit to target sets, and then simplify each segment individually while minimising the weight of the simple history segments.
This simplification process yields outcomes admitting a decomposition satisfying the properties outlined in the following lemma.

\begin{restatable}{lemma}{lemSimpleDecOutcomes}\label{lem:spath:simple decomposition}
  Let $\play'$ be the outcome of an NE from $\vertex_0\in\vertexSet$ in a shortest-path game $\game=(\arena, (\costReach{\target_\playerIndex}{\weight})_{\playerIndex\in\playerSet})$.
  Let $\numSegments = |\vispos{\play'}\setminus\{0\}|$.
  There exists an NE outcome $\play$ from $\vertex_0$ with $\vispos{\play}\setminus\{0\} = \{\indexPosition_1 < \ldots < \indexPosition_\numSegments\}$ that admits a finite simple segment decomposition $(\segment_1, \ldots, \segment_{\numSegments+1})$ such that
  \begin{enumerate}[(i)]
  \item if $\numSegments>0$, then $(\segment_1, \ldots, \concat{\segment_\numSegments}{\segment_{\numSegments+1}})$ is also a simple decomposition of $\play$;\label{item:spath:simple decomposition:one}
  \item if $\numSegments>0$, then for all $\indexSegment\in\integerInterval{\numSegments}$, $\concat{\concat{\segment_1}{\ldots}}{\segment_\indexSegment}=\prefix{\play}{\indexPosition_\indexSegment}$;\label{item:spath:simple decomposition:two}
  \item if $\numSegments>0$, then for all $\indexSegment\in\integerInterval{\numSegments}$, $\weight(\segment_\indexSegment)$ is minimal among all histories that share their first and last vertex with $\segment_\indexSegment$ and traverse a subset of the vertices occurring in $\segment_\indexSegment$; and\label{item:spath:simple decomposition:three}
  \item for all $\playerIndex\in\playerSet$, $\costReach{\target_\playerIndex}{\weight}(\play)\leq \costReach{\target_\playerIndex}{\weight}(\play')$.\label{item:spath:simple decomposition:four}
  \end{enumerate}
\end{restatable}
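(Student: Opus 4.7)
Following Figure~\ref{figure:simplification:reach}, the plan is to split $\play'$ along the first-visit positions $\vispos{\play'}\setminus\{0\}=\{p_1<\ldots<p_\numSegments\}$ (with $p_0=0$) and simplify each resulting piece. For every $j\in\integerInterval{\numSegments}$, I would take $\segment_j$ to be a simple history from $\vertex'_{p_{j-1}}$ to $\vertex'_{p_j}$ of minimum weight within the finite subarena spanned by the vertices and edges occurring in $\play'$ between positions $p_{j-1}$ and $p_j$; such a minimum is attained because this subarena is finite and the original block of $\play'$ is a valid candidate. For the infinite tail, a simple play or simple lasso $\segment_{\numSegments+1}$ starting at $\vertex'_{p_\numSegments}$ can be extracted from $\suffix{\play'}{p_\numSegments}$ by loop elimination at the first repeated vertex. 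Setting $\play=\segment_1\cdot\ldots\cdot\segment_{\numSegments+1}$ yields properties~\ref{item:spath:simple decomposition:two} and~\ref{item:spath:simple decomposition:three} directly, and property~\ref{item:spath:simple decomposition:four} follows from $\weight(\segment_j)\leq\weight(\vertex'_{p_{j-1}}\ldots\vertex'_{p_j})$ for every $j$.

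To conclude that $\play$ is an NE outcome from $\vertex_0$, I would appeal to Theorem~\ref{thm:charac:short}. Since the vertex set of $\play$ is contained in that of $\play'$ and every first-visit position of $\play'$ is preserved as a split point, $\satpl{\play}=\satpl{\play'}$ and condition~\ref{item:charac:short:one} transfers directly from $\play'$. For condition~\ref{item:charac:short:two}, fix $\playerIndex\in\satpl{\play}$ and a position $\indexPosition$ at most the first-visit position of $\target_\playerIndex$ in $\play$; then $\vertex_\indexPosition=\vertex'_{\indexPosition^\circ}$ for some occurrence in $\play'$. The key tool is subpath optimality of shortest paths in finite weighted graphs: any subhistory of $\segment_j$ between two of its vertices is itself minimum-weight in the relevant subarena, by a cut-and-paste argument. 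Combined with property~\ref{item:spath:simple decomposition:three} for the later segments, this yields $\costReach{\target_\playerIndex}{\weight}(\suffix{\play}{\indexPosition})\leq\costReach{\target_\playerIndex}{\weight}(\suffix{\play'}{\indexPosition^\circ})\leq\valI(\vertex_\indexPosition)$, where the last inequality is the NE property of $\play'$ via Theorem~\ref{thm:charac:short}.

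The main obstacle is property~\ref{item:spath:simple decomposition:one}, which requires $\segment_\numSegments\cdot\segment_{\numSegments+1}$ to itself be a simple segment. The naive independent construction above can fail here: the simple play or lasso extracted from $\suffix{\play'}{p_\numSegments}$ may revisit vertices of $\segment_\numSegments$, preventing the concatenation from admitting a simple lasso or simple play representation. To address this, I would construct the last block and the tail jointly, by considering the suffix $\suffix{\play'}{p_{\numSegments-1}}$ as a whole and choosing, among the simple plays and simple lassos starting at $\vertex'_{p_{\numSegments-1}}$, using only vertices of this suffix, and visiting a vertex of the target first attained at $p_\numSegments$, one whose prefix up to that first visit has minimum weight. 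Splitting this object at the first target visit yields $\segment_\numSegments$ and $\segment_{\numSegments+1}$ whose concatenation is simple by construction, and a shortest-path argument guarantees property~\ref{item:spath:simple decomposition:three} for $\segment_\numSegments$. Showing that such a joint minimum is always attained and remains consistent with the NE characterisation, particularly when the suffix traverses infinitely many distinct vertices, is the bulk of the technical work.
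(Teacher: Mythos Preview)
Your plan is essentially the paper's proof: split $\play'$ at its first-visit positions, replace each block by a minimum-weight simple history inside that block, simplify the tail, and verify the result is an NE outcome via Theorem~\ref{thm:charac:short} using subpath optimality plus the original NE inequality. The verification you sketch (pick an occurrence of $\vertex_\indexPosition$ in the corresponding block of $\play'$ and compare) is exactly what the paper does.

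The one place you diverge is your handling of property~\ref{item:spath:simple decomposition:one}, and there you are making life harder than necessary. You propose a joint optimisation over simple plays and lassos from $\vertex'_{p_{\numSegments-1}}$, minimising the prefix weight up to the last target, and then flag attainment over a possibly infinite vertex set as ``the bulk of the technical work''. The paper sidesteps this entirely by decoupling the two constraints. It first fixes $\segment_\numSegments$ independently, as a minimum-weight simple history inside the finite block $\vertex'_{p_{\numSegments-1}}\ldots\vertex'_{p_\numSegments}$, exactly as for the earlier $\segment_\indexSegment$. Only afterwards does it choose $\segment_{\numSegments+1}$ so that $\segment_\numSegments\concatOp\segment_{\numSegments+1}$ is simple. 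The point you are missing is that the tail carries no weight requirement whatsoever: every player in $\satpl{\play}$ has already reached their target at $\last{\segment_\numSegments}$, so $\segment_{\numSegments+1}$ merely has to exist. And existence is immediate: look at the play $\segment_\numSegments\concatOp\suffix{\play'}{p_\numSegments}$; if it is a simple play take $\segment_{\numSegments+1}=\suffix{\play'}{p_\numSegments}$, otherwise truncate at its first repeated vertex to get a simple lasso $\segment_\numSegments\concatOp\segment_{\numSegments+1}$ using only vertices of $\play'$. This keeps the minimisation for~\ref{item:spath:simple decomposition:three} over a finite set and eliminates your attainment concern altogether.
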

\begin{proof}
  We distinguish two cases in this proof.
  First, let us assume that $\numSegments=0$, i.e., $\vispos{\play'}\subseteq\{0\}$.
  If $\play'$ is a simple play, it suffices to choose $\play = \play'$; it admits the trivial simple decomposition $(\play)$.
  Therefore, we assume that $\play'$ is not a simple play. 
  This implies that there is a simple lasso $\play\in\playSet{\arena}$ starting from $\vertex_0$ that only uses vertices that occur in $\play'$.
  It follows that $\vispos{\play}\subseteq\{0\}$.
  By Theorem~\ref{thm:charac:short}, $\play$ is an NE outcome; the first condition of the characterisation of Theorem~\ref{thm:charac:short} follows from it holding for $\play'$ and the second condition holds because $\vispos{\play}=\vispos{\play'}\subseteq\{0\}$.

  Second, we assume that $\numSegments> 0$.
  We define $\play$ by describing the simple decomposition $\decomp=(\segment_1, \ldots, \segment_{\numSegments+1})$.
  Let $\indexPosition_1' <\ldots < \indexPosition_\numSegments'$ be the elements of $\vispos{\play'}\setminus\{0\}$ and $\indexPosition_0' = 0$.
  For $\indexSegment\in\integerInterval{\numSegments}$, we let $\segment'_\indexSegment$ denote the segment of $\play$ between positions $\indexPosition_{\indexSegment-1}'$ and $\indexPosition_\indexSegment'$.
  We let $\segment_\indexSegment$ be a simple history that shares its first and last vertex with $\segment_\indexSegment'$ and traverses a subset of the vertices occurring in $\segment_\indexSegment'$, with minimal weight among all such histories.
  This definition ensures that, for all $\indexSegment\in\integerInterval{\numSegments}$, the minimum in condition~\ref{item:spath:simple decomposition:three} is realised by $\segment_\indexSegment$, and thus condition~\ref{item:spath:simple decomposition:three} holds.
  It remains to define the segment $\segment_{\numSegments+1}$. 
  We let $\segment_{\numSegments+1}$ be $\suffix{\play'}{\indexPosition_\numSegments}$ if $\concat{\segment_\numSegments}{\suffix{\play'}{\indexPosition_\numSegments}}$ is a simple play, and otherwise we let $\segment_{\numSegments+1}$ be any play starting in $\last{\segment_\numSegments}$ such that $\concat{\segment_{\numSegments}}{\segment_{\numSegments+1}}$ is a simple lasso in which only vertices of $\play'$ occur.
  It follows from this choice of $\segment_{\numSegments+1}$ that $\play=\concat{\concat{\segment_1}{\ldots}}{\segment_{\numSegments+1}}$ and $\decomp=(\segment_1, \ldots, \segment_{\numSegments+1})$ satisfy~\ref{item:spath:simple decomposition:one}.

  We now prove that $\play$ is an NE outcome satisfying~\ref{item:spath:simple decomposition:two} and~\ref{item:spath:simple decomposition:four}.
  Let $\play=\vertex_0\vertex_1\ldots$, and let $\indexPosition_1 <\ldots < \indexPosition_\numSegments$ be the elements of $\vispos{\play}\setminus\{0\}$ and $\indexPosition_0=0$.
  To show that $\play$ is an NE outcome, we rely on Theorem~\ref{thm:charac:short}.
  Because $\play'$ is an NE outcome and all vertices occurring in $\play$ occur in $\play'$, it follows that the first condition of the characterisation of Theorem~\ref{thm:charac:short} holds for $\play$.
  
  For the second condition of this characterisation, we fix $\playerIndex\in\satpl{\play}$ and $\indexSegment_\playerIndex\leq \numSegments$ such that $\indexPosition_{\indexSegment_\playerIndex} = \min\{\indexPosition\in\IN\mid\vertex_\indexPosition\in\target_\playerIndex\}$.
  We show that for all $\indexPosition\leq \indexPosition_{\indexSegment_\playerIndex}$, we have $\costReach{\target_\playerIndex}{\weight}(\suffix{\play}{\indexPosition})\leq\valI(\vertex_\indexPosition)$ where $\valI(\vertex_\indexPosition)$ is the value of $\vertex_\indexPosition$ in the coalition game $\game_\playerIndex = (\arena_\playerIndex, \costReach{\target_\playerIndex}{\weight})$.
  
  Let $\indexPosition\leq \indexPosition_{\indexSegment_\playerIndex}$ and $\indexSegment\leq \indexSegment_\playerIndex$ such that $\indexPosition_{\indexSegment} \leq\indexPosition< \indexPosition_{\indexSegment+1}$.
  By construction, there is an occurrence of $\vertex_\indexPosition$ in the segment $\segment'_\indexSegment$ of $\play'$.
  We consider a suffix $\suffix{\play'}{\indexPosition'}$ of $\play'$ starting from an occurrence of $\vertex_\indexPosition$ in $\segment'_\indexSegment$.
  The desired inequality follows from the relations
  \[\costReach{\target_\playerIndex}{\weight}(\suffix{\play}{\indexPosition})\leq\costReach{\target_\playerIndex}{\weight}(\suffix{\play'}{\indexPosition'})\leq \valI(\vertex_\indexPosition).\]
  
  We prove that the first inequality holds by contradiction.
  Assume that $\costReach{\target_\playerIndex}{\weight}(\suffix{\play}{\indexPosition})>\costReach{\target_\playerIndex}{\weight}(\suffix{\play'}{\indexPosition'})$.
  It must be the case that either a suffix of $\segment'_\indexSegment$ starting in $\vertex_\indexPosition$ has weight strictly less than the suffix $\vertex_\indexPosition\ldots\vertex_{\indexPosition_\indexSegment}$ of $\segment_\indexSegment$, or that $\weight(\segment_{\indexSegment'}) > \weight(\segment_{\indexSegment'}')$ for some $\indexSegment < \indexSegment'\leq \indexSegment_\playerIndex$. Both possibilities contradict the choice of the elements of $\decomp$, therefore we have $\costReach{\target_\playerIndex}{\weight}(\suffix{\play}{\indexPosition})\leq\costReach{\target_\playerIndex}{\weight}(\suffix{\play'}{\indexPosition'})$.
  The second inequality holds by Theorem~\ref{thm:charac:short} as $\play'$ is an NE outcome.
  We remark (for condition~\ref{item:spath:simple decomposition:four}) that in the special case $\indexPosition=0$, the first inequality implies that $\costReach{\target_\playerIndex}{\weight}(\play)\leq \costReach{\target_\playerIndex}{\weight}(\play')$ as we can choose $\indexPosition'=0$.
  We have shown that $\play$ is an NE outcome.
  
  We conclude by proving that~\ref{item:spath:simple decomposition:two} and \ref{item:spath:simple decomposition:four} hold.
  Condition~\ref{item:spath:simple decomposition:two} follows immediately by construction. 
  Let $\indexSegment\in\integerInterval{\numSegments}$.
  For condition~\ref{item:spath:simple decomposition:four}, due to the above, we need only consider players who do not see their target.
  For these players, the condition follows from the equality $\satpl{\play}=\satpl{\play'}$ implying that players have an infinite cost in $\play$ if and only if they have an infinite cost in $\play'$.
  This concludes the proof. 
\end{proof}

Condition~\ref{item:spath:simple decomposition:one} of the statement of Lemma~\ref{lem:spath:simple decomposition} suggests that we could merge the last two segments of the decomposition provided by the lemma to obtain simple decompositions with one less element.
In reachability games, we can build NEs from these smaller decompositions, and thus obtain smaller Mealy machines than by using the larger decomposition.
However, this is not true in shortest-path games.
In our upcoming NE construction (see the examples of Section~\ref{section:fm:example}), players do not react to deviations that remain within the current segment of the intended outcome.
In the following example, we illustrate that this behaviour can be exploited by players in shortest-path games if we merge these last two segments.

\begin{example}\label{ex:bonus segment}
  We consider the shortest-path game on the weighted arena depicted in Figure~\ref{fig:ex:bonus segment} where the target of $\playerOne$ is $\target_1=\{t_1, t_{12}\}$ and the target of $\playerTwo$ is $\target_2=\{t_{12}\}$.
  The play $\play =\vertex_0 t_{12}(\vertex_1t_1)^\omega$ is a simple lasso that is an NE outcome by Theorem~\ref{thm:charac:short}.
  We claim that there are no NEs in which $\playerTwo$ moves from $\vertex_1$ to $\target_1$ so long as the set of vertices of $\play$ is not left, i.e., in which $\playerTwo$ plays a strategy that is based on the trivial decomposition $(\play)$ of $\play$.

  Let $\stratProfile = (\stratOne, \stratTwo)$ be a strategy profile such that $\outcome{\stratProfile}{\vertex_0} = \play$ and $\stratTwo(\hist) = t_1$ for all histories $\hist$ such that $\last{\hist}=\vertex_1$ and $\vertex_2$ does not occur in $\hist$.
  The play $\vertex_0(\vertex_1t_1)^\omega$ is consistent with $\stratTwo$.
  This implies that $\playerOne$ has a profitable deviation, and thus $\stratProfile$ is not an NE from $\vertex_0$.
  We have shown that the segments $\vertex_0t_{12}$ and $t_{12}(\vertex_1t_1)^\omega$ should be distinguished to implement our relaxation of the punishment mechanism.
\hfill$\lhd$
\end{example}

\subsection{Strategies and play decompositions}\label{section:fm:general}

As suggested by the examples of Section~\ref{section:fm:example}, our construction of finite-memory NE in reachability and shortest-path games is similar: we build an NE from an NE outcome that admits a well-structured simple decomposition (obtained in practice via Lemma~\ref{lem:spath:simple decomposition}).
In this section, we provide partially-defined Mealy machines that we extend in the sequel to construct NEs in reachability and shortest-path games, and later in safety games.

We first introduce the general idea of a strategy based on a finite simple decomposition of a play in Section~\ref{section:fm:general:dec}.
We then define Mealy machines that induce strategies based on finite simple decompositions in Section~\ref{section:fm:general:strat}.

\subsubsection{Strategies based on a play decomposition}\label{section:fm:general:dec}

In Examples~\ref{ex:reach ne} and~\ref{ex:short ne}, we defined Nash equilibria such that some deviations from the intended outcome would not result in the deviating players being punished: if a deviation does not exit the segment of the play that the players are currently progressing along, the players would disregard the deviation and would keep following this segment.
For instance, in Example~\ref{ex:short ne}, if $\player{3}$ deviates, resulting in the history $\vertex_0\vertex_1\vertex_3\vertex_0$, the other players do not try to prevent $\player{3}$ from reaching a target (despite it being possible from $\vertex_0$).
Instead, they attempt to follow the moves suggested by the segment $\vertex_0\vertex_1\vertex_3 t$, i.e., they maintain their initial behaviour.
We say that strategies that exhibit such a behaviour are \textit{based on a simple decomposition}.
In the following, we formalise this idea.
We fix a play $\play$ that admits a finite simple decomposition $\decomp=(\segment_1, \ldots, \segment_\numSegments)$ for the remainder of the section.
We assume that there are \textit{no trivial segments} in $\decomp$, i.e., $\decomp$ does not contain segments of the form $\vertex$ for $\vertex\in\vertexSet$.

First, we define the histories for which players continue following the current segment of the play.
We call such histories coherent with $\decomp$.
\begin{definition}[History coherent with a decomposition]
  A history is \textit{coherent with $\decomp$} if there is some $\indexSegment\in\integerInterval{\numSegments}$ such that it is $\indexSegment$-coherent with $\decomp$.
  We define $\indexSegment$-coherence inductively.
  The only one-vertex history that is $1$-coherent with $\decomp$ is the history $\hist=\vertex_0$.
  For all $\history\in\historySet{\arena}$ and $\vertex\in\succSet{\last{\history}}$, we define that
  \begin{itemize}
  \item if $\hist$ is $\indexSegment$-coherent for some $\indexSegment <\numSegments$ and $\vertex=\last{\segment_\indexSegment}$, then $\hist\vertex$ is $(\indexSegment+1)$-coherent;
  \item otherwise, if $\hist$ is $\indexSegment$-coherent for some $\indexSegment\in\integerInterval{\numSegments}$ and $\vertex$ occurs in $\segment_\indexSegment$, then $\hist\vertex$ is $\indexSegment$-coherent;
  \item in any other case, $\hist\vertex$ is not coherent with $\decomp$.
  \end{itemize}
\end{definition}
As the cases in the above definition are disjoint, given a history $\hist$ coherent with $\decomp$, there is a unique $\indexSegment\in\integerInterval{\numSegments}$ such that $\hist$ is $\indexSegment$-coherent with $\decomp$.

We now define strategies based on $\decomp$, i.e., strategies that, given a coherent history, attempt to complete the segment in progress.
We use the following notion in this definition.
\begin{definition}
  Let $\history$ be a history that is $\indexSegment$-coherent with $\decomp$.
  The \textit{$\decomp$-next vertex} of $\hist$ to be the vertex that follows $\last{\history}$ in $\segment_\indexSegment$.
\end{definition}

Given a $\indexSegment$-coherent history $\hist$, its $\decomp$-next vertex is well-defined only if $\last{\hist}$ differs from $\last{\segment_\indexSegment}$. (note that uniqueness follows from $\decomp$ being simple).
We show that this is always the case below.

\begin{restatable}{lemma}{lemmaDecompositionNextVertex}\label{lemma:decomposition:next:vertex}
  Let $\history$ be a history that is coherent with the simple decomposition $\decomp=(\segment_1, \ldots, \segment_\numSegments)$.
  The $\decomp$-next vertex of $\history$ is well-defined.
\end{restatable}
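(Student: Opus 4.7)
The plan is to prove, by induction on the length of $\history$, a strengthened statement that simultaneously establishes: (1) the index $\indexSegment$ for which $\history$ is $\indexSegment$-coherent is unique; (2) $\last{\history}$ occurs in $\segment_\indexSegment$; and (3) if $\indexSegment < \numSegments$, then $\last{\history} \neq \last{\segment_\indexSegment}$. From these three facts, the lemma follows immediately: either $\indexSegment = \numSegments$, in which case $\segment_\numSegments$ is a simple play or simple lasso and every occurrence of $\last{\history}$ in it has a successor within the segment, or $\indexSegment < \numSegments$, in which case $\segment_\indexSegment$ is a non-trivial finite simple history and (3) guarantees a successor of $\last{\history}$ in $\segment_\indexSegment$. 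In either case, the successor is uniquely determined as a function of $\last{\history}$ by simplicity of $\segment_\indexSegment$.

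For the base case $\history = \vertex_0$, only $1$-coherence is derivable (the inductive rules apply only to extensions of non-empty histories), $\vertex_0 = \first{\segment_1}$ lies in $\segment_1$, and the non-triviality assumption gives $\first{\segment_1} \neq \last{\segment_1}$. For the inductive step applied to $\history\vertex$: if $\history\vertex$ is $\indexSegment$-coherent via the transition rule (i.e., $\history$ is $(\indexSegment-1)$-coherent and $\vertex = \last{\segment_{\indexSegment-1}} = \first{\segment_\indexSegment}$), then (2) holds trivially and (3) follows from non-triviality of $\segment_\indexSegment$; if $\history\vertex$ is $\indexSegment$-coherent via the stay rule, then (2) and (3) are direct from the rule's hypotheses. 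Uniqueness (1) is preserved because the transition and stay rules are mutually exclusive by the ``otherwise'' clause: they cannot be applied simultaneously for the same pair $(\history, \vertex)$ and the same base index, and by the inductive hypothesis $\history$ has a unique coherence index, leaving at most two candidate indices for $\history\vertex$, ruled out by the same mutual exclusivity.

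The last ingredient is that in a simple segment the successor of a vertex depends only on the vertex itself, not on the position at which it occurs. For simple histories and simple plays this is immediate since no vertex repeats. For simple lassos $pc^\omega$, a vertex that occurs infinitely often does so at positions differing by the period $|c|$, and in each occurrence it is followed by the same vertex; so the ``successor in $\segment_\indexSegment$'' is a well-defined function of the vertex.

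The main obstacle is establishing (3) — that we have not already reached the endpoint of the current segment. This hinges on the non-triviality assumption (so that $\first{\segment_\indexSegment} \neq \last{\segment_\indexSegment}$ for finite segments) together with the mutual exclusivity of the two derivation rules, which together prevent a history from being coherent with index $\indexSegment < \numSegments$ once its last vertex coincides with $\last{\segment_\indexSegment}$.
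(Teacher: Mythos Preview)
Your proof is correct and takes essentially the same approach as the paper. Both arguments hinge on the key fact that a $\indexSegment$-coherent history with $\indexSegment<\numSegments$ cannot end in $\last{\segment_\indexSegment}$, established by case analysis on the last coherence-derivation step together with non-triviality; you wrap this in a forward induction and make the auxiliary facts (uniqueness of the coherence index, $\last{\history}$ occurring in $\segment_\indexSegment$, well-definedness of successors in simple lassos) explicit, whereas the paper leaves these implicit and phrases the same case analysis as a short contradiction argument.
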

\begin{proof}
  We assume that $\history$ is $\indexSegment$-coherent.
  We establish existence and uniqueness of this vertex.
  Uniqueness follows from the simplicity of the decomposition.
  Existence is clear if $\indexSegment=\numSegments$: $\segment_\numSegments$ does not have a final vertex.
  
  We therefore assume that $\indexSegment<\numSegments$ and show the existence of a $\decomp$-next vertex by contradiction.
  Assume there is no suitable vertex. 
  It implies that $\last{\history}=\last{\segment_\indexSegment}$.
  By simplicity and the absence of trivial histories in $\decomp$, we have $\first{\segment_\indexSegment}\neq\last{\segment_\indexSegment}$.
  Therefore, there must be a prefix of $\history$ that is $\indexSegment$-coherent by definition of $\indexSegment$-coherence.
  We obtain that $\history$ should either be $(\indexSegment+1)$-coherent or not coherent, a contradiction.
\end{proof}

We can now formalise the notion of a strategy based on $\decomp$.
\begin{definition}
  A strategy of $\playerI$ is \textit{based on $\decomp$} if to any history $\history\in\historySetI{\arena}$ that is coherent with $\decomp$, it assigns the $\decomp$-next vertex of $\history$.
\end{definition}

We now formally show below that if all players use strategies that are based on $\decomp$ from $\first{\play}$, then the resulting outcome is $\play$.
\begin{lemma}
  Let $\stratProfile = (\stratI)_{\playerIndex\in\playerSet}$ be a strategy profile such that $\stratI$ is based on $\decomp$ for all $\playerIndex\in\playerSet$.
  Then $\outcome{\stratProfile}{\first{\play}}=\play$.
\end{lemma}
\begin{proof}
  We consider two cases.
  First, assume that $\decomp = (\play)$.
  It follows that all prefixes of $\play$ are $1$-coherent with $\decomp$, and the claim follows from the definition of $\decomp$-next vertex.

  Assume now that $\numSegments> 1$.
  We let $\play =\vertex_0\vertex_1\ldots$ and $\indexPosition_1<\ldots<\indexPosition_{\numSegments-1}$ be indices such that, for all $\indexSegment\in\integerInterval{\numSegments-1}$, $\prefix{\play}{\indexPosition_\indexSegment} = \concat{\segment_1}{\concat{\ldots}{\segment_\indexSegment}}$.
  To end the proof, due to the definition of a $\decomp$-next vertex, it suffices to show that for all $\indexPosition\in\IN$:
  \begin{itemize}
  \item if $\indexPosition <\indexPosition_1$, then $\prefix{\play}{\indexPosition}$ is $1$-coherent with $\decomp$,
  \item if $\indexPosition_{\indexSegment-1}\leq\indexPosition<\indexPosition_{\indexSegment}$ for some $1< \indexSegment<\numSegments$, then $\prefix{\play}{\indexPosition}$ is $\indexSegment$-coherent with $\decomp$,
  \item if $\indexPosition \geq \indexPosition_{\numSegments-1}$, then $\prefix{\play}{\indexPosition}$ is $\numSegments$-coherent with $\decomp$.
  \end{itemize}
  We prove the above points by induction.
  For $\indexPosition=0$, the $1$-coherence of $\prefix{\play}{0}=\vertex_0$ is due to the base case of the definition of coherence.
  Now assume by induction that the claim holds for $\indexPosition\in\IN$ and let us show it for $\indexPosition+1$.

  First, let us assume that $\indexPosition+1 < \indexPosition_1$.
  The $1$-coherence of $\prefix{\play}{\indexPosition+1}$ follows from the $1$-coherence of $\prefix{\play}{\indexPosition}$ and the fact that $\vertex_{\indexPosition+1}$ occurs in $\segment_1$ and that $\vertex_{\indexPosition+1}\neq\last{\segment_1}= \vertex_{\indexPosition_1}$ by simplicity of $\segment_1$.

  We now assume that there exists some $1<\indexSegment<\numSegments$ such that $\indexPosition_{\indexSegment-1}\leq\indexPosition+1<\indexPosition_{\indexSegment}$.
  We distinguish two cases.
  If $\indexPosition +1 = \indexPosition_{\indexSegment-1}$, then by induction, $\prefix{\play}{\indexPosition}$ is $(\indexSegment-1)$-coherent and the $\indexSegment$-coherence of $\prefix{\play}{\indexPosition}$ follows from $\vertex_{\indexPosition+1}=\last{\segment_{\indexSegment-1}}$.
  If $\indexPosition +1 > \indexPosition_{\indexSegment-1}$, then by induction, $\prefix{\play}{\indexPosition}$ is $\indexSegment$-coherent; the $\indexSegment$-coherence of $\prefix{\play}{\indexPosition}$ is implied by $\vertex_{\indexPosition+1}$ occurring in $\segment_\indexSegment$ and $\vertex_{\indexPosition+1}\neq\last{\segment_{\indexSegment}}=\vertex_{\indexPosition_\indexSegment}$ (which follows from $\decomp$ being simple).

  The case $\indexPosition\geq\indexPosition_{\numSegments-1}$ can be handled in the same way as the previous one, and is omitted to limit repetition.
\end{proof}

\subsubsection{Finite-memory decomposition-based strategies}\label{section:fm:general:strat}
We generalise the common components of the Mealy machines used in Examples~\ref{ex:reach ne} and~\ref{ex:short ne}.
For the remainder of the section, we let $\play= \vertex_0\vertex_1\ldots\in\playSet{\arena}$ be a play that admits a finite simple segment decomposition $\decomp=(\segment_1, \ldots, \segment_\numSegments)$ that has no trivial segments.

We consider Mealy machines that, along coherent histories, keep track of two pieces of information: a player and a segment index.
This is achieved using memory states of the form $(\playerI, \indexSegment)$.
We do not require all such pairs, e.g., it is not necessary in Example~\ref{ex:reach ne}, and instead we parameterise our definition by a non-empty set of players $\playerSubset\subseteq\playerSet$.
Our construction is based on Mealy machines whose memory state space includes $\mealyStateID = \{\playerI\mid \playerIndex\in\playerSubset\}\times\integerInterval{\numSegments}$.
There may be additional memory states, as is the case in Example~\ref{ex:short ne}.
We require that the initial state of the Mealy machine be of the form $(\playerI, 1)$ for some $\playerIndex\in\playerSubset$.

Intuitively, the memory of these Mealy machines keeps track of the index of the current segment of the play (incrementing it when the last vertex of the current segment is reached) and the last player in $\playerSubset$ to have moved.
We define the next-move function so that it proposes the $\decomp$-next vertex after a history coherent with $\decomp$.
If some vertex outside of the current segment is processed (which only happens when dealing with histories that are not coherent with $\decomp$), we do not impose any constraint on the behaviour of the update and next-move functions.
We generalise this aspect differently depending on whether we consider reachability games or shortest-path games in the sequel.

We formalise the information explanation above as follows.
\begin{definition}[$(\playerSubset, \decomp)$-compatible Mealy machines]\label{def:compatible mealy}
  Let $\playerSubset\subseteq\playerSet$.
  We define $\mealyStateID = \{\playerI\mid \playerIndex\in\playerSubset\}\times\integerInterval{\numSegments}$.
  We let $\mealyUpdateID\colon\mealyStateID\times\vertexSet\to\mealyStateSpace$ denote the (partial) function such that for all $(\playerI, \indexSegment)\in\mealyStateID$ and all vertices $\vertex$ occurring in $\segment_\indexSegment$, $\mealyUpdateID((\playerI, \indexSegment), \vertex)=(\playerIAlt, \indexSegment')$ such that
  \begin{enumerate}[(a)]
  \item $\playerIndex'$ is such that $\vertex\in\vertexSet_{\playerIndex'}$ if $\vertex\in\bigcup_{\playerIndex''\in\playerSubset}\vertexSet_{\playerIndex''}$ and otherwise $\playerIndex'= \playerIndex$,
  \item  $\indexSegment'=\indexSegment+1$ if $\indexSegment<\numSegments$ and $\vertex = \last{\segment_\indexSegment}$ and $\indexSegment'=\indexSegment$ otherwise.
  \end{enumerate}
  For a given $\playerIndex\in\playerSet$, we let $\mealyNextIDpI\colon\mealyStateID\times\vertexSet\to\vertexSet$ denote the (partial) function such that for all $(\playerIAlt, \indexSegment)\in\mealyStateID$ and vertices $\vertex\in\vertexSetI$ that occur in $\segment_\indexSegment$, $\mealyNextIDpI((\playerIAlt, \indexSegment), \vertex)=\vertex'$ such that
  \begin{enumerate}[(a)]
  \item if $\indexSegment<\numSegments$ and $\vertex=\last{\segment_\indexSegment}$, then $\vertex'$ is the vertex occurring after $\vertex$ in $\segment_{\indexSegment+1}$,
  \item otherwise, we let $\vertex'$ be the vertex occurring after $\vertex$ in $\segment_\indexSegment$.
  \end{enumerate}
  
  A Mealy machine $\mealyMachine = \mealyTuplePure$ of $\playerI$ is \textit{$(\playerSubset, \decomp)$-compatible} if $\mealyStateID\subseteq\mealyStateSpace$, $\mealyStateInit$ is of the form $(\playerI, 1)$ for some $\playerIndex\in\playerSubset$, and $\mealyUpdate$ and $\mealyNextI$ coincide with $\mealyUpdateID$ and $\mealyNextIDpI$ respectively on the domain of the latter functions.
\end{definition}

Let $\playerSubset\subseteq\playerSet$ be a non-empty set.
We prove that any finite-memory strategy induced by an $(\playerSubset, \decomp)$-compatible Mealy machine is based on $\decomp$.
To this end, we establish that if a history $\history$ is $\indexSegment$-coherent with $\decomp$, then the memory state after the Mealy machine reads $\history$ is of the form $(\playerI, \indexSegment)$.
\begin{restatable}{lemma}{lemmaCoherenceTemplateExtensions}\label{lemma:coherence:template:extensions}
  Let $\indexPlayer\in\playerSet$.
  Let $\mealyMachine=\mealyTuplePure$ be a Mealy machine of $\playerI$ that is $(\playerSubset, \decomp)$-compatible.
  The strategy $\stratI$ induced by $\mealyMachine$ is based on $\decomp$ and for all $\history\in\historySet{\arena}$, if $\history$ is $\indexSegment$-coherent with $\decomp$, then $\widehat{\mealyUpdate}(\history) = (\playerIAlt, \indexSegment)$ for some $\playerIndex'\in\playerSubset$.
\end{restatable}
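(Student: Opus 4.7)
The plan is to establish the memory-state claim (the second conjunct) by induction on history length, and then to deduce from it that the induced strategy is based on $\decomp$. The proof is essentially a matching exercise between the inductive definition of coherence and the case split in the template's update and next-move functions; the only delicate point is the initial-vertex argument.

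\emph{Induction for the memory-state claim.} In the base case $\history = \vertex_0$, we have $\widehat{\mealyUpdate}(\vertex_0) = \mealyUpdate(\mealyStateInit, \vertex_0)$. By hypothesis $\mealyStateInit = \mealyStateInitID$, which is of the form $(\playerIAlt_0, 1)$ with $\playerIAlt_0 \in \playerSubset$, and $\vertex_0 = \first{\segment_1}$ lies in $\segment_1$, so $\mealyUpdateID$ is defined at $((\playerIAlt_0, 1), \vertex_0)$ and coincides there with $\mealyUpdate$. Since $\segment_1$ is simple and non-trivial, $\vertex_0 \neq \last{\segment_1}$, and the template rule yields a state of the form $(\playerIAlt, 1)$ with $\playerIAlt \in \playerSubset$ (the player component is either the owner of $\vertex_0$ when it is in $\playerSubset$, or $\playerIAlt_0$ otherwise). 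For the inductive step, assume the claim holds for some $\indexSegment$-coherent history $\history$ with $\widehat{\mealyUpdate}(\history) = (\playerIAltAlt, \indexSegment)$, and let $\history\vertex$ be coherent. By definition of coherence, $\vertex$ occurs in $\segment_\indexSegment$, so the template update is defined at $((\playerIAltAlt, \indexSegment), \vertex)$ and coincides with $\mealyUpdate$. Moreover, the segment index advances to $\indexSegment+1$ under the template exactly when $\indexSegment < \numSegments$ and $\vertex = \last{\segment_\indexSegment}$, which is precisely the condition under which coherence promotes the index. Hence the resulting state has the correct segment index, and its player component lies in $\playerSubset$.

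\emph{Deducing the first conjunct.} Let $\history = \history' \vertex \in \historySetI{\arena}$ be $\indexSegment$-coherent. If $\history' = \varepsilon$, then $\history = \vertex_0$ and $\indexSegment = 1$, so $\stratI(\history) = \mealyNext(\mealyStateInit, \vertex_0) = \mealyNextIDpI((\playerIAlt_0, 1), \vertex_0)$; since $\vertex_0 \neq \last{\segment_1}$, the template returns the vertex following $\vertex_0$ in $\segment_1$, which is the next vertex of $\history$ with respect to $\decomp$. Otherwise, $\history'$ is itself coherent, say $\indexSegment'$-coherent with $\indexSegment' \in \{\indexSegment - 1, \indexSegment\}$, and by the first part $\widehat{\mealyUpdate}(\history') = (\playerIAltAlt, \indexSegment')$ for some $\playerIAltAlt \in \playerSubset$. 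A short case analysis on whether $\vertex = \last{\segment_{\indexSegment'}}$ (i.e., whether $\indexSegment' = \indexSegment - 1$ or $\indexSegment' = \indexSegment$) shows that the template's next-move function returns the vertex after $\vertex$ in $\segment_\indexSegment$, which by Lemma~\ref{lemma:decomposition:next:vertex} is the next vertex of $\history$ with respect to $\decomp$.

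\emph{Main obstacle.} The proof is primarily a bookkeeping exercise, so the main challenge is to align the cases of the template definitions of $\mealyUpdateID$ and $\mealyNextIDpI$ with those of the coherence definition. The most delicate point is the base-case observation $\vertex_0 \neq \last{\segment_1}$, which relies on $\segment_1$ being both simple and non-trivial and which is used twice: once to ensure that the initial update does not spuriously advance the segment index, and once to ensure that at $\vertex_0$ the strategy proposes the successor of $\vertex_0$ inside $\segment_1$ rather than inside $\segment_2$.
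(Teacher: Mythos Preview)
Your proof is correct and follows essentially the same approach as the paper's: both establish the memory-state claim first by induction on history length (using the non-triviality and simplicity of $\segment_1$ for the base case $\vertex_0\neq\last{\segment_1}$, and matching the coherence case split with the template update rule for the step), and then derive that $\stratI$ is based on $\decomp$ from that claim. Your write-up is slightly more explicit in the second part (spelling out the two sub-cases $\indexSegment'\in\{\indexSegment-1,\indexSegment\}$ for $\history'$), but this is only a presentational difference.
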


\begin{proof}
  We first show the second claim of the lemma.
  We proceed by induction on the number of vertices in $\history$.
  The only coherent history with a single vertex is $\vertex_0$.
  If $\segment_1=\play$ (i.e., $\decomp$ is a trivial decomposition), then $\mealyUpdate(\mealyStateInit, \vertex_0)$ is of the form $(\playerIAlt, 1)$ (the definition of $\mealyUpdateID$ in Definition~\ref{def:compatible mealy} has only one case).
  Assume now that $\decomp$ is a non-trivial decomposition.
  It follows that $\segment_1$ is a non-trivial simple history, and thus $\vertex_0\neq\last{\segment_1}$,
  By definition of $\mealyUpdateID$, $\mealyUpdate(\mealyStateInit, \vertex_0)$ is again of the form $(\playerIAlt, 1)$.

  We now consider a $\indexSegment$-coherent history $\history$ and assume by induction that $\widehat{\mealyUpdate}(\history) = (\playerIAlt, \indexSegment)$. 
  Let $\vertex\in\succSet{\last{\history}}$ such that $\history\vertex$ is coherent with $\decomp$.
  It follows that $\vertex$ occurs in $\segment_\indexSegment$.
  Therefore, $\widehat{\mealyUpdate}(\history\vertex) = \mealyUpdate((\playerIAlt, \indexSegment), \vertex) = \mealyUpdateID((\playerIAlt, \indexSegment), \vertex)$.
  We distinguish two cases.
  If $\indexSegment < \numSegments$ and $\vertex=\last{\segment_\indexSegment}$, then $\history\vertex$ is $(\indexSegment+1)$-coherent and by definition of $\mealyUpdateID$, $\mealyUpdateID((\playerIAlt, \indexSegment), \vertex)$ is of the form $(\playerIAltAlt, \indexSegment+1)$.
  Otherwise, $\history\vertex$ is $\indexSegment$-coherent and by definition of $\mealyUpdateID$, $\mealyUpdateID((\playerIAlt, \indexSegment), \vertex)$ is of the form $(\playerIAltAlt, \indexSegment)$.

  It remains to prove that $\stratI$ is based on $\decomp$.
  Let $\history\in\historySetI{\arena}$ be a coherent history.
  If $\history$ contains only one vertex, then by coherence $\history=\vertex_0$.
  The definition of $\mealyNextIDpI$ ensures that $\stratI(\vertex_0)$ is the next vertex of the history $\vertex_0$ with respect to $\decomp$.
  If $\history$ contains more than one vertex, let $\history=\history'\vertex$ and assume that $\history'$ is $\indexSegment$-coherent.
  By the previous point, it holds that $\widehat{\mealyUpdate}(\history') = (\playerIAlt, \indexSegment)$ for some $\playerIndex'\in\playerSubset$.
  Therefore, $\stratI(\history)=\mealyNextIDpI((\playerIAlt, \indexSegment), \vertex)$.
  It follows from the definition of $\mealyNextIDpI$ that $\stratI$ maps $\history$ to its next vertex with respect to $\decomp$.
\end{proof}

\subsection{Nash equilibria in reachability games}\label{section:fm:reach}

We show that from any NE in a reachability game, we can derive a finite-memory NE from the same initial vertex in which the same targets are visited.
We build on the NE outcomes with simple decompositions provided by Lemma~\ref{lem:spath:simple decomposition} and the Mealy machines of Definition~\ref{def:compatible mealy} to obtain NEs with a memory size of at most $\nPlayer^2$.
We fix a reachability game $\game=(\arena, (\reach{\target_\playerIndex})_{\playerIndex\in\playerSet})$.

Let $\playerSubset\subseteq \playerSet$ is the set of players who do not see their targets if it is non-empty, or a single arbitrary player if all players see their target.
The general idea of the construction is to use, for each $\playerIndex\in\playerSet$, an $(\playerSubset, \decomp)$-compatible Mealy machine $\mealyMachine = \mealyTuplePure$ with the memory state space $\mealyStateSpace = \mealyStateID$.
Let $\playerIndex'\in\playerSubset$ and $\indexSegment\in\integerInterval{\numSegments}$.
We define $\mealyUpdate$ such that it leaves the memory state unchanged from a state $(\playerIAlt, \indexSegment)$ whenever an update is made with a vertex that does not occur in $\segment_\indexSegment$.
In this same situation, if $\playerIndex\neq\playerIndex'$, $\mealyNextI$ assign moves from a memoryless uniformly winning strategy of the second player in the coalition game $\game_{\playerIndex'} = (\arena_{\playerIndex'}, \reach{\target_{\playerIndex'}})$ (which exists by Theorem~\ref{thm:qualitative:ML strat}), and is left arbitrary if $\playerIndex=\playerIndex'$.
The equilibrium's stability is a consequence of Theorem~\ref{thm:charac:reach:buchi} and Lemma~\ref{lemma:qualitative:safety}: if the target of $\playerI$ is not visited in the intended outcome, all vertices along this play are not winning for the first player of the coalition game $\game_\playerIndex$.
We formalise the explanation above in the proof of the following theorem.

\begin{restatable}{theorem}{thmReachNE}\label{thm:reach:ne}
  Let $\stratProfile'$ be an NE from a vertex $\vertex_0$ in $\game$.
  There exists a finite-memory NE $\stratProfile$ from $\vertex_0$ such that $\satpl{\outcome{\stratProfile}{\vertex_0}}=\satpl{\outcome{\stratProfile'}{\vertex_0}}$ where each strategy of $\stratProfile$ has a memory size of at most
  \[\max\{1, \nPlayer - |\satpl{\outcome{\stratProfile'}{\vertex_0}}|\}\cdot\max\{1, |\vispos{\outcome{\stratProfile'}{\vertex_0}}\setminus\{0\}|\}\leq \nPlayer^2.\]
\end{restatable}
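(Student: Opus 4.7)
The plan is to combine Lemma~\ref{lem:spath:simple decomposition} (applied with the zero weight function, under which the shortest-path game coincides with $\game$) with the Mealy machine framework of Section~\ref{section:fm:general:strat}, extended with a partial punishment mechanism that activates only when the play leaves the current segment. Applying the lemma to $\stratProfile'$ yields an NE outcome $\play$ from $\vertex_0$ admitting a finite simple decomposition. Using item~(i) of the lemma, I would merge the last two segments when $\numSegments = |\vispos{\outcome{\stratProfile'}{\vertex_0}}\setminus\{0\}|>0$, and otherwise take the trivial decomposition $(\play)$; this produces a simple decomposition $\decomp=(\segment_1,\ldots,\segment_k)$ with $k = \max\{1, \numSegments\}$. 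Item~(iv) of the lemma gives $\satpl{\outcome{\stratProfile'}{\vertex_0}}\subseteq\satpl{\play}$, and since every vertex appearing in any $\segment_\indexSegment$ also occurs in $\outcome{\stratProfile'}{\vertex_0}$ the reverse inclusion holds, so $\satpl{\play}=\satpl{\outcome{\stratProfile'}{\vertex_0}}$.

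Next, I would set $\playerSubset = \playerSet\setminus\satpl{\play}$ if this set is non-empty and $\playerSubset=\{\playerOne\}$ otherwise, and instantiate the partial Mealy machine of Section~\ref{section:fm:general:strat} with $\playerSubset$ and $\decomp$, obtaining a memory space $\mealyStateID = \playerSubset\times\integerInterval{k}$ of size $|\playerSubset|\cdot k$. For each $\playerIndex'\in\playerSubset$, Theorem~\ref{thm:qualitative:ML strat} provides a memoryless uniformly winning strategy $\stratAltAdvIb$ for the coalition in the zero-sum coalition game $(\arena_{\playerIndex'},\reach{\target_{\playerIndex'}})$. For each $\playerIndex\in\playerSet$, I would complete the Mealy machine of $\playerI$ so that on any memory state $(\playerIAlt,\indexSegment)$ and any vertex $\vertex\notin\segment_\indexSegment$: the update leaves the memory unchanged; and, when $\vertex\in\vertexSetI$, the next-move output equals $\stratAltAdvIb(\vertex)$ if $\playerIndex\neq\playerIndex'$ and is arbitrary otherwise. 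Let $\stratProfile$ denote the induced finite-memory profile. By Lemma~\ref{lemma:coherence:template:extensions}, each component strategy is based on $\decomp$, so $\outcome{\stratProfile}{\vertex_0}=\play$ and hence $\satpl{\outcome{\stratProfile}{\vertex_0}}=\satpl{\outcome{\stratProfile'}{\vertex_0}}$.

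To finish, I would verify that $\stratProfile$ is an NE. Players in $\satpl{\play}$ cannot deviate profitably since their objective is already satisfied by the outcome. Fix $\playerI\in\playerSubset$; by Theorem~\ref{thm:charac:reach:buchi} applied to $\play$, every vertex of $\play$ lies in $\vertexSet\setminus\winningI{\reach{\target_\playerIndex}}=\winningIAdv{\safe{\target_\playerIndex}}$, which is a trap for $\playerI$ in the coalition safety game by Lemma~\ref{lemma:qualitative:safety}. The main obstacle is that the memory tracks only one player at a time, so I must check that the tracked identity matches the actual deviator: since any unilateral deviation by $\playerI$ first diverges at a vertex $\vertex_m\in\vertexSetI\cap\segment_\indexSegment$ (with the prefix up to $\vertex_m$ coinciding with a prefix of $\play$), and $\playerI\in\playerSubset$, the memory state reached after reading $\vertex_m$ is $(\playerI,\indexSegment)$ and remains frozen as long as the play stays outside $\segment_\indexSegment$. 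During any such excursion the coalition follows $\stratAltAdvI$, which keeps the play in $\winningIAdv{\safe{\target_\playerIndex}}$ by Lemma~\ref{lemma:qualitative:safety}; and whenever a segment vertex is entered or re-entered, it is a vertex of $\play$ and hence again lies in the trap, from which any further deviation by $\playerI$ is handled identically. Iterating, the outcome of any deviation by $\playerI$ remains in $\winningIAdv{\safe{\target_\playerIndex}}$ and hence avoids $\target_\playerIndex$, so the deviation is not profitable. Finally, the memory size equals $|\playerSubset|\cdot k = \max\{1,\nPlayer-|\satpl{\outcome{\stratProfile'}{\vertex_0}}|\}\cdot\max\{1,|\vispos{\outcome{\stratProfile'}{\vertex_0}}\setminus\{0\}|\}$, and the $\nPlayer^2$ bound follows from $|\vispos{\outcome{\stratProfile'}{\vertex_0}}\setminus\{0\}|\leq|\satpl{\outcome{\stratProfile'}{\vertex_0}}|\leq\nPlayer$.
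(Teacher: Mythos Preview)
Your proposal is correct and follows essentially the same construction as the paper: obtain a simply decomposed NE outcome via Lemma~\ref{lem:spath:simple decomposition} (merging the last two segments using item~\ref{item:spath:simple decomposition:one}), instantiate the partial Mealy machine of Section~\ref{section:fm:general:strat} with $\playerSubset=\playerSet\setminus\satpl{\play}$, and freeze the memory on off-segment vertices while the coalition plays a memoryless safety punishing strategy. One small imprecision: the parenthetical claim that the prefix up to $\vertex_m$ coincides with a prefix of $\play$ is not true in general (in-segment deviations can make the history merely $\indexSegment$-coherent rather than a prefix of $\play$), but the conclusion you draw---that $\vertex_m\in\vertexSetI$ forces the memory state to be $(\playerI,\indexSegment)$ at the moment the segment is exited---is correct for exactly the reason the paper formalises by its two-part induction (every exit from the current segment under $\stratIAdv$ must occur at a $\playerI$-vertex, so the frozen memory always points to $\playerI$).
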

\begin{proof}
  Let $\numSegments = \max\{1, |\vispos{\outcome{\stratProfile'}{\vertex_0}}\setminus\{0\}|\}$.
  By Lemma~\ref{lem:spath:simple decomposition} (and its property~\ref{item:spath:simple decomposition:one}), there exists an NE outcome $\play$ from $\vertex_0$ that admits a simple segment decomposition $\decomp = (\segment_1, \ldots, \segment_\numSegments)$ and such that $\satpl{\play}=\satpl{\outcome{\stratProfile'}{\vertex_0}}$.
  
  Let $\playerSubset = \playerSet\setminus \satpl{\play}$ if it is not empty and otherwise let $\playerSubset = \{1\}$.
  For $\playerIndex\in\playerSubset$, let $\stratAltAdvI$ denote a memoryless strategy of the second player in the coalition game $\game_\playerIndex= (\arena_\playerIndex, \reach{\target_\playerIndex})$ that is uniformly winning on their winning region (Theorem~\ref{thm:qualitative:ML strat}).
  We let $\winningIAdv{\safe{\target_\playerIndex}}$ denote this winning region.

  Let $\playerIndex\in\playerSet$.
  We define an $(\playerSubset, \decomp)$-compatible Mealy machine $\mealyMachine_\playerIndex = (\mealyStateID, (\player{\min\playerSubset}, 1), \mealyUpdate, \mealyNextI)$ for $\playerI$ as follows (cf.~Definition~\ref{def:compatible mealy}).
  The functions $\mealyUpdate$ and $\mealyNextI$ respectively extend $\mealyUpdateID$ and $\mealyNextIDpI$ as follows.
  Let $(\playerIAlt, \indexSegment)\in\mealyStateID$ and $\vertex\in\vertexSet$ that does not occur in $\segment_\indexSegment$.
  We let $\mealyUpdate((\playerIAlt, \indexSegment), \vertex) = (\playerIAlt, \indexSegment)$.
  If $\vertex\in\vertexSetI$, we let $\mealyNextI((\playerIAlt, \indexSegment), \vertex) = \stratAltAdvIb(\vertex)$ if $\playerIndex'\neq\playerIndex$ and, let $\mealyNextI((\playerIAlt, \indexSegment), \vertex)$ be arbitrary, otherwise.

  We let $\stratI$ denote the strategy induced by $\mealyMachine_\playerIndex$.
  It is easy to see that the memory size of $\stratI$ is at most $n^2$.
  It follows from Lemma~\ref{lemma:coherence:template:extensions} that the outcome of $\stratProfile = (\stratI)_{\playerIndex\in\playerSet}$ from $\vertex_0$ is $\play$.

  We now prove that $\stratProfile$ is an NE from $\vertex_0$.
  It suffices to show that for all $\playerIndex\notin\satpl{\play}$, $\playerI$ does not have a profitable deviation.
  Fix $\playerIndex\notin\satpl{\play}$.
  We show that all histories that are consistent with the strategy profile $\stratIAdv = (\strat{\playerIndex'})_{\playerIndex'\neq\playerIndex}$ do not leave $\winningIAdv{\safe{\target_\playerIndex}}$.
  We proceed by induction on the length of histories that start in $\vertex_0$ and are consistent with $\stratIAdv$.
  To establish the property above, we show in parallel that for any such history $\history\vertex$, letting $(\playerIAlt, \indexSegment) = \widehat{\mealyUpdate}(\history)$, we have the implication: if $\vertex$ does not occur in $\segment_\indexSegment$, then $\playerIndex' = \playerIndex$.

  Before starting the induction, we remark that, by Theorem~\ref{thm:charac:reach:buchi}, all vertices occurring in $\play$ are in $\winningIAdv{\safe{\target_\playerIndex}}$.    
  The base case of the induction is the history $\vertex_0$.
  Both claims hold without issue.
  We now assume that the claim holds for a history $\history\vertex$ starting in $\vertex_0$ consistent with $\stratIAdv$ by induction, and show they hold for $\history\vertex\vertex'$, assumed consistent with $\stratIAdv$.
  Let $(\playerIAlt, j) = \widehat{\mealyUpdate}(\history)$.

  We first show that $\vertex'\in\winningIAdv{\safe{\target_\playerIndex}}$.
  Assume that $\vertex\in\vertexSetI$.
  The induction hypothesis implies that $\vertex\in\winningIAdv{\safe{\target_\playerIndex}}$.
  Therefore, Lemma~\ref{lemma:qualitative:safety} implies that all successors of $\vertex$ are in $\winningIAdv{\safe{\target_\playerIndex}}$.
  We now assume that $\vertex\notin\vertexSetI$.
  In this case, $\vertex' = \strat{\playerIndex''}(\history\vertex) = \mealyNextIAltAlt((\playerIAlt, \indexSegment), \vertex)$ for some $\playerIndex''\neq\playerIndex$.
  We consider two cases.
  If $\vertex$ occurs in $\segment_j$, then $\vertex'$ occurs in $\play$ by definition of $\mealyNextIAltAlt$.
  This implies that $\vertex\in\winningIAdv{\safe{\target_\playerIndex}}$.
  Otherwise, by the induction hypothesis, we have $\playerIndex'=\playerIndex$, which implies $\vertex' =\stratAltAdvI(\vertex)\in \winningIAdv{\safe{\target_\playerIndex}}$ (Lemma~\ref{lemma:qualitative:safety}).

  We now move on to the second half of the induction.
  Let $(\playerIAltAlt, \indexSegment') = \widehat{\mealyUpdate}(\history\vertex)$.
  By definition of $\mealyUpdate$, we have $\indexSegment'=\indexSegment+1$ if $\indexSegment<\numSegments$ and $\vertex = \last{\segment_\indexSegment}$ and $\indexSegment'=\indexSegment$ otherwise.
  It follows that $\vertex$ occurs in $\segment_\indexSegment$ if and only if it occurs in $\segment_{\indexSegment'}$.
  Assume that $\vertex'$ does not occur in $\segment_{\indexSegment'}$.
  We consider two cases.
  First, assume that $\vertex$ occurs in $\segment_{\indexSegment'}$.
  We must have $\vertex\in\vertexSetI$ by definition of $\stratIAdv$.
  The definition of $\mealyUpdateID$ ensures that $\playerIndex''=\playerIndex$.
  Second, assume that $\vertex$ does not occur in $\segment_{\indexSegment'}$.
  On the one hand, we have $\playerIndex'=\playerIndex$ by the induction hypothesis.
  On the other hand, the definition of $\mealyUpdate$ implies that $(\playerIAltAlt, \indexSegment')=(\playerIAlt, \indexSegment)$, implying $\playerIndex''=\playerIndex$.
  This ends the proof by induction.
  
  We have shown that $\playerI$ does not have a profitable deviation.
  This shows that $\stratProfile$ is an NE from $\vertex_0$.
\end{proof}

We remark that Theorem~\ref{thm:reach:ne} provides a memory bound that is linear in the number of players when no players see their target and when all players see their target.

\begin{corollary}\label{cor:reach:ne:lin}
  If there exists an NE from a vertex $\vertex_0$ such that no (resp.~all) players see their target in its outcome, then there is a finite-memory NE from $\vertex_0$ such that no (resp.~all) players  see their target in its outcome such that all strategies have a memory size of at most $\nPlayer$. 
\end{corollary}

An NE from $\vertex_0\in\vertexSet$ such that no targets are visited in its outcome in the reachability game $\game=(\arena, (\reach{\target_\playerIndex})_{\playerIndex\in\playerSet})$  is an NE in the shortest-path game $(\arena, (\costReach{\target_\playerIndex}{\weight})_{\playerIndex\in\playerSet})$ for all weight functions $\weight$.
Therefore, Theorem~\ref{thm:reach:ne} can also be applied to obtain upper bounds on the memory needed to implement an NE in a shortest-path game such that that no player visits their target.

\begin{corollary}\label{cor:short:ne:lin}
  Let $\game'=(\arena, (\costReach{\target_\playerIndex}{\weight})_{\playerIndex\in\playerSet})$ be a shortest-path game.
  If there exists an NE from a vertex $\vertex_0$ such that no players see their target in its outcome, then there is a finite-memory NE from $\vertex_0$ such that no players see their target in its outcome such that all strategies have a memory size of at most $\nPlayer$. 
\end{corollary}
\begin{proof}
  Let $\game=(\arena, (\reach{\target_\playerIndex})_{\playerIndex\in\playerSet})$ be the reachability game obtained by ignoring weights in $\game'$.
  Let $\stratProfile$ be a strategy profile and $\vertex_0\in\vertexSet$.
  Assume that $\outcome{\stratProfile}{\vertex_0}\notin\reach{\target_\playerIndex}$ for all $\playerIndex\in\playerSet$.
  To end the proof, it suffices to show that $\stratProfile$ is an NE from $\vertex_0$ in $\game$ if and only if  $\stratProfile$ is an NE from $\vertex_0$ in $\game'$ (and the result then follows from Corollary~\ref{cor:reach:ne:lin}).
  This follows from the fact that a deviation of a player with respect to $\stratProfile$ is profitable in $\game$ if and only if it is profitable in $\game'$: the cost of a player in $\game'$ is finite if and only if their target is visited.
\end{proof}
\begin{remark}
  The upcoming Theorem~\ref{thm:short:ne} provides an upper bound of $2\nPlayer$ on the memory required to implement an NE in a shortest-path game with an outcome that does not visit the target of any player.
  By using Theorem~\ref{thm:reach:ne} instead, we obtain the more precise result stated above.
\end{remark}
\subsection{Nash equilibria in shortest-path games}\label{section:fm:short}
We now consider shortest-path games.
We show that from any NE, we can derive a finite-memory NE from the same initial vertex in which the same targets are visited and the cost profile of the derived NE is preferable to that of the original one.
We construct NEs by instantiating Mealy machines from Definition~\ref{def:compatible mealy} for outcomes and decompositions given by Lemma~\ref{lem:spath:simple decomposition} in another way than Section~\ref{section:fm:reach}.
We obtain NEs with strategies of memory size at most $\nPlayer^2+2\nPlayer$.
We now fix a shortest-path game $\game=(\arena, (\costReach{\target_\playerIndex}{\weight})_{\playerIndex\in\playerSet})$ for the remainder of the section.

We generalise the strategies that we used in Example~\ref{ex:short ne}.
This example shows that only altering the construction of Theorem~\ref{thm:reach:ne} to also monitor (and punish) players whose targets are visited is not sufficient.
In other words, the approach used in the proof of Theorem~\ref{thm:reach:ne} to encode punishing strategies (i.e., freezing memory updates outside of the intended current segment) is no longer satisfactory in shortest-path games.

We modify the construction of Theorem~\ref{thm:reach:ne} as follows.
To overcome the issue mentioned above, we change the approach so players commit to punishing any player who exits the current segment of the intended outcome.
Instead of freezing memory updates if the current segment is left when the memory state is of the form $(\playerI, \indexSegment)$, the memory switches to a new memory state $\playerI$ that is never left.
This switch can only occur if $\playerI$ deviates. 
The next-move function, for this memory state, assigns moves from a punishing strategy obtained from the coalition game $\game_\playerIndex=(\arena_\playerIndex, \costReach{\target_\playerIndex}{\weight})$ by Theorem~\ref{thm:short:ML strat}, chosen to hinder $\playerI$, ensuring that in case of a deviation, $\playerI$'s cost is at least that of the original outcome.

The conditions imposed on outcomes of Lemma~\ref{lem:spath:simple decomposition} and the characterisation of Theorem~\ref{thm:charac:short} imply the correctness of this construction.
Condition~\ref{item:spath:simple decomposition:three} of Lemma~\ref{lem:spath:simple decomposition} ensures that a player cannot reach their target with a lesser cost by traversing the vertices within a segment in another order, whereas the characterisation of Theorem~\ref{thm:charac:short} guarantees that the punishing strategies sabotage deviating players sufficiently.

\begin{restatable}{theorem}{thmShortNE}\label{thm:short:ne}
  Let $\stratProfile'$ be an NE from a vertex $\vertex_0$.
  There exists a finite-memory NE $\stratProfile$ from $\vertex_0$ such that $\satpl{\outcome{\stratProfile}{\vertex_0}}=\satpl{\outcome{\stratProfile'}{\vertex_0}}$ and, for all $\playerIndex\in\playerSet$, $\costReach{\target_\playerIndex}{\weight}(\outcome{\stratProfile}{\vertex_0})\leq \costReach{\target_\playerIndex}{\weight}(\outcome{\stratProfile'}{\vertex_0})$
  where each strategy of $\stratProfile$ has a memory size of at most
  \[\nPlayer\cdot (|\vispos{\outcome{\stratProfile'}{\vertex_0}}\setminus\{0\}| + 2)\leq \nPlayer^2 + 2\nPlayer\]
\end{restatable}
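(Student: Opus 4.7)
The plan is to extend the partial Mealy machine of Section~\ref{section:fm:general:strat} with one absorbing punishment memory state per player, entered after that player exits the current segment of the intended outcome, from which the coalition then follows a memoryless strategy given by Theorem~\ref{thm:short:ML strat}. First I would apply Lemma~\ref{lem:spath:simple decomposition} to $\stratProfile'$ to obtain an NE outcome $\play$ from $\vertex_0$ admitting a finite simple decomposition $\decomp = (\segment_1, \ldots, \segment_{\numSegments+1})$ with $\numSegments = |\vispos{\outcome{\stratProfile'}{\vertex_0}} \setminus \{0\}|$. The $\numSegments+1$ segments must be kept separate rather than merged as in Theorem~\ref{thm:reach:ne}, as Example~\ref{ex:bonus segment} demonstrates; property~(iv) of the lemma combined with its construction yields $\satpl{\play} = \satpl{\outcome{\stratProfile'}{\vertex_0}}$ and the required cost inequalities.

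I would then take $\playerSubset = \playerSet$ and complete the template $(\mealyStateID, \mealyStateInitID, \mealyUpdateID, \mealyNextIDpI)$ as follows. For each $\indexPlayer \in \playerSet$, I add a fresh absorbing memory state $\mealyState_\indexPlayer$, I set $\alpha_\indexPlayer = \costReach{\target_\indexPlayer}{\weight}(\play)$ if $\indexPlayer \in \satpl{\play}$ and $\alpha_\indexPlayer = 0$ otherwise, and I fix a memoryless strategy $\stratAltAdvI$ of $\playerIAdv$ in the coalition game $\game_\indexPlayer$ obtained from Theorem~\ref{thm:short:ML strat} applied with $\alpha_\indexPlayer$. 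The extended update function sends $(\playerIAlt, \indexSegment)$ on any vertex $\vertex \notin \segment_\indexSegment$ to $\mealyState_{\playerIndex'}$, and $\mealyState_{\playerIndex'}$ is absorbing. The extended next-move function of $\playerI$ at memory state $\mealyState_{\playerIndex'}$ and vertex $\vertex \in \vertexSetI$ outputs $\stratAltAdvIb(\vertex)$ when $\indexPlayer \neq \indexPlayer'$ and is arbitrary otherwise. The total memory size is $\nPlayer(\numSegments+1) + \nPlayer = \nPlayer(\numSegments+2) \leq \nPlayer^2 + 2\nPlayer$ (as $\numSegments \leq \nPlayer$), and Lemma~\ref{lemma:coherence:template:extensions} ensures that the outcome of the resulting profile $\stratProfile$ from $\vertex_0$ is $\play$.

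The main obstacle is verifying that $\stratProfile$ is an NE. Fix $\indexPlayer$ and a play $\play'$ consistent with $\stratIAdv$ from $\vertex_0$. If $\play'$ never enters $\mealyState_\indexPlayer$, then $\play'$ traverses the arena as a concatenation $\segment'_1 \segment'_2 \ldots$ where each $\segment'_\indexSegment$ uses only vertices of $\segment_\indexSegment$ and shares endpoints with $\segment_\indexSegment$, so property~(iii) of Lemma~\ref{lem:spath:simple decomposition} forces $\weight(\segment'_\indexSegment) \geq \weight(\segment_\indexSegment)$; combined with the fact that $\target_\indexPlayer$ cannot occur along $\play'$ before the end of the segment in which it first appears in $\play$ (or not at all, if $\indexPlayer \notin \satpl{\play}$), this yields cost at least $\costReach{\target_\indexPlayer}{\weight}(\play)$ for $\playerI$. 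Otherwise, $\play'$ enters $\mealyState_\indexPlayer$ at a first position, meaning $\playerI$ at some $\vertex_\indexPosition \in \vertexSetI$ on $\play$ moves to a vertex $\vertex \notin \segment_\indexSegment$; Theorem~\ref{thm:short:ML strat} then bounds the suffix cost from $\vertex$ for $\playerI$ by $\min\{\valI(\vertex), \alpha_\indexPlayer\}$. For $\indexPlayer \notin \satpl{\play}$, Theorem~\ref{thm:charac:short}(i) and determinacy give $\vertex_\indexPosition \in \winningIAdv{\safe{\target_\indexPlayer}}$, Lemma~\ref{lemma:qualitative:safety} yields $\vertex \in \winningIAdv{\safe{\target_\indexPlayer}}$, and Theorem~\ref{thm:short:ML strat}(i) ensures $\target_\indexPlayer$ is never reached, so the cost remains $+\infty$. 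For $\indexPlayer \in \satpl{\play}$, I would combine the inequality $\valI(\vertex_\indexPosition) \leq \weight(\vertex_\indexPosition, \vertex) + \valI(\vertex)$ (a consequence of $\playerI$'s option to move to $\vertex$ and play optimally from there) with Theorem~\ref{thm:charac:short}(ii) (giving $\valI(\vertex_\indexPosition) \geq \costReach{\target_\indexPlayer}{\weight}(\suffix{\play}{\indexPosition})$), together with the choice $\alpha_\indexPlayer = \costReach{\target_\indexPlayer}{\weight}(\play)$, to conclude in either branch of the $\min$ that the total cost of $\play'$ for $\playerI$ is at least $\costReach{\target_\indexPlayer}{\weight}(\play)$. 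Hence no player has a profitable deviation, so $\stratProfile$ is an NE.
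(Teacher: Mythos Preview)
Your proposal is correct and mirrors the paper's construction and argument closely: same invocation of Lemma~\ref{lem:spath:simple decomposition} with all $\numSegments+1$ segments, same extension of the Section~\ref{section:fm:general:strat} template by absorbing per-player punishment states fed by Theorem~\ref{thm:short:ML strat}, and the same stability check via Theorem~\ref{thm:charac:short}. Two small completions are needed: you must also extend $\mealyNextI$ at states $(\playerIAlt,\indexSegment)$ on vertices $\vertex\notin\segment_\indexSegment$ to output $\stratAltAdvIb(\vertex)$ (otherwise the punishment does not start already at the first off-segment vertex, which your suffix bound from $\vertex$ assumes), and in the deviation case for $\indexPlayer\in\satpl{\play}$ you still need property~(iii) of Lemma~\ref{lem:spath:simple decomposition} to lower-bound the weight of the coherent prefix of $\play'$ by that of the corresponding prefix of $\play$, exactly as you did in the no-deviation case, before adding your suffix estimate.
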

\begin{proof}
  Let $\numSegments = |\vispos{\outcome{\stratProfile'}{\vertex_0}}\setminus\{0\}|$.
  By Lemma~\ref{lem:spath:simple decomposition}, there exists an NE outcome $\play$ from $\vertex_0$ which admits a simple segment decomposition $\decomp = (\segment_1, \ldots, \segment_{\numSegments+1})$ satisfying conditions~\ref{item:spath:simple decomposition:two}-\ref{item:spath:simple decomposition:four} of Lemma~\ref{lem:spath:simple decomposition}.

  For $\playerIndex\in\playerSet$, let $\stratAltAdvI$ denote a memoryless strategy of the second player in the coalition game $\game_\playerIndex= (\arena_\playerIndex, \costReach{\target_\playerIndex}{\weight})$ such that $\stratAltAdvI$ is uniformly winning on their winning region $\winningIAdv{\safe{\target_\playerIndex}}$ in the reachability game $(\arena_\playerIndex, \reach{\target_\playerIndex})$ and such that $\stratAltAdvI$ ensures a cost of at least $\min\{\valI(\vertex), \costReach{\target_\playerIndex}{\weight}(\play)\}$ from any $\vertex\in\vertexSet$, where $\valI(\vertex)$ denotes the value of $\vertex$ in $\game_\playerIndex$ (Theorem~\ref{thm:short:ML strat}).
  
  We work with $\playerSubset = \playerSet$ in the following and drop $\playerSubset$ from the notation of Definition~\ref{def:compatible mealy} to lighten it.
  Let $\playerIndex\in\playerSet$.
  We define an $(\playerSet, \decomp)$-compatible Mealy machine $\mealyMachine_\playerIndex = (\mealyStateSpace, (\player{1}, 1), \mealyUpdate, \mealyNextI)$ for $\playerI$ as follows.
  We let $\mealyStateSpace = \mealyStateD\cup\{\playerI\mid\playerIndex\in\playerSet\}$.
  The functions $\mealyUpdate$ and $\mealyNextI$ extend $\mealyUpdateD$ and $\mealyNextDpI$ as follows.
  For all $(\playerIAlt, \indexSegment)\in\mealyStateD$ and $\vertex\in\vertexSet$ that does not occur in $\segment_\indexSegment$, we let $\mealyUpdate((\playerIAlt, \indexSegment), \vertex) = \playerIAlt$ and, if $\vertex\in\vertexSetI$, we let $\mealyNextI((\playerIAlt, \indexSegment), \vertex) = \stratAltAdvIb(\vertex)$ if $\playerIndex'\neq\playerIndex$ and $\mealyNextI((\playerI, \indexSegment), \vertex)$ is left arbitrary otherwise.
  For all $\playerIndex'\in\playerSet$ and $\vertex\in\vertexSet$, we let $\mealyUpdate(\playerIAlt, \vertex) = \playerIAlt$ and, if $\vertex\in\vertexSetI$, we let $\mealyNextI(\playerIAlt, \vertex) = \stratAltAdvIb(\vertex)$ if $\playerIndex'\neq\playerIndex$ and $\mealyNextI(\playerI, \vertex)$ is left arbitrary.
  
  We let $\stratI$ denote the strategy induced by $\mealyMachine_\playerIndex$.
  We have $|\mealyStateSpace| = \nPlayer\cdot(\numSegments + 2)$, therefore the memory size of $\stratI$ satisfies the upper bound of the theorem.
  Furthermore, it follows from Lemma~\ref{lemma:coherence:template:extensions} that the outcome of $\stratProfile = (\stratI)_{\playerIndex\in\playerSet}$ from $\vertex_0$ is $\play$ and $\stratI$ is based on $\decomp$ for all $\playerIndex\in\playerSet$.
  
  We now establish that $\stratProfile$ is an NE from $\vertex_0$.
  Let $\playerIndex\in\playerSet$.
  Let $\play'$ be a play starting in $\vertex_0$ consistent with $\stratIAdv = (\strat{\playerIndex'})_{\playerIndex'\neq\playerIndex}$.
  To end the proof, it suffices to show that we have $\costReach{\target_\playerIndex}{\weight}(\play') \geq \costReach{\target_\playerIndex}{\weight}(\play)$.

  We first show the following claim.
  If some prefix of $\play'$ is not coherent with $\decomp$, then there exists $\indexPosition\in\IN$ such that $\prefix{\play'}{\indexPosition}$ is the longest prefix of $\play'$ coherent with $\decomp$ and $\suffix{\play'}{\indexPosition}$ is a play that is consistent with $\stratAltAdvI$.
  Assume that some prefix of $\play'$ is not coherent with $\decomp$.
  Let $\indexPosition\in\IN$ such that $\prefix{\play'}{\indexPosition}$ is the longest prefix of $\play'$ coherent with $\decomp$, and assume it is $\indexSegment$-coherent.
  As the strategies of $\stratIAdv$ are based on $\decomp$, we must have $\first{\suffix{\play'}{\indexPosition}}\in\vertexSetI$.
  Lemma~\ref{lemma:coherence:template:extensions} and the definition of $\mealyUpdate$ ensure that $\widehat{\mealyUpdate}(\prefix{\play'}{\indexPosition})=(\playerI, \indexSegment)$.
  It follows from $\prefix{\play'}{\indexPosition+1}$ being inconsistent with $\decomp$ that its last vertex does not occur in $\segment_\indexSegment$.
  The definitions of $\mealyUpdate$ and $\mealyNextIAlt$ for $\playerIndex'\neq\playerIndex$ combined with the above ensure that $\suffix{\play'}{\indexPosition}$ is consistent with $\stratAltAdvI$.

  We now show that $\costReach{\target_\playerIndex}{\weight}(\play') \geq \costReach{\target_\playerIndex}{\weight}(\play)$.
  We first assume that $\playerIndex\notin\satpl{\play}$.
  We establish that $\play'\notin\reach{\target_\playerIndex}$.
  By Theorem~\ref{thm:charac:short}, all vertices occurring in $\play$ belong to $\winningIAdv{\safe{\target_\playerIndex}}$.
  Therefore, if all prefixes of $\play'$ are coherent with $\decomp$, as all vertices of $\play'$ occur in $\play$, it holds that $\target_\playerIndex$ is not visited in $\play'$.
  Otherwise, let $\indexPosition\in\IN$ such that $\prefix{\play'}{\indexPosition}$ is the longest prefix of $\play'$ that is coherent with $\decomp$ and $\suffix{\play'}{\indexPosition}$ is consistent with $\stratAltAdvI$.
  No vertices of $\target_\playerIndex$ occur in $\prefix{\play'}{\indexPosition}$ by coherence with $\decomp$.
  It follows from the coherence of $\prefix{\play'}{\indexPosition}$ with $\decomp$ that $\first{\suffix{\play'}{\indexPosition}}=\last{\prefix{\play'}{\indexPosition}}$ occurs in $\play$.
  We obtain that $\first{\suffix{\play'}{\indexPosition}}\in \winningIAdv{\safe{\target_\playerIndex}}$, therefore $\suffix{\play}{\indexPosition}\notin\reach{\target_\playerIndex}$.
  We have shown that for all $\playerIndex\notin\satpl{\play}$, we have $\costReach{\target_\playerIndex}{\weight}(\play') \geq \costReach{\target_\playerIndex}{\weight}(\play)$.

  We now assume that $\playerIndex\in\satpl{\play}$.
  The desired inequality is immediate if $\target_\playerIndex$ is not visited in $\play'$.
  Similarly, it holds directly if $\vertex_0\in\target_\playerIndex$.
  We therefore assume that we are in neither of the previous two cases.
  We write the shortest prefix of $\play'$ ending in $\target_\playerIndex$ (the weight of which is $\costReach{\target_\playerIndex}{\weight}(\play')$) as a combination $\concat{\history}{\history'}$ where $\history$ is its longest prefix that is coherent with $\decomp$.
  We note that $\history'$ is consistent with $\stratAltAdvI$ because $\history$ is a prefix of the longest prefix of $\play'$ coherent with $\decomp$: if $\history$ is a strict prefix, then $\history'$ is a trivial history, and otherwise it follows from the above.
  
  We provide lower bounds on the weights of $\history$ and $\history'$.
  Assume that $\history$ is $\indexSegment$-coherent.
  By definition of coherence, we can write $\history$ as a history combination $\concat{\history_1}{\concat{\ldots}{\history_\indexSegment}}$ where, for all $\indexSegment' <\indexSegment$, $\history_{\indexSegment'}$ shares its first and last vertices with $\segment_{\indexSegment'}$ and contains only vertices of $\segment_{\indexSegment'}$, and $\history_\indexSegment$ shares its first vertex with $\segment_\indexSegment$ and contains only vertices of $\segment_\indexSegment$.
  Let $\segment_\indexSegment'$ be the prefix of $\segment_\indexSegment$ up to $\last{\history_\indexSegment}$.
  
  On the one hand, we have $\sum_{\indexSegment'< \indexSegment}\weight(\history_{\indexSegment'}) \geq \sum_{\indexSegment'< \indexSegment} \weight(\segment_{\indexSegment'})$ and $\weight(\history_\indexSegment) \geq \weight(\segment_\indexSegment')$.
  This follows from $\play$ satisfying property~\ref{item:spath:simple decomposition:three} of Lemma~\ref{lem:spath:simple decomposition} (for $\history_\indexSegment$, having $\weight(\history_\indexSegment) < \weight(\segment_\indexSegment')$ would contradict property~\ref{item:spath:simple decomposition:three} with respect to $\segment_\indexSegment$).
  On the other hand, by choice of $\stratAltAdvI$, we obtain that $\weight(\history')\geq\min\{\val_\playerIndex(\first{\history'}), \costReach{\target_\playerIndex}{\weight}(\play)\}$.
  From the characterisation of Theorem~\ref{thm:charac:short}, we obtain $\val_\playerIndex(\first{\history'})\geq \costReach{\target_\playerIndex}{\weight}(\play) - \weight(\concat{\concat{\segment_1}{\concat{\ldots}{\segment_{j-1}}}}{\segment_j'})$.
  By combining these inequalities, we obtain $\costReach{\target_\playerIndex}{\weight}(\play')\geq \costReach{\target_\playerIndex}{\weight}(\play)$, ending the proof.
\end{proof}

\section{Finite-memory Nash equilibria in safety games}\label{section:safety}
We now move on to safety games.
We show that from any NE, we can derive a finite-memory NE from the same initial vertex such that any safety objective satisfied by the outcome of the first NE also is satisfied by the outcome of the second one.
In contrast to reachability games, we do not necessarily preserve the set of winning players and may obtain a superset thereof.
We use the same construction as for shortest-path games and obtain the same arena-independent memory upper bounds.

This section is structured as follows.
In Section~\ref{section:safety:example}, we provide examples that illustrate some properties of the decomposition-based approach for NEs in safety games.
We then show how to simplify NE outcomes of safety games such that we can derive finite-memory NEs from the simplified outcomes in Section~\ref{section:safety:outcomes}.
Finally, we show that the finite-memory NE construction previously used for shortest-path games also yields NEs from these outcomes in Section~\ref{section:safety:equilibria}.

We fix an arena $\arena=\arenaTuple$, sets $\target_1$, \ldots, $\target_\nPlayer\subseteq\vertexSet$ (to be avoided) and a safety game $\game =(\arena, (\safe{\target_\indexPlayer})_{\indexPlayer\in\playerSet})$ for the remainder of this section.

\subsection{Examples}\label{section:safety:example}
In the following, we will construct finite-memory NEs in safety games by adapting the construction used in shortest-path games.
When simplifying NE outcomes in safety games, the resulting plays are simple plays or simple lassos.
Such plays admit trivial decompositions.
The first example below illustrates that, in spite of this, we may require decompositions with more than one segment to construct NEs.

  \begin{figure}
    \centering
    \begin{subfigure}[b]{0.31\textwidth}
      \centering
      \begin{tikzpicture}[node distance=0.5cm]
        \node[state, circle, align=center] (v0) {$\vertex_0$};
        \node[state, circle, align=center, right = of v0] (v1) {$\vertex_1$};
        \node[state, square, align=center, right = of v1] (v2) {$\vertex_2$};
        \node[state, circle, align=center, right = of v2] (v3) {$\vertex_3$};
        \node[state, circle, align=center, above = of v3] (v4) {$\vertex_4$};
        \node[state, circle, align=center, below = of v3] (v5) {$\vertex_5$};
        \path[->] (v0) edge[bend left] (v2);
        \path[->] (v0) edge (v1);
        \path[->] (v1) edge[bend left] (v2);
        \path[->] (v2) edge[bend left] (v1);
        \path[->] (v2) edge (v3);
        \path[->] (v3) edge (v4);
        \path[->] (v3) edge (v5);
        \path[->] (v4) edge[loop left] (v4);
        \path[->] (v5) edge[loop left] (v5);
      \end{tikzpicture}
      \caption{If the objective of $\playerOne$ is $\safe{\{\vertex_1, \vertex_4\}}$, $\playerOne$ has a profitable deviation from $\vertex_0$ with respect to any strategy profile based on the simple decomposition $(\vertex_0\vertex_1\vertex_2\vertex_3\vertex_4^\omega)$.}
      \label{fig:ex:safe:one}
    \end{subfigure}
    \hfill
    \begin{subfigure}[b]{0.31\textwidth}
      \centering
      \begin{tikzpicture}[node distance=7mm]
        \node[state, square, align=center] (v0) {$\vertex_0$};
        \node[state, circle, align=center, right = of v0] (v1) {$\vertex_1$};
        \node[state, circle, align=center, below = 4mm of v1] (v2) {$\vertex_2$};
        \path[->] (v0) edge[bend left] (v1);
        \path[->] (v0) edge (v2);
        \path[->] (v1) edge[bend left] (v0);
        \path[->] (v1) edge (v2);
        \path[->] (v2) edge[loop below] (v2);
      \end{tikzpicture}
      \caption{If the objective of $\playerOne$ is $\safe{\{\vertex_2\}}$, $\playerOne$ has a profitable deviation from $\vertex_0$ with respect to any strategy profile based on the simple decomposition $(\vertex_0\vertex_1\vertex_2, \vertex_2^\omega)$.}\label{fig:ex:safe:two}
    \end{subfigure}
    \hfill
    \begin{subfigure}[b]{0.31\textwidth}
      \centering
      \begin{tikzpicture}[node distance=0.5cm]
        \node[state, circle, align=center] (v0) {$\vertex_0$};
        \node[state, circle, align=center, right = of v0] (v1) {$\vertex_1$};
        \node[state, circle, align=center, above = 4mm of v1] (v2) {$\vertex_2$};
        \node[state, square, align=center, right = of v1] (v3) {$\vertex_3$};
        \node[state, square, align=center, right = of v3] (v4) {$\vertex_4$};
        \path[->] (v0) edge (v1);
        \path[->] (v0) edge (v2);
        \path[->] (v1) edge (v3);
        \path[->] (v2) edge (v3);
        \path[->] (v3) edge (v4);
        \path[->] (v3) edge[loop below] (v3);
        \path[->] (v4) edge[loop below] (v4);
      \end{tikzpicture}
      \caption{If the objectives of $\playerOne$ and $\playerTwo$ are respectively $\safe{\{\vertex_1,\vertex_4\}}$ and $\safe{\{\vertex_2\}}$, $\playerTwo$ must commit to a punishing strategy if $\playerOne$ deviates from the outcome $\vertex_0\vertex_1\vertex_3^\omega$.}\label{fig:ex:safe:three}
    \end{subfigure}
    \caption{Two-player arenas. Circle and squares are respectively $\playerOne$ and $\playerTwo$ vertices.}
    \label{fig:ex:safe}
  \end{figure}

\begin{example}\label{example:safe:decomposition}
  We consider the safety game played on the arena of Figure~\ref{fig:ex:safe:one} such that the objectives of $\playerOne$ and $\playerTwo$ are respectively $\safe{\{\vertex_1, \vertex_4\}}$ and $\safe{\{\vertex_5\}}$.
  The play $\play = \vertex_0\vertex_1\vertex_2\vertex_3\vertex_4^\omega$ is an NE outcome that satisfies only the objective of $\playerTwo$.
  As this play is a simple lasso, it admits a trivial simple segment decomposition.

  We claim that there are no NEs  in which $\playerTwo$ follows a strategy based on the decomposition $(\play)$.
  Indeed, if $\playerTwo$ follows such a strategy, $\playerOne$ can win with a strategy that moves from $\vertex_0$ to $\vertex_2$ and from $\vertex_3$ to $\vertex_5$ as $\playerTwo$ will move from $\vertex_2$ to $\vertex_3$ in this case.

  We note that there exists an NE with outcome $\play$ based on the simple decomposition $(\vertex_0\vertex_1, \vertex_1\vertex_2\vertex_3\vertex_4^\omega)$.
  On the one hand, if $\playerOne$ moves from $\vertex_0$ to $\vertex_2$, $\playerTwo$ detects this deviation and will punish $\playerOne$ by moving back to $\vertex_1$.
  On the other hand, $\playerOne$ has no incentive to deviate after $\vertex_1$ is visited, as their safety objective can no longer be satisfied.
  \hfill$\lhd$
\end{example}

In reachability and shortest-path games, players have no incentive to stay within a segment of a play decomposition.
This is no longer the case for safety games: stalling the progress of the play in a segment by exploiting the decomposition-based behaviour of the other players can allow a player to win by preventing a visit to undesirable vertices.
We illustrate this property with the following example.

\begin{example}\label{example:safe:in-segment}
  We consider the safety game played on the arena of Figure~\ref{fig:ex:safe:two} such that the objective of $\playerOne$ is $\safe{\{\vertex_2\}}$ and the objective of $\playerTwo$ is $\safe{\emptyset}$ (i.e., $\playerTwo$ wins no matter what).
  The play $\play = \vertex_0\vertex_1\vertex_2^\omega$ is an NE outcome from $\vertex_0$ (see the characterisation of Theorem~\ref{thm:charac:safe}).
  However, there are no NEs where the two players use a strategy based on the decompositions $(\play)$ or $(\vertex_0\vertex_1\vertex_2, \vertex_2^\omega)$ of $\play$.
  Indeed, given such a strategy profile, $\playerOne$ has a profitable deviation: moving from $\vertex_1$ back to $\vertex_0$.
  \hfill $\lhd$
\end{example}

The previous example illustrates that players can have profitable deviations \textit{within segments of a decomposition} in safety games.
However, such deviations are not harmful to the other players: we remain within the set of vertices of the original outcome.
In particular, if such a profitable deviation exists, we can consider the play resulting from this deviation, as it is an NE outcome with more winning players, and work from this play instead.

Finally, we provide an example showing that we cannot use the same approach for punishing strategies as in reachability games: players should commit to punishing anyone who leaves the set of vertices of the ongoing segment of the intended outcome.

\begin{example}\label{example:safety:punishment}
  We consider the safety game played on the arena of Figure~\ref{fig:ex:safe:three} in which the objective of $\playerOne$ is $\safe{\{\vertex_1, \vertex_4\}}$ and the objective of $\playerTwo$ is $\safe{\{\vertex_2\}}$.
  In this game, at most one player can win from $\vertex_0$.
  
  The play $\play = \vertex_0\vertex_1\vertex_3^\omega$ is an NE outcome in this game that satisfies the objective of $\playerTwo$, but not the objective of $\playerOne$.
  There exists an NE from $\vertex_0$ resulting in this outcome in which the two players follow strategies based on the trivial decomposition $(\play)$ of $\play$.
  However, such an NE requires that $\playerTwo$ remembers whether $\vertex_2$ is visited (i.e., whether a vertex not in $\play$ has been visited), to be able to prevent $\playerOne$ from having a profitable deviation by going through $\vertex_2$.
  In other words, if $\playerOne$ leaves the set of vertices of $\play$, $\playerTwo$ must commit to punishing $\playerOne$ to ensure the stability of the equilibrium.
  \hfill $\lhd$
\end{example}

In light of the previous example, we will adapt the technique from the proof of Theorem~\ref{thm:short:ne} that was used to construct finite-memory NEs in shortest-path games.

\subsection{Well-structured Nash equilibrium outcomes}\label{section:safety:outcomes}

In this section, we formulate a variant of Lemma~\ref{lem:spath:simple decomposition} for safety games.
We transform an NE outcome into another NE outcome that is a simple play or a simple lasso.
Let $\play'=\vertex_0\vertex_1\ldots$ be an NE outcome in $\game$.
If it is possible to move from some vertex $\vertex_\indexLast$ to a vertex $\vertex_\indexPosition$ with $\indexPosition\leq\indexLast$, we let $\play$ be a simple lasso of the form $\vertex_0\vertex_1\ldots(\vertex_{\indexPosition}\ldots\vertex_{\indexLast})^\omega$ that exploits the earliest such available edge along $\play'$.
Otherwise, we define $\play = \play'$, as the absence of such edges implies that $\play'$ is already a simple play.
The resulting play $\play$ is an NE outcome by Theorem~\ref{thm:charac:safe}: target sets only occur in the common prefix of $\play$ and $\play'$.
By construction, there may be fewer vertices occurring in $\play$ than in $\play'$ and thus $\satpl{\play'}\subseteq\satpl{\play}$ .
While this construction may enlarge the set of winning players, Example~\ref{example:safe:in-segment} illustrates that decomposition-based strategies are not sufficient in general to preserve a set of winning players exactly.
To construct a decomposition-based NEs from the simplified play $\play$, we will also require a well-chosen decomposition of $\play$ (see Example~\ref{example:safe:decomposition}).
We use a decomposition analogous to those used in Lemma~\ref{lem:spath:simple decomposition}: we decompose $\play$ into segments between two consecutive visits to targets.

We formalise the above simplification process and the properties imposed on the simplified play in the following lemma.

\begin{lemma}\label{lem:safety:simple decomposition}
  Let $\play'$ be the outcome of an NE from $\vertex_0\in\vertexSet$ in the safety game $\game$.
  There exist $\numSegments \leq |\vispos{\play'}\setminus\{0\}|$ and an NE outcome $\play$ from $\vertex_0$ with $\vispos{\play}\setminus\{0\} = \{\indexPosition_1 < \ldots < \indexPosition_\numSegments\}$ that admits a finite simple segment decomposition $(\segment_1, \ldots, \segment_{\numSegments+1})$ such that
  \begin{enumerate}[(i)]
  \item $\play$ is a simple play or a simple lasso;\label{item:safe:decomp:one}
  \item if $\numSegments>0$, then for all $\indexSegment\in\integerInterval{\numSegments}$, $\concat{\concat{\segment_1}{\ldots}}{\segment_\indexSegment}=\prefix{\play}{\indexPosition_\indexSegment}$;\label{item:safe:decomp:two}
  \item if $\numSegments>0$, then for all $\indexSegment\in\integerInterval{\numSegments}$, all cycles whose vertices all occur in $\concat{\concat{\segment_1}{\ldots}}{\segment_\indexSegment}$ must contain $\last{\segment_\indexSegment}$; and\label{item:safe:decomp:three}
  \item $\satpl{\play'}\subseteq\satpl{\play}$.\label{item:safe:decomp:four}
  \end{enumerate}
\end{lemma}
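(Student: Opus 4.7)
The plan is to mirror the approach of Lemma~\ref{lem:spath:simple decomposition}: simplify $\play'$ by contracting repeated vertices, decompose the result between target visits, and then refine to enforce condition~\ref{item:safe:decomp:three} in place of the weight-minimality used in the shortest-path case. The construction will split on whether a target is visited in $\play'$ after position~$0$.

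If $\vispos{\play'}\setminus\{0\}=\emptyset$, no target is visited in $\play'$ after position~$0$ and every player is in $\satpl{\play'}$. I would take $\play$ to be any simple play or simple lasso from $\vertex_0$ whose vertices all occur in $\play'$ (such a play exists since the arena is deadlock-free), set $\numSegments=0$, and use the trivial decomposition $(\play)$; all four conditions hold immediately.

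Otherwise, I first shortcut $\play'$: walking along $\play'$ from left to right, I splice out the intervening infix whenever a vertex reappears. The result is a simple play or simple lasso $\play_0$ whose vertices form a subset of those appearing in $\play'$, which directly gives condition~\ref{item:safe:decomp:four}. Writing $\vispos{\play_0}\setminus\{0\}=\{\indexPosition_1<\ldots<\indexPosition_{\numSegments}\}$ (note $\numSegments\leq|\vispos{\play'}\setminus\{0\}|$), I decompose $\play_0$ as $(\segment_1,\ldots,\segment_{\numSegments+1})$ with $\segment_\indexSegment$ the infix of $\play_0$ between positions $\indexPosition_{\indexSegment-1}$ and $\indexPosition_\indexSegment$ (setting $\indexPosition_0=0$) and $\segment_{\numSegments+1}=\suffix{\play_0}{\indexPosition_\numSegments}$. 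Conditions~\ref{item:safe:decomp:one} and~\ref{item:safe:decomp:two} follow by construction.

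To enforce~\ref{item:safe:decomp:three}, I would refine iteratively: if some prefix $\concat{\concat{\segment_1}{\ldots}}{\segment_\indexSegment}$ admits a simple arena cycle among its vertices distinct from $\last{\segment_\indexSegment}$, the edges of that cycle are used to reroute the prefix along a path that strictly reduces the prefix vertex set while still terminating at $\last{\segment_\indexSegment}$. Since each refinement step keeps all vertices within those of $\play'$ and strictly decreases the finite prefix vertex count, the procedure terminates at a $\play$ satisfying~\ref{item:safe:decomp:three}; the other conditions are preserved throughout. That the final $\play$ is an NE outcome then follows from Theorem~\ref{thm:charac:safe}: for any $\playerI\notin\satpl{\play}$ we have $\playerIndex\notin\satpl{\play'}$ (since $\satpl{\play'}\subseteq\satpl{\play}$), and every vertex of $\play$ preceding the first visit of $\target_\indexPlayer$ also appears in $\play'$ before the first visit of $\target_\indexPlayer$, hence lies outside $\winningI{\safe{\target_\indexPlayer}}$ by the characterisation applied to $\play'$. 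The hard part will be this cycle-based refinement: the rerouting must simultaneously preserve simplicity, keep segment endpoints at target visits, and maintain containment in the vertex set of $\play'$, all while strictly reducing the prefix vertex count to guarantee termination. Handling cycles that span several segments and re-computing the segmentation after each refinement is the most delicate point of the argument.
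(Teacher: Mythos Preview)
Your refinement step to enforce condition~\ref{item:safe:decomp:three} does not work. You claim that whenever a simple arena cycle exists among the vertices of a prefix $\concat{\concat{\segment_1}{\ldots}}{\segment_\indexSegment}$ (excluding $\last{\segment_\indexSegment}$), the cycle's edges let you reroute the prefix so as to \emph{strictly shrink its vertex set while still terminating at $\last{\segment_\indexSegment}$}. This fails in general. Take a prefix $\vertex_0\vertex_1\vertex_2\vertex_3\vertex_4$ with $\vertex_4=\last{\segment_1}$ in an arena whose only edges among $\{\vertex_0,\ldots,\vertex_4\}$ are $(\vertex_0,\vertex_1),(\vertex_1,\vertex_2),(\vertex_2,\vertex_3),(\vertex_3,\vertex_4)$ together with the extra edge $(\vertex_2,\vertex_1)$. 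The cycle $\vertex_1\vertex_2\vertex_1$ violates~\ref{item:safe:decomp:three}, yet every path from $\vertex_0$ to $\vertex_4$ passes through all of $\vertex_1,\vertex_2,\vertex_3$; no rerouting shrinks the prefix while still reaching $\vertex_4$. Your procedure stalls here.

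The paper sidesteps this by \emph{not} insisting on reaching the targets that $\play'$ reaches. It scans $\play'=\vertex_0\vertex_1\ldots$ for the first index $\indexLast^\star$ at which some arena edge $(\vertex_{\indexLast^\star},\vertex_{\indexPosition^\star})$ with $\indexPosition^\star\leq\indexLast^\star$ exists, and sets $\play=\vertex_0\ldots\vertex_{\indexPosition^\star-1}(\vertex_{\indexPosition^\star}\ldots\vertex_{\indexLast^\star})^\omega$ (if no such index exists, $\play'$ is already a simple play and one takes $\play=\play'$). Minimality of $\indexLast^\star$ gives~\ref{item:safe:decomp:three} in one line: a simple cycle among $\{\vertex_0,\ldots,\vertex_{\indexPosition_\indexSegment-1}\}$ must contain an edge $(\vertex_a,\vertex_b)$ with $b\leq a<\indexLast^\star$, contradicting the choice of $\indexLast^\star$. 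In a safety game it is harmless---indeed helpful---that $\play$ may omit targets visited later in $\play'$: this can only enlarge $\satpl{\play}$, which is exactly what~\ref{item:safe:decomp:four} permits. Your attempt to preserve the segment endpoints is both unnecessary here and the source of the obstruction.

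A minor aside: in the case $\vispos{\play'}\setminus\{0\}=\emptyset$ your claim that ``every player is in $\satpl{\play'}$'' is wrong when $0\in\vispos{\play'}$ (some player's target is then $\vertex_0$). This does not break the argument there---conditions~\ref{item:safe:decomp:two} and~\ref{item:safe:decomp:three} are vacuous and~\ref{item:safe:decomp:four} still holds since $\play$ uses only vertices of $\play'$---but the sentence should be dropped.
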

\begin{proof}
  We let $\play' = \vertex_0\vertex_1\ldots$ and distinguish two cases.

  First, we assume that there exist some indices $\indexPosition\leq\indexLast$ such that $(\vertex_\indexLast, \vertex_\indexPosition)\in\edgeSet$.
  We let $\indexLast^\star$ be the smallest index for which there exists such an $\indexPosition$, and let $\indexPosition^\star$ be the smallest among these.
  We define $\play = \vertex_0\vertex_1\ldots(\vertex_{\indexPosition^\star}\ldots\vertex_{\indexLast^\star})^\omega$, i.e., we take the first edge from a vertex in $\play'$ that leads to a previously visited vertex and loop in the resulting simple cycle.
  The play $\play$ is a simple lasso and thus satisfies~\ref{item:safe:decomp:one}.
  Its construction implies that~\ref{item:safe:decomp:four} also holds.
  By the characterisation of Theorem~\ref{thm:charac:safe}, $\play$ is an NE outcome (as this characterisation applies to $\play'$).

  We now define $\numSegments$ and the required simple decomposition $(\segment_1, \ldots, \segment_{\numSegments+1})$ of $\play$.
  By construction, we have $\vispos{\play} \subseteq \vispos{\play'}$.
  We let $\numSegments =  |\vispos{\play}\setminus\{0\}| \leq |\vispos{\play'}\setminus\{0\}|$.
  If $\numSegments = 0$, we consider the trivial decomposition $(\play)$ of $\play$.
  In this case, properties~\ref{item:safe:decomp:two} and~\ref{item:safe:decomp:three} hold trivially.
  We now assume that $\numSegments > 0$ and write $\vispos{\play}\cup \{0\} = \{\indexPosition_0 < \indexPosition_1 < \ldots < \indexPosition_\numSegments\}$.
  For all $\indexSegment\in\integerInterval{\numSegments}$, we define $\segment_\indexSegment = \vertex_{\indexPosition_{\indexSegment-1}}\ldots\vertex_{\indexPosition_{\indexSegment}}$ and we let $\segment_{\numSegments+1} = \suffix{\play}{\indexPosition_{\numSegments}}$.
  Property~\ref{item:safe:decomp:two} follows by definition of $(\segment_1, \ldots, \segment_{\numSegments+1})$ and~\ref{item:safe:decomp:three} follows from the choice of $\indexLast^\star$: if there were a cycle witnessing the violation of~\ref{item:safe:decomp:three}, then the choice of $\indexLast^\star$ would not have been minimal.
  This closes the first case.
  
  Second, we assume that for all indices $\indexPosition<\indexLast$, we do not have $(\vertex_\indexLast, \vertex_\indexPosition)\in\edgeSet$.
  In this case, $\play'$ is a simple play: if there were a cycle in $\play'$, then we would fall in the first case.
  We let $\play = \play'$, and obtain~\ref{item:safe:decomp:one} and~\ref{item:safe:decomp:four} immediately.
  We define $\numSegments = |\vispos{\play}\setminus\{0\}|$ and a decomposition $(\segment_1, \ldots, \segment_{\numSegments+1})$ of $\play$ as in the previous case (we omit the construction to avoid redundancy), and~\ref{item:safe:decomp:two} follows in the same way.
  Finally, we can show that~\ref{item:safe:decomp:three} holds with the same argument as above.
\end{proof}
\subsection{Nash equilibria in safety games}\label{section:safety:equilibria}
We build on the NE outcomes given by Lemma~\ref{lem:safety:simple decomposition} to show that we can derive, from any NE, another NE from the same initial vertex induced by Mealy machines of size quadratic in the number of players and whose outcome satisfies all of the safety objectives satisfied by the outcome of the original NE.
We adapt the construction used for shortest-path games in the proof of Theorem~\ref{thm:short:ne}: the only difference is that we need only monitor players whose objectives are not satisfied (as in reachability games), as the others cannot have a profitable deviation.
To prove that the strategy profile that we construct is an NE, we rely on property~\ref{item:safe:decomp:three} of outcomes obtained by Lemma~\ref{lem:safety:simple decomposition}.
Intuitively, this property guarantees that players cannot halt the progress of the play in a segment of the decomposition on which we base the strategy to obtain a profitable deviation.
Furthermore, players cannot have profitable deviations by leaving the current segment of the play: the other players switch to a punishing strategy of the player and, if the safety objective of the deviating player has not been falsified yet, then the NE outcome characterisation of Theorem~\ref{thm:charac:safe} guarantees the effectiveness of the punishment by the other players.

\begin{theorem}\label{thm:safe:ne}
  Let $\stratProfile'$ be an NE from a vertex $\vertex_0$ in $\game$.
  There exists a finite-memory NE $\stratProfile$ from $\vertex_0$ such that $\satpl{\outcome{\stratProfile'}{\vertex_0}}\subseteq\satpl{\outcome{\stratProfile}{\vertex_0}}$
  where each strategy of $\stratProfile$ has a memory size of at most
  \[\max\{1, \nPlayer - |\satpl{\outcome{\stratProfile'}{\vertex_0}}|\} \cdot (|\vispos{\outcome{\stratProfile'}{\vertex_0}}\setminus\{0\}| + 2)\leq \nPlayer^2 + 2\nPlayer.\]
\end{theorem}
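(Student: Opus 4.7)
The plan is to adapt the construction from the proof of Theorem~\ref{thm:short:ne} for shortest-path games, restricting monitoring to the non-winning players as done in Theorem~\ref{thm:reach:ne} for reachability games. I would first apply Lemma~\ref{lem:safety:simple decomposition} to the given NE $\stratProfile'$ to obtain an NE outcome $\play$ from $\vertex_0$ with $\satpl{\outcome{\stratProfile'}{\vertex_0}} \subseteq \satpl{\play}$, equipped with a simple segment decomposition $\decomp = (\segment_1, \ldots, \segment_{\numSegments+1})$ satisfying conditions~\ref{item:safe:decomp:one}--\ref{item:safe:decomp:four} of that lemma and with $\numSegments \leq |\vispos{\outcome{\stratProfile'}{\vertex_0}}\setminus\{0\}|$. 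Next, I would set $\playerSubset = \playerSet\setminus\satpl{\play}$ if non-empty and $\playerSubset = \{1\}$ otherwise, and for each $\playerIndex\in\playerSubset$ fix, via Theorem~\ref{thm:qualitative:ML strat} applied to the zero-sum coalition game $(\arena_\playerIndex, \reach{\target_\playerIndex})$, a memoryless uniformly winning strategy $\stratAltAdvI$ of $\playerIAdv$ on $\winningIAdv{\reach{\target_\playerIndex}}$.

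Building on the partial Mealy machine from Section~\ref{section:fm:general:strat}, I would define, for each $\playerIndex\in\playerSet$, a Mealy machine $\mealyMachine_\playerIndex$ with state space $\mealyStateID \cup \{\playerIAlt \mid \playerIndex'\in\playerSubset\}$ whose extension to undefined cases mirrors the proof of Theorem~\ref{thm:short:ne}: reading a vertex outside the active segment from state $(\playerIAlt, \indexSegment)$ transitions to the absorbing punishment state $\playerIAlt$, from which every player $\playerIndex\neq\playerIndex'$ plays according to $\stratAltAdvIb$. The total size is $|\playerSubset|(\numSegments+2)$, which satisfies the announced bound since $|\playerSubset|\leq \max\{1, \nPlayer - |\satpl{\outcome{\stratProfile'}{\vertex_0}}|\}$. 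Lemma~\ref{lemma:coherence:template:extensions} then ensures $\outcome{\stratProfile}{\vertex_0}=\play$ for the induced profile $\stratProfile$, which yields $\satpl{\outcome{\stratProfile'}{\vertex_0}}\subseteq\satpl{\outcome{\stratProfile}{\vertex_0}}$.

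The main obstacle is verifying that $\stratProfile$ is an NE. Players in $\satpl{\play}$ already satisfy their safety objective and cannot improve. For $\playerIndex\notin\satpl{\play}$, I would fix an arbitrary play $\play'$ from $\vertex_0$ consistent with $\stratIAdv$ and show $\target_\playerIndex$ is visited in $\play'$. If $\play'$ eventually breaks coherence with $\decomp$, the earliest such deviation must be performed by $\playerI$ (by Lemma~\ref{lemma:coherence:template:extensions} combined with the definition of $\stratIAdv$), triggering a transition to the absorbing state $\playerI$ from which the coalition commits to $\stratAltAdvI$; either $\target_\playerIndex$ was already visited during the coherent prefix, or Theorem~\ref{thm:charac:safe} applied to $\play$ places the vertex preceding the deviation in $\winningIAdv{\reach{\target_\playerIndex}}$, so $\stratAltAdvI$ subsequently forces a visit to $\target_\playerIndex$.

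The delicate case is when every prefix of $\play'$ remains coherent with $\decomp$. Here condition~\ref{item:safe:decomp:three} of Lemma~\ref{lem:safety:simple decomposition} is the crucial ingredient: it prevents $\playerI$ from exploiting the insensitivity of the other players to within-segment deviations to stall indefinitely. Indeed, if $\play'$ were $\indexSegment$-coherent forever for some $\indexSegment\leq\numSegments$, its vertex set would lie in that of $\concat{\concat{\segment_1}{\ldots}}{\segment_\indexSegment}$ and its infinite suffix would contain a cycle; by~\ref{item:safe:decomp:three}, this cycle must traverse $\last{\segment_\indexSegment}$, which by definition of coherence triggers advancement to $(\indexSegment+1)$-coherence, a contradiction. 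Iterating, $\play'$ eventually reaches $(\numSegments+1)$-coherence, visiting $\last{\segment_\indexSegment}$ for every $\indexSegment\leq\numSegments$. Since $\playerIndex\notin\satpl{\play}$, either $\vertex_0\in\target_\playerIndex$ and we are done, or condition~\ref{item:safe:decomp:two} yields $\last{\segment_{\indexSegment^\star}}\in\target_\playerIndex$ for the index $\indexSegment^\star\leq\numSegments$ corresponding to the first positive position at which $\target_\playerIndex$ is visited in $\play$, concluding that $\target_\playerIndex$ is visited in $\play'$.
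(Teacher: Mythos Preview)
Your proposal is correct and follows essentially the same approach as the paper's proof, including the same Mealy-machine extension (absorbing punishment states as in Theorem~\ref{thm:short:ne} restricted to $\playerSubset$) and the same coherent/non-coherent case split invoking property~\ref{item:safe:decomp:three} of Lemma~\ref{lem:safety:simple decomposition}. One minor slip: the coalition game for punishing $\playerI$ should be written $(\arena_\playerIndex, \safe{\target_\playerIndex})$ so that $\playerIAdv$ holds the reachability objective, but this does not affect the argument.
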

\begin{proof}
  By Lemma~\ref{lem:safety:simple decomposition} applied to $\outcome{\stratProfile'}{\vertex_0}$, there exists an NE outcome $\play$ from $\vertex_0$ with a simple segment decomposition $\decomp = (\segment_1, \ldots, \segment_{\numSegments+1})$ where $\numSegments \leq |\vispos{\outcome{\stratProfile'}{\vertex_0}}\setminus\{0\}|$ satisfying conditions~\ref{item:safe:decomp:one}-\ref{item:safe:decomp:four} of Lemma~\ref{lem:safety:simple decomposition}.
  To prove the statement of the theorem, it suffices to build a finite-memory NE with outcome $\play$ from $\vertex_0$.
  For $\playerIndex\in\playerSet$, let $\stratAltAdvI$ denote a memoryless strategy of the second player in the coalition game $\game_\playerIndex= (\arena_\playerIndex, \safe{\target_\playerIndex})$ such that $\stratAltAdvI$ is uniformly winning on their winning region $\winningIAdv{\reach{\target_\playerIndex}}$ (cf.~Theorem~\ref{thm:qualitative:ML strat}).

  We let $\playerSubset = \playerSet\setminus\satpl{\play}$ if this set is non-empty, and $\playerSubset =\{1\}$ otherwise.
  Let $\playerIndex\in\playerSet$.
    We define an $(\playerSubset, \decomp)$-compatible Mealy machine $\mealyMachine_\playerIndex = (\mealyStateSpace, (\player{\min\playerSubset}, 1), \mealyUpdate, \mealyNextI)$ for $\playerI$ as follows (cf.~Definition~\ref{def:compatible mealy}).
  We let $\mealyStateSpace = \mealyStateID\cup\{\playerI\mid\playerIndex\in\playerSubset\}$.
  The functions $\mealyUpdate$ and $\mealyNextI$ extend $\mealyUpdateID$ and $\mealyNextIDpI$ as follows.
  For all $(\playerIAlt, \indexSegment)\in\mealyStateID$ and $\vertex\in\vertexSet$ that does not occur in $\segment_\indexSegment$, we let $\mealyUpdate((\playerIAlt, \indexSegment), \vertex) = \playerIAlt$ and, if $\vertex\in\vertexSetI$, we let $\mealyNextI((\playerIAlt, \indexSegment), \vertex) = \stratAltAdvIb(\vertex)$ if $\playerIndex'\neq\playerIndex$ and $\mealyNextI((\playerI, \indexSegment), \vertex)$ is left arbitrary otherwise.
  For all $\playerIndex'\in\playerSet$ and $\vertex\in\vertexSet$, we let $\mealyUpdate(\playerIAlt, \vertex) = \playerIAlt$ and, if $\vertex\in\vertexSetI$, we let $\mealyNextI(\playerIAlt, \vertex) = \stratAltAdvIb(\vertex)$ if $\playerIndex'\neq\playerIndex$ and $\mealyNextI(\playerI, \vertex)$ is left arbitrary otherwise.
  
  We let $\stratI$ denote the strategy induced by $\mealyMachine_\playerIndex$.
  We have $|\mealyStateSpace| = \max\{1, \nPlayer - |\satpl{\play}|\}\cdot(\numSegments + 2)$, therefore the memory size of $\stratI$ satisfies the announced upper bound.
  Furthermore, it follows from Lemma~\ref{lemma:coherence:template:extensions} that the outcome of $\stratProfile = (\stratI)_{\playerIndex\in\playerSet}$ from $\vertex_0$ is $\play$ and $\stratI$ is based on $\decomp$ for all $\playerIndex\in\playerSet$.
  
  We now establish that $\stratProfile$ is an NE from $\vertex_0$.
  Players in $\satpl{\play}$ cannot have profitable deviations as their objective is satisfied.
  We thus let $\playerIndex\in\playerSet\setminus\satpl{\play}$ and show that $\playerI$ does not have a profitable deviation to end the proof.
  Let $\play'$ be a play starting in $\vertex_0$ consistent with $\stratIAdv = (\strat{\playerIndex'})_{\playerIndex'\neq\playerIndex}$.
  We distinguish two cases depending on whether all prefixes of $\play'$ are coherent with $\decomp$.

  First, assume that all prefixes of $\play'$ are coherent with $\decomp$.
  By property~\ref{item:safe:decomp:three} of Lemma~\ref{lem:safety:simple decomposition} for $\play$, in $\play$, there are no edges leading to earlier vertices from any vertex in the segments $\segment_\indexSegment$ for $\indexSegment \leq \numSegments$.
  It follows that all targets that are visited along $\play$ also are along $\play'$, by property~\ref{item:safe:decomp:two} of Lemma~\ref{lem:safety:simple decomposition} for $\play$.
  In particular, we obtain that $\play'\notin\safe{\target_\indexPlayer}$.

  We now assume that there is a prefix of $\play'$ that is not coherent with $\decomp$.
  Let $\indexPosition\in\IN$ such that $\prefix{\play'}{\indexPosition}$ is the longest prefix of $\play'$ coherent with $\decomp$.
  We show that $\suffix{\play'}{\indexPosition}$ is a play that is consistent with $\stratAltAdvI$.
  Assume that $\prefix{\play'}{\indexPosition}$ is $\indexSegment$-coherent with $\decomp$.
  As the strategies of $\stratIAdv$ are based on $\decomp$, we must have $\first{\suffix{\play'}{\indexPosition}}\in\vertexSetI$.
  Lemma~\ref{lemma:coherence:template:extensions} and the definition of $\mealyUpdate$ ensure that $\widehat{\mealyUpdate}(\prefix{\play'}{\indexPosition})=(\playerI, \indexSegment)$.
  It follows from $\prefix{\play'}{\indexPosition+1}$ not being coherent with $\decomp$ that its last vertex does not occur in $\segment_\indexSegment$.
  The definitions of $\mealyUpdate$ and $\mealyNextIAlt$ for $\playerIndex'\neq\playerIndex$ combined with the above ensure that $\suffix{\play'}{\indexPosition}$ is consistent with $\stratAltAdvI$.

  We now distinguish two sub-cases.
  First, assume that a target of $\playerI$ occurs in $\prefix{\play}{\indexPosition}$, i.e., $\last{\segment_{\indexSegment'}}\in\target_\playerIndex$ for some $\indexSegment' < \indexSegment$.
  In this case, $\last{\segment_{\indexSegment'}}$ occurs in $\play'$ by $\indexSegment$-coherence of $\prefix{\play'}{\indexPosition}$, and thus $\play'\notin\safe{\target_\indexPlayer}$.
  We now assume that no targets of $\playerI$ occur in $\prefix{\play}{\indexPosition}$.
  It follows from Theorem~\ref{thm:charac:safe} that $\first{\suffix{\play'}{\indexPosition}}\in\winningIAdv{\reach{\target_\playerIndex}}$, and thus consistency of $\suffix{\play'}{\indexPosition}$ with $\stratIAdv$ yields $\suffix{\play'}{\indexPosition}\notin\safe{\target_\indexPlayer}$, which implies $\play'\notin\safe{\target_\indexPlayer}$.
\end{proof}

\section{Finite-memory Nash equilibria in Büchi and co-Büchi games}\label{section:buchi}

This section presents our results for Büchi and co-Büchi games.
For reachability, safety and shortest-path games, we have constructed finite-memory Nash equilibria from NE outcomes admitting a well-structured segment decomposition, and the resulting NEs have a memory size at most quadratic in the number of players (cf. Theorems~\ref{thm:reach:ne},~\ref{thm:short:ne} and~\ref{thm:safe:ne}).
In Büchi and co-Büchi games, we cannot directly use the same approach to construct finite-memory NEs.
Instead, we combine the classical idea of encoding an outcome to be enforced in the memory structure and decomposition-based strategies to show that from any NE in a Büchi or co-Büchi game, we can derive a finite-memory NE from the same initial state with a larger set of winning players with respect to set inclusion.

In Section~\ref{section:buchi:ex}, we present several examples, including one illustrating that, unlike reachability, safety and shortest-path games, we cannot obtain arena-independent upper memory bounds for constrained NEs in Büchi and co-Büchi games.
We accompany these examples with intuition regarding the construction presented in the following sections.
We then provide our results for Büchi games in Section~\ref{section:buchi:buchi-ne} and for co-Büchi games in Section~\ref{section:buchi:cobuchi-ne}.

We fix an arena $\arena = \arenaTuple$ and targets $\target_1, \ldots, \target_\nPlayer\subseteq\vertexSet$ for this entire section.

\subsection{Examples}\label{section:buchi:ex}

In the games considered in the previous sections, we constructed NEs based on simple segment decompositions of NE outcomes such that each finite segment ends in the first occurrence of a target in the outcome.
In these NEs, players do not react to \textit{in-segment deviations}, i.e., to players that deviate without leaving the current segment of the intended outcome.
In reachability and shortest-path games, there cannot be profitable in-segment deviations as the set of vertices of the intended outcome is never left.
In safety games, we imposed conditions on the NE outcomes of Lemma~\ref{lem:safety:simple decomposition} that prevent the existence of profitable in-segment deviations.
Nonetheless, if there existed a profitable in-segment deviation in a safety game, it yields another NE outcome with a strict superset of satisfied objectives, and we can reason on this preferable outcome.

In Büchi and co-Büchi games, strategy profiles based on similarly defined simple decompositions  are no longer sufficient to build NEs.
Furthermore, in contrast to safety games, profitable in-segment deviations can be harmful to players whose objectives were satisfied by the intended outcome.
We illustrate this below.

\begin{figure}
  \centering
  \begin{subfigure}[b]{0.48\textwidth}
    \centering
    \begin{tikzpicture}[node distance=0.8cm]
      \node[state, square, align=center] (v0) {$\vertex_0$};
      \node[state, circle, align=center, right = of v0] (v1) {$\vertex_1$};
      \node[state, square, align=center, right = of v1] (v2) {$\vertex_2$};
      \node[state, square, align=center, left = of v0] (v3) {$\vertex_3$};
      
      \path[->] (v0) edge[bend left] (v1);
      \path[->] (v1) edge[bend left] (v0);
      \path[->] (v1) edge (v2);
      \path[->] (v0) edge (v3);
      \path[->] (v3) edge[loop below] (v3);
      \path[->] (v2) edge[loop below] (v2);
    \end{tikzpicture}
    \caption{A naive decomposition-based approach fails to obtain an NE from $\vertex_0$ with outcome $\vertex_0\vertex_1\vertex_2^\omega$ in the game on the above arena where the objectives of $\playerOne$ and $\playerTwo$ are respectively $\buchi{\{\vertex_0, \vertex_1\}} = \cobuchi{\{\vertex_2, \vertex_3\}}$ and $\buchi{\{\vertex_2\}} = \cobuchi{\{v_0,\vertex_1, \vertex_3\}}$.}
    \label{fig:ex:buchi:one}
  \end{subfigure}
  \hfill
  \begin{subfigure}[b]{0.48\textwidth}
    \centering
    \begin{tikzpicture}[node distance=12mm]
      \node[state, square, align=center] (v0) {$\vertex_0$};
      \node[emptynode, right = of v0] (mid) {};
      \node[state, circle, align = center, above = 5mm of mid] (v1) {$\vertex_1$};
      \node[state, circle, align = center, below = 5mm of mid] (v2) {$\vertex_2$};
      \node[state, circle, align=center, right = of mid] (v3) {$\vertex_3$};
      \node[state, circle, align=center, right = of v3] (v4) {$\vertex_4$};
      
      \path[->] (v0) edge (v1);
      \path[->] (v0) edge (v2);
      \path[->] (v1) edge (v3);
      \path[->] (v2) edge (v3);
      \path[->] (v3) edge (v0);
      \path[->] (v3) edge (v4);
      \path[->] (v4) edge[loop below] (v4);
    \end{tikzpicture}
    \caption{To define an NE based on the trivial decomposition $((\vertex_0\vertex_1\vertex_3)^\omega)$ in the co-Büchi game on the above arena with objectives $\cobuchi{\{\vertex_2, \vertex_4\}}$ for $\playerOne$ and $\cobuchi{\{\vertex_1, \vertex_4\}}$ for $\playerTwo$, $\playerOne$ must (eventually) punish $\playerTwo$ in $\vertex_3$ if $\vertex_2$ is visited.}\label{fig:ex:buchi:two}
  \end{subfigure}
  \caption{Two arenas. Circle and squares respectively denote $\playerOne$ and $\playerTwo$ vertices.}
  \label{fig:ex:buchi}
\end{figure}

\begin{example}\label{example:buchi:one}
  Consider the game $\game$ on the arena depicted in Figure~\ref{fig:ex:buchi:one} where the objectives of $\playerOne$ and $\playerTwo$ are $\buchi{\{\vertex_0, \vertex_1\}} = \cobuchi{\{\vertex_2, \vertex_3\}}$ and $\buchi{\{\vertex_2\}} = \cobuchi{\{\vertex_0, \vertex_1, \vertex_3\}}$ respectively.
  The play $\play = \vertex_0\vertex_1\vertex_2^\omega$ is the outcome of an NE by Theorem~\ref{thm:charac:reach:buchi}.

  First, let us view $\game$ as a Büchi game.
  We consider the (infinite ultimately periodic) decomposition $\decomp = (\vertex_0\vertex_1\vertex_2, \vertex_2\vertex_2, \vertex_2\vertex_2, \ldots)$ of $\play$.
  The decomposition $\decomp$ is a natural candidate from which we can build a finite-memory NE from $\vertex_0$ with outcome $\play$ with the approaches used in Theorems~\ref{thm:reach:ne},~\ref{thm:short:ne} and~\ref{thm:safe:ne}: each segment connects successive visits to the target of $\playerTwo$, whose objective is satisfied by $\play$.
  However, there are no suitable NEs in which $\playerTwo$ follows a strategy based on $\decomp$.
  Indeed, if $\playerTwo$ uses such a strategy, then $\playerOne$ can enforce their objective with an in-segment deviation from $\vertex_1$ resulting in the outcome $(\vertex_0\vertex_1)^\omega\in\buchi{\{\vertex_1, \vertex_0\}}$, as $\playerTwo$ would not punish the deviation.

  Let us now view $\game$ as a co-Büchi game.
  In this case, a natural (with respect to our argument for safety games) decomposition candidate to construct an NE with outcome $\play$ is $(\vertex_0\vertex_1\vertex_2, \vertex_2^\omega)$, as $\playerOne$ wants to avoid $\vertex_2$ in the limit.
  The same issue as above arises.
  \hfill $\lhd$
\end{example}

The previous example highlights a shared limitation of decomposition-based strategies in Büchi and co-Büchi games.
To implement finite-memory NEs in Büchi and co-Büchi games in the following sections, we propose a two-phase mechanism.
First, we enforce a prefix of the intended outcome (i.e., we fully implement the punishment mechanism).
Second, we follow a decomposition-based approach for a suffix of the play.

We select the prefix and suffix such that no profitable in-segment deviations exist in the second phase.
For Büchi games, we start the second phase once there are no more targets of the \textit{losing players} appearing in the remainder of the intended outcome.
In co-Büchi games, we start the second phase once there are no more targets of the \textit{winning players} appearing in the suffix.
As in safety games, this construction for co-Büchi games does not prohibit the existence of profitable in-segment deviations by itself.
However, these deviations are not harmful to the players whose objectives are satisfied.
In practice, we will work with outcomes for which in-segment deviations are not profitable in the second phase (see Lemma~\ref{lem:cobuchi:outcomes}, the counterpart for co-Büchi games of Lemma~\ref{lem:safety:simple decomposition}).

In reachability games, the finite-memory NEs constructed in the proof of Theorem~\ref{thm:reach:ne} (see also Example~\ref{ex:reach ne}) are such that players stop punishing deviating players whenever the current segment of the intended outcome is reentered.
In particular, these strategies do not require additional memory states to remember that a player has left the current segment.
We illustrate, through the following example, that these additional memory states are needed to construct NEs in Büchi and co-Büchi games, i.e., players must commit to punishing whoever deviates in the first phase or leaves the current segment in the second phase of our approach.

\begin{example}
  Consider the game $\game$ on the arena depicted in Figure~\ref{fig:ex:buchi:two} where the objectives of $\playerOne$ and $\playerTwo$ are $\cobuchi{\{\vertex_2, \vertex_4\}}$ and $\cobuchi{\{\vertex_1, \vertex_4\}}$ respectively.
  The play $\play = (\vertex_0\vertex_1\vertex_3)^\omega$ is the outcome of an NE by Theorem~\ref{thm:charac:reach:buchi}.
  It satisfies the objective of $\playerOne$ but not the objective of $\playerTwo$.

  Let $\stratOne$ be the memoryless strategy $\stratOne$ of $\playerOne$ such that $\stratOne(\vertex_3) = \vertex_0$.
  The strategy $\stratOne$ is the only strategy based on the trivial decomposition $(\play)$ that tries to progress along $\play$ whenever the current vertex is in one of its vertices (i.e., $\stratOne$ is the only strategy implementing the punishment approach used in the proof of Theorem~\ref{thm:reach:ne}).
  We claim that there are no NEs from $\vertex_0$ whose outcome is $(\vertex_0\vertex_1\vertex_3)^\omega$ where the strategy of $\playerOne$ is $\stratOne$.
  Indeed, $\playerTwo$ would have a profitable deviation by moving to $\vertex_2$ whenever $\vertex_0$ is reached.

  We can apply the same reasoning on the Büchi game on the arena depicted in Figure~\ref{fig:ex:buchi:two} with the objectives $\buchi{\{\vertex_1\}}$ for $\playerOne$ and $\buchi{\{\vertex_2\}}$ for $\playerTwo$ and the same outcome to show that players must also remember whether a player has left the current segment of the intended outcome for our construction to work.
  \hfill $\lhd$
\end{example}

Finally, we observe that the two-phase approach described above yields finite-memory strategies for which we cannot provide arena-independent memory bounds as we encode a simple history in the memory.
The following result implies that we cannot provide arena-independent memory bounds for constrained NEs in Büchi and co-Büchi games in general.

\newcommand{\indexLocal}{p}
\newcommand{\indexVertex}{q}
\newcommand{\indexPower}{a}
\newcommand{\indexMod}{b}

\begin{figure}
  \centering
  \begin{tikzpicture}[node distance=7mm]
    \node[state, square,align=center] (v1) {$v_1$};
    \node[state, circle, align=center, right = of v1] (w1) {$w_1$};
    \node[state, square, align=center, right = of w1] (v2) {$v_2$};
    \node[state, circle, align=center, right = of v2] (w2) {$w_2$};
    \node[state, square, align=center, right = of w2] (v3) {$v_3$};
    \node[state, circle, align=center, right = of v3] (w3) {$w_3$};
    \node[state, square, align=center, right = of w3] (t) {$v_4$};
    \node[state, square, align=center, below = of v2] (bot) {$v_5$};
    \path[->] (v1) edge (w1);
    \path[->] (v2) edge (w2);
    \path[->] (v3) edge (w3);
    \path[->] (w1) edge[bend right] (v1);
    \path[->] (w1) edge (v2);
    \path[->] (w2) edge[bend right] (v1);
    \path[->] (w2) edge[bend right] (v2);
    \path[->] (w2) edge (v3);
    \path[->] (w3) edge[bend right] (v1);
    \path[->] (w3) edge[bend right] (v2);
    \path[->] (w3) edge[bend right] (v3);
    \path[->] (w3) edge (t);
    \path[->] (v1) edge[bend right] (bot);
    \path[->] (v2) edge (bot);
    \path[->] (v3) edge[bend left] (bot);
    \path[->] (t) edge[loop below] (t);
    \path[->] (bot) edge[loop below] (bot);
  \end{tikzpicture}
  \caption{The arena used in the proof of Proposition~\ref{prop:buchi:dep} for $\indexLocal = 3$.
    In the game where the objective of $\playerOne$ and $\playerTwo$ are respectively $\buchi{\vertexSet\setminus\{\vertex_4, \vertex_5\}} = \cobuchi{\{\vertex_4, \vertex_5\}}$ and $\buchi{\{\vertex_4\}} = \cobuchi{\vertexSet\setminus\{\vertex_4\}}$ (where $\vertexSet$ is the set of vertices), $\playerTwo$ needs a memory of size at least $3$ in any NE from $\vertex_1$ ensuring the objective of $\playerTwo$.
    Circles and squares are respectively $\playerOne$ and $\playerTwo$ vertices.}
  \label{fig:prop:buchi:dep}
\end{figure}

\begin{proposition}\label{prop:buchi:dep}
  Let $\indexLocal\in\IN\setminus\{0\}$.
  There exist a two-player game $\game=(\arena, (\objective_1, \objective_2))$ where $\arena$ is an arena with $2\indexLocal+2$ vertices and $\objective_1$ and $\objective_2$ are both Büchi and co-Büchi objectives, and an initial vertex $\vertex$ in $\game$ such that $\playerTwo$ requires a memory of size at least $\indexLocal$ in any NE from $\vertex$ whose outcome satisfies $\objective_2$.
\end{proposition}
\begin{proof}
  We define $\arena = ((\vertexSetOne, \vertexSetTwo), \edgeSet)$ as follows.
  We let $\vertexSetOne = \{\vertexAlt_1, \ldots, \vertexAlt_\indexLocal\}$ and $\vertexSetTwo = \{\vertex_1, \ldots, \vertex_{\indexLocal+2}\}$.
  The set of edges $\edgeSet$ is defined by the three following rules.
  Let $\indexVertex\in\integerInterval{\indexLocal}$.
  From $\vertex_\indexVertex$, there is an edge to $\vertex_{\indexLocal+2}$ and an edge to $\vertexAlt_{\indexVertex}$.
  From $\vertexAlt_\indexVertex$, there is an edge to $\vertex_{\indexVertex'}$ for all $\indexVertex'\in\integerInterval{\indexVertex+1}$.
  Finally, there are self-loops in $\vertex_{\indexLocal+1}$ and $\vertex_{\indexLocal+2}$.
  We illustrate $\arena$ for $\indexLocal = 3$ in Figure~\ref{fig:prop:buchi:dep}, and remark that the arena for $\indexLocal = 1$ matches the arena of Figure~\ref{fig:ex:buchi:one} up to a renaming of the vertices.
  We consider the objectives $\objective_1 = \buchi{\{\vertexSet\setminus\{\vertex_{\indexLast+1}, \vertex_{\indexLast+2}\}} = \cobuchi{\{\vertex_{\indexLast+1}, \vertex_{\indexLast+2}\}}$ for $\playerOne$ and $\objective_2 = \buchi{\{\vertex_{\indexLast+1}\}} = \cobuchi{\{\vertexSet\setminus\{\vertex_{\indexLast+2}\}}$ for $\playerTwo$, and let $\game = (\arena, (\objective_1, \objective_2))$.

  We show the following two claims in the Büchi and co-Büchi game $\game$:
  \begin{enumerate}[(i)]
  \item there exists a finite-memory NE from $\vertex_1$ such that the objective of $\playerTwo$ is satisfied where $\playerOne$ has a memoryless strategy and $\playerTwo$ has a memory of size $\indexLocal+1$; and\label{item:example:buchi:dependence:one}
  \item any strategy profile in which $\playerTwo$ has a finite-memory strategy of size at most $\indexLocal$ and whose outcome from $\vertex_1$ satisfies $\objective_2$ is not an NE from $\vertex_1$.\label{item:example:buchi:dependence:two}
  \end{enumerate}

  We first prove~\ref{item:example:buchi:dependence:one}.
  For $\playerOne$, we consider the memoryless strategy $\stratOne$ such that for all $\indexVertex\in\integerInterval{\indexLocal}$, $\stratOne(\vertexAlt_\indexVertex) = \vertex_{\indexVertex+1}$.
  For $\playerTwo$, we consider the strategy $\stratTwo$ induced by the Mealy machine $\mealyMachine = \mealyTuplePure$ defined as follows.
  We let $\mealyStateSpace = \integerInterval{\indexLocal+1}$ and $\mealyStateInit = 1$.
  For all $\indexVertex\in\mealyStateSpace$, we let $\mealyUpdate(\indexVertex, \vertex_{\indexVertex}) = \indexVertex+1$ and, for all $\vertex\in\vertexSet\setminus\{\vertex_\indexVertex\}$, $\mealyUpdate(\indexVertex, \vertex) = \indexVertex$.
  For all $\indexVertex\in\mealyStateSpace$, we let $\mealyNext(\indexVertex, \vertex_{\indexVertex}) = \vertexAlt_{\indexVertex}$, $\mealyNext(\indexVertex, \vertex_{\indexLocal+1}) = \vertex_{\indexLocal+1}$, $\mealyNext(\indexVertex, \vertex_{\indexLocal+2}) = \vertex_{\indexLocal+2}$ and, for all $\vertex\in\vertexSetTwo\setminus\{\vertex_\indexVertex, \vertex_{\indexLocal+1}, \vertex_{\indexLocal+2}\}$, we let $\mealyNext(\indexVertex, \vertex) = \vertex_{\indexLocal+2}$.
  Intuitively, $\stratTwo$ moves rightward (with respect to Figure~\ref{fig:prop:buchi:dep}) at each step as long as $\playerOne$ does so, and moves to $\vertex_{\indexLocal+2}$ as soon as $\playerOne$ moves to the left.
  
  By construction, $\outcome{(\stratOne,\stratTwo)}{\vertex_1} = \vertex_1\vertexAlt_1\ldots \vertex_\indexLocal\vertexAlt_\indexLocal \vertex_{\indexLocal+1}^\omega$ satisfies $\objective_2$.
  To show that $(\stratOne, \stratTwo)$ is an NE, it suffices to prove that $\playerOne$ does not have a profitable deviation.
  Let $\stratAltOne$ be an arbitrary strategy of $\playerOne$.
  We consider two cases.
  First, assume that for all $\indexVertex\in\integerInterval{\indexLocal}$, for $\history_\indexVertex = \vertex_1\vertexAlt_1\ldots\vertex_\indexVertex\vertexAlt_\indexVertex$, we have $\stratAltOne(\history_\indexVertex) = \vertex_{\indexVertex+1}$.
  In this case, the outcome from $\vertex_1$ of $(\stratAltOne, \stratTwo)$ matches that of $(\stratOne, \stratTwo)$, hence $\stratAltOne$ is not a profitable deviation of $\playerOne$.
  Second, assume that for some $\indexVertex\in\integerInterval{\indexLocal}$, we have $\stratAltOne(\history_\indexVertex) = \vertex_{\indexVertex'}$ with $\indexVertex'\neq\indexVertex+1$.
  We consider the smallest such $\indexVertex$.
  It follows from a straightforward induction that $\widehat{\mealyUpdate}(\history_\indexVertex) = \indexVertex+1$.
  By definition of $\mealyMachine$, we obtain that the outcome of $\stratAltOne$ and $\stratTwo$ from $\vertex_1$ is the play $\history_\indexVertex\vertex_{\indexVertex'}\vertex_{\indexLocal+2}^\omega$, hence $\stratAltOne$ is not a profitable deviation in this case either.
  This shows that $(\stratOne, \stratTwo)$ is an NE from $\vertex_1$.
  This ends the proof of~\ref{item:example:buchi:dependence:one}.

  We now show~\ref{item:example:buchi:dependence:two}.
  Let $\stratProfile = (\stratOne, \stratTwo)$ be a strategy profile such that its outcome from $\vertex_1$ is winning for $\playerTwo$ and such that $\stratTwo$ is given by a Mealy machine $\mealyMachine = \mealyTuplePure$ with memory size at most $\indexLocal$.
  We prove that $\playerOne$ has a profitable deviation from $\vertex_1$.
  
  The structure of $\arena$ ensures that all vertices of $\arena$ besides $\vertex_{\indexLocal+2}$ occur in $\outcome{\stratProfile}{\vertex_1}$.
  For all $\indexVertex\in\integerInterval{\indexLocal}$, we let $\history_\indexVertex$ be the shortest prefix of $\outcome{\stratProfile}{\vertex_1}$ that ends in $\vertexAlt_\indexVertex$, and let $\history_0$ denote the empty word.
  It follows from our assumption on the memory size of $\mealyMachine$ that there exist $\indexVertex<\indexVertex'\in\integerInterval{\indexLocal}\cup\{0\}$ such that $\widehat{\mealyUpdate}(\history_\indexVertex) = \widehat{\mealyUpdate}(\history_{\indexVertex'})$.
  Let $\history = \vertexAltAlt_0\ldots\vertexAltAlt_\indexLast$ be the non-empty history such that $\history_{\indexVertex'} = \history_\indexVertex\history$.
  We prove that $\history_{\indexVertex}\history^\omega$ is consistent with $\stratTwo$, which implies the existence of a profitable deviation for $\playerOne$.

  The consistency argument is by induction.
  We show that for all $\indexPosition\in\IN$, letting $\indexPower=\left\lfloor\frac{\indexPosition}{\indexLast+1}\right\rfloor$ and $\indexMod=\indexPosition\bmod(\indexLast+1)$, the history $\hist_\indexVertex\hist^{\indexPower}\vertexAltAlt_0\ldots\vertexAltAlt_{\indexMod}$ is consistent with $\stratTwo$ and $\widehat{\mealyUpdate}(\hist_\indexVertex\hist^{\indexPower}\vertexAltAlt_0\ldots\vertexAltAlt_{\indexMod}) = \widehat{\mealyUpdate}(\hist_\indexVertex\vertexAltAlt_0\ldots\vertexAltAlt_{\indexMod})$.
  For $0\leq\indexPosition\leq\indexLast$, the former property follows from $\history_{\indexVertex}\history$ being consistent with $\stratTwo$ (it is a prefix of $\outcome{\stratProfile}{\vertex_1}$) and the latter property is trivially true (because $\indexPower=0$).
  We now assume that the previous property holds for $\indexPosition\in\IN$ and show that it holds for $\indexPosition+1$.
  We let $\indexPower=\left\lfloor\frac{\indexPosition}{\indexLast+1}\right\rfloor$ and $\indexMod=\indexPosition\bmod(\indexLast+1)$.
  To avoid having to distinguish the case in which $\indexPosition+1$ is divisible by $\indexLast+1$ from the other case, we (abusively) let $\vertexAltAlt_{\indexLast+1}$ denote $\vertexAltAlt_0$.

  Regarding memory states, we have that $\widehat{\mealyUpdate}(\history_\indexVertex\history^\indexPower\vertexAltAlt_0\ldots\vertexAltAlt_\indexMod\vertexAltAlt_{\indexMod+1}) = \widehat{\mealyUpdate}(\history_\indexVertex\vertexAltAlt_0\ldots\vertexAltAlt_\indexMod\vertexAltAlt_{\indexMod+1})$ directly from the induction hypothesis $\widehat{\mealyUpdate}(\history_\indexVertex\history^\indexPower\vertexAltAlt_0\ldots\vertexAltAlt_\indexMod) = \widehat{\mealyUpdate}(\history_\indexVertex\vertexAltAlt_0\ldots\vertexAltAlt_\indexMod)$ and the definition of $\widehat{\mealyUpdate}$.
  We consider two cases for the consistency.
  First, we assume that $\vertexAltAlt_{\indexMod}\in\vertexSetOne$.
  The consistency of $\hist_\indexVertex\hist^{\indexPower}\vertexAltAlt_0\ldots\vertexAltAlt_{\indexMod}\vertexAltAlt_{\indexMod+1}$ with $\stratTwo$ follows directly by induction as $\playerTwo$ does not select the last transition.
  Second, we assume that $\vertexAltAlt_{\indexMod}\in\vertexSetTwo$.
  By induction, $\history_\indexVertex\history^\indexPower\vertexAltAlt_0\ldots\vertexAltAlt_{\indexMod}$ is consistent with $\stratTwo$ and thus it remains to prove that $\stratTwo(\history_\indexVertex\history^\indexPower\vertexAltAlt_0\ldots\vertexAltAlt_{\indexMod}) =\vertexAltAlt_{\indexMod+1}$.
  The induction hypothesis on memory updates and the consistency of $\hist_\indexVertex\hist$ with $\stratTwo$ imply that $\stratTwo(\history_\indexVertex\hist^\indexPower\vertexAltAlt_0\ldots\vertexAltAlt_{\indexMod})= \stratTwo(\history_\indexVertex\vertexAltAlt_0\ldots\vertexAltAlt_{\indexMod}) =\vertexAltAlt_{\indexMod+1}$.

  We have shown that in the game $\game$, any NE from $\vertex_1$ with an outcome that is winning for $\playerTwo$ requires $\playerTwo$ to have a memory size of at least $\indexLocal = \frac{|\vertexSet|-2}{2}$.
  This shows that arena-dependent memory is required for general constrained NEs in Büchi and co-Büchi games.
\end{proof}

\let\indexLocal\undefined
\let\indexPower\undefined
\let\indexVertex\undefined

\subsection{Nash equilibria in Büchi games}\label{section:buchi:buchi-ne}

In this section, we show that from any NE in a Büchi game, we can derive a finite-memory NE from the same initial vertex with the same set of satisfied objectives.
As before, we build finite-memory NEs from well-chosen decompositions of NE outcomes obtained with a simplification process.
We devise two variants of the two-phase approach described in the previous section.
In Section~\ref{section:buchi:ne:infinitely occuring}, we provide a construction for NE outcomes in which some vertex occurs infinitely often.
We provide a variant for other NE outcomes in Section~\ref{section:buchi:ne:finitely occuring}.

We let $\game = (\arena, (\buchi{\target_\indexPlayer})_{\indexPlayer\in\playerSet})$ for the remainder of the section.
We assume without loss of generality that for any considered NE outcome $\play$ in the following, $\satpl{\play}$ is not empty.
This can be ensured by adding a new player for whom all vertices are targets if necessary.

\subsubsection{Outcomes with an infinitely occurring vertex}\label{section:buchi:ne:infinitely occuring}

We study NE outcomes in which a vertex occurs infinitely often.
This case is a generalisation of the finite-arena case; in a finite arena, all plays contain some infinitely occurring vertex.
We first show that from any NE outcome, we can derive an NE outcome satisfying the same objectives of $\game$ that has an ultimately periodic segment decomposition containing only simple cycles and simple histories.
Second, we build finite-memory NEs from these outcomes and their decompositions.

\subparagraph*{Well-structured NE outcomes.}
Let $\play'$ be an NE outcome in $\game$ such that some vertex occurs infinitely often in $\play'$.
In contrast to reachability, shortest-path and safety games, we do not derive an NE outcome that has a finite simple segment decomposition from $\play'$.
This would be too restrictive: plays that admit finite simple segment decompositions can satisfy the objectives of two players in a Büchi game only if their targets lie on the same simple cycle.
Instead, we derive a lasso NE outcome $\play$ from $\play'$ with the same initial vertex such that $\satpl{\play}=\satpl{\play'}$.
We construct $\play$ such that it admits an infinite decomposition $(\segment_0, \segment_1, \ldots)$ that satisfies three key properties that are used to implement the two-phase mechanism described in the previous section.

The prefix of $\play$ that the players enforce in the first phase will be $\segment_0$.
To ensure that no profitable deviations exist within a segment in the second phase, we ensure that for all $\indexSegment\geq 1$, no targets of players whose objective is not satisfied by $\play$ occur in $\segment_\indexSegment$.
For the second phase, the players use a strategy that is based on the decomposition $(\segment_1, \segment_2, \ldots)$.
To construct a finite-memory strategy, we require that $(\segment_1, \segment_2, \ldots)$ is a periodic sequence with a period $\numSegments\in\integerInterval{\nPlayer}$ (thus $(\segment_0, \segment_1, \ldots)$ is ultimately periodic).
Furthermore, all segments must be either simple histories or simple cycles (a decomposition-based strategy need not be well-defined otherwise).
Therefore, we require that either the period is one and $\segment_1$ is a simple cycle, or its period is greater and all segments $\segment_1$, $\segment_2$, \ldots are non-trivial simple histories.

The general idea for the derivation of $\play$ is as follows.
Let $\indexPosition\in\IN$ such that no targets of losing players occur in $\suffix{\play'}{\indexPosition}$ and  fix an infinitely occurring vertex $\vertex^\star$ of $\play'$.
For all $\indexPlayer\in \satpl{\play'}$, fix $t_\indexPlayer\in\target_\indexPlayer$ occurring after $\vertex^\star$ in $\suffix{\play'}{\indexPosition}$.
For all $\indexPlayer\in\satpl{\play'}$, there are histories from $\vertex^\star$ to $t_\indexPlayer$ and from $t_\indexPlayer$ to $\vertex^\star$ in which only vertices of $\suffix{\play'}{\indexPosition}$ appear.
It follows that, for all $\indexPlayer, \indexPlayer'\in\satpl{\play'}$, the vertices $t_\indexPlayer$ and $t_{\indexPlayer'}$ are connected by simple histories or simple cycles consisting only of vertices from $\suffix{\play'}{\indexPosition}$.
We construct the sought decomposition by letting $\segment_0$ be a simple history from $\vertex_0$  to some $t_\indexPlayer$ using vertices occurring in $\play'$, and then selecting the other segments among the aforementioned simple histories or cycles to obtain a periodic sequence (with period of at most $\nPlayer$) such that all $t_\indexPlayer$ occur in the combination of its histories.
We formalise this construction below.

\begin{restatable}{lemma}{lemBuchiSimpleLasso}\label{lem:buchi:simplelasso}
  Let $\play'$ be the outcome of an NE from $\vertex_0\in\vertexSet$ in the Büchi game $\game$ such that some vertex occurs infinitely often in $\play'$.
  There exists an NE outcome $\play$ from $\vertex_0$ with $\satpl{\play}=\satpl{\play'}$ such that
  $\play$ admits an infinite simple segment decomposition $(\segment_0, \segment_1, \ldots)$ satisfying
  \begin{enumerate}[(i)]
  \item for all $\indexSegment\geq 1$ and all $\indexPlayer\in\playerSet\setminus\satpl{\play}$, no vertex of $\target_\indexPlayer$ occurs in $\segment_\indexSegment$,\label{item:buchi:simplelasso:one}
  \item there exists $\numSegments\in\integerInterval{|\satpl{\play'}|}$ such that for all $\indexSegment\geq 1$, $\segment_\indexSegment=\segment_{\indexSegment+\numSegments}$; and\label{item:buchi:simplelasso:two}
  \item either $\numSegments = 1$ and $\segment_1$ is a simple cycle, or $\numSegments > 1$ and $\segment_\indexSegment$ is a non-trivial simple history for all $\indexSegment\in\integerInterval{\numSegments}$.\label{item:buchi:simplelasso:three}
  \end{enumerate}
\end{restatable}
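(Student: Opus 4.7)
The plan is to extract a periodic simple decomposition from $\play'$ by selecting, for each winning player, an anchor target that is visited after a carefully chosen cutoff position, and linking these anchors by simple histories obtained from finite infixes of $\play'$. Because $\play'$ is an NE outcome, for every $\indexPlayer\in\playerSet\setminus\satpl{\play'}$ only finitely many vertices of $\target_\indexPlayer$ occur along $\play'$; hence there exists a position $\indexPosition_0$ past which no target of any losing player appears. By hypothesis, some vertex $\vertex^\star$ occurs infinitely often in $\play'$, so I may fix a position $\indexPosition\geq\indexPosition_0$ with $\vertex_\indexPosition=\vertex^\star$. For each $\indexPlayer\in\satpl{\play'}$, $\target_\indexPlayer$ is visited infinitely often, so I select some $t_\indexPlayer\in\target_\indexPlayer$ occurring strictly after position $\indexPosition$; I let $T=\{t_\indexPlayer\mid \indexPlayer\in\satpl{\play'}\}$ and $\numSegments=|T|$, so that $\numSegments\in\integerInterval{|\satpl{\play'}|}$.

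Next I construct the segments. Fix an enumeration $T=\{t^{(1)},\ldots,t^{(\numSegments)}\}$. If $\numSegments>1$, then for each $j\in\integerInterval{\numSegments}$, there is a finite infix of $\play'$ starting at some occurrence of $t^{(j)}$ after $\indexPosition$ and ending at a later occurrence of $t^{(j+1)}$ (indices modulo $\numSegments$); I extract from it a simple history $\segment_j$ with the same endpoints, using only vertices of that infix, and hence of $\suffix{\play'}{\indexPosition_0}$. Since $t^{(j)}\neq t^{(j+1)}$, this history is non-trivial. If instead $\numSegments=1$, I extract a simple cycle $\segment_1$ based at $t^{(1)}$ from a closed infix of $\play'$ between two consecutive occurrences of $t^{(1)}$ after $\indexPosition$, using the standard cycle-extraction argument. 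In both cases I take $\segment_0$ to be a simple history from $\vertex_0$ to $t^{(1)}$ obtained by simplifying a finite prefix of $\play'$, and define $\play=\concat{\segment_0}{\concat{\segment_1}{\concat{\segment_2}{\ldots}}}$ with $\segment_{\indexSegment+\numSegments}=\segment_\indexSegment$ for all $\indexSegment\geq 1$.

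To verify the claimed properties, note that conditions~\ref{item:buchi:simplelasso:two} and~\ref{item:buchi:simplelasso:three} hold by construction. For condition~\ref{item:buchi:simplelasso:one}, each $\segment_\indexSegment$ with $\indexSegment\geq 1$ traverses only vertices occurring in $\suffix{\play'}{\indexPosition_0}$, where no target of a losing player is visited. The equality $\satpl{\play}=\satpl{\play'}$ then follows: every $t_\indexPlayer$ for $\indexPlayer\in\satpl{\play'}$ appears in each period of the tail, while for $\indexPlayer\notin\satpl{\play'}$, targets of $\indexPlayer$ occur only (possibly) within $\segment_0$ and thus finitely often in $\play$. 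Finally, $\play$ is an NE outcome by Theorem~\ref{thm:charac:reach:buchi}: every vertex of $\play$ already occurs in $\play'$, hence does not lie in $\winningI{\buchi{\target_\indexPlayer}}$ for any $\indexPlayer\notin\satpl{\play'}=\satpl{\play}$, as this property is inherited from $\play'$ being an NE outcome.

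The main obstacle is the coordinated choice of positions ensuring that all selected $t^{(j)}$ occurrences lie past $\indexPosition$ and that the linking infixes of $\play'$ remain inside $\suffix{\play'}{\indexPosition_0}$; the infinitely occurring vertex $\vertex^\star$ guarantees we can always push our choices arbitrarily far into the future, so this amounts to iterating a finite number of ``pick the next occurrence'' steps. The degenerate case $\numSegments=1$ deserves separate care because it must yield a simple cycle rather than a simple history (forbidden by non-triviality), and the possibility that several winning players share the same chosen target is exactly what permits $\numSegments<|\satpl{\play'}|$.
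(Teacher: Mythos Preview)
Your overall plan matches the paper's: pass to a suffix where no losing player's target appears, fix one representative target vertex per winning player, link these by simple histories using only vertices of that suffix, and conclude via Theorem~\ref{thm:charac:reach:buchi}. The verification of $\satpl{\play}=\satpl{\play'}$ and of the NE-outcome property is correct.

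There is, however, a genuine gap in the construction of the linking segments. You assert that for each $j$ there is an infix of $\play'$ starting at some occurrence of $t^{(j)}$ past $\indexPosition$ and ending at a \emph{later} occurrence of $t^{(j+1)}$. This is false in general: the vertices $t^{(j)}$ are fixed once and for all, and nothing guarantees that any particular $t^{(j)}$ occurs more than once in $\play'$. In particular, the segment closing the cycle, from $t^{(\numSegments)}$ back to $t^{(1)}$, need not arise as an infix, since $t^{(1)}$ may occur exactly once and before $t^{(\numSegments)}$. Your final paragraph invokes $\vertex^\star$ only to push the initial choices forward (``pick the next occurrence''), but sequentially choosing $t^{(1)},\ldots,t^{(\numSegments)}$ in increasing positions still cannot close the loop back to the already-fixed $t^{(1)}$.

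The fix is exactly how the paper uses $\vertex^\star$ substantively, as a \emph{hub}: choose each $t_\indexPlayer$ to occur after some occurrence of $\vertex^\star$ in $\suffix{\play'}{\indexPosition_0}$. Then from any $t^{(j)}$ one reaches $\vertex^\star$ via the forward infix (since $\vertex^\star$ recurs), and from $\vertex^\star$ one reaches any $t^{(j')}$ via the infix witnessing the choice of $t^{(j')}$. Concatenating these two infixes gives a history from $t^{(j)}$ to $t^{(j')}$ using only vertices of $\suffix{\play'}{\indexPosition_0}$, from which a simple history (or simple cycle when $j=j'$) can be extracted. The point is that the segments $\segment_j$ are simple histories in the arena over the right vertex set, not necessarily simplified infixes of $\play'$; once you allow this, your argument goes through unchanged.
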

\begin{proof}
  We let $\indexPosition\in\IN$ such that for all $\indexPlayer\in\playerSet\setminus\satpl{\play}$, no vertices of $\target_\indexPlayer$ occur in $\suffix{\play}{\indexPosition}$.
  For all $\indexPlayer\in\satpl{\play}$, we fix $\vertexAlt_\indexPlayer\in\target_\indexPlayer$ which appears in $\suffix{\play}{\indexPosition}$ later than an infinitely occurring vertex of $\suffix{\play}{\indexPosition}$.
  We let $t_1,\ldots, t_\numSegments$ denote the distinct elements of $\{\vertexAlt_\indexPlayer\mid\indexPlayer\in\satpl{\play'}\}$ (recall our assumption that $\satpl{\play}$ is non-empty).
  We distinguish two cases.

  First, assume that $\numSegments = 1$.
  We let $\segment_0$ be a simple history from $\vertex_0$ to $t_1$ in which only states of $\play$ occur ($\segment_0$ is a trivial history if $\vertex_0 = t_1$), and let $\segment_1$ be a simple cycle from $t_1$ to itself using only vertices occurring in $\suffix{\play}{\indexPosition}$.
  Let $\play$ be the play admitting the decomposition $(\segment_0, \segment_1, \segment_1, \ldots)$.
  By construction, we have $\satpl{\play} = \satpl{\play'}$, and Theorem~\ref{thm:charac:reach:buchi} implies that $\play$ is an NE outcome.
  Furthermore, $\play$ and its decomposition $(\segment_0, \segment_1, \segment_1, \ldots)$ satisfy~\ref{item:buchi:simplelasso:one},~\ref{item:buchi:simplelasso:two} and~\ref{item:buchi:simplelasso:three}.

  We now assume that $\numSegments\geq 2$.
  Once more, we define an NE outcome from a decomposition.
  We let $\segment_0$ be a (possibly trivial) simple history from $\vertex_0$ to $t_1$ using only vertices from $\play$.
  For all $\indexSegment\in\integerInterval{\numSegments-1}$, we let $\segment_\indexSegment$ be a simple history from $t_{\indexSegment}$ to $t_{\indexSegment+1}$.
  Similarly, we let $\segment_\numSegments$ be a simple non-trivial history from $t_{\numSegments}$ to $t_1$.
  Other segments are defined so~\ref{item:buchi:simplelasso:two} holds.
  Let $\decomp =  (\segment_0, \segment_1, \segment_2, \ldots)$.
  The play $\play$ admitting the decomposition $\decomp$ and $\decomp$ satisfy~\ref{item:buchi:simplelasso:one} and~\ref{item:buchi:simplelasso:three}.
  Furthermore, Theorem~\ref{thm:charac:reach:buchi} ensures that $\play$ is the outcome of an NE.  
\end{proof}

\subparagraph*{Finite-memory Nash equilibria.}
We now construct finite-memory NEs from the NE outcomes given by Lemma~\ref{lem:buchi:simplelasso}.
Assume that Lemma~\ref{lem:buchi:simplelasso} provides an NE outcome $\play$ from $\vertex_0\in\vertexSet$ and a decomposition $\decomp = (\segment_0, \segment_1, \ldots)$ of $\play$.
We implement the two-phase mechanism described in Section~\ref{section:buchi:ex}.
For the first phase, the players enforce the history $\segment_0$.
For the second phase, we switch to a strategy that is based on the periodic decomposition $\decomp' = (\segment_1, \segment_2, \ldots)$.
Finite memory suffices for the second phase thanks to the periodicity of $\decomp'$: we adapt the updates given in Definition~\ref{def:compatible mealy} such that when reading $\last{\segment_\numSegments}$ in memory states of the form $(\playerI, \numSegments)$, we update the memory to an appropriate memory state of the form $(\playerIAlt, 1)$.

By completing this behaviour such that players switch to memoryless punishing strategies if $\segment_0$ is not accurately simulated or if a player exits the current segment, we obtain a finite-memory NE.
The stability of the NE follows from Theorem~\ref{thm:charac:reach:buchi} for deviations that induce the use of punishing strategies and the property that no targets of losing players occur in segments $\segment_\indexSegment$, $\indexSegment\geq 1$, for in-segment deviations.
We formally present these finite-memory strategies in the proof of the following theorem.

\begin{restatable}{theorem}{thmBuchiNELasso}\label{thm:buchi:fmne:lasso}
  Let $\stratProfile'$ be an NE from a vertex $\vertex_0$ such that a vertex occurs infinitely often in its outcome.
  There exists a finite-memory NE $\stratProfile$ from $\vertex_0$ such that $\satpl{\outcome{\stratProfile}{\vertex_0}}=\satpl{\outcome{\stratProfile'}{\vertex_0}}$.
  If $\arena$ is finite, a memory size of at most $|\vertexSet| + \nPlayer^2+\nPlayer$ suffices.
\end{restatable}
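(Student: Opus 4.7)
\textbf{Proof proposal for Theorem~\ref{thm:buchi:fmne:lasso}.}
The plan is to first invoke Lemma~\ref{lem:buchi:simplelasso} to replace the given NE outcome with a well-structured one, and then build a finite-memory NE via a two-phase Mealy machine that prefixes a ``prefix-enforcement'' phase to a cyclic variant of the decomposition-based strategies from Section~\ref{section:fm:general:strat}. Explicitly, from $\stratProfile'$ we extract an NE outcome $\play = \concat{\segment_0}{\concat{\segment_1}{\concat{\segment_2}{\ldots}}}$ from $\vertex_0$ with $\satpl{\play}=\satpl{\outcome{\stratProfile'}{\vertex_0}}$, where $\segment_0$ is a (possibly trivial) simple history ending in some $t_1 \in \target_{\playerIndex_1}$ with $\playerIndex_1\in\satpl{\play}$, and $(\segment_\indexSegment)_{\indexSegment\geq 1}$ is periodic of period $\numSegments\leq|\satpl{\play}|\leq \nPlayer$, made of simple histories (or a single simple cycle if $\numSegments=1$) whose vertices are disjoint from $\bigcup_{\indexPlayer\notin\satpl{\play}}\target_\indexPlayer$.

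The Mealy machine for each player $\playerIndex$ has three kinds of memory states: (i) positions along the prefix $\segment_0$, giving at most $|\segment_0|$ states (bounded by $|\vertexSet|$ in the finite case since $\segment_0$ is simple); (ii) pairs $(\playerIAlt,\indexSegment)\in\playerSet\times\integerInterval{\numSegments}$ mimicking the structure of Section~\ref{section:fm:general:strat} but cyclically, i.e.\ upon reading $\last{\segment_\numSegments}$ from a state $(\playerIAlt,\numSegments)$ the update returns to a state of the form $(\cdot,1)$; and (iii) $\nPlayer$ absorbing punishment states $\{\playerI\mid\playerIndex\in\playerSet\}$. The transitions are as follows. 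In phase~(i), if the expected vertex is read, advance; if a deviation is observed, the memory moves permanently to the punishment state of the unique player whose move is inconsistent with $\segment_0$. In phase~(ii), we reuse $\mealyUpdateID$/$\mealyNextIDpI$ with the cyclic wrap, and we redirect any update from $(\playerIAlt,\indexSegment)$ reading a vertex outside $\segment_\indexSegment$ to the state $\playerIAlt$. In all punishment states $\playerIAlt$ with $\playerIndex'\neq\playerIndex$, the next-move function plays according to a memoryless uniformly winning strategy $\stratAltAdvIb$ of $\playerIAdvb$ in the coalition game $\game_{\playerIndex'}=(\arena_{\playerIndex'},\buchi{\target_{\playerIndex'}})$ (cf.\ Theorem~\ref{thm:qualitative:ML strat}). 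The total memory size is at most $|\segment_0|+\nPlayer\cdot\numSegments+\nPlayer\leq|\vertexSet|+\nPlayer^2+\nPlayer$ when $\arena$ is finite, and finite in general. By Lemma~\ref{lemma:coherence:template:extensions} (adapted to the cyclic update) and a direct verification for the prefix phase, the outcome of $\stratProfile$ from $\vertex_0$ is $\play$, so $\satpl{\outcome{\stratProfile}{\vertex_0}}=\satpl{\play}$.

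It remains to check that $\stratProfile$ is an NE. Fix $\playerIndex\in\playerSet$; players in $\satpl{\play}$ have no profitable deviation, so assume $\playerIndex\notin\satpl{\play}$. Let $\play''$ be consistent with $\stratIAdv$ from $\vertex_0$; we must show $\play''\notin\buchi{\target_\playerIndex}$. If $\play''$ stays coherent with the two-phase scheme forever then all vertices of $\play''$ occur in $\play$ and, by property~\ref{item:buchi:simplelasso:one} of Lemma~\ref{lem:buchi:simplelasso} (together with the fact that the targets of $\playerI$ appearing in $\segment_0$, if any, can be visited at most once along any coherent continuation, since $\segment_0$ is simple), $\target_\playerIndex$ is visited only finitely often, so $\play''\notin\buchi{\target_\playerIndex}$. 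Otherwise, $\play''$ enters some punishment state $\playerIAlt$ at some position $\indexPosition$; by construction of the updates, the only non-$\playerI$ player whose moves could cause this are the deviating ones, so $\playerIndex'=\playerIndex$, and from position $\indexPosition$ onward $\stratIAdv$ plays $\stratAltAdvI$. By Theorem~\ref{thm:charac:reach:buchi} applied to $\play$, every vertex of $\play$ lies outside $\winningI{\buchi{\target_\playerIndex}}$, and the prefix of $\play''$ up to position $\indexPosition$ is a coherent continuation so its last vertex also lies in $\play$. Hence $\suffix{\play''}{\indexPosition}$ starts in the winning region of the coalition and is consistent with their uniformly winning strategy, so $\suffix{\play''}{\indexPosition}\notin\buchi{\target_\playerIndex}$ and therefore $\play''\notin\buchi{\target_\playerIndex}$.

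The main obstacle I anticipate is the in-segment analysis in phase~(ii): one must argue carefully that a losing player $\playerI$ who deviates while remaining inside some $\segment_\indexSegment$ cannot nevertheless manipulate the subsequent dynamics to visit $\target_\playerIndex$ infinitely often, which is precisely what property~\ref{item:buchi:simplelasso:one} of Lemma~\ref{lem:buchi:simplelasso} is designed to preclude; a secondary subtlety is book-keeping when $\indexSegment$ wraps from $\numSegments$ to $1$, which should be handled by including the transition $(\playerIAlt,\numSegments)\xrightarrow{\last{\segment_\numSegments}}(\playerIAltAlt,1)$ in $\mealyUpdate$ and defining $\mealyNextIDpI$ at $(\playerIAlt,\numSegments)$ using the first edge of $\segment_1$.
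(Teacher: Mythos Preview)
Your proposal is correct and follows essentially the same approach as the paper's proof: invoke Lemma~\ref{lem:buchi:simplelasso}, then build a two-phase Mealy machine that first enforces the simple prefix $\segment_0$ and then runs a cyclic version of the decomposition-based scheme of Section~\ref{section:fm:general:strat}, with absorbing punishment states backed by memoryless coalition strategies. The only cosmetic difference is that the paper takes $\playerSubset=\playerSet\setminus\satpl{\play}$ (or $\{1\}$ if all players win) rather than all of $\playerSet$ for the player component of phase-(ii) states and for the punishment states; this keeps the machine slightly smaller but yields the same crude bound $|\vertexSet|+\nPlayer^2+\nPlayer$, and your NE argument goes through unchanged since in any play consistent with $\stratIAdv$ the only player who can trigger a transition to a punishment state is $\playerI$.
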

\begin{proof}
  Let $\play$ be an NE outcome obtained via  Lemma~\ref{lem:buchi:simplelasso} from $\outcome{\stratProfile'}{\vertex_0}$.
  We let $(\segment_0, \segment_1, \ldots)$ be the decomposition provided by the lemma, and let $\numSegments\leq|\satpl{\play}|$ be the least period of $(\segment_1, \segment_2, \segment_3, \ldots)$.
  We show that $\play$ is the outcome of a finite-memory NE.
  We first introduce some notation.
  Let $\playerSubset\subseteq\playerSet$ be $\playerSet\setminus\satpl{\play}$ if this set is not empty, or $\{1\}$ otherwise.
  For all $\indexPlayer\in\playerSubset$, we let $\stratAltAdvI$ be a memoryless uniformly winning strategy for the second player of the coalition game $\game_\indexPlayer = (\arena_\indexPlayer, \buchi{\target_\indexPlayer})$ (it exists by Theorem~\ref{thm:qualitative:ML strat}) and let $\winningIAdv{\cobuchi{\target_\indexPlayer}}$ denote the winning region of this player in $\game_\indexPlayer$.
  We write $\segment_0 = \vertex_0\ldots\vertex_\indexLast$.
  Finally, we define $\vertexSet_\playerSubset = \bigcup_{\indexPlayer\in\playerSubset}\vertexSetI$.

  For each $\indexPlayer\in\playerSet$, we define a Mealy machine $\mealyMachine_\indexPlayer= (\mealyStateSpace, \mealyStateInit, \mealyUpdate, \mealyNextI)$ as follows.
  We define $\mealyStateSpace$ as the set $\{\vertex_\indexPosition\in\vertexSet\mid 0\leq\indexPosition \leq \indexLast\}\cup (\{\playerI\mid\indexPlayer\in\playerSubset\}\times \integerInterval{\numSegments})\cup \{\playerI\mid\indexPlayer\in\playerSubset\}$ and let $\mealyStateInit=\vertex_0$.
  The memory states that are vertices correspond to the first phase of the informal explanation preceding the theorem, and the others to the second phase.
  We note that the memory bounds claimed for finite arenas follow from the simplicity of $\segment_0$ and $\numSegments\leq\nPlayer$.

  The update function $\mealyUpdate$ is defined as follows.
  \begin{enumerate}[(a)]
  \item For all $\indexPosition < \indexLast$, we let $\mealyUpdate(\vertex_\indexPosition, \vertex_\indexPosition) = \vertex_{\indexPosition+1}$.\label{enum:buchi-update:a}
  \item  We let $\mealyUpdate(\vertex_\indexLast, \vertex_\indexLast) = (\playerI, 1)$ where $\indexPlayer\in\playerSubset$ is such that $\playerI$ controls $\vertex_\indexLast$ if $\vertex_\indexLast\in\vertexSet_\playerSubset$ and $\indexPlayer$ is arbitrary otherwise.\label{enum:buchi-update:b}
  \item For all $\indexPosition \leq \indexLast$ and $\vertex\neq\vertex_\indexPosition$,\label{enum:buchi-update:c}
    \begin{itemize}
    \item if $\indexPosition\geq 1$ and there exists $\indexPlayer\in\playerSubset$ such that $\vertex_{\indexPosition-1}\in\vertexSet_{\indexPlayer}$ (in particular, $\vertex\in\vertexSet_\playerSubset$), we let $\mealyUpdate(\vertex_\indexPosition, \vertex) = \playerI$;
    \item otherwise, $\mealyUpdate(\vertex_\indexPosition, \vertex)$ is left arbitrary.
    \end{itemize}
  \item For all states of the form $(\playerI, \indexSegment)\in\mealyStateSpace$ and all $\vertex\in\vertexSet$ occurring in $\segment_\indexSegment$, we let $\mealyUpdate((\playerI, \indexSegment), \vertex) = (\playerIAlt, \indexSegment')$ where
    \begin{itemize}
    \item $\playerIAlt$ is the player controlling $\vertex$ if $\vertex\in\vertexSet_\playerSubset$ and otherwise, $\indexPlayer'=\indexPlayer$, and 
    \item $\indexSegment'=\indexSegment$ if $\vertex\neq\last{\segment_\indexSegment}$, and otherwise, if $\vertex=\last{\segment_\indexSegment}$, we set $\indexSegment'=\indexSegment+1$ if $\indexSegment < \numSegments$ and $\indexSegment'=1$ otherwise.
    \end{itemize}
  \item For all states of the form $(\playerI, \indexSegment)\in\mealyStateSpace$ and all $\vertex\in\vertexSet$ that do not occur in $\segment_\indexSegment$, we let $\mealyUpdate((\playerI, \indexSegment), \vertex)=\playerI$.
  \item Finally, for all $\indexPlayer\in\playerSubset$ and $\vertex\in\vertexSet$, we let $\mealyUpdate(\playerI, \vertex) = \playerI$.
  \end{enumerate}
  Points \ref{enum:buchi-update:a},~\ref{enum:buchi-update:b} and~\ref{enum:buchi-update:c} relate to the first phase of our construction: the Mealy machine checks that the current vertex matches the one it should be while following $\segment_0$ and switches to a special punishment state if a deviation is detected.

  Let $\indexPlayer\in\playerSubset$. We now define $\mealyNextI$.
  Let $\vertex\in\vertexSetI$.
  \begin{enumerate}[(a)]
  \item We first consider memory states of the form $\vertex_\indexPosition$.
    Fix $\indexPosition\leq\indexLast$.
    \begin{itemize}
    \item If $\vertex = \vertex_\indexPosition$ and $\indexPosition\neq\indexLast$, we let $\mealyNextI(\vertex_\indexPosition, \vertex)=\vertex_{\indexPosition+1}$.
    \item If $\vertex=\vertex_\indexPosition$ and $\indexPosition=\indexLast$, we let $\mealyNextI(\vertex_\indexLast, \vertex_\indexLast)$ be the second vertex of $\segment_1$ (i.e., the only vertex that follows $\vertex_\indexLast$ in $\segment_1$).
    \item We distinguish two cases whenever $\vertex\neq\vertex_\indexPosition$:
      \begin{itemize}
      \item if $\indexPosition\geq 1$ and there exists $\indexPlayer'\in\playerSubset$ such that $\vertex_{\indexPosition-1}\in\vertexSet_{\indexPlayer'}$ and $\indexPlayer'\neq\indexPlayer$, we let $\mealyNextI(\vertex_\indexPosition, \vertex)=\stratAltAdvIb(\vertex)$;
      \item we let $\mealyNextI(\vertex_\indexPosition, \vertex)$ be arbitrary in all other cases.
      \end{itemize}
    \end{itemize}
  \item We now deal with memory states of the form $(\playerIAlt, \indexSegment)$.
    Fix $\indexPlayer'\in\playerSubset$ and $\indexSegment\in\integerInterval{\numSegments}$.
    \begin{itemize}
    \item If $\vertex$ occurs in $\segment_\indexSegment$ and $\vertex\neq\last{\segment_\indexSegment}$, we let $\mealyNextI((\playerIAlt, \indexSegment), \vertex)$ be the vertex following $\vertex$ in $\segment_\indexSegment$.
    \item If $\vertex=\last{\segment_\indexSegment}$ and $\indexSegment < \numSegments$ (resp.~$\indexSegment=\numSegments$), we let $\mealyNextI((\playerIAlt, \indexSegment), \vertex)$ be the second vertex of $\segment_{\indexSegment+1}$ (resp.~$\segment_1$).
    \item If $\vertex$ does not occur in $\segment_\indexSegment$ and $\indexPlayer\neq\indexPlayer'$, we let $\mealyNextI((\playerIAlt, \indexSegment), \vertex) = \stratAltAdvIb(\vertex)$.
    \item If $\vertex$ does not occur in $\segment_\indexSegment$ and $\indexPlayer=\indexPlayer'$, we let $\mealyNextI((\playerIAlt, \indexSegment), \vertex)$ be arbitrary.
    \end{itemize}
    \item Finally, we let $\mealyNextI(\playerIAlt, \vertex) = \stratAltAdvIb(\vertex)$ if $\indexPlayer'\neq\indexPlayer$, and otherwise we let $\mealyNextI(\playerIAlt, \vertex)$ be arbitrary.
\end{enumerate}

  We let $\stratI$ be the strategy induced by $\mealyMachine_\indexPlayer$.
  It can be shown that the outcome of $\stratProfile=(\stratI)_{\indexPlayer\in\playerSet}$ is $\play$.
  We observe that $\widehat{\mealyUpdate}(\segment_0)$ is of the form $(\playerI, 1)$.  
  If $\numSegments > 1$, all segments in $(\segment_1, \segment_2, \ldots)$ are non-trivial histories, and we can adapt the argument for coherence from the proof of Lemma~\ref{lemma:coherence:template:extensions} to show that $\outcome{\stratProfile}{\vertex_0} = \play$.
  If $\numSegments = 1$, we can adapt this argument similarly, as it amounts to studying a strategy based on a trivial segment decomposition of a simple lasso.
  
  It remains to show that $\stratProfile$ is an NE from $\vertex_0$.
  It suffices to show that for all $\indexPlayer\in\playerSet\setminus\satpl{\play}$, $\playerI$ does not have a profitable deviation.
  We fix one such $\indexPlayer$.
  We recall that by Theorem~\ref{thm:charac:reach:buchi}, all vertices in $\play$ are elements of $\winningIAdv{\cobuchi{\target_\indexPlayer}}$.
  
  Let $\play'=\vertex_0'\vertex_1'\ldots$ be a play starting in $\vertex_0$ that is consistent with the strategy profile $\stratIAdv$.
  We consider three cases.
  First, assume that $\play'$ does not have $\segment_0$ as a prefix.
  Let $\indexPosition < \indexLast$ be such that $\prefix{\play'}{\indexPosition}$ is the longest common prefix of $\play'$ and $\segment_0$.
  We have that $\widehat{\mealyUpdate}(\prefix{\play'}{\indexPosition}) = \vertex_{\indexPosition+1}$.
  The definition of $\stratProfile$ and the relation $\vertex_{\indexPosition+1}'\neq \vertex_{\indexPosition+1}$ imply that $\vertex_\indexPosition\in\vertexSetI$.
  It follows that $\suffix{\play'}{\indexPosition}$ is a play consistent with $\stratAltAdvI$ starting in $\vertex_\indexPosition\in\winningIAdv{\cobuchi{\target_\indexPlayer}}$, thus $\suffix{\play'}{\indexPosition}\in\cobuchi{\target_\indexPlayer}$.
  We obtain that $\play'\in\cobuchi{\target_\indexPlayer}$, ending this first case.

  Second, assume that $\play'$ has $\segment_0$ as a prefix, and that for all vertices $\vertex$ occurring in $\suffix{\play'}{\indexLast}$, there is some $\indexSegment\geq 1$ such that $\vertex$ appears in some $\segment_\indexSegment$.
  Because there are no elements of $\target_\indexPlayer$ in these segments, it follows that $\play'\in\cobuchi{\target_\indexPlayer}$.

  Finally, assume that $\play'$ has $\segment_0$ as a prefix and some vertex appearing in $\suffix{\play'}{\indexLast}$ does not occur in any of the segments $\segment_\indexSegment$ with $\indexSegment\geq 1$.
  It follows that the memory state of the players relying on $\mealyMachine_\indexPlayer$ eventually becomes of the form $\playerIAlt$.
  Let $\indexPosition\in\IN$ be the largest number such that $\widehat{\mealyUpdate}(\prefix{\play'}{\indexPosition})$ is of the form $(\playerIAlt, \indexSegment)$.
  It holds that $\vertex_\indexPosition'$ occurs in $\segment_\indexSegment$ and $\vertex_{\indexPosition+1}'$ does not occur in $\segment_\indexSegment$ by choice of $\indexPosition$.
  The definition of $\stratProfile$ implies that $\vertex_\indexPosition'\in\vertexSetI$ (otherwise, $\vertex_{\indexPosition+1}'$ would occur in $\segment_\indexSegment$).
  We obtain that $\indexPlayer'=\indexPlayer$ and that $\suffix{\play'}{\indexPosition}$ is a play consistent with $\stratAltAdvI$ starting in $\vertex_\indexPosition'$.
  Because $\vertex_\indexPosition'$ occurs in $\segment_\indexSegment$, we have $\vertex_\indexPosition'\in\winningIAdv{\cobuchi{\target_\indexPlayer}}$.
  As in the first case, we obtain $\play'\in\cobuchi{\target_\indexPlayer}$, ending the proof.
\end{proof}

\begin{remark}\label{remark:buchi:ne-size}
  The classical approach to derive NEs from lasso NE outcomes is to encode the lasso into the memory (similarly to the first phase of the above finite-memory strategies).
  This approach thus also applies to the NE outcomes given by Lemma~\ref{lem:buchi:simplelasso}.
  If $|\vertexSet|$ is finite, the strategies result from this approach have a memory size of at most $(|\vertexSet|+2)\nPlayer$.
  Our construction thus yields smaller Mealy machines than the classical approach when there are (much) fewer players than vertices.
  \hfill$\lhd$
\end{remark}

\subsubsection{Outcomes without an infinitely occurring vertex}\label{section:buchi:ne:finitely occuring}
We now study NE outcomes in which no vertices occur infinitely often.
These outcomes are specific to infinite arenas.
We first present how to NE outcomes with a decomposition compatible with our approach from general NE outcomes.
We obtain NE outcomes with an aperiodic infinite decomposition.
We then present a variant of the previously used two-phase mechanism for these outcomes.
While we leave the first phase is unchanged, finite memory no longer suffices to implement a decomposition-based approach for the second phase.
Instead, we use a looser (but sufficient) variant of decomposition-based strategies to construct NEs.

\subparagraph*{Well-structured NE outcomes.}
We first provide a counterpart to Lemma~\ref{lem:buchi:simplelasso} for this case.
Let $\play'\in\playSet{\arena}$ be the outcome of an NE from $\vertex_0\in\vertexSet$ such that no vertex occurs infinitely often in $\play'$.
We observe that $\play'$ need not be a simple play: the objectives satisfied by $\play$ may require visiting targets that are not connected by a simple play or a lasso.
Therefore, simple plays do not suffice for our purpose, and we need memory to construct the outcome (even without taking profitable deviations into account).

We construct NE outcomes that can be obtained with two memory states.
More precisely, we derive an NE outcome $\play$ from $\play'$ that has an infinite simple segment decomposition $(\segment_0, \segment_1, \ldots)$ such that $\satpl{\play} = \satpl{\play'}$, for all $\indexSegment\neq\indexSegment'$ with the same parity, no vertex of $\segment_\indexSegment$ occurs in $\segment_{\indexSegment'}$ and, for all $\indexSegment\geq 1$, no targets of players whose objective is not satisfied by $\play'$ occurs in $\segment_\indexSegment$.
The condition on segments with the same parity implies that $\play$ is the outcome of a strategy profile with memory $\{1, 2\}$: intuitively, memory state $1$ is used when following an odd segment and $2$ is used when following an even segment.

We construct the decomposition $(\segment_0, \segment_1, \ldots)$ as follows.
The segment $\segment_0$ is obtained similarly to the previous section: there exists a position $\indexPosition\in\IN$ such that no vertices of $\target_\indexPlayer$ occur in $\suffix{\play'}{\indexPosition}$ for any $\indexPlayer\notin\satpl{\play'}$.
We let $\indexPosition_0\geq\indexPosition$ be such that $\vertex_{\indexPosition_0}\in\target_\indexPlayer$ for some $\indexPlayer\in\satpl{\play'}$, and choose $\segment_0$ to be a simple history that shares its first and last vertices with $\prefix{\play'}{\indexPosition_0}$ and that uses only vertices occurring in this prefix. 

The other segments are constructed by induction.
We explain how $\segment_1$ is defined from $\indexPosition_0$ to illustrate the idea of the general construction.
As no vertices appear infinitely often in $\play$, there exists some position $\indexPosition_1 > \indexPosition_0$ such that no vertex of $\prefix{\play}{\indexPosition_0}$ occurs in $\suffix{\play}{\indexPosition_1}$ and $\vertex_{\indexPosition_1}\in\target_\indexPlayer$ for some $\indexPlayer\in\satpl{\play}$.
We let $\segment_1$ be a simple history that starts in $\last{\segment_0}$, ends in $\last{\prefix{\play}{\indexPosition_1}}$ and uses only vertices that occur in the segment of $\play$ between positions $\indexPosition_0$ and $\indexPosition_1$.
If we construct $\segment_2$ similarly from $\indexPosition_1$ (by induction), then it shares no vertices with $\segment_0$ by choice of $\indexPosition_1$.
Proceeding with this inductive construction while ensuring that vertices of $\target_\indexPlayer$ occur infinitely often for all $\indexPlayer\in\satpl{\play}$, we obtain the desired decomposition.
Furthermore, the play described by this decomposition is an NE outcome by Theorem~\ref{thm:charac:reach:buchi}.

\begin{restatable}{lemma}{lemBuchiSimpleInfinite}\label{lem:buchi:simpleinfinite}
  Let $\play'$ be the outcome of an NE from $\vertex_0\in\vertexSet$ in the Büchi game $\game$ such that no vertex occurs infinitely often in $\play'$.
  Then there exists an NE outcome $\play$ from $\vertex_0$ with $\satpl{\play}=\satpl{\play'}$ such that
  $\play$ admits an infinite simple segment decomposition $(\segment_0, \segment_1, \ldots)$ such that
  \begin{enumerate}[(i)]
  \item for all $\indexSegment\geq 1$ and all $\indexPlayer\in\playerSet\setminus\satpl{\play}$, no vertex of $\target_\indexPlayer$ occurs in $\segment_\indexSegment$; \label{item:buchi:simpleinfinite:one}
  \item for all $\indexSegment\neq\indexSegment'$, $\segment_\indexSegment$ and $\segment_{\indexSegment'}$ have no vertices in common if $\indexSegment$ and $\indexSegment'$ have the same parity.\label{item:buchi:simpleinfinite:two}
  \end{enumerate}
\end{restatable}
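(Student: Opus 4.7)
The plan is to follow the sketch provided just before the statement and formalise it as an inductive construction of the decomposition $(\segment_0, \segment_1, \ldots)$. I would first fix an enumeration $\indexPlayer_1, \ldots, \indexPlayer_\numSegments$ of $\satpl{\play'}$ (non-empty by the global assumption) and pick a threshold $\indexPosition \in \IN$ such that no vertex of $\target_{\indexPlayer}$ for $\indexPlayer \notin \satpl{\play'}$ occurs in $\suffix{\play'}{\indexPosition}$; such a threshold exists because these targets are visited only finitely often in $\play'$. I would then pick $\indexPosition_0 \geq \indexPosition$ with $\vertex_{\indexPosition_0} \in \target_{\indexPlayer_1}$ and define $\segment_0$ as a simple history from $\vertex_0$ to $\vertex_{\indexPosition_0}$ using only vertices from $\prefix{\play'}{\indexPosition_0}$ (such a history exists since $\prefix{\play'}{\indexPosition_0}$ itself is such a history).

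Given $\indexPosition_{\indexSegment-1}$ and $\segment_{\indexSegment-1}$ for $\indexSegment \geq 1$, I would pick $\indexPosition_\indexSegment > \indexPosition_{\indexSegment-1}$ satisfying two conditions: (a) $\vertex_{\indexPosition_\indexSegment} \in \target_{\indexPlayer_{1 + (\indexSegment \bmod \numSegments)}}$ and (b) no vertex appearing in $\prefix{\play'}{\indexPosition_{\indexSegment-1}}$ occurs in $\suffix{\play'}{\indexPosition_\indexSegment}$. Condition (a) can be met for infinitely many values because every $\indexPlayer \in \satpl{\play'}$ has $\target_\indexPlayer$ visited infinitely often in $\play'$, and condition (b) can be met because the vertex set of $\prefix{\play'}{\indexPosition_{\indexSegment-1}}$ is finite and no vertex appears infinitely often in $\play'$, so each such vertex has a last occurrence. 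I would then let $\segment_\indexSegment$ be a simple history from $\last{\segment_{\indexSegment-1}} = \vertex_{\indexPosition_{\indexSegment-1}}$ to $\vertex_{\indexPosition_\indexSegment}$ using only vertices occurring in $\vertex_{\indexPosition_{\indexSegment-1}} \ldots \vertex_{\indexPosition_\indexSegment}$ (such a history exists since this infix itself witnesses it). I would set $\play$ to be the play admitting $(\segment_0, \segment_1, \ldots)$ as a decomposition.

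The verification then proceeds in three bullet points. First, property~\ref{item:buchi:simpleinfinite:one} holds because for $\indexSegment \geq 1$ the vertices of $\segment_\indexSegment$ all occur in $\suffix{\play'}{\indexPosition_0} \subseteq \suffix{\play'}{\indexPosition}$, which contains no target of a losing player. Second, for property~\ref{item:buchi:simpleinfinite:two}, I would establish the stronger statement that $\segment_\indexSegment$ and $\segment_{\indexSegment'}$ are disjoint whenever $|\indexSegment - \indexSegment'| \geq 2$: indeed, if $\indexSegment' \geq \indexSegment + 2$, then the vertices of $\segment_\indexSegment$ all lie in $\prefix{\play'}{\indexPosition_\indexSegment} \subseteq \prefix{\play'}{\indexPosition_{\indexSegment'-1}}$, whereas the vertices of $\segment_{\indexSegment'}$ all lie in $\suffix{\play'}{\indexPosition_{\indexSegment'-1}}$, and condition (b) at step $\indexSegment'-1$ ensures these two sets of vertices are disjoint. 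Same-parity disjointness is a direct consequence.

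Finally, I would verify that $\satpl{\play} = \satpl{\play'}$ and that $\play$ is an NE outcome. The round-robin choice in (a) guarantees that each $\target_{\indexPlayer_k}$ is visited infinitely often in $\play$, so $\satpl{\play'} \subseteq \satpl{\play}$; conversely, targets of losing players occur only in $\segment_0$ (a finite history), so $\satpl{\play} \subseteq \satpl{\play'}$. Since every vertex of $\play$ occurs in $\play'$, for each $\indexPlayer \notin \satpl{\play} = \satpl{\play'}$, no vertex of $\play$ lies in $\winningI{\buchi{\target_\indexPlayer}}$ (this is inherited from $\play'$ via Theorem~\ref{thm:charac:reach:buchi}), so the same theorem concludes that $\play$ is the outcome of an NE from $\vertex_0$. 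The main technical delicacy I expect is making the simultaneous conditions (a) and (b) on the choice of $\indexPosition_\indexSegment$ rigorous — in particular ensuring that infinitely many indices satisfying (a) lie past the last occurrence of every vertex in the growing finite set — but this is immediate once one observes that the set of vertices to avoid at each step is finite and that targets of winning players are visited infinitely often.
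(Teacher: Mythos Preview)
Your proposal is correct and follows essentially the same inductive construction as the paper: choose positions $\indexPosition_0 < \indexPosition_1 < \ldots$ in $\play'$ cycling through targets of winning players, with each $\indexPosition_\indexSegment$ chosen far enough that no vertex of $\prefix{\play'}{\indexPosition_{\indexSegment-1}}$ recurs in $\suffix{\play'}{\indexPosition_\indexSegment}$, then take simple histories between consecutive positions and conclude via Theorem~\ref{thm:charac:reach:buchi}. One minor indexing slip: in your verification of~\ref{item:buchi:simpleinfinite:two} you should use the tighter inclusion $\prefix{\play'}{\indexPosition_\indexSegment}\subseteq\prefix{\play'}{\indexPosition_{\indexSegment'-2}}$ (valid since $\indexSegment\leq\indexSegment'-2$), as condition~(b) at step $\indexSegment'-1$ separates $\prefix{\play'}{\indexPosition_{\indexSegment'-2}}$ from $\suffix{\play'}{\indexPosition_{\indexSegment'-1}}$, not $\prefix{\play'}{\indexPosition_{\indexSegment'-1}}$.
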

\begin{proof}
  We write $\play'=\vertex_0\vertex_1\ldots$ 
  For convenience of notation, we assume that $\satpl{\play'} = \integerInterval{\numSegments}$ where $\numSegments = |\satpl{\play'}|\geq 1$.
  We define the sought outcome $\play$ via an infinite segment decomposition.

  We let $\indexPosition_0$ be the minimum $\indexPosition\in\IN$ such that $\vertex_\indexPosition\in\target_1$ and, for all $\indexPlayer\in\playerSet\setminus\satpl{\play}$, no vertices of $\target_\indexPlayer$ occur in $\suffix{\play}{\indexPosition}$.
  We let $\segment_0$ be a simple history from $\vertex_0$ to $\vertex_{\ell_0}$ that uses only vertices occurring in $\prefix{\play'}{\ell_0}$.

  We now assume that segments $\segment_0$, \ldots, $\segment_\indexSegment$ and positions $\ell_0$, \ldots, $\ell_\indexSegment$ are defined.
  We assume by induction that (a) for all $\indexSegment'\leq \indexSegment$, $\last{\segment_{\indexSegment'}}\in\target_{\indexSegment'\bmod\numSegments + 1}$, (b) for all $\indexSegment'\leq \indexSegment$, $\segment_{\indexSegment'}$ contains only vertices occurring in $\prefix{\play'}{\ell_{\indexSegment'}}$, (c) for all $\indexSegment' < \indexSegment$, no vertices of $\prefix{\play'}{\ell_{\indexSegment'}}$ occur in $\suffix{\play'}{\ell_{\indexSegment'+1}}$, and (d) for all $\indexSegment''\leq\indexSegment'-2$, $\segment_{\indexSegment'}$ and $\segment_{\indexSegment''}$ have no vertices in common.
  
  We define $\ell_{\indexSegment+1}$ and $\segment_{\indexSegment+1}$ as follows.
  There exists some $\ell$ such that no vertex of $\prefix{\play'}{\ell_\indexSegment}$ occurs in $\suffix{\play'}{\ell}$. We choose $\ell_{\indexSegment+1}>\ell$ such that $\vertex_{\ell_{\indexSegment+1}}\in\target_{(\indexSegment'+1)\bmod\numSegments + 1}$.
  We let $\segment_{\indexSegment+1}$ be a simple history from $\vertex_{\ell_\indexSegment}$ to $\vertex_{\ell_{\indexSegment+1}}$ that uses only vertices occurring in the segment $\vertex_{\ell_\indexSegment}\ldots\vertex_{\ell_{\indexSegment+1}}$ of $\play'$.
  
  We show that the induction hypothesis is preserved by this choice.
  Properties (a), (b) and (c) hold by definition.
  We show that (d) holds, i.e., that for all $\indexSegment' \leq \indexSegment - 1$, $\segment_{\indexSegment+1}$ and $\segment_{\indexSegment'}$ have no vertices in common.
  Let $\indexSegment' \leq \indexSegment - 1$.
  It holds that the vertices occurring in $\segment_{\indexSegment+1}$ all appear in $\suffix{\play'}{\ell_\indexSegment}$.
  By induction, all vertices of $\segment_{\indexSegment'}$ occur in $\prefix{\play'}{\ell_{\indexSegment'}}$, and none of these vertices occur in $\suffix{\play'}{\ell_\indexSegment}$.
  This ends the inductive construction.

  To end the proof, we note that the play $\play = \concat{\segment_0}{\concat{\segment_1}{\ldots}}$ is the outcome of an NE by Theorem~\ref{thm:charac:reach:buchi}: all vertices appearing in $\play$ appear in $\play'$ and by construction, $\satpl{\play}=\satpl{\play'}$.
\end{proof}

\subparagraph*{Finite-memory Nash equilibria.}
We now show that the plays given by Lemma~\ref{lem:buchi:simpleinfinite} are outcomes of finite-memory NEs.
Let $\stratProfile'$ be an NE from a vertex $\vertex_0\in\vertexSet$.
We apply Lemma~\ref{lem:buchi:simpleinfinite} to $\outcome{\stratProfile'}{\vertex_0}$ to obtain a play $\play$ and a decomposition $\decomp = (\segment_0, \segment_1, \ldots)$ of $\play$ satisfying~\ref{item:buchi:simpleinfinite:one} and~\ref{item:buchi:simpleinfinite:two} of Lemma~\ref{lem:buchi:simpleinfinite}.

We define a finite-memory NE from $\vertex_0$ whose outcome is $\play$ via the two-phase mechanism previously used in the proof of Theorem~\ref{thm:buchi:fmne:lasso}.
The first phase is defined identically to the previous section (see the proof of Theorem~\ref{thm:buchi:fmne:lasso}).
For the second phase, we use a variation of the Mealy machines described in Definition~\ref{def:compatible mealy}.
In said definition, for each segment index $\indexSegment$ and each memory state of the form $(\playerI, \indexSegment)$ , we refer to the simple segment $\segment_\indexSegment$ to define memory updates and next moves.
This approach no longer yields a finite-memory strategy with the infinite decomposition $\decomp$.
Instead of having a group of memory states for each segment, we use a group of memory states for each parity.
More precisely, we use memory states of the form $(\playerI, 1)$ for all odd segments and memory states of the form $(\playerI, 2)$  for all even segments besides $\segment_0$ -- the second component of the memory state indicates the parity of the current segment.
When the end of an odd (resp.~even) segment is reached in a memory state of the form $(\playerI, 1)$  (resp.~$(\playerI, 2)$), the memory is updated to a state of the form $(\playerIAlt, 2)$ (resp.~$(\playerI, 1)$).
We obtain a well-defined next-move function because segments of $\decomp$ with the same parity traverse pairwise disjoint sets of vertices.

If at some point in the second phase, a vertex that does not occur in an odd segment is read in a memory state $(\playerI, 1)$, the memory is updated to a punishing state $\playerI$, such that players attempt to punish $\playerI$ with a memoryless strategy from a coalition game.
We proceed similarly for the even case.
The resulting finite-memory strategy profile is an NE from $\vertex_0$.
On the one hand, any deviation such that the memory never updates to a punishing state must only have vertices that occur segments of $\decomp$ with a non-zero index in the limit.
By choice of $\decomp$, this deviation cannot be profitable.
Otherwise, it can be shown with Theorem~\ref{thm:charac:reach:buchi} that the punishing strategy successfully sabotages the deviating player whenever their objective is not satisfied in $\play$.
We formally describe the construction above and establish its correctness below.

\begin{restatable}{theorem}{thmBuchiNENoLasso}\label{thm:buchi:fmne:nolasso}
  Let $\stratProfile'$ be an NE from a vertex $\vertex_0$ such that no vertex occurs infinitely often in $\outcome{\stratProfile'}{\vertex_0}$.
  There exists a finite-memory NE $\stratProfile$ from $\vertex_0$ such that $\satpl{\outcome{\stratProfile}{\vertex_0}}=\satpl{\outcome{\stratProfile'}{\vertex_0}}$.
\end{restatable}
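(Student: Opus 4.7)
The plan is to adapt the two-phase construction of Theorem~\ref{thm:buchi:fmne:lasso} to the aperiodic setting provided by Lemma~\ref{lem:buchi:simpleinfinite}. First, I would apply Lemma~\ref{lem:buchi:simpleinfinite} to $\outcome{\stratProfile'}{\vertex_0}$ to obtain an NE outcome $\play$ from $\vertex_0$ with $\satpl{\play} = \satpl{\stratProfile'}$ and an infinite simple decomposition $\decomp = (\segment_0, \segment_1, \ldots)$ satisfying properties~\ref{item:buchi:simpleinfinite:one} and~\ref{item:buchi:simpleinfinite:two} of that lemma. I define $\playerSubset = \playerSet \setminus \satpl{\play}$ if non-empty, $\{1\}$ otherwise, and for each $\indexPlayer \in \playerSubset$ fix a memoryless uniformly winning strategy $\stratAltAdvI$ of the coalition in the game $\game_\indexPlayer = (\arena_\indexPlayer, \buchi{\target_\indexPlayer})$ (Theorem~\ref{thm:qualitative:ML strat}). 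Write $\segment_0 = \vertex_0\ldots\vertex_\indexLast$.

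For each $\indexPlayer \in \playerSet$, I build a Mealy machine $\mealyMachine_\indexPlayer = (\mealyStateSpace, \mealyStateInit, \mealyUpdate, \mealyNextI)$ with state space
\[\mealyStateSpace = \{\vertex_0, \ldots, \vertex_\indexLast\} \cup (\{\playerI \mid \indexPlayer \in \playerSubset\} \times \{1, 2\}) \cup \{\playerI \mid \indexPlayer \in \playerSubset\}\]
and $\mealyStateInit = \vertex_0$. The first group of states handles the enforcement of $\segment_0$ exactly as in the proof of Theorem~\ref{thm:buchi:fmne:lasso}: on the expected input $\vertex_\indexPosition$ from state $\vertex_\indexPosition$, the machine advances to $\vertex_{\indexPosition+1}$ (and to an appropriate state $(\playerI, \indexParity)$ when $\indexPosition = \indexLast$, where $\indexParity \in \{1,2\}$ is chosen so that $\vertex_\indexLast$ belongs to some $\segment_\indexSegment$ with matching parity); any deviation transitions to the punishment state $\playerI$ for the owner of the preceding vertex (when that owner is in $\playerSubset$). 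The crucial new ingredient is the second phase. In state $(\playerIAlt, \indexParity)$, if the input $\vertex$ occurs in some segment $\segment_\indexSegment$ with $\indexSegment \geq 1$ of parity $\indexParity$, the update sets the memory to $(\playerIAltAlt, \indexParity')$, where $\playerIAltAlt$ records the owner of $\vertex$ (when in $\playerSubset$), and $\indexParity' = 3 - \indexParity$ if $\vertex = \last{\segment_\indexSegment}$, otherwise $\indexParity' = \indexParity$; if $\vertex$ does not occur in any segment with parity $\indexParity$ and index $\geq 1$, the memory transitions to the punishment state $\playerIAlt$. The punishment states are absorbing. The next-move function in $(\playerIAlt, \indexParity)$ follows the unique continuation inside the relevant segment $\segment_\indexSegment$ (well-defined by property~\ref{item:buchi:simpleinfinite:two}: segments of the same parity use disjoint vertex sets, so $\segment_\indexSegment$ is determined by $\vertex$ and $\indexParity$) and plays $\stratAltAdvIb(\vertex)$ whenever $\playerIndex' \neq \playerIndex$ and $\vertex$ falls outside all such segments; in the punishment states $\playerIAlt$, it plays $\stratAltAdvIb$ when $\indexPlayer' \neq \indexPlayer$.

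The proof then proceeds in two verifications. For the outcome, an induction along $\play$ (analogous to Lemma~\ref{lemma:coherence:template:extensions}, replacing $\indexSegment$-coherence by \emph{parity-coherence}: the memory reached after reading the prefix of $\play$ through segment $\segment_\indexSegment$ is of the form $(\playerIAlt, \indexSegment \bmod 2)$ with $\indexParity$ encoding the parity of the current segment) shows that $\outcome{\stratProfile}{\vertex_0} = \play$. For stability, fix $\indexPlayer \notin \satpl{\play}$ and a play $\play'$ from $\vertex_0$ consistent with $\stratIAdv$; by Theorem~\ref{thm:charac:reach:buchi}, every vertex of $\play$ lies in $\winningIAdv{\cobuchi{\target_\indexPlayer}}$. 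Three cases, mirroring the proof of Theorem~\ref{thm:buchi:fmne:lasso}, cover all possibilities: (a) $\play'$ deviates from the prefix $\segment_0$, in which case a suffix is consistent with $\stratAltAdvI$ from a vertex in $\winningIAdv{\cobuchi{\target_\indexPlayer}}$; (b) $\play'$ extends $\segment_0$ and every vertex of $\suffix{\play'}{\indexLast}$ lies in some $\segment_\indexSegment$ with $\indexSegment \geq 1$, in which case property~\ref{item:buchi:simpleinfinite:one} directly gives $\play' \in \cobuchi{\target_\indexPlayer}$; (c) $\play'$ extends $\segment_0$ but eventually reads a vertex outside all segments of the current parity, triggering an update to a punishment state $\playerI$, from which a suffix of $\play'$ is consistent with $\stratAltAdvI$ starting in a vertex of $\winningIAdv{\cobuchi{\target_\indexPlayer}}$ (again using that the last vertex read before the switch lay in a segment of $\play$).

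The main obstacle is making the second phase well-defined and correct with only two parity states, rather than one state per segment. This relies entirely on property~\ref{item:buchi:simpleinfinite:two} of Lemma~\ref{lem:buchi:simpleinfinite}: segments of equal parity traverse disjoint vertex sets, so given a memory state $(\playerIAlt, \indexParity)$ and a vertex $\vertex$ occurring in some segment of parity $\indexParity$, the segment containing $\vertex$ is unique, which determines the next move unambiguously. Verifying that the parity-coherence invariant is maintained by $\mealyUpdate$ along $\play$, and that every detected escape from the current parity class indeed witnesses a deviation of a single player that can then be punished using $\stratAltAdvI$, is the technical heart of the argument; once this is checked, stability follows as in the corresponding parts of the proofs of Theorems~\ref{thm:reach:ne} and~\ref{thm:buchi:fmne:lasso}.
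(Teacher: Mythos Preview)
Your proposal is correct and follows essentially the same approach as the paper: apply Lemma~\ref{lem:buchi:simpleinfinite}, build a Mealy machine whose first phase enforces $\segment_0$ and whose second phase tracks only the parity of the current segment together with the last player in $\playerSubset$ to act, relying on property~\ref{item:buchi:simpleinfinite:two} to make this well-defined, and then run the same three-case stability analysis. The paper organises the construction with the auxiliary sets $S_1,S_2,L_1,L_2$ and spells out the transition $\mealyUpdate(\vertex_\indexLast,\vertex_\indexLast)$ a bit more carefully, but the content is identical to what you outline.
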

\begin{proof}
  Let $\play$ be an NE outcome obtained via Lemma~\ref{lem:buchi:simpleinfinite} from $\outcome{\stratProfile'}{\vertex_0}$.
  We let $\decomp = (\segment_0, \segment_1, \ldots)$ be the decomposition provided by the lemma.
  We note that all segments in $\decomp$ besides $\segment_0$ are non-trivial simple histories; if $\segment_\indexSegment$ is trivial or a cycle for some $\indexSegment\geq 1$, then $\last{\segment_{\indexSegment-1}} = \first{\segment_{\indexSegment+1}}$, which contradicts the fact that segments with the same parity in $\decomp$ do not have vertices in common.
  We prove that $\play$ is the outcome of a finite-memory NE.
  The construction below is an adaptation of the proof of Theorem~\ref{thm:buchi:fmne:lasso}.
  
  We introduce some notation first.
  Let $\playerSubset\subseteq\playerSet$ be $\playerSet\setminus\satpl{\play}$ if this set is not empty, or $\{1\}$ otherwise.
  For all $\indexPlayer\in\playerSubset$, we let $\stratAltAdvI$ be a memoryless uniformly winning strategy for the second player of the coalition game $\game_\indexPlayer = (\arena_\indexPlayer, \buchi{\target_\indexPlayer})$ (it exists by Theorem~\ref{thm:qualitative:ML strat}) and let $\winningIAdv{\cobuchi{\target_\indexPlayer}}$ denote the winning region of this player in $\game_\indexPlayer$.
  We write $\segment_0 = \vertex_0\ldots\vertex_\indexLast$.
  Finally, we define $\vertexSet_\playerSubset = \bigcup_{\indexPlayer\in\playerSubset}\vertexSetI$, $S_1$ (resp.~$S_2$) to be the vertices occurring in segments $\segment_\indexSegment$ with odd $\indexSegment$ (resp.~even $\indexSegment\geq 2$), $L_1 = \{\last{\segment_\indexSegment}\mid \indexSegment\in 2\IN+1\}$ and $L_2 = \{\last{\segment_\indexSegment}\mid \indexSegment\in 2\IN+2\}$ be the set of last vertices of odd and positive even segments respectively.

  For each $\indexPlayer\in\playerSet$, we define a Mealy machine $\mealyMachine_\indexPlayer= (\mealyStateSpace, \mealyStateInit, \mealyUpdate, \mealyNextI)$ as follows.
  We define $\mealyStateSpace$ as the set $\{\vertex_\indexPosition\mid 0\leq\indexPosition \leq \indexLast\}\cup (\{\playerI\mid\indexPlayer\in\playerSubset\}\times \integerInterval{2})\cup \{\playerI\mid\indexPlayer\in\playerSubset\}$ and $\mealyStateInit=\vertex_0$.

  The update function $\mealyUpdate$ is defined as follows.
  \begin{enumerate}[(a)]
  \item For all $\indexPosition < \indexLast$, we let $\mealyUpdate(\vertex_\indexPosition, \vertex_\indexPosition) = \vertex_{\indexPosition+1}$.
\item For all $\indexPosition \leq \indexLast$ and $\vertex\neq\vertex_\indexPosition$,
    \begin{itemize}
    \item if $\indexPosition\geq 1$ and there exists $\indexPlayer\in\playerSubset$ such that $\vertex_{\indexPosition-1}\in\vertexSet_{\indexPlayer}$ (i.e., $\vertex\in\vertexSet_\playerSubset$), we let $\mealyUpdate(\vertex_\indexPosition, \vertex) = \playerI$;
    \item otherwise, $\mealyUpdate(\vertex_\indexPosition, \vertex)$ is left arbitrary.
    \end{itemize}
  \item For all states of the form $(\playerI, \indexParity)\in\mealyStateSpace$ and all $\vertex\in S_\indexParity$, we let $\mealyUpdate((\playerI, \indexParity), \vertex) = (\playerIAlt, \indexParity')$ where
    \begin{itemize}
      \item $\playerIAlt$ is the player controlling $\vertex$ if $\vertex\in\vertexSet_\playerSubset$ and otherwise, $\indexPlayer'=\indexPlayer$, and 
      \item $\indexParity'=\indexParity$ if $\vertex\notin L_\indexParity$, and otherwise we set $\indexParity'=3 - \indexParity$ (i.e., if $\indexParity=1$, it becomes $2$ and vice-versa).
      \end{itemize}
    \item For all states of the form $(\playerI, \indexParity)\in\mealyStateSpace$ and all $\vertex\in\vertexSet\setminus S_\indexParity$, we let $\mealyUpdate((\playerI, \indexParity), \vertex)=\playerI$.
    \item We let $\mealyUpdate(\vertex_\indexLast, \vertex_\indexLast) = \mealyUpdate((\playerI, 1), \vertex_\indexLast)$ for any $\indexPlayer\in\playerSubset$ (the choice of $\indexPlayer$ here does not impact the following argument because $\vertex_\indexLast\in S_1$).
    \item Finally, for all $\indexPlayer\in\playerSubset$ and $\vertex\in\vertexSet$, we let $\mealyUpdate(\playerI, \vertex) = \playerI$.
\end{enumerate}

  Let $\indexPlayer\in\playerSubset$. We now define $\mealyNextI$.
  Let $\vertex\in\vertexSetI$.
  \begin{enumerate}[(a)]
  \item We first consider memory states of the form $\vertex_\indexPosition$.
    Fix $\indexPosition\leq\indexLast$.
    \begin{itemize}
    \item If $\vertex = \vertex_\indexPosition$ and $\indexPosition\neq\indexLast$, we let $\mealyNextI(\vertex_\indexPosition, \vertex)=\vertex_{\indexPosition+1}$.
    \item If $\vertex = \vertex_\indexPosition$ and $\indexPosition=\indexLast$, we let $\mealyNextI(\vertex_\indexLast, \vertex_\indexLast)$ be the second vertex of $\segment_1$ (which exists because $\segment_1$ is non-trivial).
    \item We distinguish two cases whenever $\vertex\neq\vertex_\indexPosition$:
      \begin{itemize}
      \item if $\indexPosition\geq 1$ and there exists $\indexPlayer'\in\playerSubset$ such that $\vertex_{\indexPosition-1}\in\vertexSet_{\indexPlayer'}$ and $\indexPlayer'\neq\indexPlayer$, we let $\mealyNextI(\vertex_\indexPosition, \vertex)=\stratAltAdvIb(\vertex)$;
      \item we let $\mealyNextI(\vertex_\indexPosition, \vertex)$ be arbitrary in all other cases.
      \end{itemize}
    \end{itemize}
  \item  We now deal with memory states of the form $(\playerIAlt, \indexParity)$.
    Fix $\indexPlayer'\in\playerSubset$ and $\indexParity\in\integerInterval{2}$.
    \begin{itemize}
    \item If $\vertex\in S_\indexParity\setminus L_\indexParity$, we set $\mealyNextI((\playerIAlt, \indexParity), \vertex)$ to the vertex following $\vertex$ in $\segment_\indexSegment$, where $\indexSegment\in 2\IN+p$ is such that $\vertex$ occurs in $\segment_\indexSegment$ (such a $\indexSegment$ is unique because all segments with the same parity traverse disjoint sets of vertices).
  \item If $\vertex\in L_\indexParity$, we let $\mealyNextI((\playerIAlt, \indexParity), \vertex)$ be the second vertex of $\segment_{\indexSegment}$ where $\indexSegment\in 2\IN+ (3-p)$ is such that $\vertex=\first{\segment_\indexSegment}$ ( such a $\indexSegment$ is unique and $\segment_\indexSegment$ has at least two vertices because it is non-trivial).
  \item  If $\vertex\notin S_\indexParity$ and $\indexPlayer'\neq\indexPlayer$, we let $\mealyNextI((\playerIAlt, \indexParity), \vertex) = \stratAltAdvIb(\vertex)$.
  \item If $\vertex\notin S_\indexParity$ and $\indexPlayer'=\indexPlayer$, we let $\mealyNextI((\playerIAlt, \indexParity), \vertex)$ be arbitrary.
  \end{itemize}
\item Finally, we let $\mealyNextI(\playerIAlt, \vertex) = \stratAltAdvIb(\vertex)$ if $\indexPlayer'\neq\indexPlayer$, and otherwise we let $\mealyNextI(\playerIAlt, \vertex)$ be arbitrary.
\end{enumerate}
  We let $\stratI$ be the strategy induced by $\mealyMachine_\indexPlayer$.
  It can be shown by induction that the outcome of $\stratProfile=(\stratI)_{\indexPlayer\in\playerSet}$ is $\play$.
  We omit the proof here; it is very close to the argument for coherence in the proof of Lemma~\ref{lemma:coherence:template:extensions}.
  
  We now show that $\stratProfile$ is an NE from $\vertex_0$ by proving that for all $\indexPlayer\in\playerSet\setminus\satpl{\play}$, $\playerI$ does not have a profitable deviation.
  We fix $\indexPlayer\in\playerSet\setminus\satpl{\play}$.
  By Theorem~\ref{thm:charac:reach:buchi}, all vertices in $\play$ are in $\winningIAdv{\cobuchi{\target_\indexPlayer}}$.
  
  Let $\play'=\vertex_0'\vertex_1'\ldots$ be a play starting in $\vertex_0$ that is consistent with the strategy profile $\stratIAdv$.
  We consider three cases.
  First, assume that $\play'$ does not have $\segment_0$ as a prefix.
  We conclude that $\play'\in\cobuchi{\target_\indexPlayer}$ in the same way as in the proof of Theorem~\ref{thm:buchi:fmne:lasso}.
  Second, assume that $\play'$ has $\segment_0$ as a prefix, and all vertices $\vertex$ occurring in $\suffix{\play'}{\indexLast}$ are elements of $S_1\cup S_2$. 
  Because $S_1$ and $S_2$ do not intersect $\target_\indexPlayer$ (\ref{item:buchi:simpleinfinite:one} in Lemma~\ref{lem:buchi:simpleinfinite}), it follows that $\play'\in\cobuchi{\target_\indexPlayer}$.

  Finally, assume that $\play'$ has $\segment_0$ as a prefix and some vertex appearing in $\suffix{\play'}{\indexLast}$ is not an element of $S_1\cup S_2$.
  It follows that the memory state of the players relying on $\mealyMachine_\indexPlayer$ eventually becomes of the form $\playerIAlt$.
  Let $\indexPosition\in\IN$ be the largest number such that $\widehat{\mealyUpdate}(\prefix{\play'}{\indexPosition})$ is of the form $(\playerIAlt, \indexParity)$.
  It holds that $\vertex_\indexPosition'\in S_\indexParity$ and $\vertex_{\indexPosition+1}'\notin S_\indexParity$ by choice of $\indexPosition$.
  It follows that $\vertex_\indexPosition'\in\vertexSetI$ by definition of $\stratProfile$ (otherwise, $\vertex_{\indexPosition+1}'$ would be an element of $S_\indexParity$).
  We obtain that $\indexPlayer'=\indexPlayer$ and that $\suffix{\play'}{\indexPosition}$ is a play consistent with $\stratAltAdvI$ starting in $\vertex_\indexPosition'$.
  Because $\vertex_\indexPosition'\in S_\indexParity$, we have $\vertex_\indexPosition'\in\winningIAdv{\cobuchi{\target_\indexPlayer}}$.
  Therefore, $\play'\in\cobuchi{\target_\indexPlayer}$.
  This ends the proof.
\end{proof}

\subsection{Nash equilibria in co-Büchi games}\label{section:buchi:cobuchi-ne}

In this section, we show that from any NE in a co-Büchi game, we can derive a finite-memory NE from the same initial vertex such that the objectives satisfied in the outcome of the first NE are also satisfied by the outcome of the second one.
We build finite-memory NEs from well-structured NE outcomes with a variant of the two-phase approach used in the previous section.
We use a simplification process to construct well-structured NE outcomes.
In Section~\ref{section:co-buchi:well-structured}, we present how we simplify NE outcomes in co-Büchi games.
We build finite-memory NEs from these outcomes in Section~\ref{section:co-buchi:ne}.

We let $\game = (\arena, (\cobuchi{\target_\indexPlayer})_{\indexPlayer\in\playerSet})$ for the remainder of the section.

\subsubsection{Well-structured Nash equilibrium outcomes}\label{section:co-buchi:well-structured}

In safety games, players can exploit the decomposition-based behaviour of players to obtain profitable deviations (see Example~\ref{example:safe:in-segment}) if no additional conditions are imposed on NE outcomes (e.g., those of Lemma~\ref{lem:safety:simple decomposition}).
These deviations are not harmful to players whose objectives are already satisfied.

In co-Büchi games, these deviations can be harmful: if we use a fully decomposition-based approach to define strategies from NE outcomes, plays resulting from profitable in-segment deviations may violate some objectives satisfied by the original NE outcome (Example~\ref{example:buchi:one}).
This motivates the use of a two-phase approach in co-Büchi games: we adopt a decomposition-based behaviour once in-segment deviations can no longer be harmful to winning players.
Formally, we show that from any NE outcome $\play'$ in $\game$, we can derive an NE outcome $\play$ from $\first{\play'}$ that is a simple play or a simple lasso such that $\satpl{\play'}\subseteq\satpl{\play}$, and there are no plays $\play''$ from $\first{\suffix{\play}{\indexLast^\star}}$ such that $\satpl{\play}\subsetneq\satpl{\play''}$ and $\play''$ uses only vertices appearing in $\suffix{\play}{\indexLast^\star}$ where $\indexLast^\star$ is the smallest $\indexLast\in\IN$ such that no targets of players in $\satpl{\play}$ appear in $\suffix{\play}{\indexLast}$.
The prefix $\prefix{\play}{\indexLast^\star}$ corresponds to the first phase and the suffix $\suffix{\play}{\indexLast^\star}$ to the second phase of our approach, and the condition imposed on $\play$ guarantees the absence of profitable deviations.
To construct $\play$, we proceed by induction: we repeatedly simplify $\play'$ into well-chosen simple lasso or simple plays until we obtain a suitable play.
The proof of the following lemma provides a formal construction. 

\begin{lemma}\label{lem:cobuchi:outcomes}
  Let $\play'$ be the outcome of an NE from $\vertex_0\in\vertexSet$ in the co-Büchi game $\game$.
  There exists an NE outcome $\play$ from $\vertex_0$ such that
  \begin{enumerate}[(i)]
  \item $\play$ is a simple play or a simple lasso;\label{item:cobuchi:decomp:one}
\item there are no plays of the form $\concat{\prefix{\play}{\indexLast^\star}}{\play''}$ such that only vertices of $\suffix{\play}{\indexLast^\star}$ occur in $\play''$ and $\satpl{\play}\subsetneq\satpl{\play''}$ where $\indexLast^\star=\min\{\indexLast\in\IN\mid \forall\indexPosition\geq\indexLast,\,\forall\indexPlayer\in\satpl{\play},\,\vertex_\indexPosition\notin\target_\indexPlayer\}$; and\label{item:cobuchi:decomp:three}
  \item $\satpl{\play'}\subseteq\satpl{\play}$.\label{item:cobuchi:decomp:four}
  \end{enumerate}
\end{lemma}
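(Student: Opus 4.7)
The plan is to iterate a simplification step: starting from $\play'$, I would produce a sequence $\tilde{\play}_0, \tilde{\play}_1, \ldots$ of NE outcomes from $\vertex_0$, each a simple play or simple lasso, with strictly growing satisfaction sets, until condition~\ref{item:cobuchi:decomp:three} holds for the current candidate, at which point I set $\play$ to that candidate.

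The core building block is a simplification sub-lemma: from any NE outcome $\play^\circ$ from $\vertex_0$ with $S = \satpl{\play^\circ}$, I will produce a simple play or simple lasso $\tilde{\play}$ from $\vertex_0$ whose vertices all occur in $\play^\circ$ and such that $S \subseteq \satpl{\tilde{\play}}$. Let $\indexLast^\circ$ be the least index after which no target of any player in $S$ occurs in $\play^\circ$; note that the vertices of $\suffix{\play^\circ}{\indexLast^\circ}$ avoid $\bigcup_{\indexPlayer \in S} \target_\indexPlayer$. Working in the subgraph of $\arena$ induced by the vertices of $\play^\circ$ (which has no deadlocks because $\play^\circ$ itself has none), I split on whether some vertex is visited infinitely often in $\play^\circ$: if yes, such a vertex lies in the vertices of $\suffix{\play^\circ}{\indexLast^\circ}$, yields a reachable cycle there, and I combine a simple path from $\vertex_0$ to the cycle (e.g., via BFS stopped at the first cycle vertex) with a simple re-rooting of the cycle to get a simple lasso whose cycle avoids the $\target_\indexPlayer$ with $\indexPlayer \in S$; if no, then $\play^\circ$ already visits infinitely many distinct vertices and I extract an infinite simple path from $\vertex_0$ by following $\play^\circ$ and jumping past revisits. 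In both cases $S \subseteq \satpl{\tilde{\play}}$. Moreover, Theorem~\ref{thm:charac:reach:buchi} ensures $\tilde{\play}$ is an NE outcome: any losing player of $\tilde{\play}$ is losing in $\play^\circ$, and vertices of $\tilde{\play}$ lie in $\play^\circ$, hence outside the relevant winning regions.

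Given the sub-lemma, the main iteration is short. Set $\tilde{\play}_0$ to the simplification of $\play'$. Given $\tilde{\play}_i$, compute its threshold $\indexLast^\star_i$ and check whether condition~\ref{item:cobuchi:decomp:three} holds. If so, stop with $\play = \tilde{\play}_i$. Otherwise, fix a witness $\play''_i$ and define $\play_{i+1} = \concat{\prefix{\tilde{\play}_i}{\indexLast^\star_i}}{\play''_i}$. Because the prefix contributes only finitely many target visits, $\satpl{\play_{i+1}} = \satpl{\play''_i} \supsetneq \satpl{\tilde{\play}_i}$; and since every vertex of $\play_{i+1}$ lies in $\tilde{\play}_i$, the same argument via Theorem~\ref{thm:charac:reach:buchi} shows $\play_{i+1}$ is again an NE outcome. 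Applying the sub-lemma to $\play_{i+1}$ yields $\tilde{\play}_{i+1}$ with $\satpl{\play_{i+1}} \subseteq \satpl{\tilde{\play}_{i+1}}$, closing the induction step.

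The sequence $|\satpl{\tilde{\play}_i}|$ is then strictly increasing and bounded above by $\nPlayer$, so the process halts in at most $\nPlayer$ rounds. The resulting $\play$ is a simple play or simple lasso (condition~\ref{item:cobuchi:decomp:one}), satisfies condition~\ref{item:cobuchi:decomp:three} by the stopping criterion, and obeys $\satpl{\play'} \subseteq \satpl{\play}$ (condition~\ref{item:cobuchi:decomp:four}) through the chain of inclusions along the iteration. The main obstacle I anticipate is the simplification sub-lemma in possibly infinite arenas with unbounded out-degree, where König's lemma does not apply to the tree of simple extensions: the case split on whether some vertex of $\play^\circ$ is visited infinitely often is what saves the argument, since in the "no" case every vertex has a last occurrence and one can schedule jumps along $\play^\circ$ to produce an infinite simple path, while the "yes" case reduces to a standard pigeonhole/cycle extraction.
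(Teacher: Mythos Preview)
Your proposal is correct and follows essentially the same route as the paper. Both arguments iterate: simplify the current NE outcome to a simple play or simple lasso (using only its vertices, so Theorem~\ref{thm:charac:reach:buchi} preserves the NE property and the satisfaction set can only grow), test condition~\ref{item:cobuchi:decomp:three}, and if it fails take a witness play with strictly larger satisfaction set and repeat; termination comes from the satisfaction sets forming a strictly increasing chain in $\playerSet$. The only cosmetic difference is in the simplification sub-step: the paper cases on whether the suffix $\suffix{\play^\circ}{\indexLast^\circ}$ is already a simple play (and if not, extracts a simple lasso from that suffix and prepends a compatible simple history from $\vertex_0$), whereas you case on whether some vertex of $\play^\circ$ occurs infinitely often and use BFS/re-rooting in the ``yes'' case and last-occurrence jumps in the ``no'' case. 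Both variants yield the same guarantees.
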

\begin{proof}
  We construct two finite sequences $\playB^{(0)}$, \ldots, $\playB^{(\numSegments)}$ and $\play^{(0)}$, \ldots, $\play^{(\numSegments)}$ of NE outcomes from $\vertex_0$ such that $\playB^{(0)} = \play'$, the plays $\play^{(0)}$, \ldots, $\play^{(\numSegments)}$ are simple plays or simple lassos, for all $\indexSegment\in \integerInterval{\numSegments}$, $\playB^{\indexSegment}$ witnesses that $\play^{\indexSegment-1}$ does not satisfy~\ref{item:cobuchi:decomp:three}, $\play^{\numSegments}$ satisfies~\ref{item:cobuchi:decomp:three} and we have the chain of inclusions
  \[
    \satpl{\playB^{(0)}}\subseteq\satpl{\play^{(0)}}\subsetneq\satpl{\playB^1}\subseteq\satpl{\play^1}\subsetneq\ldots\subsetneq \satpl{\playB^{(\numSegments)}}\subseteq\satpl{\play^{(\numSegments)}}.
  \]
  We construct these sequences by induction and conclude by letting $\play = \play^{(\numSegments)}$.
  Termination is guaranteed:~\ref{item:cobuchi:decomp:three} holds trivially for plays that satisfy all objectives in $\game$.

  Assume that $\playB^{(\indexSegment)} = \vertex_0\vertex_1\ldots$ is well-defined and let $\indexLast_\indexSegment = \min\{\indexLast\in\IN\mid \forall\indexPosition\geq\indexLast,\,\forall\indexPlayer\in\satpl{\playB^{(\indexSegment)}},\,\vertex_\indexPosition\notin\target_\indexPlayer\}$.
  We define $\play^{(\indexSegment)}$ as a combination of a history $\segment_\indexSegment$ and a play $\segment_\indexSegment'$.
  We distinguish two cases for the suffixes.
  If $\suffix{\playB^{(\indexSegment)}}{\indexLast_\indexSegment}$ is a simple play, we let $\segment_\indexSegment' = \suffix{\playB^{(\indexSegment)}}{\indexLast_\indexSegment}$.
  Otherwise, we let $\segment_\indexSegment'$ be a simple lasso from $\first{\suffix{\playB^{(\indexSegment)}}{\indexLast_\indexSegment}}$ such that all states that occur in $\segment_\indexSegment'$ occur in $\suffix{\playB^{(\indexSegment)}}{\indexLast_\indexSegment}$ (such a lasso exists because $\suffix{\playB^{(\indexSegment)}}{\indexLast_\indexSegment}$ is not a simple play).
  For both cases, we choose the (possibly trivial) prefix $\segment_\indexSegment$ as a history starting in $\first{\playB^{(\indexSegment)}}$ that uses only vertices occurring in $\playB^{(\indexSegment)}$ such that $\play^{(\indexSegment)} = \concat{\segment_\indexSegment}{\segment_\indexSegment'}$ is a simple play or a simple lasso.
All states that occur infinitely often in $\play^{(\indexSegment)}$ are taken from $\suffix{\playB^{(\indexSegment)}}{\indexLast_\indexSegment}$.
  Therefore, the definition of $\indexLast_\indexSegment$ ensures that $\satpl{\playB^{(\indexSegment)}}\subseteq\satpl{\play^{(\indexSegment)}}$.
  Furthermore, Theorem~\ref{thm:charac:reach:buchi} ensures that $\play^{(\indexSegment)}$ is an NE outcome (all vertices occurring in $\play^{(\indexSegment)}$ occur in the NE outcome $\playB^{(\indexSegment)}$).
  This shows that $\play^{(\indexSegment)}$ satisfies all required properties of the induction that are unrelated to~\ref{item:cobuchi:decomp:three}. 

  If $\play^{(\indexSegment)}$ satisfies~\ref{item:cobuchi:decomp:three}, then we end the induction.
  Otherwise, we let $\playB^{(\indexSegment+1)}$ witnessing that~\ref{item:cobuchi:decomp:three} does not hold for $\play^{(\indexSegment)}$.
  To end the proof, it remains to show that $\playB^{(\indexSegment+1)}$ is an NE outcome as the strict inclusion $\satpl{\play^{\indexSegment}}\subsetneq\satpl{\playB^{(\indexSegment+1)}}$ follows from the choice of $\playB^{(\indexSegment+1)}$.
  This follows from Theorem~\ref{thm:charac:reach:buchi}, as all vertices occurring in $\playB^{(\indexSegment+1)}$ occur in $\play^{\indexSegment}$ and $\satpl{\play^{\indexSegment}}\subsetneq\satpl{\playB^{(\indexSegment+1)}}$.
\end{proof}

\subsubsection{Finite-memory Nash equilibria in co-Büchi games}\label{section:co-buchi:ne}

We now build on the NE outcomes of Lemma~\ref{lem:cobuchi:outcomes} to construct finite-memory NEs.
Let $\play$ be an NE outcome from $\vertex_0$ given by Lemma~\ref{lem:cobuchi:outcomes}, and let $\indexLast^\star$ be the least index such that no targets (to be avoided) of players for which $\play$ is winning occur in $\suffix{\play}{\indexLast^\star}$.
We construct Mealy machines as follows.
In the first phase, we punish any player which deviates from the history $\prefix{\play}{\indexLast^\star}$.
In the second phase, we follow a decomposition-based strategy with respect to the trivial simple decomposition $(\suffix{\play}{\indexLast^\star})$ and punish any player who exits the set of vertices of $\suffix{\play}{\indexLast^\star}$.
As usual, players are punished with memoryless uniformly winning strategies of adversaries in coalition games.

The resulting strategy profile is an NE from $\vertex_0$.
Let $\indexPlayer\in\playerSet\setminus\satpl{\play}$.
On the one hand, if $\playerI$ deviates from $\prefix{\play}{\indexLast^\star}$ or exits the set of vertices of the segment $\suffix{\play}{\indexLast^\star}$ in the second phase, then the play resulting from the deviation has a suffix that is consistent with a winning strategy of the adversary in the coalition game of $\playerI$ and that starts from a vertex in the winning region of this adversary by Theorem~\ref{thm:charac:reach:buchi}.
On the other hand, if $\playerI$ respects $\prefix{\play}{\indexLast^\star}$ and never exits the set of vertices of $\suffix{\play}{\indexLast^\star}$ in the second phase, then property~\ref{item:cobuchi:decomp:three} of Lemma~\ref{lem:cobuchi:outcomes} for $\play$ guarantees that the deviation is not profitable.
We formalise the above in the following proof.

\begin{theorem}\label{theorem:cobuchi:fm-ne}
  Let $\stratProfile'$ be an NE from a vertex $\vertex_0$ in the co-Büchi game $\game$.
  There exists a finite-memory NE $\stratProfile$ from $\vertex_0$ such that $\satpl{\outcome{\stratProfile'}{\vertex_0}}\subseteq\satpl{\outcome{\stratProfile}{\vertex_0}}$.
  If $\arena$ is finite, a memory size of at most $|\vertexSet| + 2\nPlayer$ suffices.
\end{theorem}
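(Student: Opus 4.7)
The plan is to mirror the two-phase construction used for Büchi games (Theorem~\ref{thm:buchi:fmne:lasso}) while exploiting the simplification of Lemma~\ref{lem:cobuchi:outcomes}. First, I would apply Lemma~\ref{lem:cobuchi:outcomes} to $\outcome{\stratProfile'}{\vertex_0}$ to obtain an NE outcome $\play$ from $\vertex_0$ which is a simple play or a simple lasso, with $\satpl{\outcome{\stratProfile'}{\vertex_0}}\subseteq\satpl{\play}$, and satisfying property~\ref{item:cobuchi:decomp:three}; it then suffices to build a finite-memory NE with outcome $\play$. Let $\indexLast^\star$ be the index associated with $\play$ by the lemma, and write $\prefix{\play}{\indexLast^\star}=\vertex_0\ldots\vertex_{\indexLast^\star}$; the suffix $\suffix{\play}{\indexLast^\star}$ is then either a simple play or a simple lasso, and contains no target of any $\indexPlayer\in\satpl{\play}$.

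Next, for every $\indexPlayer\in\playerSet\setminus\satpl{\play}$ (or, if this set is empty, for a single arbitrary $\indexPlayer$), I fix a memoryless uniformly winning strategy $\stratAltAdvI$ of $\playerIAdv$ in the coalition game $\game_\indexPlayer=(\arena_\indexPlayer,\cobuchi{\target_\indexPlayer})$, which exists by Theorem~\ref{thm:qualitative:ML strat}; denote by $\winningIAdv{\buchi{\target_\indexPlayer}}$ its winning region. I then define, for each $\indexPlayer\in\playerSet$, a Mealy machine $\mealyMachine_\indexPlayer=(\mealyStateSpace,\mealyStateInit,\mealyUpdate,\mealyNextI)$ whose state space is $\{\vertex_0,\ldots,\vertex_{\indexLast^\star}\}\cup\{\playerI^{\mathrm{seg}}\mid\indexPlayer\in\playerSubset\}\cup\{\playerI^{\mathrm{pun}}\mid\indexPlayer\in\playerSubset\}$, giving $|\vertexSet|+2\nPlayer$ states in the finite case. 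The initial state is $\vertex_0$. For $\indexPosition<\indexLast^\star$, $\mealyUpdate(\vertex_\indexPosition,\vertex_\indexPosition)=\vertex_{\indexPosition+1}$; reading a wrong vertex $\vertex\neq\vertex_\indexPosition$ after a vertex controlled by $\playerIAlt\in\playerSubset$ jumps to $\playerIAlt^{\mathrm{pun}}$. From $\vertex_{\indexLast^\star}$ reading $\vertex_{\indexLast^\star}$ one transitions into a state $\playerIAlt^{\mathrm{seg}}$, where $\playerIAlt\in\playerSubset$ reflects the controller of $\vertex_{\indexLast^\star}$ (arbitrary if that controller is not in $\playerSubset$). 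From $\playerIAlt^{\mathrm{seg}}$ reading a vertex occurring in $\suffix{\play}{\indexLast^\star}$, the state updates to $\playerIAltAlt^{\mathrm{seg}}$ tracking the controller in $\playerSubset$; reading any other vertex jumps to $\playerIAlt^{\mathrm{pun}}$, which is absorbing. The next-move function $\mealyNextI$ follows $\play$ in the first-phase and in $\playerIAlt^{\mathrm{seg}}$ states when the current vertex belongs to $\suffix{\play}{\indexLast^\star}$ and its controller is $\playerI$; in any $\playerIAlt^{\mathrm{pun}}$ state (or when the player is off the intended segment) and when $\playerIAlt\neq\playerIndex$, $\mealyNextI$ plays $\stratAltAdvIb$, and is arbitrary otherwise. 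By an induction analogous to Lemma~\ref{lemma:coherence:template:extensions} the induced profile $\stratProfile$ has outcome $\play$.

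The remaining step, which I expect to be the main obstacle, is to prove the stability of $\stratProfile$. Fix $\indexPlayer\in\playerSet\setminus\satpl{\play}$ and a play $\play'$ from $\vertex_0$ consistent with $\stratIAdv$. I distinguish three cases: (a) $\play'$ deviates from $\prefix{\play}{\indexLast^\star}$ in the first phase, (b) $\play'$ extends $\prefix{\play}{\indexLast^\star}$ but eventually leaves the vertex set of $\suffix{\play}{\indexLast^\star}$, and (c) $\play'$ extends $\prefix{\play}{\indexLast^\star}$ and forever stays in the vertex set of $\suffix{\play}{\indexLast^\star}$. In cases (a) and (b), by construction the memory reaches some $\playerIAlt^{\mathrm{pun}}$ where necessarily $\playerIAlt=\playerIndex$ (since the last vertex before leaving must be controlled by $\playerI$ because all other players are using $\mealyMachine_{\playerIndex''}$-based strategies that keep $\play'$ coherent), and from that point $\play'$ is consistent with $\stratAltAdvI$; Theorem~\ref{thm:charac:reach:buchi} applied to $\play$ ensures the corresponding vertex lies in $\winningIAdv{\buchi{\target_\indexPlayer}}$, so $\play'\notin\cobuchi{\target_\indexPlayer}$. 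In case (c), $\play'$ has the form $\concat{\prefix{\play}{\indexLast^\star}}{\play''}$ with $\play''$ using only vertices of $\suffix{\play}{\indexLast^\star}$; then $\satpl{\play'}\subseteq\satpl{\play}$ by condition~\ref{item:cobuchi:decomp:three} of Lemma~\ref{lem:cobuchi:outcomes} (the strict inclusion is forbidden, and the reverse inclusion follows because no $\target_\indexPlayer$ for $\indexPlayer\in\satpl{\play}$ appears in the suffix), so $\indexPlayer\notin\satpl{\play'}$ and again $\play'\notin\cobuchi{\target_\indexPlayer}$. Hence no player in $\playerSet\setminus\satpl{\play}$ has a profitable deviation, while players in $\satpl{\play}$ cannot improve, concluding that $\stratProfile$ is an NE.
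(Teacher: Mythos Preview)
Your proposal is correct and follows essentially the same approach as the paper: you apply Lemma~\ref{lem:cobuchi:outcomes}, build the same two-phase Mealy machine (prefix-tracking states, one ``segment'' state per losing player for the trivial decomposition of $\suffix{\play}{\indexLast^\star}$, and one absorbing punishment state per losing player), and verify stability via the same three-case analysis using Theorem~\ref{thm:charac:reach:buchi} and condition~\ref{item:cobuchi:decomp:three}. The only differences are notational ($\playerI^{\mathrm{seg}}$ and $\playerI^{\mathrm{pun}}$ in place of $(\playerI,1)$ and $\playerI$).
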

\begin{proof}
  Let $\play=\vertex_0\vertex_1\ldots$ be an NE outcome obtained via  Lemma~\ref{lem:cobuchi:outcomes} from $\outcome{\stratProfile'}{\vertex_0}$.
  We show that $\play$ is the outcome of a finite-memory NE.
  We first introduce some notation.
  Let ${\indexLast^\star} = \min\{\indexLast\in\IN\mid\forall\indexPosition\geq\indexLast,\,\forall\indexPlayer\in\satpl{\play},\,\vertex_\indexPosition\notin\target_\indexPlayer\}$.
  Let $\playerSubset = \playerSet\setminus\satpl{\play}$ if this set is not empty, and let $\playerSubset = \{1\}$ otherwise.
  For all $\indexPlayer\in\playerSubset$, we let $\stratAltAdvI$ be a memoryless uniformly winning strategy for the second player of the coalition game $\game_\indexPlayer = (\arena_\indexPlayer, \cobuchi{\target_\indexPlayer})$ (it exists by Theorem~\ref{thm:qualitative:ML strat}) and let $\winningIAdv{\buchi{\target_\indexPlayer}}$ denote the winning region of this player in $\game_\indexPlayer$.
  Finally, we define $\vertexSet_\playerSubset = \bigcup_{\indexPlayer\in\playerSubset}\vertexSetI$.
  
  For each $\indexPlayer\in\playerSet$, we define a Mealy machine $\mealyMachine_\indexPlayer= (\mealyStateSpace, \mealyStateInit, \mealyUpdate, \mealyNextI)$ as follows.
  We define $\mealyStateSpace$ as the set $\{\vertex_\indexPosition\in\vertexSet\mid 0\leq\indexPosition \leq \indexLast^\star\}\cup \{(\playerI, 1)\mid\indexPlayer\in\playerSubset\} \cup \{\playerI\mid\indexPlayer\in\playerSubset\}$ and $\mealyStateInit=\vertex_0$.
  With respect to the informal explanation preceding the theorem (and as was the case for Büchi games), the memory states that are vertices correspond to the first phase (enforcing the history $\prefix{\play}{{\indexLast^\star}}$), and the others to the second phase (following a strategy based on a simple decomposition).
  The memory bounds claimed for finite arenas hold by simplicity of $\prefix{\play}{{\indexLast^\star}}$.

  The update function $\mealyUpdate$ is defined as follows.
  \begin{enumerate}[(a)]
  \item For all $\indexPosition < {\indexLast^\star}$, we let $\mealyUpdate(\vertex_\indexPosition, \vertex_\indexPosition) = \vertex_{\indexPosition+1}$.
\item For all $\indexPosition \leq {\indexLast^\star}$ and all $\vertex\neq\vertex_\indexPosition$,
    \begin{itemize}
    \item if $\indexPosition\geq 1$ and there exists $\indexPlayer\in\playerSubset$ such that $\vertex_{\indexPosition-1}\in\vertexSet_{\indexPlayer}$, we let $\mealyUpdate(\vertex_\indexPosition, \vertex) = \playerI$;
    \item otherwise the update is arbitrary.
    \end{itemize}
  \item For all $\indexPlayer\in\playerSubset$ and $\vertex\in\vertexSet$ occurring in $\suffix{\play}{{\indexLast^\star}}$, we let $\mealyUpdate((\playerI, 1), \vertex) = (\playerIAlt, 1)$ where $\playerIAlt$ is the player controlling $\vertex$ if $\vertex\in\vertexSet_\playerSubset$ and otherwise, $\indexPlayer'=\indexPlayer$.
  \item For all $\indexPlayer\in\playerSubset$ and all $\vertex\in\vertexSet$ that do not occur in $\suffix{\play}{{\indexLast^\star}}$, we let $\mealyUpdate((\playerI, 1), \vertex)=\playerI$.
  \item We let $\mealyUpdate(\vertex_{\indexLast^\star}, \vertex_{\indexLast^\star}) = \mealyUpdate((\playerI, 1), \vertex_{\indexLast^\star})$ for some $\indexPlayer\in\playerSubset$ (the choice of $\indexPlayer$ here does not impact the following arguments because $\vertex_{\indexLast^\star}$ occurs in $\suffix{\play}{{\indexLast^\star}}$).
  \item Finally, for all $\indexPlayer\in\playerSubset$ and $\vertex\in\vertexSet$, we let $\mealyUpdate(\playerI, \vertex) = \playerI$.
  \end{enumerate}

  Let $\indexPlayer\in\playerSubset$. We now define $\mealyNextI$.
  Let $\vertex\in\vertexSetI$.
  \begin{enumerate}[(a)]
  \item We first consider memory states of the form $\vertex_\indexPosition$.
    Fix $\indexPosition\leq{\indexLast^\star}$.
    \begin{itemize}
    \item If $\vertex = \vertex_\indexPosition$ and $\indexPosition\neq{\indexLast^\star}$, we let $\mealyNextI(\vertex_\indexPosition, \vertex)=\vertex_{\indexPosition+1}$.
    \item If $\vertex = \vertex_\indexPosition$ and $\indexPosition={\indexLast^\star}$, we let $\mealyNextI(\vertex_{\indexLast^\star}, \vertex_{\indexLast^\star})$ be the second vertex of $\suffix{\play}{{\indexLast^\star}}$ (i.e., the only vertex that follows $\vertex_{{\indexLast^\star}}$ in $\suffix{\play}{\indexLast^\star}$) if $\indexPosition={\indexLast^\star}$.
    \item We consider two cases whenever $\vertex\neq\vertex_\indexPosition$.
      \begin{itemize}
      \item If $\indexPosition\geq 1$ and there exists $\indexPlayer'\in\playerSubset$ such that $\vertex\in\vertexSet_{\indexPlayer'}$ and $\indexPlayer'\neq\indexPlayer$, we let  and set $\mealyNextI(\vertex_\indexPosition, \vertex)=\stratAltAdvIb(\vertex)$.
      \item We let $\mealyNextI(\vertex_\indexPosition, \vertex)$ be arbitrary in all other cases.
      \end{itemize}
    \end{itemize}
  \item We now deal with memory states of the form $(\playerIAlt, 1)$.
    Fix $\indexPlayer'\in\playerSubset$.
    \begin{itemize}
    \item If $\vertex$ occurs in $\suffix{\play}{\indexLast^\star}$, we let $\mealyNextI((\playerIAlt, 1), \vertex)$ be the vertex following $\vertex$ in $\suffix{\play}{\indexLast^\star}$ (this vertex is well-defined as $\play$ is a simple play or a simple lasso).
    \item If $\vertex$ does not occur in $\suffix{\play}{\indexLast^\star}$ and $\playerIndex'\neq\playerIndex$, we let $\mealyNextI((\playerIAlt, 1), \vertex) = \stratAltAdvIb(\vertex)$.
    \item If $\vertex$ does not occur in $\suffix{\play}{\indexLast^\star}$ and $\playerIndex'=\playerIndex$, we let $\mealyNextI((\playerIAlt, 1), \vertex)$ be arbitrary.
    \end{itemize}
  \item Finally, we let $\mealyNextI(\playerIAlt, \vertex) = \stratAltAdvIb(\vertex)$ if $\indexPlayer'\neq\indexPlayer$, and otherwise we let it be arbitrary.
\end{enumerate}

  We let $\stratI$ be the strategy induced by $\mealyMachine_\indexPlayer$.
  It can be shown by induction that the outcome of $\stratProfile=(\stratI)_{\indexPlayer\in\playerSet}$ is $\play$ (the argument is analogous to that for coherence from the proof of Lemma~\ref{lemma:coherence:template:extensions}).
  
  It remains to show that $\stratProfile$ is an NE from $\vertex_0$.
  It suffices to show that for all $\indexPlayer\in\playerSet\setminus\satpl{\play}$, $\playerI$ does not have a profitable deviation.
  We fix $\indexPlayer\in\playerSet\setminus\satpl{\play}$.
  The NE outcome characterisation of Theorem~\ref{thm:charac:reach:buchi} implies that all vertices in $\play$ are elements of $\winningIAdv{\buchi{\target_\indexPlayer}}$.

  Let $\play'=\vertex_0'\vertex_1'\ldots$ be a play starting in $\vertex_0$ that is consistent with the strategy profile $\stratIAdv$.
  We consider three cases.
  First, assume that $\play'$ does not have $\prefix{\play}{\indexLast^\star}$ as a prefix.
  Let $\indexPosition < \indexLast^\star$ be such that $\prefix{\play'}{\indexPosition}$ is the longest common prefix of $\play'$ and $\segment_0$.
  We have that $\widehat{\mealyUpdate}(\prefix{\play'}{\indexPosition}) = \vertex_{\indexPosition+1}$.
  The definition of $\stratProfile$ and the relation $\vertex_{\indexPosition+1}'\neq \vertex_{\indexPosition+1}$ imply that $\vertex_\indexPosition\in\vertexSetI$.
  It follows that $\suffix{\play'}{\indexPosition}$ is a play consistent with $\stratAltAdvI$ starting in $\vertex_\indexPosition\in\winningIAdv{\buchi{\target_\indexPlayer}}$, thus $\suffix{\play'}{\indexPosition}\in\buchi{\target_\indexPlayer}$.
  We obtain that $\play'\in\buchi{\target_\indexPlayer}$, ending this first case.
  
  Second, we assume that we can write $\play'$ as a combination $\concat{\prefix{\play}{\indexLast^\star}}{\play''}$ and that all states that occur in $\play''$ also occur in $\suffix{\play}{\indexLast^\star}$.
  By definition of the co-Büchi objective, $\satpl{\play}\subseteq\satpl{\play''}$.
  Furthermore, as $\play$ was obtained through Lemma~\ref{lem:cobuchi:outcomes},~\ref{item:cobuchi:decomp:three} implies that $\satpl{\play}=\satpl{\play''}$, and, in particular, $\play'\in\buchi{\target_\indexPlayer}$.

  Finally, we assume that we can write $\play'$ as a combination $\concat{\prefix{\play}{\indexLast^\star}}{\play''}$ and that there is a state occurring in $\play''$ but not in $\suffix{\play}{\indexLast^\star}$.  
  It follows that the memory state of the Mealy machines of the players in $\playerIAdv$ eventually becomes of the form $\playerIAlt$.
  Let $\indexPosition\in\IN$ be the largest number such that $\widehat{\mealyUpdate}(\prefix{\play'}{\indexPosition})$ is of the form $(\playerIAlt, 1)$.
  By choice of $\indexPosition$, it holds that $\vertex_\indexPosition'$ occurs in $\suffix{\play}{\indexLast^\star}$ and $\vertex_{\indexPosition+1}'$ does not.
  It follows that $\vertex_\indexPosition'\in\vertexSetI$ by definition of $\stratProfile$.
  We obtain that $\indexPlayer'=\indexPlayer$ and that $\suffix{\play'}{\indexPosition}$ is a play consistent with $\stratAltAdvI$ starting in $\vertex_\indexPosition'$.
  Because $\vertex_\indexPosition'$ occurs in $\suffix{\play}{\indexLast^\star}$, we have $\vertex_\indexPosition'\in\winningIAdv{\buchi{\target_\indexPlayer}}$.
  As in the first case, we obtain $\play'\in\buchi{\target_\indexPlayer}$, ending the proof.
\end{proof}

\begin{remark}\label{remark:cobuchi:ne-size}
  In a finite arena, NE outcomes provided by Lemma~\ref{lem:cobuchi:outcomes} are guaranteed to be simple lassos.
  Constructing NE with the classical construction (encoding the entire outcome in the memory state space with additional punishment states) yields finite-memory NEs with memory of size at most $|\vertexSet|+\nPlayer$.
  This bound is smaller than that provided by Theorem~\ref{theorem:cobuchi:fm-ne} for finite arenas.
  This contrasts with Remark~\ref{remark:buchi:ne-size} regarding finite-memory NEs in Büchi games on finite arenas result from the proof of Theorem~\ref{thm:buchi:fmne:lasso}.
  We note that the value of Theorem~\ref{theorem:cobuchi:fm-ne} lies in its applicability for infinite arenas.
\end{remark}
 
\section{Conclusion}
In this paper, we have provided constructions for finite-memory NEs that are based on a relaxation of the classical punishment mechanism: players punish some deviations while others are left unpunished.
We obtain finite-memory strategies as follows: instead of encoding an entire play in the memory of the players (as is done in classical approaches), we encode information regarding a decomposition of a play along with information on the latest players to have moved, to discern who to punish when necessary.
We have used variants of this idea to show that in reachability, shortest-path and safety games, a memory quadratic in the number of players is sufficient to implement an NE with an outcome whose cost profile is componentwise smaller than or equal to that of some given NE, and to prove that finite-memory is always sufficient for the same problem in Büchi and co-Büchi games.
The limits of the approach can be seen in Büchi and co-Büchi games with Proposition~\ref{prop:buchi:dep}: it is no longer possible to obtain arena-independent memory upper bounds.

We discuss two natural directions for future work.
A first direction would be to study the usefulness of decomposition-based approaches to refine memory bounds for (constrained) Nash equilibria in games with other types of objectives.
A natural candidate objective would the parity objective, which generalises both Büchi and co-Büchi objectives.
Any construction showing that finite memory suffices in parity games would yield a unified construction for Büchi games and co-Büchi games.
We remark that the two-phase approach (enforce a prefix then use a decomposition-based strategy) appears to not directly extend to parity games.
Consider the arena of Figure~\ref{fig:conclusion} and the (parity) game in which the objective of $\playerOne$ is $\buchi{\{\vertex_1\}}$ and the objective of $\playerTwo$ is $\cobuchi{\{\vertex_1, \vertex_3\}}$.
The play $(\vertex_0\vertex_1\vertex_2)^\omega$ is arguably the simplest NE outcome from $\vertex_0$ that satisfies the objective of $\playerOne$.
A natural candidate decomposition for the second phase of (a variant of) the two-phase mechanism would be some trivial decomposition such as $((\vertex_0\vertex_1\vertex_2)^\omega)$ or a periodic decomposition such as $(\vertex_0\vertex_1\vertex_2\vertex_0,\vertex_0\vertex_1\vertex_2\vertex_0,\ldots)$.
In either case, once the second phase starts and $\playerOne$ uses a decomposition-based strategy, $\playerTwo$ has a profitable deviation by using the edge from $\vertex_0$ to $\vertex_2$ to avoid $\vertex_1$.
This suggests that a different approach may be necessary to study the more general parity games.
\begin{figure}
  \centering
  \begin{tikzpicture}
    \node[state,square] (v0) {$\vertex_0$};
    \node[state,right = of v0] (v1) {$\vertex_1$};
    \node[state,right = of v1] (v2) {$\vertex_2$};
    \node[state,square,right = of v2] (v3) {$\vertex_3$};

    \path[->] (v0) edge (v1);
    \path[->] (v0) edge[bend right] (v2);
    \path[->] (v1) edge (v2);
    \path[->] (v2) edge[bend right] (v0);
    \path[->] (v2) edge (v3);
  \end{tikzpicture}
  \caption{A two-player arena (circles and squares are respectively vertices of $\playerOne$ and $\playerTwo$) illustrating that the two-phase approach used in Büchi and co-Büchi games does not directly generalise to games with both Büchi and co-Büchi objectives: if the objective of $\playerOne$ is $\buchi{\{\vertex_2\}}$ and the objective of $\playerTwo$ is $\cobuchi{\{\vertex_2,\vertex_3\}}$, then there is no NE in which $\playerOne$ wins that eventually conforms to a decomposition-based strategy with respect to $(\vertex_0\vertex_1\vertex_2)^\omega$.}\label{fig:conclusion}
\end{figure}

A second direction would be to investigate the use of decomposition-based strategies to obtain results on memory requirements for equilibria besides Nash equilibria, such as secure equilibria and subgame perfect equilibria.

\section*{Acknowledgements}
The author expresses their thanks to Thomas Brihaye, Aline Goeminne and Mickael Randour for fruitful discussions and their comments on a preliminary version of this paper.

\bibliography{master_references}

\newpage

\appendix
\section{Proof of Theorem~\ref{thm:short:pOneOpt}}\label{appendix:thm:short:pOneOpt}

In this section, we provide a proof of Theorem~\ref{thm:short:pOneOpt}.
We fix a two-player arena $\arena = ((\vertexSetOne, \vertexSetTwo), E)$ and a target $\target\subseteq\vertexSet$, a weight function $\weight\colon\edgeSet\to\IN$ and the shortest-path game $\game = (\arena, \costReach{\target}{\weight})$.

We establish the existence of memoryless uniformly optimal strategies of $\playerOne$ in $\game$ by exploiting the existence of memoryless uniformly winning strategies in zero-sum reachability games (Theorem~\ref{thm:qualitative:ML strat}).
However, we cannot directly apply this result to the reachability game $(\arena, \reach{\target})$ as memoryless uniformly winning strategies in this game need not be optimal in $\game$.
Instead, we modify $\arena$ by removing edges of $\playerOne$ that are not optimal in one step to obtain an arena $\arena'$.
We identify these edges by studying the value of vertices.
We then conclude by showing that memoryless uniformly winning strategies in $(\arena', \reach{\target})$ are uniformly optimal strategies in $\game$.

\thmShortPOneOpt*
\begin{proof}
  We recall that there exists optimal strategies of $\playerOne$ in $\game$ from each vertex by Lemma~\ref{lem:short:determinacy}.

  First, we prove that for all $\vertex\in\vertexSetOne\setminus\target$, there exists $\vertex'\in\succSet{\vertex}$ such that $\val(\vertex) = \val(\vertex') + \weight(\vertex, \vertex')$.
  Let $\vertex\in\vertexSetOne\setminus\target$.
  If $\playerOne$ moves from $\vertex$ to $\vertex'\in\succSet{\vertex}$, $\playerOne$ can ensure $\val(\vertex') + \weight(\vertex, \vertex')$ at best.
  It follows that $\val(\vertex) = \min\{\val(\vertex') + \weight(\vertex, \vertex')\mid \vertex'\in\succSet{\vertex}\}$ (this minimum is well-defined because values are in the well-ordered set $\INbar$).
  This implies the claim.

  Second, we claim that for all $\vertex\in\vertexSetTwo\setminus \target$, we have $\val(\vertex) \geq \val(\vertex') + \weight(\vertex, \vertex')$ for all $\vertex'\in\succSet{\vertex}$.
  Let $\vertex\in\vertexSetTwo\setminus \target$ and $\vertex'\in\succSet{\vertex}$.
  Assume first that $\val(\vertex')$ is finite, in which case $\playerTwo$ has an optimal strategy from $\vertex'$ by Lemma~\ref{lem:short:determinacy}.
  It follows that $\playerTwo$ can ensure $\val(\vertex') + \weight(\vertex, \vertex')$ from $\vertex$ by moving to $\vertex'$ from $\vertex$ and playing optimally from there, which implies the desired inequality.
  Assume now that $\val(\vertex')$ is infinite.
  Then for all $\alpha\in\IN$, $\playerTwo$ has a strategy ensuring $\alpha$ from $\vertex'$.
  We conclude, similarly to the previous case, that $\val(\vertex)$ is infinite and therefore satisfies the inequality.

  Third, we remove edges of $\arena$ to derive a game in which values are unchanged with respect to $\game$.
  Let $\edgeSet'$ be obtained by removing from $\edgeSet$ all edges $(\vertex,\vertex')$ with $\vertex\in\vertexSetOne\setminus\target$ and $\val(\vertex) \neq \val(\vertex') + \weight(\vertex, \vertex')$.
  We let $\arena' = ((\vertexSetOne, \vertexSetTwo), \edgeSet')$ and $\game' = (\arena', \costReach{\target}{\weight})$.
  We remark that there are no deadlocks in $\arena'$ by the first point above.
  For any $\vertex\in\vertexSet$, let $\val'(\vertex)$ denote its value in $\game'$.
  We claim that
  \begin{enumerate}[(i)]
  \item for all $\vertex\in\vertexSet$, $\val(\vertex)=\val'(\vertex)$ and\label{item:pOneOpt:one}
  \item the winning regions in the reachability games $(\arena, \reach{\target})$ and $(\arena', \reach{\target})$ coincide.\label{item:pOneOpt:two}
  \end{enumerate}
    
  For~\ref{item:pOneOpt:one}, we observe that for all $\vertex\in\vertexSet$, $\val'(\vertex)\geq\val(\vertex)$ by definition of $\game'$ ($\playerOne$ has less choices than in $\game$).
  In particular, if $\val(\vertex) = +\infty$, then $\val'(\vertex) = +\infty$.
  We show the other inequality of~\ref{item:pOneOpt:one} by induction on $\val(\vertex)$ for vertices of finite value.
  For the base case, let $\vertex\in\vertexSet$ such that $\val(\vertex)=0$.
  In this case, $\playerOne$ has a strategy that reaches $\target$ from $\vertex$ using edges with zero weight that moves only to vertices with zero value until a target is reached (otherwise $\playerOne$ could not ensure a cost of $0$ from $\vertex$).
  These edges are in $\edgeSet'$.
  This ends the argument for the base case.

  We now assume by induction that for all $\beta\leq\alpha$ and all $\vertex\in\vertexSet$, if $\val(\vertex) = \beta$, then $\val'(\vertex)=\beta$.
  Let $\vertex\in\vertexSet$ such that $\val(\vertex)=\alpha+1$ and let us show that $\val'(\vertex)=\alpha+1$.
  A strategy of $\arena'$ that ensures $\alpha+1$ from $\vertex$ can be obtained as follows.
  Fix a strategy $\stratOne$ of $\arena$ that is optimal from $\vertex$.
  All outcomes of $\stratOne$ from $\vertex$ eventually visit some vertex $\vertex'$ with $\val(\vertex') < \val(\vertex)$ (as a target is eventually reached).
  The earliest such vertex is reached using only edges in $\edgeSet'$, as otherwise there would be an outcome of $\stratOne$ with cost greater than $\val(\vertex)$, contradicting its optimality.
  By modifying $\stratOne$ so $\playerOne$ uses an optimal strategy of $\game'$ once this earliest vertex is reached (which exists by induction), we obtain a strategy $\stratOne'$ of $\arena'$ that ensures $\val(\vertex)$ from $\vertex$, implying $\val(\vertex) \geq \val'(\vertex)$ and ending the inductive argument.
  
  We now show that~\ref{item:pOneOpt:two} holds.
  Clearly any vertex that is winning in $(\arena', \reach{\target})$ also is in $(\arena, \reach{\target})$.
  Conversely, fix a vertex $\vertex$ that is winning for $\playerOne$ in $(\arena, \reach{\target})$.
  If its value is finite, the claim follows from~\ref{item:pOneOpt:one}.
  Therefore, assume that $\val(\vertex)=+\infty$.
  Let $\stratOne$ be a winning strategy of $\playerOne$ from $\vertex$ in $(\arena, \reach{\target})$.
  Its behaviours in vertices of infinite value need not be restricted to obtain a strategy of $\arena'$, as the outgoing edges from $\playerOne$ vertices of infinite value are the same in $\arena$ and $\arena'$.
  Furthermore, all outcomes of $\stratOne$ eventually reach a vertex of finite value.
  By changing $\stratOne$ so it conforms to a strategy optimal in $\game'$ from the earliest such visited vertex, we obtain a strategy $\stratOne'$ that is winning from $\vertex$ in $(\arena', \reach{\target})$.
  This ends the proof of~\ref{item:pOneOpt:two}.

  Finally, we prove the claim of the theorem.
  Let $\stratOne$ be a memoryless uniformly winning reachability strategy in the reachability game $(\arena', \reach{\target})$.
  It follows from~\ref{item:pOneOpt:two} above that is it also a uniformly winning strategy in $(\arena, \reach{\target})$.
  It remains to show that $\stratOne$ is optimal from all vertices with finite value.
  Let $\vertex_0\in\vertexSet$ such that $\val(\vertex_0)$ is finite.
  Let $\play = \vertex_0\vertex_1\ldots$ be consistent with $\stratOne$.
  We claim that $\costReach{\target}{\weight}(\play)\leq \val(\vertex_0)$.
  Let $\indexLast= \min\{\indexPosition\in\IN\mid \vertex_\indexPosition\in\target\}$ (it exists by~\ref{item:pOneOpt:two} and the fact $\val(\vertex_0)$ is finite).
  We have, by choice of $\edgeSet'$ and the second claim above, that
  \[\costReach{\target}{\weight}(\play) = \weight(\prefix{\play}{\indexLast}) = \sum_{\indexPosition=0}^{\indexLast-1} \weight(\vertex_\indexPosition, \vertex_{\indexPosition+1}) \leq \sum_{\indexPosition=0}^{\indexLast-1}\val(\vertex_{\indexPosition}) - \val(\vertex_{\indexPosition+1}) = \val(\vertex_0),\]
  because $\val(\vertex_\indexLast) = 0$.
  This shows that $\stratOne$ is optimal from $\vertex$ and ends the proof.
\end{proof}

\section{Existence of Nash equilibria in shortest-path games}\label{appendix:existence}
In this section, we prove the (unconditional) existence of Nash equilibria in shortest-path games with non-negative integer weights on arbitrary arenas.
To prove this existence result, we build on an existence result for NEs in finite arenas from~\cite{DBLP:conf/lfcs/BrihayePS13}.
We assume familiarity with the results of Section~\ref{section:zero-sum} and Section~\ref{section:outcomes} in the following.

The authors of~\cite{DBLP:conf/lfcs/BrihayePS13} provide sufficient conditions on cost functions that guarantee the existence of (finite-memory) NEs that apply to shortest-path games in finite arenas.
Intuitively, one can construct an NE as follows: all players follow a memoryless uniformly optimal strategy of their respective coalition game, and any player who deviates from the outcome of this profile is punished with a memoryless uniformly optimal punishing strategy of the adversary of their coalition game.
Even though such punishing strategies do not always exist in infinite arenas (Example~\ref{ex:no opti}), we can still show that the outcome obtained by having the players follow uniformly optimal memoryless strategies of their coalition game is an NE outcome with the characterisation in Theorem~\ref{thm:charac:short}.
In particular, this is sufficient to prove the existence of NE in shortest-path games on infinite arenas.

\begin{theorem}\label{thm:short:existence}
  Let $\arena=\arenaTuple$ be an $\nPlayer$-player arena, and let $\target_1, \ldots, \target_\nPlayer\subseteq\vertexSet$ and $\weight_1, \ldots, \weight_\nPlayer\colon\edgeSet\to\IN$ respectively targets and weight functions for each player.
  Let $\game=(\arena, (\costReach{\target_\playerIndex}{\weight_\indexPlayer})_{\playerIndex\in\playerSet})$ be a shortest-path game.
  There exists an NE in $\game$ from any initial vertex.
\end{theorem}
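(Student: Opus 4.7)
\smallskip

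The plan is to leverage Theorem~\ref{thm:short:pOneOpt} to construct an explicit NE and then verify it with the characterisation of Theorem~\ref{thm:charac:short}. In finite arenas, one typically constructs NEs by combining uniformly optimal strategies for the coalition games with punishing strategies activated upon deviations (as in~\cite{DBLP:conf/lfcs/BrihayePS13}). Although Example~\ref{ex:no opti} shows that such punishing strategies (i.e., uniformly optimal strategies for the coalition) need not exist in infinite arenas, we do not in fact need them: it will suffice to produce an outcome that meets the characterisation of Theorem~\ref{thm:charac:short}, and then the proof of that characterisation itself supplies an NE realising this outcome (using strategies of the coalitions that merely \emph{ensure} appropriate thresholds).

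Concretely, for each $\indexPlayer\in\playerSet$, apply Theorem~\ref{thm:short:pOneOpt} to the coalition game $\game_\playerIndex=(\arena_\playerIndex,\costReach{\target_\playerIndex}{\weight_\playerIndex})$ to obtain a memoryless strategy $\stratI^\star$ of $\playerI$ that is uniformly optimal in $\game_\playerIndex$ and uniformly winning in the zero-sum reachability game $(\arena_\playerIndex,\reach{\target_\playerIndex})$. Let $\stratProfile^\star = (\stratI^\star)_{\playerIndex\in\playerSet}$ and $\play = \outcome{\stratProfile^\star}{\vertex_0} = \vertex_0\vertex_1\ldots$. The goal is to show that $\play$ satisfies the two conditions of Theorem~\ref{thm:charac:short}; once this is established, the (constructive) converse direction in the proof of that theorem produces an NE from $\vertex_0$ with outcome $\play$, concluding the proof.

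For condition~(i), suppose towards a contradiction that there exist $\playerIndex\notin\satpl{\play}$ and $\indexPosition\in\IN$ with $\vertex_\indexPosition\in\winningI{\reach{\target_\playerIndex}}$. Since $\stratI^\star$ is a uniformly winning reachability strategy of $\playerI$ in $\game_\playerIndex$ and $\suffix{\play}{\indexPosition}$ is consistent with $\stratI^\star$ (as $\play$ itself is), $\suffix{\play}{\indexPosition}$ visits $\target_\playerIndex$, contradicting $\playerIndex\notin\satpl{\play}$. For condition~(ii), fix $\playerIndex\in\satpl{\play}$ and $\indexPosition\leq\min\{\indexLast\in\IN\mid\vertex_\indexLast\in\target_\playerIndex\}$. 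If $\valI(\vertex_\indexPosition)=+\infty$ the inequality is trivial; otherwise, $\stratI^\star$ being uniformly optimal means it ensures a cost of at most $\valI(\vertex_\indexPosition)$ from $\vertex_\indexPosition$ against any adversary strategy, and since $\suffix{\play}{\indexPosition}$ is consistent with $\stratI^\star$, we obtain $\costReach{\target_\playerIndex}{\weight_\playerIndex}(\suffix{\play}{\indexPosition})\leq\valI(\vertex_\indexPosition)$, as required.

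The main conceptual obstacle to keep track of is the subtle asymmetry introduced by infinite arenas: one might be tempted to construct the NE directly from $\stratProfile^\star$ plus punishing strategies, but the coalition player lacks optimal strategies in general. The trick is to decouple the construction of the \emph{outcome} (from uniformly optimal strategies of the first players of the coalition games, which do exist by Theorem~\ref{thm:short:pOneOpt}) from the construction of the \emph{equilibrium strategies} (which, per the proof of Theorem~\ref{thm:charac:short}, require only adversary strategies that ensure prescribed finite thresholds derived from the characterisation, and therefore exist). Beyond this observation the verification is routine.
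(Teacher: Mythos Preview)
Your proof is correct and follows essentially the same approach as the paper: take the profile of memoryless uniformly optimal strategies from Theorem~\ref{thm:short:pOneOpt}, and verify that its outcome satisfies the characterisation of Theorem~\ref{thm:charac:short}. Your write-up is slightly more detailed in checking conditions~(i) and~(ii) and makes explicit the conceptual point about decoupling the outcome construction from the equilibrium construction, but the argument is the same.
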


\begin{proof} Let $\vertex_0\in\vertexSet$ be an initial vertex.
  For all $\playerIndex\in\playerSet$, let $\stratI$ be a memoryless uniformly optimal strategy in the coalition game $\game_\playerIndex = (\arena_\playerIndex, \costReach{\target_\playerIndex}{\weight_\indexPlayer})$ provided by Theorem~\ref{thm:short:pOneOpt}.
  We define $\stratProfile=(\stratI)_{\playerIndex\in\playerSet}$.
  We argue that $\play = \outcome{\stratProfile}{\vertex_0}$ is the outcome of an NE by Theorem~\ref{thm:charac:short}.
  Establishing this implies the existence of an NE from $\vertex_0$.

  The first condition of Theorem~\ref{thm:charac:short} follows from the strategies $\stratI$ being memoryless uniformly winning strategies in the reachability game $(\arena_\playerIndex, \reach{\target_\playerIndex})$.
  The second condition follows from the uniform optimality of the strategies $\stratI$ in $\game_\playerIndex$: $\stratI$ ensures $\val(\vertex)$ from all $\vertex\in\vertexSet$ for all $\indexPlayer\in\playerSet$.
  Therefore, the inequality in the second condition of Theorem~\ref{thm:charac:short} must hold in all relevant cases.
\end{proof}

\end{document}